\newtheorem{theorem}{Theorem}[section]
\newtheorem{lemma}[theorem]{Lemma}
\newtheorem{proposition}[theorem]{Proposition}
\newtheorem{corollary}[theorem]{Corollary}
\newtheorem{definition}[theorem]{Definition}
\newtheorem{remark}[theorem]{Remark}
\def\Vmst{{V(G) \setminus \{s,t\}}}
\def\<{{\langle}} \def\>{{\rangle}}       
\begin{document}

\author{Aaron Potechin \\
MIT\\
potechin@mit.edu}

\title{Improved upper and lower bound techniques for monotone switching networks for directed connectivity}

\maketitle
\begin{abstract}
\noindent In this paper, we analyze the monotone space of complexity of directed connectivity for a large class of 
input graphs $G$ using the switching network model. The upper and lower bounds we obtain are a significant generalization 
of previous results and the proofs involve several completely new techniques and ideas.
\end{abstract}

\thispagestyle{empty}
\noindent \textbf{Acknowledgement:}\\
This material is based on work supported by the National Science Foundation Graduate Research Fellowship 
under Grant No. 0645960.
\newpage
\section{Introduction}\label{intro}
\setcounter{page}{1}

$L$ versus $NL$, the problem of whether non-determinism helps in logarithmic space 
bounded computation, is a longstanding open question in computational complexity. At present, only 
a few results are known. It is known that the problem is equivalent to the question of whether there is 
a log-space algorithm for the \textit{directed connectivity} problem, namely given an $n$ vertex directed 
graph $G$ and pair of vertices $s,t$, find out if there is a directed path from $s$ to $t$ in $G$. 
Savitch \cite{savitch} gave an $O(\log^{2}n)$-space deterministic algorithm for 
directed connectivity, thus proving that $NSPACE(g(n)) \subseteq DSPACE((g(n)^2))$ for 
every space constructable function $g$. Immerman \cite{nlconlone} and 
Szelepcs\'{e}nyi \cite{nlconltwo} independently gave an $O(\log n)$-space non-deterministic algorithm 
for directed \textit{non-connectivity}, thus proving that $NL = co$-$NL$. For the problem of 
\textit{undirected connectivity} (i.e. where the input graph $G$ is undirected), a probabilistic 
algorithm was shown using random walks by Aleliunas, Karp, Lipton, Lov\'{a}sz, and Rackoff \cite{randomwalk}, 
and Reingold \cite{undirectedgraph} gave a deterministic 
$O(\log n)$-space algorithm for the same problem, showing that undirected connectivity is in $L$. 
Trifonov \cite{trifonov} independently gave an $O(\lg{n}\lg{\lg{n}})$ space algorithm for undirected connectivity.

In our previous work \cite{potechin}, we separated monotone analogues of $L$ and $NL$ using the 
switching network model. However, stronger results are needed before we have any hope of extending this 
approach to the non-monotone case. The reason is that in \cite{potechin} we 
analyzed input graphs $G$ consisting only of a path from $s$ to $t$ and isolated vertices. This type of graph 
was a natural place to start, as it is the hardest type of graph for monotone models to solve. However, it is very 
easy for non-monotone algorithms to solve directed connectivity on this type of graph, as we can just follow the path from $s$ to $t$. 
The reason for this gap in difficulty is that in following the path from $s$ to $t$ we are using the fact that 
at each vertex $v$ in the path, there is only one vertex to go to next. This uses the information that the other edges going out 
from $v$ are NOT in $G$. Monotone models can only use the information of which edges are in $G$, not which edges are NOT in $G$, so 
monotone models cannot use this idea.

To have any hope of extending monotone lower space bounds to non-monotone lower space bounds, we must be able to anaylze 
input graphs $G$ which we believe are hard even for non-monotone algorithms to solve. In this paper, we do this, analyzing 
a much wider class of input graphs $G$. While the overall idea is the same as before, the analysis is extremely different, 
requiring completely new and considerably more sophisticated techniques. Our bounds and the techniques we use are 
interesting on their own and these techniques are more robust and thus more likely to be generalizable to non-monotone analysis.

This paper does not assume prior knowledge of switching networks and the techniques used to analyze them. That said, the paper builds 
on intuition from previous work, so it is recommended that a reader who is learning about this approach for the first time read either \cite{potechin}, \cite{potechinsiuman}, or \cite{averagecase} before reading this paper.
\subsection{Definitions}\label{definitions}
We now give definitions which will be used throughout the paper and will allow us to give an overview of the paper and 
precisely state our results. Many of these definitions are from \cite{potechin}.
\begin{definition}\label{modifiedswitchingdefinition}
A switching network for directed connectivity on a set of vertices $V(G)$ with distinguished vertices 
$s,t$ is a tuple $< G', s', t', {\mu}' >$ where $G'$ is an undirected multi-graph with distinguished 
vertices $s'$,$t'$ and ${\mu}'$ is a labeling function giving each edge $e' \in E(G')$ a label of the form 
$v_1 \to v_2$ or $\neg(v_1 \to v_2)$ for some vertices $v_1, v_2 \in V(G)$ with $v_1 \neq v_2$.
\begin{enumerate}
\item We say that $G'$ accepts an input graph $G$ with vertex set $V(G)$ if there is a path $P'$ in $G'$ from 
$s'$ to $t'$ such that for each edge $e' \in E(P')$, $\mu'(e')$ is consistent with the input graph $G$ 
(i.e. of the form $e$ for some edge $e \in E(G)$ or $\neg{e}$ for some $e \notin E(G)$).
\item We say that $G'$ is sound if it does not accept any input graphs $G$ on 
the set of vertices $V(G)$ which do not have a path from $s$ to $t$.
\item We say that $G'$ is complete if it accepts all input graphs $G$ on the 
set of vertices $V(G)$ which have a path from $s$ to $t$.
\item We say that $G'$ solves directed connectivity on $V(G)$ if $G'$ is both complete and sound.
\item We take the size of $G'$ to be $n' = |V(G') \setminus \{s',t'\}|$.
\item We say that $G'$ is monotone if it has no labels of the form $\neg(v_1 \to v_2)$.
\end{enumerate}
\end{definition}
\begin{definition}\label{inputdifficulty}
Given a set $I$ of input graphs on a set $V(G)$ of vertices with distinguished vertices $s,t$ where each graph 
in $I$ contains a path from $s$ to $t$, let $m(I)$ be the size of the smallest sound monotone switching network 
for directed connectivity on $V(G)$ which accepts all of the input graphs in $I$.
\end{definition}
Just as in \cite{potechin}, we analyze switching networks sound monotone switching networks $G'$ by 
looking at cuts of the vertices of the input graph $G$.
\begin{definition}
We define an s-t cut (below we use cut for short) of $V(G)$ to be a partition of $V(G)$ into subsets $L(C),R(C)$ such that 
$s \in L(C)$ and $t \in R(C)$. We say an edge $v_1 \to v_2$ crosses $C$ if $v_1 \in L(C)$ and $v_2 \in R(C)$.
Let $\mathcal{C}$ denote the set of all cuts $C$ of $V(G)$.
\end{definition}
\begin{definition}
For a cut $C$, define the input graph $G(C)$ to be the graph with vertex set $V(G)$ and edge set \\
$E(G(C)) = \{e: e$ does not cross $C\}$
\end{definition}
We have the following vector space, dot product, and Fourier basis for this space:
\begin{definition}
Given two functions 
$f,g: \mathcal{C} \to \mathbb{R}$, $f \cdot g = 2^{-n}\sum_{C \in \mathcal{C}}{f(C)g(C)}$ where 
$n = |\Vmst|$
\end{definition}
\begin{definition}
Given a set of vertices $V \subseteq V(G) \backslash \{s,t\}$, define $e_V: \mathcal{C} \to \mathbb{R}$ by 
$e_V(C) = {(-1)^{|V \cap L(C)|}}$.
\end{definition}
\begin{proposition}
The set $\{e_V, V \subseteq V(G) \backslash \{s,t\}\}$ is an orthonormal basis for the vector space of functions from 
$\mathcal{C}$ to $\mathbb{R}$. 
\end{proposition}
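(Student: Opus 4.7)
The plan is the standard Fourier/character argument for the Boolean cube, transferred through the identification of cuts with subsets. First I would observe that a cut $C$ is determined by which of the $n = |V(G) \setminus \{s,t\}|$ non-distinguished vertices lie in $L(C)$, so $|\mathcal{C}| = 2^n$ and the space of functions $\mathcal{C} \to \mathbb{R}$ has dimension $2^n$. Since the indexing family $\{e_V : V \subseteq V(G)\setminus\{s,t\}\}$ also has cardinality $2^n$, it suffices to establish orthonormality; spanning then follows automatically by a dimension count.

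Next, I would compute the inner product directly. For subsets $V, W \subseteq V(G) \setminus \{s,t\}$,
\[
e_V \cdot e_W \;=\; 2^{-n}\sum_{C \in \mathcal{C}} (-1)^{|V \cap L(C)| + |W \cap L(C)|} \;=\; 2^{-n}\sum_{C \in \mathcal{C}} (-1)^{|(V \triangle W) \cap L(C)|},
\]
using the fact that $|V \cap L(C)| + |W \cap L(C)| \equiv |(V \triangle W) \cap L(C)| \pmod 2$.

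If $V = W$, every summand equals $1$ and the inner product is $2^{-n} \cdot 2^n = 1$. If $V \neq W$, pick any vertex $u \in V \triangle W$ and pair each cut $C$ with the cut $C'$ obtained by moving $u$ to the opposite side; this pairing is a fixed-point-free involution on $\mathcal{C}$, and the sign $(-1)^{|(V \triangle W) \cap L(C)|}$ flips between paired cuts since membership of $u$ in $L(\cdot)$ toggles. Hence the sum cancels in pairs and equals $0$.

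This establishes $e_V \cdot e_W = \delta_{V,W}$, so the family is orthonormal; combined with the dimension count, it is an orthonormal basis. There is no real obstacle here — the only subtlety is the mod-$2$ identity relating $|V \cap L| + |W \cap L|$ to $|(V \triangle W) \cap L|$, together with choosing the right cancellation involution (flipping a single coordinate in the symmetric difference), which is why I singled those two observations out.
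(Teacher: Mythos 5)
Your proof is correct. The paper states this proposition without proof, treating it as a standard fact of Fourier analysis on the Boolean cube; your argument — identifying $\mathcal{C}$ with subsets of $V(G)\setminus\{s,t\}$ to get dimension $2^n$, reducing $e_V\cdot e_W$ via the mod-$2$ identity to a signed sum over cuts indexed by $V\triangle W$, and killing the off-diagonal case with the single-vertex-flip involution — is exactly the standard proof that would be supplied, and every step is sound.
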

\begin{definition}
Given a function $f: \mathcal{C} \to \mathbb{R}$ and a set of vertices $V \subseteq V(G) \backslash \{s,t\}$, 
define $\hat{f_V} = f \cdot e_V$.
\end{definition}
\begin{proposition}[Parseval's Theorem]\label{parsevals}
For any functions $f,g: \mathcal{C} \to \mathbb{R}$, $f \cdot g = \sum_{V \subseteq \Vmst}{\hat{f}_{V}\hat{g}_{V}}$
\end{proposition}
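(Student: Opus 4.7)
The plan is to derive Parseval's theorem as a direct consequence of the previous proposition, which states that $\{e_V : V \subseteq V(G) \setminus \{s,t\}\}$ is an orthonormal basis for the $2^n$-dimensional real vector space of functions $\mathcal{C} \to \mathbb{R}$. The argument is essentially the standard one from finite-dimensional inner product spaces, applied to the specific inner product $f \cdot g = 2^{-n}\sum_{C} f(C) g(C)$ defined in the excerpt.

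First I would invoke the orthonormal basis proposition to write, for any $f : \mathcal{C} \to \mathbb{R}$, the expansion $f = \sum_{V \subseteq \Vmst} c_V(f)\, e_V$ for unique real coefficients $c_V(f)$. Taking the dot product of both sides with $e_W$ and using orthonormality ($e_V \cdot e_W = \delta_{V,W}$) collapses the sum to $f \cdot e_W = c_W(f)$, which by definition equals $\hat{f}_W$. Hence the Fourier coefficients are exactly the expansion coefficients in the $\{e_V\}$ basis, and $f = \sum_V \hat{f}_V \, e_V$ and $g = \sum_V \hat{g}_V \, e_V$.

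Next I would compute $f \cdot g$ by substituting these expansions and using bilinearity of the dot product:
\[
f \cdot g \;=\; \Bigl(\sum_{V} \hat{f}_V \, e_V\Bigr) \cdot \Bigl(\sum_{W} \hat{g}_W \, e_W\Bigr) \;=\; \sum_{V,W} \hat{f}_V \hat{g}_W \, (e_V \cdot e_W).
\]
Applying orthonormality once more zeroes out every term with $V \neq W$ and leaves the diagonal contribution $\sum_V \hat{f}_V \hat{g}_V$, which is the claimed identity.

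There is no real obstacle here: the proof is entirely formal once the orthonormal basis proposition is in hand. The only thing a careful reader might want double-checked is that the normalization factor $2^{-n}$ in the definition of the dot product is the correct one to make $\{e_V\}$ orthonormal, i.e., that $e_V \cdot e_V = 2^{-n} \sum_{C} 1 = 2^{-n} \cdot 2^n = 1$ and that $e_V \cdot e_W = 0$ for $V \neq W$ (this follows because $|V \triangle W \cap L(C)|$ is equidistributed mod $2$ over cuts $C$ when $V \triangle W \neq \emptyset$); but both of these are subsumed by the cited orthonormal basis proposition, so no further verification is needed inside this proof.
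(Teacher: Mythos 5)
Your proof is correct and is the standard derivation that the paper implicitly relies on: the paper states Parseval's Theorem without a written proof, treating it as an immediate consequence of the preceding orthonormal-basis proposition, and your argument (expand $f$ and $g$ in the $\{e_V\}$ basis, identify $\hat f_V = f\cdot e_V$ with the expansion coefficients, then use bilinearity and orthonormality) is exactly that consequence. Nothing is missing.
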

\begin{definition}
We say a function $g: \mathcal{C} \to \mathbb{R}$ is $e$-invariant for some edge $e$ if $g(C) = 0$ for 
any cut $C$ which $e$ crosses.
\end{definition}
In this paper, we will take a random permutation of the vertices $\Vmst$ and analyze what happens.
\begin{definition}
For each permutation $\sigma \in S_{\Vmst}$,
\begin{enumerate}
\item Given an edge $e = v \to w$, define $\sigma(e) = \sigma(v) \to \sigma(w)$.
\item Given an input graph $G$ with vertices $V(G)$, define $\sigma(G)$ to be the 
graph with vertices $V(G)$ and edges $\{\sigma(e), e \in E(G)\}$.
\item Given a cut $C \in \mathcal{C}$, define $\sigma(C)$ to be the cut such that  
$L(\sigma(C)) = \sigma(L(C))$ and $R(\sigma(C)) = \sigma(R(C))$.
\item Given a function $g: \mathcal{C} \to \mathbb{R}$, define the function 
$\sigma(g): \mathcal{C} \to \mathbb{R}$
so that $\sigma(g)(\sigma(C)) = g(C)$, i.e. $\sigma(g)(C) = g(\sigma^{-1}(C))$.
\end{enumerate}
\end{definition}
\begin{proposition}\label{permutationproperties}
For all $\sigma \in S_{\Vmst}$, 
\begin{enumerate}
\item An edge $e$ crosses a cut $C$ if and only if $\sigma(e)$ crosses $\sigma(C)$.
\item $g: \mathcal{C} \to \mathbb{R}$ is $e$-invariant for some edge $e$ if and only if $\sigma(g)$ is $\sigma(e)$-invariant.
\item For all $f,g: \mathcal{C} \to \mathbb{R}$, $\sigma(f) \cdot \sigma(g) = f \cdot g$
\item For all $V \subseteq \Vmst, \sigma(e_V) = e_{\sigma(V)}$
\item For all $g: \mathcal{C} \to \mathbb{R}$, for all $V \subseteq \Vmst$, $\hat{\sigma(g)}_{\sigma(A)} = \hat{g}_A$
\end{enumerate}
\end{proposition}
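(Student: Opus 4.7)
The plan is to prove the five claims of Proposition \ref{permutationproperties} essentially in order, since each later part leans on the earlier ones; everything ultimately reduces to the fact that $\sigma$ is a bijection on $\Vmst$, and consequently a bijection on edges, on cuts, and on subsets of $\Vmst$.

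For part (1), I would just unpack the definitions. An edge $e = v \to w$ crosses $C$ iff $v \in L(C)$ and $w \in R(C)$. By definition $\sigma(e) = \sigma(v) \to \sigma(w)$ and $L(\sigma(C)) = \sigma(L(C))$, $R(\sigma(C)) = \sigma(R(C))$, so since $\sigma$ is a bijection this is equivalent to $\sigma(v) \in L(\sigma(C))$ and $\sigma(w) \in R(\sigma(C))$. Part (2) then follows immediately: if $\sigma(e)$ crosses $C$, part (1) says $e$ crosses $\sigma^{-1}(C)$, hence $\sigma(g)(C) = g(\sigma^{-1}(C)) = 0$, and the converse is the same argument applied to $\sigma^{-1}$.

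For part (3) the point is that $C \mapsto \sigma(C)$ permutes $\mathcal{C}$, so by reindexing the sum via $C' = \sigma^{-1}(C)$,
\[
\sigma(f) \cdot \sigma(g) = 2^{-n}\sum_{C \in \mathcal{C}} f(\sigma^{-1}(C))\, g(\sigma^{-1}(C)) = 2^{-n}\sum_{C' \in \mathcal{C}} f(C')g(C') = f \cdot g.
\]
For part (4), I compute $\sigma(e_V)(C) = e_V(\sigma^{-1}(C)) = (-1)^{|V \cap \sigma^{-1}(L(C))|}$, and since $\sigma$ is a bijection $|V \cap \sigma^{-1}(L(C))| = |\sigma(V) \cap L(C)|$, giving $e_{\sigma(V)}(C)$.

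Part (5) is then a one-line combination of (3) and (4): $\hat{\sigma(g)}_{\sigma(A)} = \sigma(g) \cdot e_{\sigma(A)} = \sigma(g) \cdot \sigma(e_A) = g \cdot e_A = \hat{g}_A$. Honestly, none of these steps looks like a real obstacle — the whole proposition is saying that the natural action of $S_{\Vmst}$ on $\Vmst$ lifts compatibly to edges, cuts, functions on $\mathcal{C}$, and the Fourier basis. The only thing to watch is being careful with the direction of $\sigma$ versus $\sigma^{-1}$ in the definition $\sigma(g)(C) = g(\sigma^{-1}(C))$, because that is where a sign or indexing slip could easily hide; beyond that the proposition is essentially bookkeeping.
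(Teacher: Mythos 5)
Your proposal is correct and follows essentially the same route as the paper: it proves statement (4) by the same direct computation $\sigma(e_V)(C) = e_V(\sigma^{-1}(C)) = (-1)^{|\sigma^{-1}(L(C)) \cap V|} = (-1)^{|L(C)\cap \sigma(V)|}$, and derives (2) from (1) and (5) from (3) and (4). The only difference is that you spell out (1) and (3), which the paper simply labels ``clear.''
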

\begin{proof}
Statements 1 and 3 are clear, statement 2 follows from statement 1, and statement 5 follows from statements 3 and 4. Thus 
we only need to show statement 4. To see statment 4, note that \\
$\sigma(e_V)(C) = e_V(\sigma^{-1}(C)) = (-1)^{|L(\sigma^{-1}(C)) \cap V|} = (-1)^{|\sigma^{-1}(L(C)) \cap V|} = 
(-1)^{|L(C) \cap \sigma(V)|} = e_{\sigma(V)}(C)$
\end{proof}
\begin{definition}
For an input graph $G$ containing a path from $s$ to $t$, define \\
$m(G) = m(\{\sigma(G): \sigma \in S_{\Vmst}\})$
\end{definition}
\begin{remark}
The function $m(G)$ is a complexity measure on $G$ which measures how hard it is for sound monotone switching networks 
to solve directed connectivity on $G$.
\end{remark}
\subsection{Notation and conventions}
We use the same notation and conventions as \cite{potechin}. For the remainder of the paper, 
we will assume without explicitly stating it that $V(G)$ is a set of 
vertices with distinguished vertices $s,t$ and $n = |\Vmst|$. Throughout the paper, we use lower case 
letters (e.g $v, e, f$) to denote vertices, edges, and functions. We use upper case letters (e.g $G, V, E$) to 
denote graphs and sets of vertices, edges, or other objects. Uppercase script letters (e.g. $\mathcal{C}$) 
are often used to denote a family or set of objects which are themselves graphs or sets. We use 
unprimed symbols to denote vertices, edges, etc. in the directed graph $G$, and we use primed symbols to 
denote vertices, edges, etc. in the switching network $G'$.
\subsection{Technical comparison with previous work}
The main result of \cite{potechin} is
\begin{theorem}
If $G$ is an input graph consisting of just a path from $s$ to $t$ of length $l$ and isolated vertices then
$(\frac{n}{64(l-1)^2})^{\frac{\lceil{\lg{l}}\rceil}{2}} \leq m(G) 
\leq n^{\lceil{\lg{l}}\rceil}$
\end{theorem}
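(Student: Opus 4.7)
The statement bundles an upper bound, provable by an explicit construction, and a lower bound, provable by cut-space Fourier analysis.

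For the upper bound $m(G)\le n^{\lceil\lg l\rceil}$, I would build a Savitch-style monotone switching network of depth $k=\lceil\lg l\rceil$. Start with a single ``claim edge'' labelled $(s\to t,\,2^k)$ between $s'$ and $t'$. Recursively, replace each claim edge $(u\to v,\,2^i)$ with $i\ge 1$ by a parallel composition, over all candidate midpoints $w\in V(G)\setminus\{s,t\}$, of two claim edges $(u\to w,\,2^{i-1})$ and $(w\to v,\,2^{i-1})$ placed in series. At depth $k$ every remaining claim edge has length budget $1$, so it becomes an ordinary edge carrying the literal label $u\to v$. The result is a monotone network that is sound (any $s'$--$t'$ path yields a walk in $G$ from $s$ to $t$ using its edge labels) and complete on $G$ by the Savitch recursion. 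Symmetry of the construction in $V(G)\setminus\{s,t\}$ also gives acceptance of every $\sigma(G)$. Counting the internal vertices introduced by the recursion gives the claimed bound $n^k$.

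For the lower bound $m(G)\ge(n/64(l-1)^2)^{\lceil\lg l\rceil/2}$, I would use the cut-Fourier framework of \cite{potechin}. Given any sound monotone network $G'$ that accepts every $\sigma(G)$, attach to each $v'\in V(G')$ a ``state function'' $g_{v'}:\mathcal{C}\to\mathbb{R}$ that is $e$-invariant whenever $v'$ is incident to an edge of label $e$, and whose value at a cut $C$ captures the contribution of $v'$ to accepting the cut input $G(C)$. Symmetrising over $\sigma\in S_{\Vmst}$ using part~5 of Proposition~\ref{permutationproperties} and invoking Parseval (Proposition~\ref{parsevals}), the completeness of $G'$ forces the aggregate Fourier mass of the state functions, summed over subsets $V\subseteq\Vmst$ of a designated ``level,'' to be at least an absolute constant, while the soundness constraints bound this mass above by $|V(G')|$ times a factor that is itself a Fourier mass at a smaller level. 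Iterating this trade produces the claimed inequality.

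The main obstacle is the per-step recursive estimate. The exponent being $\lceil\lg l\rceil/2$ rather than $\lceil\lg l\rceil$ signals that a single halving of the ``cut scale'' is resolved by two Fourier levels of the $g_{v'}$'s, coupled through pairs of adjacent path vertices that straddle the cut; this is where the $(l-1)^2$ factor originates. Verifying this recursion rigorously requires a careful decomposition of each $g_{v'}$ into components indexed by the scale at which the underlying cut splits the path, a coupling argument relating mass at consecutive scales, and use of part~5 of Proposition~\ref{permutationproperties} to symmetrise the resulting Fourier coefficients. The remaining combinatorial overhead and normalisation constants account for the $64$.
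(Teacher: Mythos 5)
The theorem you are addressing is quoted in Section~1.3 directly from~\cite{potechin}; the present paper does not re-derive it, so there is no internal proof to compare against, only the paragraph in which the paper sketches~\cite{potechin}'s argument. Measured against that sketch, your proposal diverges on both halves.

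For the upper bound, the recursive claim-edge expansion is the right guiding picture, but you assert soundness without proof, and it is not automatic: a switching network is undirected, so an $s'$--$t'$ walk can wander back and forth, and the naive branching-program translation does not obviously rule this out. The route the paper inherits from~\cite{potechin} is the reversible pebble game, where a state records exactly the pebbled subset of $\Vmst$ and Savitch's recursion is run reversibly; soundness then falls out of the reachability description. Your expansion, done with no vertex sharing, produces a series--parallel graph, and for such graphs one can argue by structural induction that every $s'$--$t'$ walk can be straightened to one respecting the decomposition; but this needs to be said, not assumed. Your vertex count also comes out near $2^{k-1}n^{k}$ rather than $n^{k}$, whereas the pebble-game count $\binom{n}{\le k}$ gives the stated bound cleanly.

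For the lower bound the proposal has an internal inconsistency. You want a single function $g_{v'}$ per vertex of $G'$ that is simultaneously ``$e$-invariant whenever $v'$ is incident to an edge with label $e$'' and a state function tracking acceptance of the cut inputs $G(C)$. These cannot be the same object: the reachability function attached to $v'$ takes values in $\{-1,1\}$, while an $e$-invariant function must vanish on every cut $e$ crosses. In the argument the paper describes, the two roles are held by distinct families: each $v'\in V(G')$ gets the $\pm 1$-valued reachability function, and each edge $e\in E(G_0)$ of the small path gets a separately constructed $e$-invariant function $g_e$ with $g_e\cdot e_{\{\}}=1$; the quantities tracked are the dot products $v'\cdot g_e$, which are preserved across an edge of $G'$ labelled $e$. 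Two further ingredients are absent from your sketch: (i) the choice of many permutations $\{\sigma_i\}$ with $|\sigma_{i_1}(V(G_0))\cap\sigma_{i_2}(V(G_0))|<\lceil\lg l\rceil$, which makes the families $\sigma_i(g_{e_2}-g_{e_1})$ pairwise orthogonal because the differences have Fourier support only on subsets of $V(G_0)$ of size at least $\lceil\lg l\rceil$; and (ii) the discrepancy-in-progress accounting (as in Lemma~\ref{discrepancyinprogress} here) that converts that orthogonality, together with $\|v'\|=1$, into a size bound. Without these you have no mechanism to produce either the exponent $\lceil\lg l\rceil/2$ or the $(l-1)^{2}$; ``iterating a trade between adjacent Fourier levels'' is not the shape of the argument.
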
 
The main lower bound result of this paper is 
\begin{theorem}\label{lowerboundforlowconnectivity}
If $z$ and $m$ are constants such that $m \leq \frac{n}{2000z^{4}}$ and $G$ is a directed acyclic input graph such that 
\begin{enumerate}
\item There is no path of length at most $2^{z-1}$ from $s$ to $t$.
\item For any vertex $v \in V(G)$, there are at most $m$ vertices $w \in V(G)$ such that either there is a path of length at 
most $2^{z-2}$ from $v$ to $w$ in $G$ or there is a path of length at 
most $2^{z-2}$ from $w$ to $v$ in $G$
\end{enumerate}
then $$m(G) \geq \frac{(9mn)^{\frac{1}{4}}}{20|E(G)|(z+1)\sqrt{2^{z}z!}}(\frac{n}{9m})^\frac{z}{4}$$
\end{theorem}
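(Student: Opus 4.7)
My plan is to generalize the harmonic-analytic lower-bound framework of \cite{potechin} from paths to graphs with bounded local connectivity. The starting point is to associate to each internal vertex $v' \in V(G') \setminus \{s',t'\}$ of any sound monotone switching network $G'$ accepting every $\sigma(G)$ a ``reachability'' function $g_{v'}: \mathcal{C} \to \mathbb{R}$ capturing how close $v'$ is to being reachable from $s'$ on input $G(C)$. Such a function must be $e$-invariant for each edge label $e$ on an edge incident to $v'$, and the ``flow'' from $g_{s'}$ to $g_{t'}$ across the network is constrained by these invariances. The size $n'$ of $G'$ is then bounded below by the total Fourier mass that the $g_{v'}$ collectively must carry in order to interpolate between $g_{s'} \equiv 1$ and a $g_{t'}$ that is nonzero on all cuts corresponding to a permuted input $\sigma(G)$.

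Next, I would construct a nested family of certificate functions $f_0, f_1, \ldots, f_z : \mathcal{C} \to \mathbb{R}$, where $f_i$ is supported on cuts that certify $s$-$t$ connectivity at distance scale $2^i$. Hypothesis~(1) ensures that only $f_z$, and not any shallower $f_i$, can detect the true connectivity of $G$, so the network must "bridge" all $z$ levels. Hypothesis~(2) provides the essential Fourier-support bound: the sets $V \subseteq \Vmst$ on which $\hat f_i$ is supported grow by at most a factor controlled by $m \cdot 2^i / n$ from one level to the next. Averaging over a random $\sigma \in S_{\Vmst}$ using Proposition~\ref{permutationproperties}(5) symmetrizes the Fourier coefficients over orbits and makes the bound independent of the particular branching structure of $G$. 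Combining Parseval (Proposition~\ref{parsevals}) with this symmetry, I would show that the spectral mass ratio between consecutive $f_i$'s contributes a factor of $(9m/n)^{1/4}$, yielding the claimed $(n/9m)^{z/4}$ after telescoping over $z$ levels.

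The remaining constants in the bound are traceable to specific steps: the $|E(G)|$ factor comes from a union bound over which edge of $G$ first appears on the certifying path at each level; the factor $(z+1)$ is a union bound over levels when identifying the layer at which a given $v'$ contributes mass; and $\sqrt{2^z z!}$ arises as an overcounting loss when translating the combinatorial structure of paths of length up to $2^z$ into the Fourier basis $\{e_V\}$. The hypothesis $m \leq n/(2000 z^4)$ will be used to ensure the per-level losses do not swamp the per-level gains, making the telescoped product nontrivial.

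The main obstacle will be the careful design of the certificate functions $f_i$: they must simultaneously (a) be built from the short-path structure of $G$ in a way that $f_z$ can detect the $s$-$t$ connection but $f_{z-1}$ cannot, invoking hypothesis~(1) in a quantitative way, (b) admit a clean per-level Fourier bound of the form $\|\hat f_{i+1}\|^2 / \|\hat f_i\|^2 \lesssim 9m/n$ via hypothesis~(2), and (c) be compatible with the $e$-invariance conditions on the $g_{v'}$'s so that each switching-network vertex can contribute to at most a small amount of mass transport between adjacent levels. In the path case of \cite{potechin} the linear order of the path trivialized (a) and (c); replacing that order with a counting argument under bounded local connectivity is the conceptual heart of the proof, and is where the quantitative hypothesis on $m$ will be consumed.
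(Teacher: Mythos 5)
Your proposal identifies the right high-level toolbox -- Fourier analysis over cuts, $e$-invariance, permutation averaging, and control of Fourier degree -- but the structure you propose is genuinely different from the paper's, and the step where your proposal is vaguest is exactly where the paper's central new machinery lives.

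The paper does not build a nested hierarchy of ``certificate functions'' $f_0, \ldots, f_z$ indexed by distance scale, and there is no telescoping product of per-level mass ratios. Instead, it constructs a \emph{single} set of $e$-invariant functions $\{g_e : e \in E(G)\}$, one per edge of the input graph, all sharing the same low-degree Fourier tail (a common base function $g$ with $\hat g_V = 0$ for $|V| \ge z$), so that the differences $g_e - g_{e_0}$ have Fourier mass concentrated at degree exactly $z$. The $(n/9m)^{z/4}$ then drops out of a \emph{one-shot} application of the permutation-average inequality (Corollary \ref{keycorollary}), whose coefficient at degree $z$ scales like $2^z z!/n^z$; the $z$-th power is a consequence of the degree cutoff, not an iterated gain. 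Your attributions of the constant factors are also off: the $|E(G)|$ is an averaging loss over the ``players'' $e$ in the discrepancy-in-progress lemma (Lemma \ref{discrepancyinprogress}), not a union bound over which edge appears at which level; the $(z+1)$ is a Cauchy--Schwarz factor over the $u$-index in Corollary \ref{keycorollary}; and $\sqrt{2^z z!}$ is just the factorial coefficient at degree $z$, not an overcounting of paths.

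More seriously, your plan never makes the functions concrete, and you flagged that yourself (``the careful design of the certificate functions $f_i$... is the conceptual heart''). This is precisely where the paper introduces its real new ideas: the sum vectors $\vec{s}_{k,u,g}$, the error vectors $\vec{e}_{k,u,g}$ and difference vectors $\vec{\Delta}_{k,u,g,e}$ characterizing well-definedness and $e$-invariance, the ``relevance''/``free''/``fixed'' bookkeeping, and most importantly the spectral analysis of the projection matrices $P_k P_k^T$ (eigenvalues $(n-k-i)(k+1-i)$) together with the sparsity-deletion argument. It is this last piece -- showing that after deleting rows and columns for ``bad'' vertices and pairs the projector still has smallest eigenvalue $\ge n/2$ -- that actually consumes hypothesis (2) and the requirement $m \le n/(2000 z^4)$. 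Your proposal instead consumes hypothesis (2) as a ``Fourier-support growth'' bound between consecutive $f_i$'s, a claim you state but do not justify and which has no counterpart in the paper. Without a concrete construction of the $g_e$ (or $f_i$) together with quantitative control of $\|\vec{s}_{k,u,g_e - g_{e_0}}\|^2$, the proposal cannot be completed; a construction is not a detail here but the bulk of the work (all of Sections \ref{einvarianceconditions}--\ref{cuttingoff} and Appendix \ref{welldefinedproof}).

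Finally, one terminological point: you write that the reachability function $g_{v'}$ of a switching-network vertex ``must be $e$-invariant for each edge label $e$ on an edge incident to $v'$.'' That is not the relation used. The reachability function of $v'$ satisfies $v'(C) = w'(C)$ when an edge labeled $e$ joins $v'$ and $w'$ and $e$ does not cross $C$; $e$-invariance (vanishing on cuts crossed by $e$) is instead imposed on the externally constructed functions $g_e$, and the two notions interact only through the identity $v' \cdot g_e = w' \cdot g_e$. Conflating them obscures where the adversarial quantity (the dot product with a permuted $g_e - g_{e_0}$) comes from.
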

For both papers, the general idea is as follows. We aim to construct a good set of functions 
$\{g_e: e \in E(G)\}$ where $g_e$ is $e$-invariant for any $e$. For any switching network $G'$, we will assign each 
vertex $v' \in V(G')$ a function from $\mathcal{C} \to \mathbb{R}$ (which we identify with the vertex for 
convenience) where $s'(C) = -1$, $t'(C) = 1$ for all $C$, and whenever there is an edge with label $e$ between 
vertices $v'$ and $w'$ in $G'$, $v'(C) = w'(C)$ for all cuts $C$ which are not crossed by $e$. This implies that 
$v' \cdot g_e$ = $w' \cdot g_e'$ whenever there is an edge with label $e$ between $v'$ and $w'$ in $G'$

Now consider the following game. There are $|E(G)|$ players, each of which has an edge $e$. At any point, the player with 
edge $e$ has value $g_e \cdot v'$ where $v'$ is the vertex in $G'$ which we are currently at. We will choose the functions 
$\{g_e: e \in E(G)\}$ so that for all $e$, $s' \cdot g_e = -1$ and $t' \cdot g_e = 1$. Thus, all players start with value $-1$ 
and must end at value $1$. However, whenever we go along an edge with label $e$ in $G'$, the player who has $e$ cannot change his/her 
value while every other player can change values. This means that if $G'$ is small then at some point there 
will be a large discrepency in the values of the players. This corresponds to a vertex $v' \in V(G')$ and 
edges $e_1,e_2 \in E(G)$ such that $v' \cdot (g_{e_2} - g_{e_1})$ is large.

Both papers then use this to prove a lower size bound on the size of $G'$. In \cite{potechin} we take a small input graph 
$G_0$ consisting of just a path of length $l$ from $s$ to $t$ and find functions $\{g_e: e \in E(G_0)\}$ for this graph where for 
any $e_2,e_1 \in E(G_0)$
$(g_{e_2} - g_{e_1}) \cdot e_V = 0$ unless $V \subseteq V(G_0)$ and $|V| \geq \lceil{\lg{l}}\rceil$. We then add isolated vertices to 
the graph, keeping the same set of functions $\{g_e: e \in E(G_0)\}$ (as expressed in terms of their Fourier coefficients). 
Now if we choose permutations $\{\sigma_i\}$ such that 
$\sigma_{i_2}(V(G_0)) \cap \sigma_{i_1}(V(G_0)) < \lceil{\lg{l}}\rceil$ for any distinct $i_1,i_2$ then for any distinct 
$i_1$ and $i_2$, for any $e_1,e_2,e_3,e_4 \in E(G_0)$, 
$\sigma_{i_1}(g_{e_2} - g_{e_2})$ is orthogonal to $\sigma_{i_2}(g_{e_4} - g_{e_3})$. Using this orthogonality and 
the discrepency in progress idea above, we prove a good lower size bound on $G'$.

The trouble with this type of argument is that it requires the input graph to mostly consist of isolated vertices, which is 
extremely restrictive. In this paper, we instead choose the functions $\{g_e: e \in E(G)\}$ for the input graph directly. 
We choose these functions so that for any vertex $v' \in V(G')$, for any edges $e_2,e_1 \in E(G)$, 
for a random permutation $\sigma$ of the vertices $\Vmst$ the expected value of $v' \cdot \sigma(g_{e_2} - g_{e_1})$ is small. 
We then use this to prove our lower size bound on $G'$. As shown in Section \ref{lowerbounds}, this technique works well
for a much wider class of input graphs $G$.

Note that for both papers, our lower bound on $m(G)$ is $n^{\Omega(\lg{l})}$ where $l$ is the length of the shortest path 
from $s$ to $t$ and there are some conditions on $G$. We may ask what the conditions on $G$ should be for this bound to hold. 
Does it hold for all directed acyclic input graphs $G$ or are there directed input graphs 
$G$ for which $m(G)$ is significantly smaller than $n^{c(\lg{l})}$ for any $c > 0$? In \cite{potechin} we 
only get the upper bound of $n^{O(\lg{l})}$ which comes from Savitch's algorithm. In section \ref{upperbounds} of 
this paper we give a class of input graphs $G$ for which we can obtain significantly better upper bounds on $m(G)$. This 
shows that we do indeed need additional conditions on $G$ to get the lower bound of $n^{\Omega(\lg{l})}$ on $m(G)$. Also, 
while it is not shown here, for these graphs $G$ we still have a lower bound of $n^{\Omega(\lg{l})}$ for all 
certain-knowledge switching networks (defined in Section 3 of \cite{potechin}) which accept all of the 
inputs $\{\sigma(G): \sigma \in S_{\Vmst}\}$. This shows that monotone switching networks are strictly more powerful than 
certain knowledge switching networks.
\section{Lower bounds}\label{lowerbounds}
In this section, we prove bounds on $m(G)$ for a large class of directed 
acyclic input graphs $G$ by carefully constructing a set of functions $\{g_e: e \in E(G)\}$ corresponding to $G$. 
In subsections \ref{permutations} and \ref{lowerboundsfrompermutationaverages} we show what properties our 
set of functions should have to give us good lower bounds. In subsection \ref{einvarianceconditions} 
we explore what these properties say about our set of functions. The remaining subsections are devoted to showing how to 
construct our set of functions for the input graph $G$.
\subsection{Calculating Permutation Averages}\label{permutations}
For our lower bounds, we will need bounds on expressions of the form $E_{\sigma \in S_{\Vmst}}[f \cdot \sigma(g)]^2$ 
where $f,g$ are functions from $\mathcal{C}$ to $\mathbb{R}$ and $\sigma(g)$ is a permutation of $g$. 
Here we obtain bounds on $E_{\sigma \in S_{\Vmst}}[f \cdot \sigma(g)]^2$ 
in terms of the norm $||f||$ of $f$ and certain sums associated to the function $g$. This will be extremely useful 
because the function $f$ will correspond to a vertex in an arbitrary sound monotone 
switching network $G'$ so we will have no control over anything except for $||f||$. However, we will have a 
large degree of control over $g$ and will thus be able to adjust the values of many of the sums associated with 
$g$ to give us the bounds we need.
We now state our bound on $E_{\sigma \in S_{\Vmst}}[(f \cdot \sigma(g))^2]$.
\begin{definition}
Given functions $g_1,g_2: \mathcal{C} \to \mathbb{R}$, define 
$$s_{k,u_1,u_2}(g_1,g_2) = 
\sum_{A,B,C \subseteq \Vmst: |A| = k, |B| = u_1, |C| = u_2, \atop A \cap B = A \cap C = B \cap C = \emptyset}
{\hat{g_1}_{A \cup B}\hat{g_2}_{A \cup C}}$$
\end{definition}
\begin{theorem}\label{allequalisbesttheorem}
For any function $g: \mathcal{C} \to \mathbb{R}$ such that $\hat{g}_{V}$ is only nonzero when $|V| \leq \frac{\sqrt{n}}{2} - 1$, 
for any function $f: \mathcal{C} \to \mathbb{R}$, \\
$$E_{\sigma \in S_{\Vmst}}[(f \cdot \sigma(g))^2] \leq 2\sum_{k,u_1,u_2}{\frac{\sqrt{(k+u_1)!(k+u_2)!}}{n^{k + \frac{u_1+u_2}{2}}}}|s_{k,u_1,u_2}(g,g)|\cdot||f||$$
\end{theorem}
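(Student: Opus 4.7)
The plan is to expand $f \cdot \sigma(g)$ in the Fourier basis, square, and then average over $\sigma$. By Proposition~\ref{parsevals} and part~5 of Proposition~\ref{permutationproperties} one has $\widehat{\sigma(g)}_V = \hat{g}_{\sigma^{-1}(V)}$, and reindexing via $A = \sigma^{-1}(V)$ yields
\[
f \cdot \sigma(g) \;=\; \sum_{A \subseteq \Vmst} \hat{f}_{\sigma(A)}\,\hat{g}_{A},
\]
so that
\[
(f \cdot \sigma(g))^2 \;=\; \sum_{A_1, A_2} \hat{g}_{A_1}\hat{g}_{A_2}\,\hat{f}_{\sigma(A_1)}\hat{f}_{\sigma(A_2)}.
\]

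Next I would average over $\sigma$. Because $\sigma$ is uniform on $S_{\Vmst}$, the joint distribution of $(\sigma(A_1), \sigma(A_2))$ depends only on the triple $(k, u_1, u_2) := (|A_1 \cap A_2|,\, |A_1 \setminus A_2|,\, |A_2 \setminus A_1|)$. Hence $E_{\sigma}[\hat{f}_{\sigma(A_1)}\hat{f}_{\sigma(A_2)}]$ is the average of $\hat{f}_{B_1}\hat{f}_{B_2}$ over all ordered pairs $(B_1,B_2)$ with that same intersection structure; there are $N_{k,u_1,u_2} := \binom{n}{k}\binom{n-k}{u_1}\binom{n-k-u_1}{u_2}$ such pairs, and their sum is exactly $s_{k,u_1,u_2}(f,f)$ by definition. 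Grouping the outer sum by $(k,u_1,u_2)$ gives
\[
E_{\sigma}\!\bigl[(f \cdot \sigma(g))^2\bigr] \;=\; \sum_{k,u_1,u_2} \frac{s_{k,u_1,u_2}(f,f)\cdot s_{k,u_1,u_2}(g,g)}{N_{k,u_1,u_2}}.
\]

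The third step is to bound $|s_{k,u_1,u_2}(f,f)|$ by $||f||^2$. Viewing the defining sum as a bilinear pairing in which $A \cup B$ is a set of size $k+u_1$ and $A \cup C$ is a set of size $k+u_2$, Cauchy--Schwarz followed by Parseval gives
\[
|s_{k,u_1,u_2}(f,f)| \;\leq\; \sqrt{\binom{k+u_1}{k}\binom{n-k-u_1}{u_2}\binom{k+u_2}{k}\binom{n-k-u_2}{u_1}}\;||f||^2.
\]
A short factorial computation shows that this square-root divided by $N_{k,u_1,u_2}$ simplifies to the symmetric expression $1/\sqrt{\binom{n}{k+u_1}\binom{n}{k+u_2}}$. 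Finally, the hypothesis that $\hat{g}_V$ vanishes for $|V| > \sqrt{n}/2 - 1$ restricts the outer sum to $k+u_1,\, k+u_2 \leq \sqrt{n}/2 - 1$; for any such $m$ the product $\prod_{i=0}^{m-1}(1 - i/n) \geq 1/2$, so $\binom{n}{m} \geq n^m/(2\,m!)$, which gives
\[
\frac{1}{\sqrt{\binom{n}{k+u_1}\binom{n}{k+u_2}}} \;\leq\; \frac{2\sqrt{(k+u_1)!(k+u_2)!}}{n^{k+(u_1+u_2)/2}},
\]
producing the stated inequality (with $||f||^2$ in place of $||f||$; the two agree in the intended applications where $f$ is a $\pm1$-valued vertex function of unit norm).

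The main obstacle is the combined Cauchy--Schwarz and binomial-simplification step: Cauchy--Schwarz naturally produces an expression asymmetric in $u_1$ and $u_2$, and one has to arrange the resulting factors so that they cancel cleanly against $N_{k,u_1,u_2}$ to expose the symmetric factor $1/\sqrt{\binom{n}{k+u_1}\binom{n}{k+u_2}}$ before the final $\binom{n}{m} \geq n^m/(2\,m!)$ estimate is invoked. Once that cancellation is in hand, the rest is routine manipulation of Fourier coefficients and uniform averages over $S_{\Vmst}$.
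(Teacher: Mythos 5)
Your proof is correct, and it takes a genuinely different route from the paper's on the key step of bounding $|s_{k,u_1,u_2}(f,f)|$.

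Both proofs begin with the same first step (which you derive a bit differently by reindexing the summation variable as $A = \sigma^{-1}(V)$, but the result is identical to the paper's Lemma~\ref{expectationfromsums}, since $\frac{k!\,u_1!\,u_2!\,(n-k-u_1-u_2)!}{n!} = 1/N_{k,u_1,u_2}$). The divergence is in how $|s_{k,u_1,u_2}(f,f)|$ is controlled. The paper first applies an M\"obius-type inversion (Lemma~\ref{pairsfromsingleslemma}) to express $s_{k,u_1,u_2}(f,f)$ as an alternating sum $\sum_u (-1)^u\binom{k+u}{u}\sum_A s_{A,u_1-u}(f)s_{A,u_2-u}(f)$, bounds each summand by two Cauchy--Schwarz applications, and then argues that under the degree restriction on $g$ the resulting series is dominated by twice its $u=0$ term (Corollary~\ref{allequalisbestcorollarythree}). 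You instead read $s_{k,u_1,u_2}(f,f)=\sum \hat f_{D_1}\hat f_{D_2}$ over the biregular bipartite pairing between sets of size $k+u_1$ and size $k+u_2$ with intersection size exactly $k$, and apply the singular-value bound $\|M\|_{op}\le\sqrt{d_1 d_2}$ for biregular bipartite adjacency matrices together with Parseval. This is cleaner: it avoids the inversion lemma and the geometric-series argument entirely, gives a slightly tighter constant (you get $\binom{n-k-u_1}{u_2}$, $\binom{n-k-u_2}{u_1}$ in place of $\binom{n}{u_1}$, $\binom{n}{u_2}$, and no factor of 2 at this stage), and the subsequent algebra $\sqrt{d_1d_2}/N_{k,u_1,u_2} = 1/\sqrt{\binom{n}{k+u_1}\binom{n}{k+u_2}}$ is a pleasant identity that the paper's route never exposes. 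The one thing to spell out if you write this up in full is the bipartite operator-norm bound itself; ``Cauchy--Schwarz followed by Parseval'' is a fair summary, but the needed fact is that for a bipartite $0/1$ matrix with left-degree $d_1$ and right-degree $d_2$ the row sums of $MM^T$ are all $d_1 d_2$, hence $\|M\|^2 \le d_1 d_2$.

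You are also right that the conclusion should read $\|f\|^2$ rather than $\|f\|$: the paper's Corollary~\ref{allequalisbestcorollarytwo} drops a square on $\|g\|$ (its proof via Corollary~\ref{allequalisbestcorollary} produces $\|g\|^2$), and that typo propagates into the theorem statement. Since the theorem is applied only to vertex functions $v'$ of unit norm, the slip is harmless downstream, but your version is the dimensionally consistent one.
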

\begin{proof}
\begin{lemma}\label{expectationfromsums}
For any functions $f,g: \mathcal{C} \to \mathbb{R}$, 
$$E_{\sigma \in S_{\Vmst}}[(f \cdot \sigma(g))^2] = 
\sum_{k,u_1,u_2}{\frac{k!{u_1}!{u_2}!(n - k - u_1 - u_2)!}{n!}s_{k,u_1,u_2}(f,f)s_{k,u_1,u_2}(g,g)}$$
\end{lemma}
\begin{proof}
$$E_{\sigma \in S_{\Vmst}}[(f \cdot \sigma(g))^2] = E_{\sigma \in S_{\Vmst}}\left[(\sum_{V_1 \subseteq \Vmst}{\hat{f}_{V_1}}{\hat{\sigma(g)}_{V_1}})
(\sum_{V_2 \subseteq \Vmst}{\hat{f}_{V_2}}{\hat{\sigma(g)}_{V_2}})\right]$$
Taking $A = V_1 \cap V_2, B = V_1 \setminus V_2, C = V_2 \setminus V_1$, this is equal to 
$$E_{\sigma \in S_{\Vmst}}\left[\sum_{k,u_1,u_2}
{\sum_{A,B,C \subseteq \Vmst: |A| = k, |B| = u_1, |C| = u_2, \atop A \cap B = A \cap C = B \cap C = \emptyset}
{\hat{f}_{A \cup B}\hat{f}_{A \cup C}\hat{\sigma(g)}_{A \cup B}\hat{\sigma(g)}_{A \cup C}}}\right]$$ 
which is equal to 
\begin{equation}\label{firstequation}
\sum_{k,u_1,u_2}{\sum_{A,B,C \subseteq \Vmst: |A| = k, |B| = u_1, |C| = u_2, 
\atop A \cap B = A \cap C = B \cap C = \emptyset}}
{\hat{f}_{A \cup B}\hat{f}_{A \cup C}E_{\sigma \in S_{\Vmst}}[\hat{\sigma(g)}_{A \cup B}\hat{\sigma(g)}_{A \cup C}]}
\end{equation}
Now by statement 5 of Proposition \ref{permutationproperties}, for any $A,B,C \subseteq \Vmst$ such that $|A| = k$, $|B| = u_1$, 
$|C| = u_2$, and $A \cap B = A \cap C = B \cap C = \emptyset$,
\begin{align*}
E_{\sigma \in S_{\Vmst}}[\hat{\sigma(g)}_{A \cup B}\hat{\sigma(g)}_{A \cup C}] &= 
E_{\sigma \in S_{\Vmst}}[\hat{g}_{\sigma^{-1}(A \cup B)}\hat{g}_{\sigma^{-1}(A \cup C)}] \\
&= \frac{k!{u_1}!{u_2}!(n - k - u_1 - u_2)!}{n!}s_{k,u_1,u_2}(g,g)
\end{align*}
Plugging this into \eqref{firstequation} we deduce that 
$$E_{\sigma \in S_{\Vmst}}[(f \cdot \sigma(g))^2] = 
\sum_{k,u_1,u_2}{\frac{k!{u_1}!{u_2}!(n - k - u_1 - u_2)!}{n!}s_{k,u_1,u_2}(f,f)s_{k,u_1,u_2}(g,g)}$$
which completes the proof of Lemma \ref{expectationfromsums}.
\end{proof}
We now wish to bound how large $|s_{k,u_1,u_2}(f,f)|$ can be in terms of $||f||$.
\begin{definition}
Given a function $g: \mathcal{C} \to \mathbb{R}$ and a subset $A \subseteq \Vmst$, define 
$$s_{A,u_1}(g) = \sum_{B \subseteq \Vmst: |B| = u_1, A \cap B = \emptyset}{\hat{g}_{A \cup B}}$$
\end{definition}
\begin{definition}
Given functions $g_1,g_2: \mathcal{C} \to \mathbb{R}$ and a subset $A \subseteq \Vmst$, define 
$$s_{A,u_1,u_2}(g_1,g_2) = \sum_{B,C \subseteq \Vmst: |B| = u_1, |C| = u_2, 
\atop A \cap B = A \cap C = B \cap C = \emptyset}{\hat{g_1}_{A \cup B}\hat{g_2}_{A \cup C}}$$
\end{definition}
\begin{proposition}\label{pairsfromsingles}
$\forall A, s_{A,u_1}(g_1)s_{A,u_2}(g_2) = \sum_{u \geq 0}{\sum_{A \subseteq B \subseteq \Vmst, |B| = |A| + u}s_{B,u_1-u,u_2-u}(g_1,g_2)}$
\end{proposition}
\begin{corollary}\label{pairsfromsinglescorollary}
$\forall k,u_1,u_2, \sum_{A: |A| = k}{s_{A,u_1}(g_1)s_{A,u_2}(g_2)} = \sum_{u \geq 0}{{{k + u} \choose u}s_{k+u,u_1 - u, u_2 - u}(g_1,g_2)}$
\end{corollary}
\begin{proof}
This follows immediately when we sum Proposition \ref{pairsfromsingles} over all $A$ of size $k$.
\end{proof}
\noindent Surprisingly, Corollary \ref{pairsfromsinglescorollary} has an almost identical inverse formula.
\begin{lemma}\label{pairsfromsingleslemma}
$\forall k,u_1,u_2, s_{k,u_1,u_2}(g_1,g_2) = \sum_{u \geq 0}{(-1)^{u}{{k + u} \choose u}\sum_{A: |A| = k + u}{s_{A,u_1-u}(g_1)s_{A,u_2-u}(g_2)}}$
\end{lemma}
\begin{proof}
This lemma corresponds to the fact that the inverse of a matrix like\\
\[ \left( \begin{array}{cccc}
1 & {{k+1} \choose 1} & {{k+2} \choose 2} & {{k+3} \choose 3} \\
0 & 1 & {{k+2} \choose 1} & {{k+3} \choose 2} \\
0 & 0 & 1  & {{k+3} \choose 1} \\
0 & 0 & 0 & 1 \end{array} \right)\]
is 
\[ \left( \begin{array}{cccc}
1 & -{{k+1} \choose 1} & {{k+2} \choose 2} & -{{k+3} \choose 3} \\
0 & 1 & -{{k+2} \choose 1} & {{k+3} \choose 2} \\
0 & 0 & 1  & -{{k+3} \choose 1} \\
0 & 0 & 0 & 1 \end{array} \right)\]
Calculating directly, every entry in the product of these matrices has the form 
$\sum_{u=0}^{m}{(-1)^{m-u}{{j+u} \choose u}{{j+m} \choose {m-u}}}$ 
where $j$ is $k$ plus the row number and $m$ is the column number minus the row number. Now if $m \geq 1$, \\
$\sum_{u=0}^{m}{(-1)^{m-u}{{j+u} \choose u}{{j+m} \choose {m-u}}} = 
\frac{(j+m)!}{j!}\sum_{u=0}^{m}{(-1)^k\frac{1}{u!(m-u)!}} = \frac{(j+m)!}{j!m!}(1 + (-1))^m = 0$. If $m < 0$ then 
$\sum_{u=0}^{m}{(-1)^{m-u}{{j+u} \choose u}{{j+m} \choose {m-u}}} = 0$ and if $m = 0$ then 
$\sum_{u=0}^{m}{(-1)^{m-u}{{j+u} \choose u}{{j+m} \choose {m-u}}} = 1$.
\end{proof}
To bound how large $|s_{k,u_1,u_2}(f,f)|$ can be in terms of $||f||$, we just need bounds for how large expressions 
of the form $\sum_{A:|A| = k}{s_{A,u_1}(g)s_{A,u_2}(g)}$ can be.
\begin{proposition}\label{boundingbysquares}
$\forall k,u_1,u_2, |\sum_{A:|A| = k}{s_{A,u_1}(g)s_{A,u_2}(g)}| \leq 
\sqrt{\sum_{A:|A| = k}{(s_{A,u_1}(g))^2}\sum_{A:|A| = k}{(s_{A,u_2}(g))^2}}$
\end{proposition}
\begin{proof}
This follows immediately from the Cauchy-Schwarz inequality.
\end{proof}
\begin{proposition}\label{allequalisbest}
For all $k,u_1$ and all $A$ with $|A| = k$, 
$$(s_{A,u_1}(g))^2 \leq {{n - k} \choose {u_1}}
\sum_{B \subseteq \Vmst: |B| = u_1, \atop A \cap B = \emptyset}{(\hat{g}_{A \cup B})^2}$$
\end{proposition}
\begin{proof}
This follows from the Cauchy-Schwarz inequality 
$$\left(\sum_{B \subseteq \Vmst: |B| = u_1, \atop A \cap B = \emptyset}{f(B)h(B)}\right)^2 \leq 
\sum_{B \subseteq \Vmst: |B| = u_1, \atop A \cap B = \emptyset}{(f(B))^2}
\sum_{B \subseteq \Vmst: |B| = u_1, \atop A \cap B = \emptyset}{(h(B))^2}$$
with $f(B) = 1$ and $h(B) = \hat{g}_{A \cup B}$.
\end{proof}
\begin{corollary}\label{allequalisbestcorollary}
$\forall k,u_1, \sum_{A:|A| = k}{(s_{A,u_1}(g))^2} \leq 
{{n - k} \choose {u_1}}{{k + u_1} \choose {u_1}}\sum_{B:|B| = k + u_1}{(\hat{g}_B)^2} \leq 
{{n - k} \choose {u_1}}{{k + u_1} \choose {u_1}}||g||^2$
\end{corollary}
\begin{proof}
This follows immediately when we sum Proposition \ref{allequalisbest} over all $A$ of size $k$.
\end{proof}
\begin{corollary}\label{allequalisbestcorollarytwo}
$\forall k,u_1,u_2, |\sum_{A:|A| = k}{s_{A,u_1}(g)s_{A,u_2}(g)}| \leq \sqrt{{{n - k} \choose {u_1}}{{n - k} \choose {u_2}}
{{k + u_1} \choose {u_1}}{{k + u_2} \choose {u_2}}} \cdot ||g||$
\end{corollary}
\begin{proof}
This follows immediately from Proposition \ref{boundingbysquares} and Corollary \ref{allequalisbestcorollary}
\end{proof}
\begin{corollary}\label{originalallequalisbestcorollarythree}
If $f$ is a function $f: \mathcal{C} \to \mathbb{R}$, we have that for all $k, u_1, u_2$, 
$$|s_{k,u_1,u_2}(f,f)| \leq \sum_{u \geq 0}{{{k + u} \choose u}}
\sqrt{{{n - k - u} \choose {u_1 - u}}{{n - k - u} \choose {u_2 - u}}{{k + u_1} \choose {u_1 - u}}{{k + u_2} \choose {u_2 - u}}}||f||$$
\end{corollary}
\begin{proof}
This follows immediately from Corollary \ref{allequalisbestcorollarytwo} and Lemma \ref{pairsfromsingleslemma}.
\end{proof}
\begin{corollary}\label{allequalisbestcorollarythree}
If $n \geq k + \max\{u_1,u_2\} + 2(\max\{u_1,u_2\})^2$ then 
$$|s_{k,u_1,u_2}(f,f)| \leq 2\sqrt{{{n} \choose {u_1}}{{n} \choose {u_2}}{{k + u_1} \choose {u_1}}{{k + u_2} \choose {u_2}}}||f||$$
\end{corollary}
\begin{proof}
The idea is to show that each term in the sum in \ref{originalallequalisbestcorollarythree} is at most half of the 
previous term. This follows from the equations
\begin{enumerate}
\item $\frac{{{k + u + 1} \choose {u+1}}}{{{k + u} \choose u}} = \frac{k+u+1}{u+1}$
\item $\frac{{{n - k - u - 1} \choose {u_1 - u - 1}}{{n - k - u - 1} \choose {u_2 - u - 1}}}
{{{n - k - u} \choose {u_1 - u}}{{n - k - u} \choose {u_2 - u}}} = \frac{(u_1 - u)(u_2 - u)}{(n-k-u)^2}$
\item $\frac{{{k + u_1} \choose {u_1 - u - 1}}{{k + u_2} \choose {u_2 - u - 1}}}
{{{k + u_1} \choose {u_1 - u}}{{k + u_2} \choose {u_2 - u}}} = \frac{(u_1 - u)(u_2 - u)}{(k+u+1)^2}$
\end{enumerate}
\end{proof}
Theorem \ref{allequalisbesttheorem} now follows easily. By Lemma \ref{expectationfromsums}, 
$$E_{\sigma \in S_{\Vmst}}[(f \cdot \sigma(g))^2] = 
\sum_{k,u_1,u_2}{\frac{k!{u_1}!{u_2}!(n - k - u_1 - u_2)!}{n!}s_{k,u_1,u_2}(f,f)s_{k,u_1,u_2}(g,g)}$$
Plugging Corollary \ref{allequalisbestcorollarythree} into this gives 
$$E_{\sigma \in S_{\Vmst}}[(f \cdot \sigma(g))^2] \leq 2\sum_{k,u_1,u_2}{\frac{\sqrt{(k+u_1)!(k+u_2)!}}{n^{k + \frac{u_1+u_2}{2}}}}|s_{k,u_1,u_2}(g,g)| \cdot ||f||$$ 
as needed. To check that $n \geq k + \max\{u_1,u_2\} + 2(\max\{u_1,u_2\})^2$ holds when needed, note that by our assumption that 
$\hat{g}_{V}$ is only nonzero when $|V| \leq \frac{\sqrt{n}}{2}-1$, we may ignore all terms where 
$k + \max{\{u_1,u_2\}} > \frac{\sqrt{n}}{2}-1$. Thus, for all of our terms, 
$n \geq 2(k + \max{\{u_1,u_2\}}+1)^2 \geq k + \max\{u_1,u_2\} + 2(\max\{u_1,u_2\})^2$, as needed.
\end{proof}
\noindent For the functions $g$ we will be looking at, it is difficult to bound $|s_{k,u_1,u_2}(g,g)|$ directly. We would like a 
bound in terms of the sums $s_{A,u}(g)$. To obtain such a bound, we apply Lemma \ref{pairsfromsingleslemma} and 
Proposition \ref{boundingbysquares} to Theorem \ref{allequalisbesttheorem}.
\begin{corollary}\label{keycorollary}
For any function $g: \mathcal{C} \to \mathbb{R}$ such that $\hat{g}_{V}$ is only nonzero when $|V| \leq z$ for 
some $z \leq \frac{\sqrt{n}}{2}-1$, for any function $f: \mathcal{C} \to \mathbb{R}$, 
$$E_{\sigma \in S_{\Vmst}}[(f \cdot \sigma(g))^2] \leq 
2(z+1)||f||\sum_{k,u}{\frac{2^{k}(k+u)!}{n^{k+u}}\sum_{A:|A| = k}{(s_{A,u}(g))^2}}$$
\end{corollary}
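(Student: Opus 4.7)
The plan is to start from the bound supplied by Theorem \ref{allequalisbesttheorem}, replace $|s_{k,u_1,u_2}(g,g)|$ by the single-index sums $s_{A,u}(g)$ via Lemma \ref{pairsfromsingleslemma}, and then massage the resulting triple sum into a square so that Cauchy-Schwarz converts it into the shape stated in the corollary.

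First I would write, for brevity, $T_{k,u} := \sum_{A:|A|=k}(s_{A,u}(g))^2$. Applying Lemma \ref{pairsfromsingleslemma} to expand $s_{k,u_1,u_2}(g,g)$ and then applying Proposition \ref{boundingbysquares} termwise gives
$$|s_{k,u_1,u_2}(g,g)| \;\le\; \sum_{u\ge 0}\binom{k+u}{u}\sqrt{T_{k+u,\,u_1-u}\,T_{k+u,\,u_2-u}}.$$
Substituting this into the bound from Theorem \ref{allequalisbesttheorem} and then making the change of variables $k' = k+u$, $u_i' = u_i - u$ (so that $k+u_i = k'+u_i'$ and $k + (u_1+u_2)/2 = k' + (u_1'+u_2')/2$), the sum over the original $k,u_1,u_2$ becomes a sum over $k',u_1',u_2'$ with an inner sum over $u$ of $\binom{k'}{u}$, which collapses via $\sum_{u=0}^{k'}\binom{k'}{u} = 2^{k'}$. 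Relabeling back to $k,u_1,u_2$, one obtains
$$E_{\sigma}[(f\cdot\sigma(g))^2] \;\le\; 2\|f\|\sum_{k,u_1,u_2}\frac{2^{k}\sqrt{(k+u_1)!(k+u_2)!}}{n^{k+(u_1+u_2)/2}}\sqrt{T_{k,u_1}T_{k,u_2}}.$$

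Next I would factor the summand as a product: writing $a_{k,u} := \sqrt{\frac{(k+u)!}{n^{u}}\,T_{k,u}}$, the sum over $u_1,u_2$ becomes $\bigl(\sum_u a_{k,u}\bigr)^2$. Here the hypothesis that $\hat g_V$ vanishes for $|V|>z$ is crucial: it forces $T_{k,u}=0$ whenever $k+u>z$, so for each $k$ there are at most $z+1$ nonzero values of $u$. Applying Cauchy-Schwarz in the form $\bigl(\sum_u a_{k,u}\bigr)^2 \le (z+1)\sum_u a_{k,u}^{\,2}$ then yields
$$E_{\sigma}[(f\cdot\sigma(g))^2] \;\le\; 2(z+1)\|f\|\sum_{k}\frac{2^{k}}{n^{k}}\sum_{u}\frac{(k+u)!}{n^{u}}T_{k,u},$$
which is precisely the claimed inequality after re-expanding $T_{k,u}$.

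I expect the change-of-variables step to be the main place where care is required: one has to check that the exponents of $n$, the factorials, and the ranges of summation all line up correctly after sending $(k,u_1,u_2,u)\mapsto(k',u_1',u_2')$, and that the resulting inner sum over $u$ really does run over the full range $0\le u\le k'$ (which it does, since the truncation from $k\ge 0$ is exactly $u\le k'$). The hypothesis $z \le \sqrt{n}/2 - 1$ is inherited from Theorem \ref{allequalisbesttheorem} and is not needed again here, and the appearance of the factor $(z+1)$ is the unavoidable cost of the final Cauchy-Schwarz; no other step loses anything worse than a constant.
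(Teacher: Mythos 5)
Your proposal is correct and follows essentially the same route as the paper: apply Lemma \ref{pairsfromsingleslemma} and Proposition \ref{boundingbysquares} to the bound from Theorem \ref{allequalisbesttheorem}, reindex $(k,u_1,u_2,u)\mapsto(k+u,u_1-u,u_2-u,u)$ to collapse $\sum_u\binom{k}{u}=2^k$, factor the $(u_1,u_2)$ double sum as a perfect square, and apply Cauchy--Schwarz over the at most $z+1$ nonzero $u$-values. The bookkeeping of exponents and summation ranges in your change of variables is exactly the check the paper's proof relies on as well.
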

\begin{proof}
By Lemma \ref{pairsfromsingleslemma} and Proposition \ref{boundingbysquares}, 
\begin{align*}
2\sum_{k,u_1,u_2}&{\frac{\sqrt{(k+u_1)!(k+u_2)!}}{n^{k + \frac{u_1+u_2}{2}}}}|s_{k,u_1,u_2}(g,g)| \\
&\leq 2\sum_{k,u_1,u_2,u}{{{k+u} \choose u}\frac{\sqrt{(k+u_1)!(k+u_2)!}}{n^{k + \frac{u_1+u_2}{2}}}}
\sqrt{\sum_{A:|A| = k + u}{(s_{A,u_1-u}(g))^2}\sum_{A:|A| = k + u}{(s_{A,u_2-u}(g))^2}}
\end{align*}
Replacing $k+u$ with $k$, $u_1 - u$ with $u_1$, and $u_2 - u$ with $u_2$ on the right hand side, 
\begin{align*}
 2\sum_{k,u_1,u_2}&{\frac{\sqrt{(k+u_1)!(k+u_2)!}}{n^{k + \frac{u_1+u_2}{2}}}}\,|s_{k,u_1,u_2}(g,g)| \\
& \leq 
2\sum_{k,u_1,u_2}\left(\sum_{u}{k \choose u}\right)
\frac{\sqrt{(k+u_1)!(k+u_2)!}}{n^{k + \frac{u_1+u_2}{2}}}
\sqrt{\sum_{A:|A| = k}{(s_{A,u_1}(g))^2}\sum_{A:|A| = k}{(s_{A,u_2}(g))^2}} \\
&= 
2\sum_{k}{\left(\sum_{u_1}{\frac{2^{\frac{k}{2}}\sqrt{(k+u_1)!}}{n^{\frac{k+u_1}{2}}}
\sqrt{\sum_{A:|A| = k}{(s_{A,u_1}(g))^2}}}\right)^2}\\
& \leq 
2(z+1)\sum_{k}{\sum_{u_1}{\left(\frac{2^{\frac{k}{2}}\sqrt{(k+u_1)!}}{n^{\frac{k+u_1}{2}}}
\sqrt{\sum_{A:|A| = k}{(s_{A,u_1}(g))^2}}\right)^2}} \\
&= 
2(z+1)\sum_{k,u_1}{\frac{2^{k}(k+u_1)!}{n^{k+u_1}}\sum_{A:|A| = k}{(s_{A,u_1}(g))^2}}
\end{align*}
Plugging this in to Theorem \ref{allequalisbesttheorem} gives the desired result.
\end{proof}
\subsection{Lower bounds from permutation averages}\label{lowerboundsfrompermutationaverages}
In this subsection we show how we can obtain lower size bounds on monotone switching networks using 
Corollary \ref{keycorollary}. 
\begin{definition}
Given a monotone switching network $G'$, for each vertex $v' \in V(G')$, assign $v'$ the function
$v': \mathcal{C} \to \mathbb{R}$ so that 
$v'(C) = -1$ if there is a path from $s'$ to $t'$ in $G'$ wohse edge labels are all in $E(G(C))$ 
and $v'(C) = 1$ otherwise.
\end{definition}
\begin{remark}
This is the reachability function description of $G'$ from \cite{potechin}.
\end{remark}
\begin{proposition}
For any monotone switching network $G'$,
\begin{enumerate}
\item For all $C \in \mathcal{C}, s'(C) = -1$
\item If $G'$ is sound then for all $C \in \mathcal{C}, t'(C) = 1$
\item For all $v' \in V(G')$, $|v'| = 1$
\item If there is an edge $e' \in G'$ with label $e$ between vertices $v'$ and $w'$ in $G'$, 
$C \in \mathcal{C}$, and $e$ does not cross $C$, then $v'(C) = w'(C)$.
\end{enumerate}
\end{proposition}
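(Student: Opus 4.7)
The plan is to verify each of the four statements by unwinding the reachability function definition. I read $v'(C)=-1$ as ``there is a path in $G'$ from $s'$ to $v'$ whose edge labels all lie in $E(G(C))$'' and $v'(C)=1$ otherwise. Statement (1) is then immediate: the empty (length-zero) path at $s'$ is vacuously consistent with $E(G(C))$ for every $C$, so $s'(C)=-1$. Statement (3) is also immediate: since $v'$ takes values in $\{-1,+1\}$ we have $(v'(C))^2=1$ for every $C$, and $\|v'\|^2 = v'\cdot v' = 2^{-n}\sum_{C\in\mathcal{C}}(v'(C))^2 = 1$ since $|\mathcal{C}|=2^n$.

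For statement (2) the key observation is that $G(C)$ is, by construction, an $s$-$t$ disconnected input graph. By definition $E(G(C))$ consists exactly of edges that do not cross $C$, yet any directed $s$-to-$t$ path must traverse $C$ at least once because $s\in L(C)$ and $t\in R(C)$. Therefore $G(C)$ has no $s$-$t$ path, and soundness of $G'$ then forces $G'$ to reject $G(C)$; equivalently, no path from $s'$ to $t'$ in $G'$ has all its labels in $E(G(C))$, so $t'(C)=1$ for every cut $C$.

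Statement (4) is a short path-extension argument. If $e$ does not cross $C$ then $e\in E(G(C))$, so the edge $e'\in E(G')$ with label $e$ is ``traversable'' under $G(C)$. A path from $s'$ to $v'$ whose labels lie in $E(G(C))$ can be extended along $e'$ to a path from $s'$ to $w'$ of the same sort, and conversely, since $G'$ is an undirected multigraph; thus $v'(C)=-1 \iff w'(C)=-1$, and hence $v'(C)=w'(C)$. I do not anticipate any real obstacle here: the only item with any substance is (2), and even there the only point to notice is that $G(C)$ is the canonical rejecting instance, so the soundness hypothesis applies exactly once.
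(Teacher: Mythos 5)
Your proof is correct, and your reading of the definition of $v'(C)$ as ``there is a path in $G'$ from $s'$ to $v'$'' is the right one (the paper has a typo there, writing $t'$ where $v'$ is clearly intended). The paper states this proposition without proof, and your argument is exactly the straightforward unwinding of the definitions that the author evidently had in mind: the empty path gives (1), the count $|\mathcal{C}|=2^n$ gives (3), and (2) and (4) follow from the observation that $G(C)$ is by construction $s$-$t$ disconnected together with the path-extension argument, respectively. One small remark on (4): appending $e'$ to a path from $s'$ to $v'$ may revisit $w'$ and produce a walk rather than a simple path, but that is harmless since any walk from $s'$ to $w'$ with labels in $E(G(C))$ contains such a path; you may want to note this if you want the argument to be airtight against the ``path'' wording in the definition.
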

Property 4 is extremely useful, as for carefully chosen functions $g$ it gives us information 
about the dot products $\{v' \cdot g, v' \in V(G')\}$, which will give us our lower bounds. 
\begin{proposition}
If there is an edge with label $e$ between vertices $v',w' \in V(G')$ and $g$ is an $e$-invariant function 
then $v' \cdot g = w' \cdot g$.
\end{proposition}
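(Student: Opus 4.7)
The plan is to compute the difference $v' \cdot g - w' \cdot g$ directly from the definition of the dot product and split the sum over $\mathcal{C}$ according to whether or not $e$ crosses the cut. By definition,
\[
v' \cdot g - w' \cdot g \;=\; 2^{-n}\sum_{C \in \mathcal{C}}\bigl(v'(C) - w'(C)\bigr)g(C).
\]
I would partition $\mathcal{C}$ into the set $\mathcal{C}_e$ of cuts that $e$ crosses and its complement $\mathcal{C}\setminus \mathcal{C}_e$, and handle the two pieces separately.

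First, on $\mathcal{C}_e$, the hypothesis that $g$ is $e$-invariant gives $g(C) = 0$, so the entire contribution from these cuts vanishes. Second, on $\mathcal{C}\setminus \mathcal{C}_e$, property 4 of the preceding proposition applies: since there is an edge labeled $e$ between $v'$ and $w'$ in $G'$ and $e$ does not cross $C$, we have $v'(C) = w'(C)$, so the integrand $(v'(C)-w'(C))g(C)$ is again zero. Combining the two cases shows the full sum vanishes, giving $v' \cdot g = w' \cdot g$.

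There is no real obstacle here; the statement is essentially an immediate consequence of property 4 combined with the definition of $e$-invariance. The only thing to keep straight is the bookkeeping that the two hypotheses cover complementary subsets of $\mathcal{C}$, so that on every cut $C$ at least one of the two factors in $(v'(C)-w'(C))g(C)$ is forced to be zero.
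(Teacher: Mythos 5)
Your proof is correct and is exactly the argument the paper intends (the paper states this proposition without proof, treating it as immediate from property 4 and the definition of $e$-invariance). The partition of $\mathcal{C}$ into cuts crossed by $e$ versus cuts not crossed by $e$, with one factor of $(v'(C)-w'(C))g(C)$ vanishing in each case, is the right bookkeeping.
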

\begin{definition}
For an input graph $G$ with a path from $s$ to $t$, we say that $F_G = \{g_e: e \in E(G)\}$ is a set of 
invariant functions for $G$ if 
\begin{enumerate}
\item For all $e \in E(G)$, $g_e$ is $e$-invariant.
\item For all $e \in E(G)$, $g_e \cdot e_{\{\}} = 1$
\end{enumerate}
\end{definition}
\begin{theorem}\label{lowerboundtheorem}
Let $G$ be an input graph containing a path from $s$ to $t$ and let $F_G = \{g_e: e \in E(G)\}$ be a set of invariant 
functions for $G$. If for all $e \in E(G)$ we have that $\hat{g_e}_{V}$ is only nonzero when $|V| \leq z$ 
for some $z \leq \frac{\sqrt{n}}{2}-1$, then for any edge $e_0 \in E(G)$,
$$m(G) \geq \frac{2}{|E(G)| - 1}{\left(\max_{e \in E(G) \setminus \{e_0\}}
\left\{2(z+1)\sum_{k,u}{\frac{2^{k}(k+u)!}{n^{k+u}}\sum_{A:|A| = k}{(s_{A,u}(g_e - g_{e_0}))^2}}\right\}\right)}^{-\frac{1}{2}}$$
\end{theorem}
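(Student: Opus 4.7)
The plan is to fix a sound monotone switching network $G'$ accepting every graph in $\{\sigma(G):\sigma\in S_{\Vmst}\}$, let $m:=|V(G')\setminus\{s',t'\}|$, and deduce $m\ge \frac{2}{(|E(G)|-1)\sqrt M}$, where
\[
M_e\ :=\ 2(z+1)\sum_{k,u}\frac{2^{k}(k+u)!}{n^{k+u}}\sum_{A:|A|=k}(s_{A,u}(g_e-g_{e_0}))^2\quad\text{and}\quad M\ :=\ \max_{e\neq e_0}M_e.
\]

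The first step is a second-moment upper bound. I would assign every $v'\in V(G')$ its reachability function, so that $\|v'\|=1$, $s'\equiv -1$ and $t'\equiv 1$, and observe that $g_e-g_{e_0}$ has Fourier support on sets of size at most $z\le\tfrac{\sqrt n}{2}-1$. Corollary~\ref{keycorollary} applied with $f=v'$ and $g=g_e-g_{e_0}$ then yields
\[
E_{\sigma\in S_{\Vmst}}\!\left[(v'\cdot\sigma(g_e-g_{e_0}))^2\right]\ \le\ M_e\ \le\ M
\]
uniformly over $v'\in V(G')$ and $e\in E(G)\setminus\{e_0\}$.

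The heart of the argument is a discrepancy lower bound along any accepting path, in the spirit of the ``game'' described in the introduction. For each $\sigma$, since $G'$ accepts $\sigma(G)$, I fix a simple valid path $s'=v'_0,v'_1,\dots,v'_k=t'$ in $G'$ with edge labels $\sigma(e_1),\dots,\sigma(e_k)$ and introduce the player values $p_e(i):=v'_i\cdot\sigma(g_e)$. Using $g_e\cdot e_{\{\}}=1$ together with Proposition~\ref{permutationproperties}(5) gives $p_e(0)=-1$ and $p_e(k)=1$ for every $e$, while Proposition~\ref{permutationproperties}(2) forces $p_e(i)=p_e(i-1)$ whenever $e_i=e$. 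Setting $\phi_e(i):=p_e(i)-p_{e_0}(i)$ and $X_e:=\sum_{i:e_i=e}(p_{e_0}(i)-p_{e_0}(i-1))$, one has $\sum_{e\neq e_0}X_e=2$ because the entire change of $p_{e_0}$ (from $-1$ to $1$) is distributed across the $e_i\neq e_0$ transitions. Two layers of Cauchy--Schwarz --- first $\sum_{e\neq e_0}X_e^2\ge 4/(|E(G)|-1)$ across labels, then $X_e^2\le |I_e|\sum_{i:e_i=e}(p_{e_0}(i)-p_{e_0}(i-1))^2$ within each label using $|I_e|\le m+1$ --- imply $\sum_i(p_{e_0}(i)-p_{e_0}(i-1))^2\ \ge\ 4/((|E(G)|-1)(m+1))$. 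Since $\phi_{e_i}(i)-\phi_{e_i}(i-1)=-(p_{e_0}(i)-p_{e_0}(i-1))$ at each such transition, the bound $(a-b)^2\le 2(a^2+b^2)$ and a double-count of the at most $m$ interior vertices of the path produce
\[
\sum_{v'\in V(G')\setminus\{s',t'\}}\sum_{e\in E(G)\setminus\{e_0\}}(v'\cdot\sigma(g_e-g_{e_0}))^2\ \ge\ \frac{1}{(|E(G)|-1)(m+1)}.
\]

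Averaging this inequality over $\sigma$ and plugging in the first-step bound yields $m(|E(G)|-1)M\ \ge\ 1/((|E(G)|-1)(m+1))$, hence $m(m+1)(|E(G)|-1)^2 M\ge 1$, which rearranges (after optimizing the constants via sharper versions of the inequalities above) to the claimed $m\ge 2/((|E(G)|-1)\sqrt M)$. The main obstacle is the third paragraph: getting the two layers of Cauchy--Schwarz to interact cleanly so that the $1/(|E(G)|-1)$ factor from the label-level pigeonhole and the $1/m$ factor from the path-length Cauchy--Schwarz \emph{multiply} to yield the correct right-hand side $1/((|E(G)|-1)m)$, which is what converts the linear dependence on $M$ supplied by Corollary~\ref{keycorollary} into the square-root dependence required by the theorem, and in particular handling subtleties such as whether the accepting path can be chosen simple and keeping careful track of which vertices receive which contributions in the double count.
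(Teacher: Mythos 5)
Your key identity $\sum_{e\neq e_0}X_e=2$ is exactly the paper's Lemma~\ref{discrepancyinprogress} in disguise: writing $X_e=\sum_{e'\in E(P'):\mu'(e')=\sigma(e)}\Delta(P',e')\cdot\sigma(g_{e_0})$ and using $(t'-s')\cdot\sigma(g_e-g_{e_0})=0$ together with $\Delta(P',e')\cdot\sigma(g_e)=0$ on label-$\sigma(e)$ edges recovers the paper's telescoping sum. So the discrepancy-in-progress idea is the same; where you diverge is in how you convert it into a second-moment lower bound. The paper first derives the first-moment estimate $\sum_{v'\in V(P')\setminus\{s',t'\}}\sum_{e\neq e_0}|v'\cdot\sigma(g_e-g_{e_0})|\ge 2$ (statement 1 of Corollary~\ref{acorollary}), using the simple but crucial observation that $\sum_{e':\mu'(e')\neq\sigma(e)}\Delta(P',e')$ is a $\{-1,0,+1\}$-combination of path vertices so that $|X_e|\le\sum_{v'}|v'\cdot\sigma(g_e-g_{e_0})|$, and \emph{then} applies Jensen $E[X^2]\ge(E[|X|])^2$ once, over all $(v',e)$ pairs, to get $\sum_{v'}\sum_e(v'\cdot\sigma(g_e-g_{e_0}))^2\ge\frac{4}{(|E(G)|-1)\,m}$ where $m=|V(G')\setminus\{s',t'\}|$.

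Your route instead stacks several inequalities — a label-level power-mean bound, an inner Cauchy–Schwarz with the crude $|I_e|\le m+1$, the pointwise $(a-b)^2\le 2(a^2+b^2)$, and a factor-$2$ double count of interior vertices — and arrives at $\sum_{v'}\sum_e(v'\cdot\sigma(g_e-g_{e_0}))^2\ge\frac{1}{(|E(G)|-1)(m+1)}$, hence $m(m+1)(|E(G)|-1)^2M\ge 1$. That yields only $m\ge\frac{1}{\sqrt{2}(|E(G)|-1)\sqrt{M}}$, a factor of $2\sqrt{2}$ weaker than the stated $m\ge\frac{2}{(|E(G)|-1)\sqrt{M}}$. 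This is a genuine gap: the closing remark ``after optimizing the constants via sharper versions of the inequalities above'' is not a proof, and the losses in your chain are not all recoverable in place — each of the $(a-b)^2\le 2(a^2+b^2)$ step, the double count, and the blanket $|I_e|\le m+1$ bound is independently lossy, and they multiply. To actually close the gap you would need to replace the two-layer Cauchy–Schwarz with the first-moment-plus-Jensen argument: bound $\sum_{v',e}|v'\cdot\sigma(g_e-g_{e_0})|$ from below directly, divide by the number of pairs, and square. That single Cauchy–Schwarz is tight exactly when the dot products are equal, whereas your stack of inequalities requires several independent tightness conditions to hold at once. The rest of the proposal — assigning reachability functions, noting $\|v'\|=1$, the Fourier-support condition $z\le\frac{\sqrt n}{2}-1$, invoking Corollary~\ref{keycorollary}, and averaging over $\sigma$ — matches the paper and is fine.
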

\begin{proof}
\begin{definition}
For a sound monotone switching network $G'$ and a path $P'$ from $s'$ to $t'$, 
for each edge $e' \in E(P')$, define $\Delta(P',e') = v'_{end} - v'_{start}$ where 
$e'$ goes from $v'_{start}$ to $v'_{end}$ in $P'$
\end{definition}
\begin{lemma}\label{discrepancyinprogress}
Let $G$ be an input graph containing a path from $s$ to $t$ and let $F_G = \{g_e: e \in E(G)\}$ be a set of invariant 
functions for $G$. For any edge $e_0 \in E(G)$, for any sound monotone switching network $G'$, for any path 
$P'$ in $G'$ from $s'$ to $t'$ whose edge labels are all in $E(G)$, \\
$\sum_{e \in E(G) \setminus \{e_0\}}{\sum_{e' \in E(P'): \mu'(e') \neq e}{\Delta(P',e') \cdot (g_{e} - g_{e_0})}} = 2$
\end{lemma}
\begin{proof}
Let $P'$ be a walk from $s'$ to $t'$ in $G'$ whose edge labels are all in $E(G)$. Since $g_{e}$ is $e$-invariant, \\
$\forall e \in E(G) \setminus \{e_0\}, \sum_{e' \in E(P'): \mu'(e') \neq e}{\Delta(P',e') \cdot g_{e}} = 
\sum_{e' \in E(P')}{\Delta(P',e') \cdot g_{e}} = g_{e} \cdot t' - g_{e} \cdot s' = 2$\\
Since $g_{e_0}$ is $e_0$-invariant, 
\begin{align*}
&\sum_{e \in E(G) \setminus \{e_0\}}{\sum_{e' \in E(P'): \mu'(e') \neq e}{\Delta(P',e') \cdot g_{e_0}}} \\
&= \left((|E(G)|-2)\sum_{e \in E(G) \setminus \{e_0\}}{\sum_{e' \in E(P'): \mu'(e') = e}{\Delta(P',e')}} + 
(|E(G)|-1)\sum_{e' \in E(P'): \mu'(e') = e_0}{\Delta(P',e')}\right) \cdot g_{e_0} \\
&= (k-2)(g_{e_0} \cdot t' - g_{g_0} \cdot s') = 2(|E(G)|-2)
\end{align*}
Putting all of these equations together gives the needed equality.
\end{proof}
\begin{corollary}\label{acorollary}
Let $G$ be an input graph containing a path from $s$ to $t$ and let $F_G = \{g_e: e \in E(G)\}$ be a set of invariant 
functions for $G$. For any edge $e_0 \in E(G)$, for any sound monotone switching network $G'$, for any path 
$P'$ in $G'$ from $s'$ to $t'$ whose edge labels are all in $E(G)$, 
\begin{enumerate}
\item $\sum_{v' \in V(P') \setminus \{s',t'\}}{\sum_{e \in E(G) \setminus \{e_0\}}{|v' \cdot (g_e - g_{e_0})|}} = 
\sum_{e \in E(G) \setminus \{e_0\}}{\sum_{v' \in V(P')}{|v' \cdot (g_e - g_{e_0})|}} \geq 2$
\item $E_{v' \in V(G') \setminus \{s',t'\}, e \in E(G) \setminus \{e_0\}}[|v' \cdot (g_e - g_{e_0})|] \geq 
\frac{2}{(|E(G)| - 1)|V(G') \setminus \{s',t'\}|}$
\item $E_{v' \in V(G') \setminus \{s',t'\}, e \in E(G) \setminus \{e_0\}}[|v' \cdot (g_e - g_{e_0})|^2] \geq 
\frac{4}{(|E(G)| - 1)^2|V(G') \setminus \{s',t'\}|^2}$
\end{enumerate}
\end{corollary}
\begin{proof}
The first statement follows immediately from Lemma \ref{discrepancyinprogress} and the 
fact that for all $e \in E(G) \setminus \{e_0\}$, 
$\sum_{e' \in E(P'): \mu'(e') \neq e}{\Delta(P',e')}$ is a linear combination of 
the vertices of $P'$ where each vertex has coefficient $-1$, $0$, or $1$. The second statement follows 
immediately from the first statement. The third statement follows immediately from the second statement and 
the fact that for any $X$, $E[X^2] \geq (E[X])^2$.
\end{proof}
Theorem \ref{lowerboundtheorem} now follows easily. If $G'$ is a monotone switching network accepting 
all of the input graphs $\{\sigma(G): \sigma \in S_{\Vmst}\}$ then by statement 3 of Corollary \ref{acorollary}, 
$$E_{\sigma \in S_{\Vmst}}\left[E_{v' \in V(G') \setminus \{s',t'\}, e \in E(G) \setminus \{e_0\}}
[|v' \cdot \sigma(g_e - g_{e_0})|^2]\right] \geq \frac{4}{(|E(G)| - 1)^2|V(G') \setminus \{s',t'\}|^2}$$
Applying Corollary \ref{keycorollary} to $g_{e} - g_{e_0}$ and $v'$ for each $e \in E(G) \setminus \{e_0\}$ and 
$v' \in V(G') \setminus \{s',t'\}$, since $||v'|| = 1$ for all $v' \in V(G') \setminus \{s',t'\}$,
$$|V(G') \setminus \{s',t'\}| \geq \frac{2}{|E(G)| - 1}{\left(E_{e \in E(G) \setminus \{e_0\}}
\left[2(z+1)\sum_{k,u}{\frac{2^{k}(k+u)!}{n^{k+u}}\sum_{A:|A| = k}{(s_{A,u}(g_e - g_{e_0}))^2}}\right]\right)}^{-\frac{1}{2}}$$
The result now follows immediately.
\end{proof}
Theorem \ref{lowerboundtheorem} says that for our lower bound, we want to find a set of invariant functions 
$F_G = \{g_e: e \in E(G)\}$ such that for some $e_0 \in E(G)$, for all $e \in E(G) \setminus \{e_0\}$ 
the sums $\sum_{A:|A| = k}{(s_{A,u}(g_e - g_{e_0}))^2}$ are as small as possible.
\subsection{Equations on sum vectors}\label{einvarianceconditions}
Rather than choosing the functions $\{g_e: e \in E(G)\}$ directly, it is more convenient to choose 
the sums $\{s_{A,u}(g_e): A \subseteq \Vmst, u \geq 0, e \in E(G)\}$ and have these sums determine the 
functions $\{g_e: e \in E(G)\}$. Also, it is convenient to group these sums into vectors.
\begin{definition}
For a function $g$ and $k,u \geq 0$, define $\vec{s}_{k,u,g}$ to be the vector with 
one coordinate for each $A \subseteq \Vmst$ where $|A| = k$ such that $(\vec{s}_{k,u,g})_A = s_{A,u}(g)$
\end{definition}
\begin{proposition}
For any function $g: \mathcal{C} \to \mathbb{R}$, $||\vec{s}_{k,u,g}||^2 = \sum_{A:|A| = k}{(s_{A,u}(g))^2}$
\end{proposition}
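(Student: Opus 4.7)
The plan is essentially to unfold definitions: the statement is a tautology once one applies the definition of $\vec{s}_{k,u,g}$ together with the standard (Euclidean) norm on a finite-dimensional coordinate vector space.

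First I would note that by the definition just given, $\vec{s}_{k,u,g}$ lives in the real vector space with one coordinate for each $A \subseteq \Vmst$ with $|A| = k$, and its $A$-th coordinate is exactly $s_{A,u}(g)$. Second, I would apply the definition of the squared Euclidean norm of a vector as the sum of the squares of its coordinates, namely
$$\|\vec{s}_{k,u,g}\|^2 = \sum_{A \subseteq \Vmst : |A| = k} \bigl((\vec{s}_{k,u,g})_A\bigr)^2.$$
Substituting the coordinate formula $(\vec{s}_{k,u,g})_A = s_{A,u}(g)$ immediately yields the claimed identity.

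There is no real obstacle: no combinatorial identity, no inequality, no Fourier manipulation is required. The only thing worth being mildly careful about is making sure that the intended norm on $\vec{s}_{k,u,g}$ is indeed the standard Euclidean one, rather than some weighted or normalized norm (as was used, for instance, in the definition of the dot product on functions $\mathcal{C} \to \mathbb{R}$, which carries a factor of $2^{-n}$). Since $\vec{s}_{k,u,g}$ is introduced purely as a finite-dimensional coordinate vector indexed by $k$-subsets, the natural reading is the unweighted Euclidean norm, and under that reading the proof is a one-line definition check. This proposition is really being set up as notational bookkeeping so that the sums $\sum_{A:|A|=k}(s_{A,u}(g))^2$ appearing in Corollary \ref{keycorollary} and Theorem \ref{lowerboundtheorem} can be conveniently rewritten as squared norms of sum vectors in later subsections.
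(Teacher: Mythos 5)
Your proof is correct and is exactly the intended one-line definition-unfolding argument; the paper in fact offers no proof at all, treating the proposition as immediate. Your remark about verifying that the norm on $\vec{s}_{k,u,g}$ is the unweighted Euclidean norm (as opposed to the normalized inner product used for functions $\mathcal{C}\to\mathbb{R}$) is the only point worth pausing on, and you resolve it correctly.
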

However, we have to be very careful when choosing the vectors $\vec{s}_{k,u,g}$ because not every collection of vectors 
$\{\vec{s}_{k,u,g}: k,u \geq 0\}$ correspond to an actual function $g$. Here we give equations that 
a collection of vectors $\{\vec{s}_{k,u,g}: k,u \geq 0\}$ will obey if it corresponds to an actual function $g$. 
We also define error terms which show how far a given collection of vectors $\{\vec{s}_{k,u,g}: k,u \geq 0\}$ is from 
corresponding to an actual function.
\begin{definition}
Let $P_k$ be the matrix with rows corresponding to the subsets $\{A \subseteq \Vmst, |A| = k\}$ and 
columns corresponding to the subsets $\{B \subseteq \Vmst, |B| = k+1\}$. Take $(P_k)_{AB} = 1$ if $A \subseteq B$ and $0$ otherwise.
\end{definition}
\begin{proposition}\label{errorvectorprop}
For any function $g$, $\vec{s}_{k,u,g} = \frac{1}{u}P_k\vec{s}_{k+1,u-1,g}$
\end{proposition}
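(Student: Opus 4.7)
The statement is essentially a combinatorial identity about how the sums $s_{A,u}(g)$ interact when we enlarge $A$ by one element and shrink the inner index set by one. My plan is to unfold both sides using only the definition of $s_{A,u}(g)$ and the definition of the incidence matrix $P_k$, and then reindex via a simple double-counting argument. No Fourier analysis or properties specific to $g$ are needed; the identity is purely formal in the Fourier coefficients $\hat{g}_{V}$.

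Concretely, fix $A \subseteq \Vmst$ with $|A|=k$. I would first write out the $A$-coordinate of the right-hand side as
\[
\bigl(P_k\vec{s}_{k+1,u-1,g}\bigr)_A
= \sum_{\substack{B\supseteq A\\ |B|=k+1}} s_{B,u-1}(g)
= \sum_{x\notin A}\;\sum_{\substack{C\subseteq\Vmst\\ |C|=u-1\\ (A\cup\{x\})\cap C=\emptyset}} \hat{g}_{A\cup\{x\}\cup C},
\]
using that every $B$ with $A\subseteq B$ and $|B|=k+1$ is uniquely of the form $A\cup\{x\}$ with $x\in\Vmst\setminus A$. The second step is the reindexing: set $D=\{x\}\cup C$. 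Then the conditions on $(x,C)$ translate exactly into "$D\subseteq\Vmst$, $|D|=u$, $A\cap D=\emptyset$, and $x\in D$," so
\[
\sum_{x\notin A}\;\sum_{\substack{|C|=u-1\\ (A\cup\{x\})\cap C=\emptyset}} \hat{g}_{A\cup\{x\}\cup C}
=\sum_{\substack{D\subseteq\Vmst\\ |D|=u\\ A\cap D=\emptyset}}\;\sum_{x\in D}\hat{g}_{A\cup D}
= u\sum_{\substack{|D|=u\\ A\cap D=\emptyset}}\hat{g}_{A\cup D}
= u\cdot s_{A,u}(g).
\]
Dividing by $u$ and comparing with the definition of $(\vec{s}_{k,u,g})_A$ gives the desired coordinate-wise equality, and hence the vector identity.

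The only real point of care is the bijection $(x,C)\leftrightarrow(D,x)$ with $x\in D$: each $D$ of size $u$ disjoint from $A$ appears exactly $|D|=u$ times in the reindexed sum, which is what produces the factor of $u$ that cancels the $\frac{1}{u}$ on the right-hand side. I do not anticipate any genuine obstacle; the proof is a one-line definition unfold followed by a standard "swap the order of summation and count pairs two ways" step.
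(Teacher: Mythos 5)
Your proof is correct and is essentially the same double-counting argument the paper uses: you reindex the double sum by $D = \{x\}\cup C$ and observe that each $D$ is counted $|D|=u$ times, whereas the paper groups by the full union $C=A\cup D$ and counts the number of intermediate $B$ with $A\subseteq B\subseteq C$, $|B|=|A|+1$, which is likewise $u$. The two bookkeeping schemes are the same bijection read from opposite ends, so there is no substantive difference.
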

\begin{proof} \noindent 
$(\vec{s}_{k,u,g})_A = s_{A,u}(g) = \sum_{C: |C| = |A| + u}{\hat{g}_C}$\\
$\frac{1}{u}P_k\vec{s}_{k+1,u-1,g} = \frac{1}{u}\sum_{B: |B| = |A|+1, A \subseteq B}{s_{B,u-1}(g)} = 
\frac{1}{u}\sum_{B,C: |C| = |A| + u, \atop |B| = |A| + 1, A \subseteq B \subseteq C}{\hat{g}_C} = 
\sum_{C: |C| = |A| + u}{\hat{g}_C}$
\end{proof}
\begin{definition}
For any collection of vectors $\{\vec{s}_{k,u,g}: k,u \geq 0\}$, define the error vectors\\
$\vec{e}_{k,u,g} = \vec{s}_{k,u,g} - \frac{1}{u}P_k\vec{s}_{k+1,u-1,g}$
\end{definition}
It is relatively easy to choose sets of vectors satisfying these equations. However, we also need to 
ensure that we get an $e$-invariant function for each $e \in E(G)$. $e$-invariance gives us another 
set of equations on the vectors $\{\vec{s}_{k,u,g}: k,u \geq 0\}$.
\begin{proposition} For all $v,w \in \Vmst$,
\begin{enumerate}
\item $(e_{\{\}} + e_{\{w\}})(C) = 2$ if $w \in R(C)$ and $0$ if $w \in L(C)$.
\item $(e_{\{\}} - e_{\{v\}})(C) = 2$ if $v \in L(C)$ and $0$ if $v \in R(C)$.
\item $((e_{\{\}} - e_{\{v\}})(e_{\{\}} + e_{\{w\}}))(C) = 4$ if $v \in L(C)$ and $w \in R(C)$ and $0$ otherwise.
\end{enumerate}
\end{proposition}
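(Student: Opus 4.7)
The plan is to evaluate each of the three functions pointwise at an arbitrary cut $C$, directly from the definition $e_V(C) = (-1)^{|V \cap L(C)|}$. The base observations are that $e_{\{\}}(C) = (-1)^0 = 1$ for every $C$, and that for a single vertex $x \in \Vmst$ the value $e_{\{x\}}(C)$ equals $-1$ when $x \in L(C)$ and $+1$ when $x \in R(C)$. From here everything follows by a two-case check, since every cut places each of $v,w$ in exactly one of $L(C), R(C)$.

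For statement 1, I would write $(e_{\{\}} + e_{\{w\}})(C) = 1 + e_{\{w\}}(C)$. If $w \in R(C)$ this is $1+1=2$; if $w \in L(C)$ this is $1-1=0$. For statement 2, I would write $(e_{\{\}} - e_{\{v\}})(C) = 1 - e_{\{v\}}(C)$, which is $1-(-1) = 2$ if $v \in L(C)$ and $1 - 1 = 0$ if $v \in R(C)$.

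For statement 3, I would just multiply the pointwise values from statements 1 and 2: each factor is either $0$ or $2$, so the product at $C$ is $4$ exactly when both factors are $2$ and $0$ otherwise. By statements 1 and 2, both factors are $2$ precisely when $v \in L(C)$ and $w \in R(C)$, which is the claim.

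There is no substantive obstacle; the proposition is a direct unpacking of the definition of $e_V$ on the singleton sets $\{v\}$ and $\{w\}$, and the only ``care'' required is keeping the signs straight. The real content is not in the proof but in the packaging: statement 3 says that the function $(e_{\{\}} - e_{\{v\}})(e_{\{\}} + e_{\{w\}})/4$ is the indicator of the set of cuts crossed by the edge $v \to w$, which is presumably what will be exploited in the subsequent construction of $e$-invariant functions.
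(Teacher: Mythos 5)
Your proof is correct, and it is the direct definitional verification that the paper leaves implicit (the proposition is stated without proof, evidently because it is immediate). Nothing to add.
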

\begin{corollary}\label{einvariancecorollary} \ 
\begin{enumerate}
\item If $e = s \to w$ for some $w \in \Vmst$ then $g$ is $e$-invariant if and only if $(e_{\{\}} + e_{\{w\}})g = 0$.
Equivalently, $g$ is $e$-invariant if and only if $\hat{g}_{V \cup \{w\}} = -\hat{g}_V$ whenever 
$w \notin V$.
\item If $e = v \to t$ for some $v \in \Vmst$ then $g$ is $e$-invariant if and only if $(e_{\{\}} - e_{\{v\}})g = 0$.
Equivalently, $g$ is $e$-invariant if and only if $\hat{g}_{V \cup \{v\}} = \hat{g}_V$ whenever 
$v \notin V$.
\item  If $e = v \to w$ for some $v,w \in \Vmst$ then $g$ is $e$-invariant if and only if \\
$(e_{\{\}} - e_{\{v\}})(e_{\{\}} + e_{\{w\}})g = 0$.
Equivalently, $g$ is $e$-invariant if and only if \\
$\hat{g}_{V \cup \{v,w\}} = 
-\hat{g}_{V \cup \{v\}} + \hat{g}_{V \cup \{w\}} + \hat{g}_{V}$ whenever $v,w \notin V$.
\end{enumerate}
\end{corollary}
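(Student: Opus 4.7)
The plan is to deduce each of the three parts by combining the preceding proposition (which identifies the support of the indicator ``$e$ crosses $C$'' with a pointwise product of the simple characters $e_{\{\}} \pm e_{\{v\}}$) with the character-multiplication rule on the Fourier side.

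First, I would record the pointwise product rule: for any $V_1, V_2 \subseteq \Vmst$, $e_{V_1} e_{V_2} = e_{V_1 \triangle V_2}$. This follows directly from the definition $e_V(C) = (-1)^{|V \cap L(C)|}$ and the fact that $|V_1 \cap L(C)| + |V_2 \cap L(C)| \equiv |(V_1 \triangle V_2) \cap L(C)| \pmod 2$. Consequently, for any function $g$ with Fourier expansion $g = \sum_V \hat{g}_V e_V$ and any subset $W$, the Fourier coefficient of $e_W g$ at $V$ equals $\hat{g}_{V \triangle W}$.

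For part (1), the edge $e = s \to w$ crosses $C$ exactly when $w \in R(C)$, and by statement (1) of the preceding proposition this is precisely where $(e_{\{\}} + e_{\{w\}})(C) \neq 0$. Hence $g$ is $e$-invariant if and only if the pointwise product $(e_{\{\}} + e_{\{w\}})g$ vanishes identically. Expanding this product in the orthonormal basis using the rule above, its Fourier coefficient at $V$ equals $\hat{g}_V + \hat{g}_{V \triangle \{w\}}$; setting all such coefficients to zero is equivalent to the recursion $\hat{g}_{V \cup \{w\}} = -\hat{g}_V$ for every $V$ with $w \notin V$ (the equations for $V$ containing $w$ are the same equations indexed by $V \setminus \{w\}$). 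Part (2) is symmetric, using statement (2) of the preceding proposition to handle $e = v \to t$ via $(e_{\{\}} - e_{\{v\}})$, giving Fourier coefficients $\hat{g}_V - \hat{g}_{V \triangle \{v\}}$ and the recursion $\hat{g}_{V \cup \{v\}} = \hat{g}_V$.

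For part (3), statement (3) of the preceding proposition shows that $e = v \to w$ (with $v,w \in \Vmst$) crosses $C$ exactly where $(e_{\{\}} - e_{\{v\}})(e_{\{\}} + e_{\{w\}})$ is nonzero, so $e$-invariance is equivalent to the vanishing of this pointwise product times $g$. Expanding
\[
(e_{\{\}} - e_{\{v\}})(e_{\{\}} + e_{\{w\}}) = e_{\{\}} + e_{\{w\}} - e_{\{v\}} - e_{\{v,w\}},
\]
the Fourier coefficient of the product with $g$ at $V$ is $\hat{g}_V + \hat{g}_{V \triangle \{w\}} - \hat{g}_{V \triangle \{v\}} - \hat{g}_{V \triangle \{v,w\}}$. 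Setting this to zero for every $V$ with $v,w \notin V$ yields exactly $\hat{g}_{V \cup \{v,w\}} = -\hat{g}_{V \cup \{v\}} + \hat{g}_{V \cup \{w\}} + \hat{g}_V$, and as before the remaining equations (where $V$ contains $v$, $w$, or both) are duplicates obtained by relabeling. There is no real obstacle here: the only nuance is verifying that restricting the recursion to $V$ not containing the relevant vertices loses no information, which is immediate from the involution $V \mapsto V \triangle W$.
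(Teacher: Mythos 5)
Your proof is correct and follows the route the paper clearly intends: the paper supplies no explicit proof and treats the corollary as immediate from the preceding proposition, whose three statements identify the support of the edge-crossing indicator with the support of the corresponding products of $e_{\{\}} \pm e_{\{v\}}$. Your added verification — the character-multiplication rule $e_{V_1}e_{V_2} = e_{V_1 \triangle V_2}$ (hence the Fourier coefficient of $e_W g$ at $V$ is $\hat{g}_{V\triangle W}$), and the check that the equations for $U$ meeting $\{v,w\}$ are relabelings of the ones with $U \cap \{v,w\} = \emptyset$ — is exactly the bookkeeping being suppressed, so there is nothing to flag.
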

\begin{lemma}\label{einvariancelemma} If $g$ is an $e$-invariant function for some $e \in E(G)$, 
$A \subseteq \Vmst$, and $|A| = k$ then 
\begin{enumerate} 
\item If $e = s \to w$ for some $w \in \Vmst$ and $w \in A$ then for all $u$, 
$$(\vec{s}_{k,u,g})_A = -(\vec{s}_{k-1,u,g})_{A \setminus \{w\}} + (\vec{s}_{k,u-1,g})_A$$
\item If $e = v \to t$ for some $v \in \Vmst$ and $v \in A$ then for all $u$,
$$(\vec{s}_{k,u,g})_A = (\vec{s}_{k-1,u,g})_{A \setminus \{v\}} - (\vec{s}_{k,u-1,g})_A$$
\item If $e = v \to w$ for some $v,w \in \Vmst$ and $v,w \in A$ then for all $u$,
\begin{align*}
(\vec{s}_{k,u,g})_A &= (\vec{s}_{k-1,u,g})_{A \setminus \{v\}} - (\vec{s}_{k-1,u,g})_{A \setminus \{w\}} + 
(\vec{s}_{k-2,u,g})_{A \setminus \{v,w\}} \\
&- (\vec{s}_{k-1,u-1,g})_{A \setminus \{v\}} - (\vec{s}_{k-1,u-1,g})_{A \setminus \{w\}} + (\vec{s}_{k,u-2,g})_A
\end{align*}
\end{enumerate}
\end{lemma}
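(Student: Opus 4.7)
The plan is to prove each of the three statements by combining the Fourier-coefficient relations in Corollary \ref{einvariancecorollary} with a careful splitting of the sum in the definition of $s_{A,u}(g)$ according to which of the vertices $v, w$ lie in the summation index $B$. The unifying idea is that when we expand $(\vec{s}_{k,u,g})_A = \sum_{B:\,|B|=u,\,A\cap B=\emptyset}\hat{g}_{A\cup B}$ using the $e$-invariance relation, we produce sums indexed over $B$ with $B\cap A=\emptyset$ rather than over $B$ disjoint from the smaller set $A\setminus\{v\}$ (or $A\setminus\{v,w\}$); recognizing the latter sum as the ``full'' $s_{A',u}$ or $s_{A',u-1}$ and reconciling the two summation ranges is exactly what produces the correction terms involving $u-1$ and $u-2$.

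For statement 1, I would start from $\hat g_{V\cup\{w\}}=-\hat g_V$ (applied with $V=(A\setminus\{w\})\cup B$) to rewrite $\hat g_{A\cup B}=-\hat g_{(A\setminus\{w\})\cup B}$, so that $(\vec s_{k,u,g})_A=-\sum_{B:|B|=u,A\cap B=\emptyset}\hat g_{(A\setminus\{w\})\cup B}$. Separately, expanding $(\vec s_{k-1,u,g})_{A\setminus\{w\}}$ and splitting $B$ according to whether $w\in B$ gives the identity
\[
(\vec s_{k-1,u,g})_{A\setminus\{w\}}=\sum_{B:|B|=u,A\cap B=\emptyset}\hat g_{(A\setminus\{w\})\cup B}+(\vec s_{k,u-1,g})_A,
\]
and substituting yields statement 1. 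Statement 2 is proved in the same way, using instead $\hat g_{V\cup\{v\}}=\hat g_V$; the only change is a sign flip in the Fourier relation, which propagates to give the formula in the lemma.

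Statement 3 is the main obstacle and will require the most care. Here the Fourier relation is three-termed, $\hat g_{V\cup\{v,w\}}=-\hat g_{V\cup\{v\}}+\hat g_{V\cup\{w\}}+\hat g_V$, so substituting produces three auxiliary quantities
\[
X_v=\sum_{B}\hat g_{(A\setminus\{v\})\cup B},\quad X_w=\sum_{B}\hat g_{(A\setminus\{w\})\cup B},\quad X_{vw}=\sum_{B}\hat g_{(A\setminus\{v,w\})\cup B}
\]
(all sums over $B$ with $|B|=u$ and $A\cap B=\emptyset$), and I have to express each in terms of honest $s$-values. For $X_v$ and $X_w$, splitting on whether $v$ (respectively $w$) belongs to $B$ as in statement 1 gives $X_w=(\vec s_{k-1,u,g})_{A\setminus\{w\}}-(\vec s_{k,u-1,g})_A$ and similarly for $X_v$. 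For $X_{vw}$, I would do a four-way split of $(\vec s_{k-2,u,g})_{A\setminus\{v,w\}}$ according to the pair of memberships of $v,w$ in $B$; the cases $\{v\in B,w\notin B\}$ and $\{v\notin B,w\in B\}$ each produce a truncated copy of $(\vec s_{k-1,u-1,g})_{A\setminus\{w\}}$ or $(\vec s_{k-1,u-1,g})_{A\setminus\{v\}}$ minus an overcount by $(\vec s_{k,u-2,g})_A$, while the case $\{v,w\in B\}$ contributes a single $(\vec s_{k,u-2,g})_A$. Solving for $X_{vw}$ and plugging into $(\vec s_{k,u,g})_A=-X_w+X_v+X_{vw}$ gives exactly the six-term formula, with the two $(\vec s_{k,u-1,g})_A$ corrections cancelling and the $(\vec s_{k,u-2,g})_A$ corrections combining into the single term claimed.
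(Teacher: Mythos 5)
Your proposal is correct and follows essentially the same route as the paper: both proofs split the relevant sum $s_{A',u'}(g)$ according to whether $v$ (and, for statement 3, $w$) lies in the summation index, and both invoke the Fourier-coefficient relations of Corollary \ref{einvariancecorollary} to trade a coefficient $\hat g_B$ (with the special vertices absent from $B$) for coefficients indexed by $B \cup \{v\}$, $B\cup\{w\}$, $B\cup\{v,w\}$. The only difference is organizational: the paper expands $(\vec s_{k-1,u,g})_{A\setminus\{w\}}$ (respectively $(\vec s_{k-2,u,g})_{A\setminus\{v,w\}}$) and applies the invariance relation inside that expansion before rearranging, whereas you apply the invariance relation directly to $(\vec s_{k,u,g})_A$ first and then express the resulting auxiliary sums $X_v,X_w,X_{vw}$ via the same membership splits; the algebra, including the cancellation of the $(\vec s_{k,u-1,g})_A$ corrections and the net $+(\vec s_{k,u-2,g})_A$, comes out identically.
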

\begin{proof}
To show statement 1, note that 
\begin{align*}
(\vec{s}_{k-1,u,g})_{A \setminus \{w\}} &= \sum_{B: A \subseteq B,\atop |B| = |A| + u - 1}{\hat{g}_B} + 
\sum_{B: A \setminus \{w\} \subseteq B,\atop w \notin B, |B| = |A| + u - 1}{\hat{g}_B} \\
&= 
(\vec{s}_{k,u-1,g})_A - \sum_{B: A \setminus \{w\} \subseteq B,\atop w \notin B, |B| = |A| + u - 1}{\hat{g}_{B \cup \{w\}}} \\
&= 
(\vec{s}_{k,u-1,g})_A - \sum_{B: A \subseteq B, \atop |B| = |A| + u}{\hat{g}_{B}}  \\
&= (\vec{s}_{k,u-1,g})_A - (\vec{s}_{k,u,g})_A
\end{align*}
Rearranging now gives the desired statement. Similarly, to show statement 2, note that 
\begin{align*}
(\vec{s}_{k-1,u,g})_{A \setminus \{v\}} &= \sum_{B: A \subseteq B, \atop |B| = |A| + u - 1}{\hat{g}_B} + 
\sum_{B: A \setminus \{v\} \subseteq B, \atop v \notin B, |B| = |A| + u - 1}{\hat{g}_B} \\
&= (\vec{s}_{k,u-1,g})_A + \sum_{B: A \setminus \{v\} \subseteq B, \atop v \notin B, |B| = |A| + u - 1}{\hat{g}_{B \cup \{v\}}} \\
&= (\vec{s}_{k,u-1,g})_A + \sum_{B: A \subseteq B, \atop |B| = |A| + u}{\hat{g}_{B}} \\
&= (\vec{s}_{k,u-1,g})_A + (\vec{s}_{k,u,g})_A
\end{align*}
Rearranging now gives the desired statement. The proof for statement 3 is more complicated 
but uses similar ideas. In particular, note that 
$$(\vec{s}_{k-2,u,g})_{A \setminus \{v,w\}} = \sum_{B: A \subseteq B, \atop |B| = |A| + u - 2}{\hat{g}_B} + 
\sum_{B: A \setminus \{v\} \subseteq B, \atop v \notin B, |B| = |A| + u - 2}{\hat{g}_B} 
+ \sum_{B: A \setminus \{w\} \subseteq B, \atop w \notin B, |B| = |A| + u - 2}{\hat{g}_B} + 
\sum_{B: A \setminus \{v,w\} \subseteq B, \atop v,w \notin B, |B| = |A| + u - 2}{\hat{g}_B}$$
Now let's consider these terms one by one.
\begin{enumerate}
\item $\sum_{B: A \subseteq B, \atop |B| = |A| + u - 2}{\hat{g}_B} = (\vec{s}_{k,u-2,g})_A$
\item $\sum_{B: A \setminus \{v\} \subseteq B, \atop v \notin B, |B| = |A| + u - 2}{\hat{g}_B} = 
\sum_{B: A \setminus \{v\} \subseteq B, \atop |B| = |A| + u - 2}{\hat{g}_B} - 
\sum_{B: A \setminus \{v\} \subseteq B, \atop v \in B, |B| = |A| + u - 2}{\hat{g}_B} = 
(\vec{s}_{k-1,u-1,g})_{A \setminus \{v\}} - (\vec{s}_{k,u-2,g})_A$
\item $\sum_{B: A \setminus \{w\} \subseteq B, \atop w \notin B, |B| = |A| + u - 2}{\hat{g}_B} = 
\sum_{B: A \setminus \{w\} \subseteq B, \atop |B| = |A| + u - 2}{\hat{g}_B} - 
\sum_{B: A \setminus \{w\} \subseteq B, \atop w \in B, |B| = |A| + u - 2}{\hat{g}_B} 
= (\vec{s}_{k-1,u-1,g})_{A \setminus \{w\}} - (\vec{s}_{k,u-2,g})_A$
\item By statement 3 of Corollary \ref{einvariancecorollary},
\begin{align*}
\sum_{B: A \setminus \{v,w\} \subseteq B, \atop v,w \notin B, |B| = |A| + u - 2}{\hat{g}_B} &= 
\sum_{B: A \setminus \{v,w\} \subseteq B, \atop v,w \notin B, |B| = |A| + u - 2}
{(\hat{g}_{B \cup \{v,w\}} + \hat{g}_{B \cup \{v\}} - \hat{g}_{B \cup \{w\}})}\\
&= \sum_{B: A \subseteq B, \atop |B| = |A| + u}{\hat{g}_{B}} + 
\sum_{B: A \setminus \{w\} \subseteq B, \atop w \notin B, |B| = |A| + u - 1}{\hat{g}_{B}} - 
\sum_{B: A \setminus \{v\} \subseteq B, \atop v \notin B, |B| = |A| + u - 1}{\hat{g}_{B}}
\end{align*}
\end{enumerate}
Now we have that 
$(\vec{s}_{k-1,u,g})_{A \setminus \{v\}} - (\vec{s}_{k-1,u,g})_{A \setminus \{w\}} + 
(\vec{s}_{k-2,u,g})_{A \setminus \{v,w\}}$
\begin{enumerate}
\item $\sum_{B: A \subseteq B, \atop |B| = |A| + u}{\hat{g}_{B}} = (\vec{s}_{k,u,g})_A$
\item $\sum_{B: A \setminus \{w\} \subseteq B, \atop w \notin B, |B| = |A| + u - 1}{\hat{g}_{B}} = 
(\vec{s}_{k-1,u,g})_{A \setminus \{w\}} - \sum_{B: A \setminus \{w\} \subseteq B, \atop w \in B, |B| = |A| + u - 1}{\hat{g}_{B}} = 
(\vec{s}_{k-1,u,g})_{A \setminus \{w\}} - (\vec{s}_{k,u-1,g})_A$
\item $\sum_{B: A \setminus \{v\} \subseteq B, \atop v \notin B, |B| = |A| + u - 1}{\hat{g}_{B}} = 
(\vec{s}_{k-1,u,g})_{A \setminus \{v\}} - \sum_{B: A \setminus \{v\} \subseteq B, \atop v \in B, |B| = |A| + u - 1}{\hat{g}_{B}} = 
(\vec{s}_{k-1,u,g})_{A \setminus \{v\}} - (\vec{s}_{k,u-1,g})_A$
\end{enumerate}
Putting these three statements together, 
$$\sum_{B: A \setminus \{v,w\} \subseteq B, \atop v,w \notin B, |B| = |A| + u - 2}{\hat{g}_B} = 
(\vec{s}_{k,u,g})_A + (\vec{s}_{k-1,u,g})_{A \setminus \{w\}} - (\vec{s}_{k-1,u,g})_{A \setminus \{v\}}$$
Putting everything together, 
\begin{align*}
(\vec{s}_{k-2,u,g})_{A \setminus \{v,w\}} &= 
(\vec{s}_{k,u,g})_A + (\vec{s}_{k-1,u,g})_{A \setminus \{w\}} - (\vec{s}_{k-1,u,g})_{A \setminus \{v\}} \\
&+ (\vec{s}_{k-1,u-1,g})_{A \setminus \{v\}} + (\vec{s}_{k-1,u-1,g})_{A \setminus \{w\}} - (\vec{s}_{k,u-2,g})_A
\end{align*}
Rearranging now gives the desired result.
\end{proof}
For each possible edge $v \to w$, we define difference vectors which show far a collection of vectors 
$\{\vec{s}_{k,u,g}: k,u \geq 0\}$ is from representing a $(v \to w)$-invariant function.
\begin{definition} Given a collection of vectors $\{\vec{s}_{k,u,g}: k,u \geq 0\}$, define the vectors \\
$\{\vec{\Delta}_{k,u,g,v \to w}: k,u \geq 0, v,w \in V(G), v \neq w, v \neq t, w \neq s, v \to w \neq s \to t\}$ as follows
\begin{enumerate}
\item If $w \in A$ then  
$(\vec{\Delta}_{k,u,g,s \to w})_A = (\vec{s}_{k,u,g})_A - ((\vec{s}_{k,u-1,g})_A - (\vec{s}_{k-1,u,g})_{A \setminus \{w\}})$\\
Otherwise, $(\vec{\Delta}_{k,u,g,s \to w})_A = 0$
\item If $v \in A$ then 
$(\vec{\Delta}_{k,u,g,v \to t})_A = (\vec{s}_{k,u,g})_A - ((\vec{s}_{k-1,u,g})_{A \setminus \{v\}} - (\vec{s}_{k,u-1,g})_A)$\\
Otherwise, $(\vec{\Delta}_{k,u,g,v \to t})_A = 0$
\item If $v,w \in A$ then 
\begin{align*}
(\vec{\Delta}_{k,u,g,v \to w})_A &= (\vec{s}_{k,u,g})_A - \left((\vec{s}_{k-1,u,g})_{A \setminus \{v\}} - (\vec{s}_{k-1,u,g})_{A \setminus \{w\}} 
+ (\vec{s}_{k-2,u,g})_{A \setminus \{v,w\}} \right. \\
& \left. - (\vec{s}_{k-1,u-1,g})_{A \setminus \{v\}} - (\vec{s}_{k-1,u-1,g})_{A \setminus \{w\}} + 
(\vec{s}_{k,u-2,g})_{A}\right)
\end{align*}
Otherwise, $(\vec{\Delta}_{k,u,g,v \to w})_A = 0$
\end{enumerate}
\end{definition}
\begin{definition}
Call a possible edge $e = v \to w$ degenerate if $\{v,w\} \nsubseteq V(G)$, $v = w$, $v = t$, $w = s$, or 
$v \to w = s \to t$ and non-degenerate otherwise.
\end{definition}
\begin{proposition}
For any non-degenerate $e = v \to w$, 
the collection of vectors $\{\vec{s}_{k,u,g}: k,u \geq 0\}$ 
corresponds to an $e$-invariant function $g$ if and only if the error vectors $\vec{e}_{k,u,g}$ and the 
difference vectors $\vec{\Delta}_{k,u,g,e}$ are all $0$.
\end{proposition}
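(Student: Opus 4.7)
The plan is to handle the two directions separately, since they have quite different flavors.

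For the forward direction, suppose $g$ genuinely is an $e$-invariant function and the $\vec{s}_{k,u,g}$ are its actual sum vectors. Then the error vectors vanish immediately by Proposition \ref{errorvectorprop}. The difference vectors $\vec{\Delta}_{k,u,g,e}$ were defined precisely as the discrepancies in the three identities of Lemma \ref{einvariancelemma}, so their vanishing is exactly the content of that lemma in each of the three cases $s \to w$, $v \to t$, and $v \to w$.

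For the backward direction, I would reconstruct $g$ from its Fourier expansion. Define $\hat{g}_V := (\vec{s}_{|V|,0,g})_V$ for each $V \subseteq \Vmst$ and set $g := \sum_V \hat{g}_V e_V$; this is a bona fide function since $\Vmst$ is finite. The main claim to verify is that $s_{A,u}(g) = (\vec{s}_{|A|,u,g})_A$ for every $A \subseteq \Vmst$ and every $u \geq 0$, which I would prove by induction on $u$. The base case $u=0$ is just the definition of $\hat{g}_A$. For the inductive step, the vanishing of the error vector gives $(\vec{s}_{|A|,u,g})_A = \tfrac{1}{u}\sum_{B \supseteq A,\,|B|=|A|+1}(\vec{s}_{|A|+1,u-1,g})_B$, which by the inductive hypothesis equals $\tfrac{1}{u}\sum_{B \supseteq A,\,|B|=|A|+1}s_{B,u-1}(g)$, and applying Proposition \ref{errorvectorprop} to the constructed $g$ shows this equals $s_{A,u}(g)$.

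It then remains to establish $e$-invariance. For this I would specialize the vanishing of $\vec{\Delta}_{k,0,g,e}$ (interpreting $(\vec{s}_{k,u,g})_A = 0$ when $u < 0$) and match the resulting Fourier identity against Corollary \ref{einvariancecorollary}. For $e = s \to w$ and $A \ni w$ it reads $\hat{g}_A + \hat{g}_{A \setminus \{w\}} = 0$; for $e = v \to t$ and $A \ni v$ it reads $\hat{g}_A - \hat{g}_{A \setminus \{v\}} = 0$; and for $e = v \to w$ with $A \supseteq \{v,w\}$ the six-term definition collapses (the three terms with negative second index vanish) to $\hat{g}_A - \hat{g}_{A \setminus \{v\}} + \hat{g}_{A \setminus \{w\}} - \hat{g}_{A \setminus \{v,w\}} = 0$. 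Setting $V$ equal to $A$ with the relevant element(s) removed recovers each relation of the Corollary exactly. The main bookkeeping obstacle is the third case, where the signs in the six-term formula must be tracked carefully against which terms survive at $u = 0$ to produce the correct four-term Fourier identity. Note that only the $u=0$ difference vectors are used in the backward direction; the higher-$u$ hypotheses are redundant, being forced by the already-established forward direction once $g$ has been constructed.
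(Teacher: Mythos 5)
Your proof is correct and follows essentially the same route as the paper: for the forward direction, Proposition \ref{errorvectorprop} and Lemma \ref{einvariancelemma}; for the backward direction, define $\hat{g}_A = (\vec{s}_{|A|,0,g})_A$, show by induction on $u$ via $\vec{e}_{k,u,g}=0$ that the sum vectors are genuine, and read $e$-invariance off the vanishing of $\vec{\Delta}_{|A|,0,g,e}$ against Corollary \ref{einvariancecorollary}. Your sign bookkeeping in the $v\to w$ case and the observation that the $u>0$ difference-vector hypotheses are redundant are both correct.
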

\begin{proof}
The only if statement follows from Proposition \ref{errorvectorprop} and Lemma \ref{einvariancelemma}. 
For the if statement, we simply take $\hat{g}_A = (\vec{s}_{|A|,0,g})_A$. The fact that 
$\vec{\Delta}_{|A|,0,g,e} = 0$ now corresponds precisely to the criteria for $e$-invariance in 
Corollary \ref{einvariancecorollary}. Using the fact that $\vec{e}_{k,u,g} = 0$ for all $k$ and $u$, it 
is easy to show by induction on $u$ that for all $k,u$ and all $A$ with $|A| = k$, 
$(\vec{s}_{k,u,g})_A = \sum_{B: A \subseteq B, |B| = |A| + u}{\hat{g}_B}$
\end{proof}
\begin{remark}
We do not care about $e$-invariance for degenerate $e$ for the following reasons. Obviously, we only care about vertices in $V(G)$. 
We do not allow $G$ to have loops and even if we did, they would not cross any cuts. Edges of the form $e = v \to s$ or $e = t \to w$ 
do not cross any cuts so every function is $e$-invariant for these $e$. $e = s \to t$ crosses every cut so only the $0$ function 
is $(s \to t)$-invariant. Also, if $s \to t \in E(G)$ then the directed connectivity problem on $G$ is trivial.
\end{remark}
\subsection{Checking error terms and well-definedness}\label{checkingeverythingworks}
In constructing our set of invariant functions $F_G = \{g_e: e \in E(G)\}$, we want the differences 
$g_{e_2} - g_{e_1}$ to be as small as possible. To do this, we will construct a base function 
$g$ and will have that $\hat{g_e}_V = \hat{g}_V$ whenever $|V| < z$ for some $z$. For each $e \in E(G)$ we will 
then choose the Fourier coefficients $\{\hat{g_e}_V: |V| = z\}$ so that $g_e$ is $e$-invariant.

This means that if $e = v \to w \in E(G)$ and we look at the Fourier coefficients 
$\{\hat{g}_V: |V \cup \{v,w\} \setminus \{s,t\}| < z\}$, the equations in subsection \ref{einvarianceconditions} for 
$e$-invariance must hold, so we must be very careful in constructing the collection of sum vectors 
$\{\vec{s}_{k,u,g}\}$ for $g$. We also need to be sure that the error vectors $\{\vec{e}_{k,u,g}\}$ are $0$. We can 
accomplish all of this as follows.
\begin{definition}
If we say that a non-degenerate edge $v \to w$ with $v,w \in A \cup \{s,t\}$ is relevant for a coordinate $(\vec{s}_{k,u,g})_A$ then 
we require that $(\vec{\Delta}_{k,u,g,v \to w})_A = 0$.
\end{definition}
\begin{remark}
It is possible that we could have $(\vec{\Delta}_{k,u,g,v \to w})_A = 0$ by coincidence, but we only say that 
$v \to w$ is relevant for a coordinate $(\vec{s}_{k,u,g})_A$ if we are intentionally making $(\vec{\Delta}_{k,u,g,v \to w})_A = 0$
\end{remark}
\begin{definition}
We say that a coordinate $(\vec{s}_{k,u,g})_A$ is fixed if there is some non-degenerate $v \to w$ with $v,w \in A$ which is relevant 
for $(\vec{s}_{k,u,g})_A$. Otherwise we say that $(\vec{s}_{k,u,g})_A$ is free.
\end{definition}
\begin{theorem}\label{constructionidea}
Given an acyclic input graph $G$ containing a path from $s$ to $t$ but no path from $s$ to $t$ of length at most 
$2^{z}$, if $v \to w$ is relevant for $(\vec{s}_{k,u,g})_A$ whenever $v,w \in A \cup \{s,t\}$, $v \to w$ is non-degenerate, 
and there is a path of length at most $2^{z-k-u-1}$ from $v$ to 
$w$ in $G$, then for any set of values $\{a_V: V \subseteq \Vmst, |V| < z\}$ there is a function $g$ such that 
$(\vec{s}_{|V|,0,g})_V = a_V$ whenever $(\vec{s}_{|V|,0,g})_V$ is a free coordinate.
\end{theorem}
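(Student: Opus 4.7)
The plan is to construct $g$ by specifying all of its Fourier coefficients $\{\hat{g}_V\}$ directly and then verify that the resulting sum-vector system satisfies the required relevance equations. First, set $\hat{g}_V = 0$ for $|V| \geq z$; then $(\vec{s}_{k,u,g})_A = 0$ whenever $k+u \geq z$, and for such coordinates the relevance path-length bound $2^{z-k-u-1} < 1$ excludes every non-degenerate edge, so no constraint is imposed. For $|V| < z$, define $\hat{g}_V$ by strong induction on $|V|$: if $(\vec{s}_{|V|,0,g})_V$ is free, set $\hat{g}_V = a_V$; if it is fixed, pick any non-degenerate edge $v \to w$ with $v,w \in V \cup \{s,t\}$ and a path of length $\leq 2^{z-|V|-1}$ from $v$ to $w$, and use the corresponding identity from Corollary \ref{einvariancecorollary} to express $\hat{g}_V$ as a signed sum of $\hat{g}_{V'}$ for $V' \subsetneq V$ (already defined by the inductive hypothesis).

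The main obstacle is the well-definedness of the fixed-coordinate step: when $V$ is fixed by two different relevant edges $e_1 = v_1 \to w_1$ and $e_2 = v_2 \to w_2$, I need both identities to yield the same value of $\hat{g}_V$. Here the doubling of the relevance bound is crucial. If $e_i$ is relevant at $V$ (path length $\leq 2^{z-|V|-1}$), then $e_i$ remains relevant at every smaller $V' \subsetneq V$ that still contains the endpoints of $e_i$, because the bound at $V'$ is $2^{z-|V'|-1} \geq 2^{z-|V|}$, at least twice as large. Consequently, one can apply the $e_2$-identity inductively to each term arising on the right-hand side of the $e_1$-expansion (and conversely apply the $e_1$-identity to each term on the right-hand side of the $e_2$-expansion); both procedures produce a single expression symmetric in $e_1$ and $e_2$, proving equality. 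A short case analysis handles how $\{v_1,w_1\}$ and $\{v_2,w_2\}$ overlap, and the cases where some endpoint equals $s$ or $t$ are handled uniformly via statements 1 and 2 of Corollary \ref{einvariancecorollary}. Acyclicity of $G$ rules out pathological conflicts such as both $v \to w$ and $w \to v$ being simultaneously relevant, which is what makes the inductive doubling argument consistent.

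Finally I would verify that the constructed $g$ satisfies the relevance equations at every $(k,u,A)$, not only at $u = 0$. The error vectors $\vec{e}_{k,u,g}$ vanish automatically by Proposition \ref{errorvectorprop}, since the sum vectors come from an actual function. For a relevant coordinate $(\vec{s}_{k,u,g})_A$ with $u > 0$ and $v \to w$ having a path of length $\leq 2^{z-k-u-1}$, the computation in the proof of Lemma \ref{einvariancelemma} reduces $\vec{\Delta}_{k,u,g,v \to w}$ at $A$ to the Fourier-coefficient identity for $v \to w$ applied at each $B \supseteq A$ with $|B| = k+u$. At each such $B$, the relevance path-length requirement at $(\vec{s}_{|B|,0,g})_B$ is $2^{z-|B|-1} = 2^{z-k-u-1}$, the same as at $A$, so $v \to w$ is relevant at $B$ and the identity holds by the inductive construction. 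Hence $\vec{\Delta}_{k,u,g,v \to w} = 0$ at every relevant $(k,u,A)$, completing the construction.
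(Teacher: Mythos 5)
Your overall strategy matches the paper's second, direct proof: build $\hat g_V$ level by level in $|V|$, assigning $a_V$ at free coordinates and using a chosen relevance identity at fixed ones, and then reduce the relevance equations at higher $u$ to the level-$0$ relations on supersets $B\supseteq A$ with $|B|=k+u$. (Your way of handling the error vectors — defining Fourier coefficients outright so that $\vec e_{k,u,g}=0$ automatically, rather than verifying it via Lemma~\ref{errorterms} — is a modest simplification.) However, the core well-definedness step contains a real gap.

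You assert that if $e_1$ and $e_2$ are both relevant at $V$, then expanding $\hat g_V$ via $e_1$ and further reducing each resulting term with $e_2$, versus the opposite order, ``produce a single expression symmetric in $e_1$ and $e_2$.'' This is false in precisely the case that carries all the content, namely when $e_1$ and $e_2$ chain through a shared vertex. Take $e_1=u\to v$, $e_2=v\to w$ with $u,v,w\in V$. Expanding with $e_1$ first and then applying $e_2$ to the one resulting term containing both $v$ and $w$ gives
\[\hat g_V = -\hat g_{V\setminus\{v\}} - \hat g_{V\setminus\{u,w\}} + 2\hat g_{V\setminus\{u,v\}} + \hat g_{V\setminus\{u,v,w\}},\]
whereas the other order gives
\[\hat g_V = 2\hat g_{V\setminus\{v,w\}} - \hat g_{V\setminus\{u,w\}} - \hat g_{V\setminus\{u,v,w\}} + \hat g_{V\setminus\{v\}},\]
and neither expression admits any further $e_1$- or $e_2$-expansion. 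These are not symmetric and are not equal as written; their difference is exactly $2$ times the defect of the $u\to w$ identity at $V\setminus\{v\}$. They agree only because the composite edge $u\to w$ is also relevant at $V\setminus\{v\}$ — and here the ``doubling'' matters in a way different from what you invoked. You used the factor-of-two slack only to say that $e_1,e_2$ stay relevant on smaller subsets (the monotonicity condition); what closes this case is that the concatenated path $u\to v\to w$ has length at most $2\cdot 2^{z-|V|-1}=2^{z-|V|}$, which is exactly the relevance threshold one level down at $|V\setminus\{v\}|=|V|-1$. This is the transitivity condition the paper isolates as condition 2 of Lemma~\ref{checkingwelldefinednesslemma}, and the corresponding cases 6--8 of the Appendix~\ref{welldefinedproof} case analysis show that the two orders of expansion differ by such a residual that is killed by the composite relevance. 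Without explicitly identifying and invoking that third identity (and checking it is available by the induction at a strictly smaller set), the well-definedness claim does not go through. The analogous issue also arises in the $s\to v$, $v\to w$ and $v\to w$, $w\to t$ chaining cases.
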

We give two proofs of this theorem. The first proof is relatively short but requires good knowledge of the material 
in \cite{potechin}. The second proof is direct and more general but involves a lot of casework.
\begin{proof}[First proof of Theorem \ref{constructionidea}]
From Section 6 of \cite{potechin}, for all $z_2$ and any non-degenerate possible edge $e = v \to w$, 
if we partition the set of subsets of vertices $\{V: V \subseteq \Vmst, |V| < z_2\} \cup \{t\}$ so that 
\begin{enumerate}
\item $V$ and $V \cup \{w\}$ are in the same component if $v \in V \cup \{s\}$ and $w \notin V \cup \{t\}$
\item $V$ and $\{t\}$ are in the same component if $v \in V \cup \{s\}$ and $w = t$
\end{enumerate}
then if $g$ is a function such that 
$g \cdot K_V = g \cdot K_W$ whenever $V$ and $W$ are in the same connected component (where $K_{\{t\}} = K_{t'}$), 
there is an $e$-invariant function $g_{e}$ such that $\hat{g_{e}}_V = \hat{g}_V$ whenever $V \subseteq \Vmst$ and 
$|V| < z_2$. This implies that $\vec{\Delta}_{k,u,g,v \to w} = 0$ whenever $k + u < z_2$.

Now partition the set of subsets of vertices $\{V: V \subseteq \Vmst, |V| < z\} \cup \{t\}$ as follows. 
\begin{enumerate}
\item If $V \subseteq \Vmst$, $w \in V$, $v \in V \setminus \{w\} \cup \{s\}$, 
and there is a path from $s$ to $w$ of length at most 
$2^{z-1-|V|}$ in $G$ then put $V$ in the same component as $V \setminus \{w\}$
\item If $V \subseteq \Vmst$, $v \in V$ and there is a path from $v$ to $t$ of length at most 
$2^{z-1-|V|}$ in $G$ then put $V$ in the same component as $\{t\}$
\end{enumerate}
From the above, taking $z_2 = z - \lceil{\lg{l}}\rceil$, where $l$ is the length of the path from $v$ to $w$ in $G$, 
if $g$ is a function such that $g \cdot K_V = g \cdot K_W$ whenever $V$ and $W$ are in the 
same connected component then the relevance condition of Theorem \ref{constructionidea} is satisfied. We 
just need to make sure that we can freely choose the free coordinates of each vector $\vec{s}_{k,0,g}$.

\begin{definition}
Call a set of vertices $V$ a representative of its connected component if $V = \{t\}$ or 
$(\vec{s}_{|V|,0,g})_V$ is free.
\end{definition}
\begin{lemma}
Every connected component has exactly one representative.
\end{lemma}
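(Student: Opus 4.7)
The plan is to establish existence and uniqueness of the representative separately, both by induction on $|V|$.

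For existence, I would take any $V$ in the component. If $V = \{t\}$ or $V$ is free, then $V$ itself is a representative. Otherwise the coordinate $(\vec{s}_{|V|,0,g})_V$ is fixed, so by definition there is some non-degenerate edge $v \to w$ with $v, w \in V$ which is relevant, meaning there is a path from $v$ to $w$ in $G$ of length at most $2^{z-|V|-1}$. The partition's Move 1, applied with this internal $v \in V \setminus \{w\}$ and $w \in V$, puts $V$ in the same component as $V \setminus \{w\}$, which is strictly smaller. By the inductive hypothesis its component contains a representative, and this is also a representative of the component of $V$.

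For uniqueness, suppose $V_1$ and $V_2$ are both representatives in the same component, and fix a minimum-length chain of moves $V_1 = W_0 \sim W_1 \sim \cdots \sim W_n = V_2$. Say first that $V_1 = \{t\}$; then the final move of the chain is a Move 2 step, which produces some $W_i \neq \{t\}$ containing a vertex $v$ with a path $v \to t$ in $G$ of length at most $2^{z-|W_i|-1}$. I would propagate this witness back along the chain: each Move 1 step changes $|W_j|$ by exactly one and correspondingly halves or doubles the budget $2^{z-|W_j|-1}$, exactly compensating for the shifts in the allowed path length as an element is added to or removed from the set. Using acyclicity of $G$ to prevent spurious concatenations, the propagation yields a vertex in $V_2$ with a short path to $t$, which combined with an argument analogous to the fixedness criterion contradicts freeness of $V_2$. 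The case where neither representative is $\{t\}$ is treated by propagating an internal relevance witness $v \to w$ in the same way along the chain.

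The main obstacle is this propagation step in the uniqueness argument. The doubling structure of the bound $2^{z-|V|-1}$ was chosen precisely so that short-path witnesses transfer cleanly under Move 1, and combined with acyclicity of $G$ this should prevent two distinct free sets from ever sharing a component. Minimality of the chain is used to rule out obvious cancellations of consecutive moves and to keep the propagation tractable; in particular, it lets us assume no two adjacent moves in the chain act on the same vertex, which is what makes the book-keeping of the halving/doubling budgets go through.
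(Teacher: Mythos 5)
Your existence argument is fine and matches the paper in spirit: if $(\vec{s}_{|V|,0,g})_V$ is fixed, some relevant edge reduces $V$ to a strictly smaller set (or directly to $\{t\}$), and induction on $|V|$ finishes it.

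Your uniqueness argument has a genuine gap, though, and it is precisely where you flag your ``main obstacle.'' The witness propagation does not ``exactly compensate'' in both directions. Along a minimal chain $V_1 = W_0 \sim W_1 \sim \cdots \sim W_n = V_2$ the sizes $|W_j|$ can both increase and decrease, and the two directions behave asymmetrically. When a reduction step removes the witness vertex (say $v$ with a short path $v \to t$, removed via an edge $u \to v$ of length at most $2^{z-1-|W_j|}$), concatenation gives a path $u \to t$ of length at most $2 \cdot 2^{z-1-|W_j|} = 2^{z-1-|W_{j+1}|}$, and the doubled budget at $W_{j+1}$ absorbs this exactly — so far so good. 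But when the step \emph{adds} a vertex, going from $W_j$ to $W_{j+1} = W_j \cup \{w'\}$, the budget halves from $2^{z-1-|W_j|}$ to $2^{z-1-|W_{j+1}|}$ while the witness path inherited from $W_j$ has not shortened; there is nothing to pay the halved budget with, and the witness simply ceases to be valid. Minimality of the chain does not prevent this: a minimal chain between two normal forms can perfectly well go up in size and then come back down, so you cannot assume the chain is monotone in $|W_j|$. Acyclicity does not help here either; it rules out spurious cycles in $G$, not peaks in the chain of sets.

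The paper sidesteps this entirely by never mixing reductions with anti-reductions. It proves a local confluence (diamond) lemma: if $V$ reduces in one step to two different sets, those two can each be reduced further to a common set. Since reductions strictly decrease $|V|$ (or terminate at $\{t\}$), the process is well-founded, and local confluence then gives a unique normal form per equivalence class by a Newman-style induction. That is exactly the statement that each connected component has exactly one representative. In that argument the budget only ever doubles, so the concatenation bound always fits; there is no ``budget halving'' step to worry about. To fix your argument you would essentially need to first show that any chain between two normal forms can be replaced by a purely decreasing one, which is the content of the confluence lemma anyway — so you cannot avoid proving something equivalent to what the paper proves.
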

\begin{proof} 
It is sufficient to show that starting at any set of vertices $V \subseteq \Vmst$ no matter which way we choose to 
reduce $V$, we will always end up at the same representative for the connected component. We prove this by showing 
that if there are two possible ways to reduce a set of vertices $V \subseteq \Vmst$, we can get to the same 
representative no matter which one we choose. The result then follows by induction.

If we can use either $v_1 \to v_2$ or $v_3 \to v_4$ to reduce $V$, we have the following cases:
\begin{enumerate}
\item If $v_2,v_4$ are distinct vertices in $V$ and $v_1,v_3 \in V \setminus \{v_2,v_4\} \cup \{s\}$ then using 
$v_1 \to v_2$ and $v_3 \to v_4$ in either order 
we will reduce $V$ to $V \setminus \{v_2,v_4\}$.
\item If $v_2,v_4$ are distinct vertices in $V$, $v_3 = v_2$ and $v_1 \in V \setminus \{v_2,v_4\}$ then if we reduce $V$ with 
$v_3 \to v_4$ first and then with $v_1 \to v_2$ we will obtain $V \setminus \{v_2,v_4\}$. If we reduce $V$ with 
$v_1 \to v_2$ first then $v_3 \notin V \setminus \{v_2\}$. However, there was a path from $v_1$ to $v_2$ of length at most 
$2^{z-1-|V|}$ in $G$ and a path from $v_3$ to $v_4$ of length at most 
$2^{z-1-|V|}$ in $G$ so there is a path from $v_1$ to $v_4$ of length at most $2^{z-1-|V \setminus \{v_2\}|}$ in $G$ so we may now 
use $v_1 \to v_4$ to reduce $V \setminus \{v_2\}$ and obtain $V \setminus \{v_2,v_4\}$.
\item If $v_2 = v_4$ is a vertex in $V$ and $v_1,v_3 \in V \cup \{s\}$ then whether we use $v_1 \to v_2$ or 
$v_3 \to v_4$ we will reduce $V$ to $V \setminus \{v_4\}$
\item If $v_2 = v_4 = t$ and $v_1,v_3 \in V \cup \{s\}$ then whether we use $v_1 \to v_2$ or 
$v_3 \to v_4$ we will reduce $V$ to $\{t\}$
\item If $v_4 = t$ and $v_2,v_3$ are distinct vertices of $V$ then using 
$v_1 \to v_2$ and $v_3 \to v_4$ in either order we will reduce $V$ to $\{t\}$
\item If $v_4 = t$, $v_3 = v_2$ is a vertex in $V$, and $v_1 \in V \setminus \{v_2\}$ then if we reduce $V$ with 
$v_3 \to v_4$ first we will obtain $\{t\}$. If we reduce $V$ with 
$v_1 \to v_2$ first then $v_3 \notin V \setminus \{v_2\}$. However, there was a path from $v_1$ to $v_2$ of length at most 
$2^{z-1-|V|}$ in $G$ and a path from $v_3$ to $t$ of length at most 
$2^{z-1-|V|}$ in $G$ so there is a path from $v_1$ to $t$ of length at most $2^{z-1-|V \setminus \{v_2\}|}$ in $G$ so we may now 
use $v_1 \to t$ to reduce $V \setminus \{v_2\}$ and obtain $\{t\}$.
\end{enumerate}
\end{proof}
Now from Section 6 of \cite{potechin} we may choose a function $g$ with arbitrary values of 
$(\vec{s}_{|V|,0,g})_V = \hat{g}_V$ for all $V \neq \{t\}$ which are the representative of a connected component. 
Equivalently, we may freely choose the values of all free coordinates $(\vec{s}_{|V|,0,g})_V$ whenever $|V| < z$.
We may further take $\hat{g}_V = 0$ whenever $|V| \geq z$ and this completes the proof.
\end{proof}
\begin{proof}[Second proof of Theorem \ref{constructionidea}]
\begin{lemma}\label{checkingwelldefinednesslemma}
Let $G$ be an acyclic input graph containing a path from $s$ to $t$. We may freely choose which non-degenerate edges 
$v \to w$ are relevant for the terms $(\vec{s}_{k,u,g})_A$ so long as the following conditions hold.
\begin{enumerate}
\item If $v \to w$ is relevant for $(\vec{s}_{k,u,g})_A$ then $v \to w$ is relevant for $(\vec{s}_{k_2,u_2,g})_{A_2}$ 
whenever $\{v,w\} \setminus \{s,t\} \subseteq A_2 \subseteq A$, $k_2 = |A_2|$, and $u_2 \leq u$.
\item If $u \to v$ and $v \to w$ are relevant for $(\vec{s}_{k,u,g})_A$ then $u \to w$ is relevant for 
$(\vec{s}_{k-1,u,g})_{A \setminus v}$.
\end{enumerate}
\end{lemma}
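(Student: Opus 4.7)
The plan is to build the function $g$ by producing its Fourier coefficients inductively; once $\hat g_B$ is specified for every $B\subseteq\Vmst$, one sets $(\vec{s}_{k,u,g})_A=\sum_{B\supseteq A,\,|B|=k+u}\hat g_B$, and Proposition \ref{errorvectorprop} makes every error vector $\vec e_{k,u,g}=0$ for free. What remains is to arrange that every prescribed relevance equation $\vec\Delta_{k,u,g,v\to w}=0$ holds at its coordinate $A$. A direct Fourier-level expansion of the defining formula for $\vec\Delta$ gives, for $v,w\in A$,
$$(\vec{\Delta}_{k,u,g,v\to w})_A=\sum_{B\supseteq A,\,v,w\in B,\,|B|=k+u}(\vec{\Delta}_{|B|,0,g,v\to w})_B,$$
with analogous two-term identities in the $s\to w$ and $v\to t$ cases. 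This reduces the task to arranging the level-$0$ equations; Condition~1 then lets me pare every higher-$u$ obligation down to level-$0$ obligations at the appropriate subcoordinates, with the convention $\hat g_B=0$ outside a large enough index range absorbing the remainder of the sum.

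Next I would construct $\hat g_A$ by induction on $|A|$. On each free coordinate set $\hat g_A$ to any desired value; on each fixed coordinate pick any relevant non-degenerate edge $v\to w$ and define $\hat g_A$ by the level-$0$ instance of $\vec\Delta_{|A|,0,g,v\to w}=0$, which by Corollary \ref{einvariancecorollary} expresses $\hat g_A$ as an explicit linear combination of coefficients at strictly smaller subsets already produced by the induction. The whole lemma then reduces to a well-definedness check: when two distinct relevant non-degenerate edges $v_1\to w_1$ and $v_2\to w_2$ are both relevant for $(\vec s_{|A|,0,g})_A$, the two defining formulas must yield the same value of $\hat g_A$.

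The main obstacle is this well-definedness check, which I would handle by case analysis on how the endpoints $\{v_1,w_1\}$ and $\{v_2,w_2\}$ overlap, paralleling the six-case argument that appears in the first proof. In the disjoint case, subtracting the two formulas and regrouping Fourier terms exhibits the discrepancy as a combination of level-$0$ $\vec\Delta$ equations at strictly smaller subsets $A'\subsetneq A$; these vanish by the inductive hypothesis together with Condition~1, which supplies relevance of each $v_i\to w_i$ at every such $A'$ containing its endpoints. In the shared-endpoint case, say $w_1=v_2$, Condition~2 produces the composed edge $v_1\to w_2$ as relevant at $(\vec s_{|A|-1,0,g})_{A\setminus\{v_2\}}$, and rewriting both original defining formulas through the new relation $\vec\Delta_{|A|-1,0,g,v_1\to w_2}=0$ at $A\setminus\{v_2\}$ exhibits them as equal. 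Degenerate cases where $s$ or $t$ appears as an endpoint are analogous, with the two-term identities from Corollary \ref{einvariancecorollary} replacing the four-term one and with the same flavor of reduction to a smaller coordinate. Once well-definedness is established the induction closes, and combined with the Fourier-level reduction of the first paragraph this yields a function $g$ realizing every declared relevance.
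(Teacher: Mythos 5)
Your reduction to level $0$ has a directional gap. The Fourier-level identity you wrote is correct, but look carefully at what it gives: for $v,w\in A$,
\begin{equation*}
(\vec{\Delta}_{k,u,g,v\to w})_A=\sum_{B\supseteq A,\,|B|=k+u}(\vec{\Delta}_{|B|,0,g,v\to w})_B,
\end{equation*}
a sum of level-$0$ difference vectors at \emph{supersets} $B\supseteq A$ (with $|B|=k+u$), not subcoordinates. To make $(\vec{\Delta}_{k,u,g,v\to w})_A$ vanish for the function you built, you therefore need $v\to w$ to be relevant at $(\vec{s}_{|B|,0,g})_B$ for every such $B$. Condition 1 propagates relevance only \emph{downward}, to $A_2\subseteq A$ with $u_2\leq u$; it says nothing about $B\supsetneq A$. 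Condition 2 is about composition of edges, not about enlarging the coordinate set. So your claim that ``Condition~1 then lets me pare every higher-$u$ obligation down to level-$0$ obligations at the appropriate subcoordinates'' is the crux, and it fails. Your fallback, the convention $\hat g_B=0$ for large $|B|$, also does not help: even with $\hat g_B=0$, the level-$0$ delta at $B$ still contains $\hat g_{B\setminus\{v\}},\hat g_{B\setminus\{w\}},\hat g_{B\setminus\{v,w\}}$, which are not forced to vanish.

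The paper never performs this decomposition. Its proof works directly at each level $(k,u)$: by induction on $k+u$ it shows local confluence of the rewriting process ``apply a relevance equation to $(\vec{s}_{k,u,g})_A$, obtaining coordinates at strictly smaller $(k'+u',k')$'', with the case analysis on overlapping endpoints and with Condition 2 supplying the composed edge in the shared-endpoint case. Because the equations are applied at $(k,u)$ rather than unpacked to level $0$, only the downward closure of Condition 1 is needed, which is exactly what is assumed. Your construction would go through if you added an upward-closure hypothesis (relevance at $(k,u,A)$ implies relevance at $(|B|,0,B)$ for every $B\supseteq A$ with $|B|=k+u$); that extra hypothesis does happen to hold in the application in Theorem \ref{constructionidea} because there relevance depends only on $k+u$ via the path-length threshold $2^{z-k-u-1}$, but the lemma is stated and proved without it. As written, either add that hypothesis explicitly, or run the confluence argument at general $(k,u)$ as the paper does; the level-$0$ case analysis you sketch is the right shape, it just needs to be carried out at all levels.
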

\begin{proof}
The proof of this lemma is long and involves a lot of casework, so we put it in Appendix \ref{welldefinedproof}.
\end{proof}
\begin{lemma}\label{errorterms} \ 
\begin{enumerate}
\item If $w \in A$ then 
$$(\vec{e}_{k,u,g})_A = (\vec{\Delta}_{k,u,g,s \to w})_A - 
\frac{1}{u}\sum_{b \notin A}{(\vec{\Delta}_{k+1,u-1,g,s \to w})_{A \cup \{b\}}} 
+ \frac{u-1}{u}(\vec{e}_{k,u-1,g})_A - (\vec{e}_{k-1,u,g})_{A \setminus \{w\}}$$
\item If $v \in A$ then 
$$(\vec{e}_{k,u,g})_A = (\vec{\Delta}_{k,u,g,v \to t})_A - 
\frac{1}{u}\sum_{b \notin A}{(\vec{\Delta}_{k+1,u-1,g,v \to t})_{A \cup \{b\}}} 
- \frac{u-1}{u}(\vec{e}_{k,u-1,g})_A - (\vec{e}_{k-1,u,g})_{A \setminus \{v\}}$$
\item If $v,w \in A$ then 
\begin{align*}
(\vec{e}_{k,u,g})_A &= (\vec{\Delta}_{k,u,g,v \to w})_A - 
\frac{1}{u}\sum_{b \notin A}{(\vec{\Delta}_{k+1,u-1,g,v \to w})_{A \cup \{b\}}} 
+ (\vec{e}_{k-1,u,g})_{A \setminus \{v\}} - (\vec{e}_{k-1,u,g})_{A \setminus \{w\}} \\
&+ (\vec{e}_{k-2,u,g})_{A \setminus \{v,w\}} - \frac{u-1}{u}(\vec{e}_{k-1,u-1,g})_{A \setminus \{v\}} - 
\frac{u-1}{u}(\vec{e}_{k-1,u-1,g})_{A \setminus \{w\}} + \frac{u-2}{u}(\vec{e}_{k,u-2,g})_A
\end{align*}
\end{enumerate}
\end{lemma}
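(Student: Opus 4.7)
My plan is to prove each of the three statements by direct algebraic manipulation, using the definition of $\vec{e}_{k,u,g}$ as $\vec{s}_{k,u,g} - \frac{1}{u}P_k\vec{s}_{k+1,u-1,g}$ and then substituting the corresponding $\vec{\Delta}$-relation to eliminate $\vec{s}_{k,u,g}$ and $\vec{s}_{k+1,u-1,g}$. Concretely, for case 1 with $w \in A$, the definition of $\vec{\Delta}_{k,u,g,s\to w}$ rearranges to $(\vec{s}_{k,u,g})_A = (\vec{\Delta}_{k,u,g,s\to w})_A + (\vec{s}_{k,u-1,g})_A - (\vec{s}_{k-1,u,g})_{A\setminus\{w\}}$, and the analogous identity applied to $(\vec{s}_{k+1,u-1,g})_{A\cup\{b\}}$ (which is valid because $w \in A \subseteq A\cup\{b\}$) expresses that term as a $\vec{\Delta}$ plus $(\vec{s}_{k+1,u-2,g})_{A\cup\{b\}} - (\vec{s}_{k,u-1,g})_{(A\setminus\{w\})\cup\{b\}}$. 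Substituting both into the definition of $(\vec{e}_{k,u,g})_A$ isolates the two $\vec{\Delta}$-contributions written in the statement, and leaves a remainder that is a sum of lower-index $\vec{s}$ expressions.

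The remaining task is to recognize that remainder as a combination of lower-index error vectors. The term $(\vec{s}_{k,u-1,g})_A - \frac{1}{u}\sum_{b\notin A}(\vec{s}_{k+1,u-2,g})_{A\cup\{b\}}$ should be re-expressed via $\vec{e}_{k,u-1,g}$, which by definition equals $(\vec{s}_{k,u-1,g})_A - \frac{1}{u-1}\sum_{b\notin A}(\vec{s}_{k+1,u-2,g})_{A\cup\{b\}}$; solving for the sum and substituting yields a $\frac{u-1}{u}(\vec{e}_{k,u-1,g})_A$ contribution plus a leftover proportional to $(\vec{s}_{k,u-1,g})_A$. Similarly, $-(\vec{s}_{k-1,u,g})_{A\setminus\{w\}} + \frac{1}{u}\sum_{b\notin A}(\vec{s}_{k,u-1,g})_{(A\setminus\{w\})\cup\{b\}}$ nearly matches $-(\vec{e}_{k-1,u,g})_{A\setminus\{w\}}$, except that the latter sums over $b\notin A\setminus\{w\}$ rather than $b\notin A$; the missing $b=w$ term contributes precisely $\frac{1}{u}(\vec{s}_{k,u-1,g})_A$, which cancels the leftover from the previous computation. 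The result collapses to the claimed identity.

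Case 2 is essentially identical with signs flipped, since the $v\to t$ $\vec{\Delta}$ relation differs from the $s\to w$ one only by the sign of the $(\vec{s}_{k,u-1,g})_A$ term. Case 3 is structurally the same but involves six lower-order $\vec{s}$-terms instead of two; each of them plays the role that the single "reduced" term played in case 1. Doing the substitution and regrouping produces three $(\vec{e}_{k-1,u,\cdot})$-type contributions from the $(\vec{s}_{k-1,u,g})_{A\setminus\{v\}}$, $(\vec{s}_{k-1,u,g})_{A\setminus\{w\}}$, $(\vec{s}_{k-2,u,g})_{A\setminus\{v,w\}}$ terms, two $(\vec{e}_{k-1,u-1,\cdot})$-type contributions from the mixed-index terms, and one $(\vec{e}_{k,u-2,g})_A$ contribution from the last term, with the coefficients $1,-1,1,-\frac{u-1}{u},-\frac{u-1}{u},\frac{u-2}{u}$ exactly as stated.

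The main obstacle is bookkeeping in case 3: tracking the discrepancy between sums over $b\notin A$ and sums over $b\notin A\setminus\{v\}$, $b\notin A\setminus\{w\}$, $b\notin A\setminus\{v,w\}$, and verifying that every leftover $\vec{s}$-term cancels so that only $\vec{\Delta}$ and $\vec{e}$ contributions survive with the right coefficients. This is routine but needs careful accounting because several different index-domain mismatches contribute compensating $\vec{s}_{k,u-1,g}$ and $\vec{s}_{k,u-2,g}$ pieces whose cancellation produces the fractions $\frac{u-1}{u}$ and $\frac{u-2}{u}$ in the final formula.
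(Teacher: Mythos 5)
Your proposal is correct and follows exactly the route the paper takes: substitute the $\vec{\Delta}$-relations for $\vec{s}_{k,u,g}$ and $\vec{s}_{k+1,u-1,g}$ into the definition of $\vec{e}_{k,u,g}$, then recognize the remainder as lower-index error vectors after splitting coefficients as $\frac{u-1}{u}+\frac{1}{u}$ and accounting for the index-domain mismatches ($b\notin A$ versus $b\notin A\setminus\{w\}$, etc.), with the stray $\frac{1}{u}\vec{s}$-pieces cancelling exactly as you describe. The one small imprecision is calling the contribution from $(\vec{s}_{k-2,u,g})_{A\setminus\{v,w\}}$ a ``$\vec{e}_{k-1,u,\cdot}$-type'' term, when it is of course $(\vec{e}_{k-2,u,g})_{A\setminus\{v,w\}}$, but your coefficient list is correct.
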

\begin{proof} \ 
\begin{enumerate}
\item If $w \in A$ then 
\begin{align*}
(\vec{e}_{k,u,g})_A &= (\vec{s}_{k,u,g})_A - \frac{1}{u}\sum_{b \notin A}(\vec{s}_{k+1,u-1,g})_{A \cup \{b\}} \\
&= (\vec{\Delta}_{k,u,g,s \to w})_A - 
\frac{1}{u}\sum_{b \notin A}{(\vec{\Delta}_{k+1,u-1,g,s \to w})_{A \cup \{b\}}} \\
&+ \left((\vec{s}_{k,u-1,g})_A - (\vec{s}_{k-1,u,g})_{A \setminus \{w\}}\right) - 
\frac{1}{u}\sum_{b \notin A}{\left((\vec{s}_{k+1,u-2,g})_{A \cup \{b\}} - (\vec{s}_{k,u-1,g})_{A \cup \{b\} \setminus \{w\}}\right)} \\
&= (\vec{\Delta}_{k,u,g,s \to w})_A - 
\frac{1}{u}\sum_{b \notin A}{(\vec{\Delta}_{k+1,u-1,g,s \to w})_{A \cup \{b\}}} \\
&+ \frac{u-1}{u}\left((\vec{s}_{k,u-1,g})_A - \frac{1}{u-1}\sum_{b \notin A}{(\vec{s}_{k+1,u-2,g})_{A \cup \{b\}}}\right) 
+ \frac{1}{u}(\vec{s}_{k,u-1,g})_A \\
&- \left((\vec{s}_{k-1,u,g})_{A \setminus \{w\}} - 
\frac{1}{u}\sum_{b \notin A \setminus \{w\}}{(\vec{s}_{k,u-1,g})_{(A \setminus \{w\}) \cup \{b\}}}\right) 
- \frac{1}{u}(\vec{s}_{k,u-1,g})_A \\
&= (\vec{\Delta}_{k,u,g,s \to w})_A - 
\frac{1}{u}\sum_{b \notin A}{(\vec{\Delta}_{k+1,u-1,g,s \to w})_{A \cup \{b\}}} 
+ \frac{u-1}{u}(\vec{e}_{k,u-1,g})_A - (\vec{e}_{k-1,u,g})_{A \setminus \{w\}}
\end{align*}
\item If $v \in A$ then 
\begin{align*}
(\vec{e}_{k,u,g})_A &= (\vec{s}_{k,u,g})_A - \frac{1}{u}\sum_{b \notin A}(\vec{s}_{k+1,u-1,g})_{A \cup \{b\}} \\
&= (\vec{\Delta}_{k,u,g,v \to t})_A - 
\frac{1}{u}\sum_{b \notin A}{(\vec{\Delta}_{k+1,u-1,g,v \to t})_{A \cup \{b\}}} \\
&+ \left(-(\vec{s}_{k,u-1,g})_A + (\vec{s}_{k-1,u,g})_{A \setminus \{v\}}\right) - 
\frac{1}{u}\sum_{b \notin A}{\left(-(\vec{s}_{k+1,u-2,g})_{A \cup \{b\}} + (\vec{s}_{k,u-1,g})_{A \cup \{b\} \setminus \{v\}}\right)} \\
&= (\vec{\Delta}_{k,u,g,v \to t})_A - 
\frac{1}{u}\sum_{b \notin A}{(\vec{\Delta}_{k+1,u-1,g,v \to t})_{A \cup \{b\}}} \\
&- \frac{u-1}{u}\left((\vec{s}_{k,u-1,g})_A - \frac{1}{u-1}\sum_{b \notin A}{(\vec{s}_{k+1,u-2,g})_{A \cup \{b\}}}\right) 
- \frac{1}{u}(\vec{s}_{k,u-1,g})_A \\
&+ \left((\vec{s}_{k-1,u,g})_{A \setminus \{v\}} - 
\frac{1}{u}\sum_{b \notin A \setminus \{v\}}{(\vec{s}_{k,u-1,g})_{(A \setminus \{v\}) \cup \{b\}}}\right) 
+ \frac{1}{u}(\vec{s}_{k,u-1,g})_A \\
&= (\vec{\Delta}_{k,u,g,v \to t})_A - 
\frac{1}{u}\sum_{b \notin A}{(\vec{\Delta}_{k+1,u-1,g,v \to t})_{A \cup \{b\}}} 
- \frac{u-1}{u}(\vec{e}_{k,u-1,g})_A + (\vec{e}_{k-1,u,g})_{A \setminus \{v\}}
\end{align*}
\item If $v,w \in A$ then 
\begin{align*}
(\vec{e}_{k,u,g})_A &= (\vec{\Delta}_{k,u,g,v \to w})_A - 
\frac{1}{u}\sum_{b \notin A}{(\vec{\Delta}_{k+1,u-1,g,v \to w})_{A \cup \{b\}}} \\
&+ (\vec{s}_{k-1,u,g})_{A \setminus \{v\}} - (\vec{s}_{k-1,u,g})_{A \setminus \{w\}} 
+ (\vec{s}_{k-2,u,g})_{A \setminus \{v,w\}} \\
&- (\vec{s}_{k-1,u-1,g})_{A \setminus \{v\}} - (\vec{s}_{k-1,u-1,g})_{A \setminus \{w\}} + 
(\vec{s}_{k,u-2,g})_{A}\\
&- \frac{1}{u}\sum_{b \notin A}
\left((\vec{s}_{k,u-1,g})_{A \cup \{b\} \setminus \{v\}} - (\vec{s}_{k,u-1,g})_{A \cup \{b\} \setminus \{w\}} 
+ (\vec{s}_{k-1,u-1,g})_{A \cup \{b\} \setminus \{v,w\}} \right. \\
& \left. - (\vec{s}_{k,u-2,g})_{A \cup \{b\} \setminus \{v\}} - (\vec{s}_{k,u-2,g})_{A \cup \{b\} \setminus \{w\}} + 
(\vec{s}_{k+1,u-3,g})_{A \cup \{b\}} \right) \\
&= (\vec{\Delta}_{k,u,g,v \to w})_A - 
\frac{1}{u}\sum_{b \notin A}{(\vec{\Delta}_{k+1,u-1,g,v \to w})_{A \cup \{b\}}} \\
&+ \left((\vec{s}_{k-1,u,g})_{A \setminus \{v\}} - 
\frac{1}{u}\sum_{b \notin A \setminus \{v\}}{(\vec{s}_{k,u-1,g})_{(A \setminus \{v\}) \cup \{b\}}}\right) 
+ \frac{1}{u}(\vec{s}_{k,u-1,g})_{A}\\
&- \left((\vec{s}_{k-1,u,g})_{A \setminus \{w\}} - 
\frac{1}{u}\sum_{b \notin A \setminus \{w\}}{(\vec{s}_{k,u-1,g})_{(A \setminus \{w\}) \cup \{b\}}}\right) 
- \frac{1}{u}(\vec{s}_{k,u-1,g})_{A}\\
&+ \left((\vec{s}_{k-2,u,g})_{A \setminus \{v,w\}} - 
\frac{1}{u}\sum_{b \notin A \setminus \{v,w\}}{(\vec{s}_{k-1,u-1,g})_{(A \setminus \{v,w\}) \cup \{b\}}}\right) \\
&+ \frac{1}{u}(\vec{s}_{k-1,u-1,g})_{A \setminus \{v\}} + \frac{1}{u}(\vec{s}_{k-1,u-1,g})_{A \setminus \{w\}} \\
&- \frac{u-1}{u}\left((\vec{s}_{k-1,u-1,g})_{A \setminus \{v\}} - 
\frac{1}{u-1}\sum_{b \notin A \setminus \{v\}}{(\vec{s}_{k,u-2,g})_{(A \setminus \{v\}) \cup \{b\}}} \right) \\
&- \frac{1}{u}(\vec{s}_{k-1,u-1,g})_{A \setminus \{v\}} - \frac{1}{u}(\vec{s}_{k,u-2,g})_{A}\\
&- \frac{u-1}{u}\left((\vec{s}_{k-1,u-1,g})_{A \setminus \{w\}} - 
\frac{1}{u-1}\sum_{b \notin A \setminus \{w\}}{(\vec{s}_{k,u-2,g})_{(A \setminus \{w\}) \cup \{b\}}} \right) \\
&- \frac{1}{u}(\vec{s}_{k-1,u-1,g})_{A \setminus \{w\}} - \frac{1}{u}(\vec{s}_{k,u-2,g})_{A}\\
&+ \frac{u-2}{u}\left((\vec{s}_{k,u-2,g})_{A} - \frac{1}{u-2}\sum_{b \notin A}{(\vec{s}_{k+1,u-3,g})_{A \cup \{b\}}}\right) 
+ \frac{2}{u}(\vec{s}_{k,u-2,g})_{A}\\
&= (\vec{\Delta}_{k,u,g,v \to w})_A - 
\frac{1}{u}\sum_{b \notin A}{(\vec{\Delta}_{k+1,u-1,g,v \to w})_{A \cup \{b\}}} 
+ (\vec{e}_{k-1,u,g})_{A \setminus \{v\}} - (\vec{e}_{k-1,u,g})_{A \setminus \{w\}} \\
&+ (\vec{e}_{k-2,u,g})_{A \setminus \{v,w\}} - \frac{u-1}{u}(\vec{e}_{k-1,u-1,g})_{A \setminus \{v\}} - 
\frac{u-1}{u}(\vec{e}_{k-1,u-1,g})_{A \setminus \{w\}} + \frac{u-2}{u}(\vec{e}_{k,u-2,g})_A
\end{align*}
\end{enumerate}
\end{proof}
Theorem \ref{constructionidea} now follows easily. First arbitrarily choose all values of $(\vec{s}_{|V|,0,g})_V = \hat{g}_V$ where 
$|V| < z$ and $(\vec{s}_{|V|,0,g})_V$ is a free coordinate and determine the other 
values based on the definition of relevance. Since the conditions of Lemma \ref{checkingwelldefinednesslemma} are satisfied 
this is well-defined. Then take \\
$(\vec{s}_{|V|,u,g})_V = \frac{1}{u}\sum_{w \in \Vmst \setminus V}{(\vec{s}_{|V|+1,u-1,g})_V}$ whenever
$(\vec{s}_{|V|,u,g})_V$ is a free coordinate. Using induction and Lemma \ref{errorterms}, all 
error vectors must be $0$, as needed.
\end{proof}
\begin{corollary}\label{constructionideacorollary}
Given an acyclic input graph $G$ containing a path from $s$ to $t$ but no path from $s$ to $t$ of length at most 
$2^{z}$, if $v \to w$ is relevant for $(\vec{s}_{k,u,g})_A$ whenever $v,w \in A \cup \{s,t\}$, $v \to w$ is non-degenerate, 
and there is a path of length at most $2^{z-k-u-1}$ from $v$ to 
$w$ in $G$, then if we choose the sum vectors $\{\vec{s}_{k,u,g}\}$ in increasing lexicographic order of 
$(k+u,k)$ subject to the constraints that 
\begin{enumerate}
\item For every fixed coordinate $(\vec{s}_{k,u,g})_V$ the corresponding equation for relevance holds.
\item Whenever $k > 0$ and $(\vec{s}_{k-1,u+1,g})_V$ is a free coordinate we choose the coordinates \\
$\{(\vec{s}_{k,u,g})_{W}: V \subseteq W, (\vec{s}_{k,u,g})_{W}$ is a free coordinate$\}$ so that \\
$(\vec{s}_{k-1,u+1,g})_V = \frac{1}{u+1}\sum_{w \in \Vmst \setminus V}{(\vec{s}_{k,u,g})_{V \cup \{w\}}}$
\end{enumerate}
then these sum vectors $\{\vec{s}_{k,u,g}\}$ will correspond to an actual function $g$.
\end{corollary}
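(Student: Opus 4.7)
The plan is to induct on $m = k + u$, showing that after the procedure processes all of level $m$, every error vector $\vec{e}_{k,u,g}$ with $k + u = m$ vanishes. Once all error vectors vanish, iterating $\vec{s}_{k,u,g} = \frac{1}{u} P_k \vec{s}_{k+1,u-1,g}$ down to $u = 0$ gives $(\vec{s}_{k,u,g})_A = \sum_{B \supseteq A,\, |B| = k+u} \hat{g}_B$, where $\hat{g}_V := (\vec{s}_{|V|,0,g})_V$ for $|V| < z$ and $\hat{g}_V := 0$ for $|V| \geq z$. So the constructed collection of sum vectors is precisely the sum vector collection of the resulting function $g$.

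The first step is to invoke Theorem \ref{constructionidea} to produce an actual function whose free $u = 0$ coordinates take any prescribed values. For any such function, both constraints of the corollary hold automatically: constraint 1 holds by the relevance conclusion of the theorem, and constraint 2 holds because an actual function has all of its error vectors equal to zero. This supplies the existence witness needed to resolve the consistency question — the linear system that constraint 2 imposes on the free coordinates of $\vec{s}_{k,u,g}$ at each step admits at least one solution, namely the values read off from the witness function.

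For the inductive step, fix a coordinate $A$ of $\vec{s}_{k,u,g}$ with $k + u = m$. If $A$ is free, then constraint 2 applied at step $(k+1, u-1)$ forces $(\vec{e}_{k,u,g})_A = 0$ directly. If $A$ is fixed via a relevant edge $v \to w$, I apply Lemma \ref{errorterms}: the term $(\vec{\Delta}_{k,u,g,v \to w})_A$ vanishes by constraint 1, and since the relevance threshold $2^{z - k - u - 1}$ is invariant under $(k, u) \mapsto (k+1, u-1)$, the edge $v \to w$ is also relevant for each $(\vec{s}_{k+1,u-1,g})_{A \cup \{b\}}$, so the $\vec{\Delta}_{k+1,u-1,g,v \to w}$ contributions vanish as well. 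The remaining error terms on the right-hand side of Lemma \ref{errorterms} all sit at levels $m - 1$ or $m - 2$, and so vanish by the inductive hypothesis.

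The hard part is the two-fold bookkeeping: verifying (i) that constraint 2 is consistently solvable at every step, which is handled by the existence witness supplied by Theorem \ref{constructionidea}, and (ii) that the relevance conditions propagate correctly under the index shift appearing in Lemma \ref{errorterms}, which follows from the arithmetic identity $2^{z - k - u - 1} = 2^{z - (k+1) - (u-1) - 1}$. Once these are in place the induction closes, and the constructed sum vectors realize an honest function $g$ as claimed.
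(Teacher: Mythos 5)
Your proof is correct, and it is essentially the same technical content as the paper's, but organized more transparently. The paper's proof runs the corollary's construction, reads off the $u=0$ free coordinates as the values $a_V$, invokes Theorem \ref{constructionidea} to get a function $g$ with those prescribed free $u=0$ coordinates, and then asserts in one sentence that $g$ "must have the correct sum vectors because of Proposition \ref{errorvectorprop} and the equations for relevance." That last transfer is the subtle part: constraints 1 and 2 kill the error vector only on free coordinates, so to conclude that the constructed collection equals $g$'s one still needs the Lemma \ref{errorterms} induction to show the error vector also vanishes on fixed coordinates. Your write-up does this induction explicitly at the level of the corollary — arguing level $m=k+u$ by level, splitting into free (killed by constraint 2) and fixed (killed by constraint 1 plus Lemma \ref{errorterms} plus the inductive hypothesis, with the crucial observation that the relevance threshold $2^{z-k-u-1}$ is invariant under $(k,u)\mapsto(k+1,u-1)$) — and then defines $g$ directly by $\hat{g}_V := (\vec{s}_{|V|,0,g})_V$ for $|V|<z$ and $0$ otherwise. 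This avoids the need to argue that the constructed $u>0$ coordinates agree with those of a function produced independently. So your route is arguably cleaner.

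One small caveat: the opening paragraph, where you invoke Theorem \ref{constructionidea} to supply an "existence witness" showing that constraint 2's linear system is solvable at each step, is not quite airtight as stated. Theorem \ref{constructionidea} lets you prescribe free $u=0$ coordinates, but it does not automatically produce a function whose $u>0$ sum vectors coincide with the ones the corollary's construction has already committed to at smaller $k$ within the current level; so "the values read off from the witness function" need not satisfy the particular linear system $P\vec{y}=\vec{x}$ that your already-made choices have generated. In the paper, solvability is handled separately and quantitatively in Subsection \ref{buildinguptools} (Corollary \ref{keybuildingupcorollary}). This does not affect the correctness of the corollary itself, which is a conditional "if you can make such choices, the result is a function" statement; it is only the side remark about feasibility that is loose.
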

\begin{proof}
We use Theorem \ref{constructionidea} and choose the values $\{a_V: V \subseteq \Vmst, |V| < z\}$ in increasing 
order of $|V|$. Assume that we have already chosen the values $\{a_V: V \subseteq \Vmst, |V| < z_2\}$ for some 
$z_2$. From the equations for relevance, this determines all fixed coordinates 
$(\vec{s}_{k,u,g})_V$ where $k+u \leq z_2$. Now if we choose the vectors $\{\vec{s}_{k,z_2-k,g}\}$ in increasing order of 
$k$ so that whenever $k > 0$ and $(\vec{s}_{k-1,z_2-k+1,g})_V$ is a free coordinate the free coordinates 
$\{(\vec{s}_{k,z_2-k,g})_{W}: V \subseteq W, (\vec{s}_{k,z_2-k,g})_{W}$ is a free coordinate$\}$ are chosen so that 
$(\vec{s}_{k-1,z_2-k+1,g})_V = \frac{1}{u+1}\sum_{w \in \Vmst \setminus V}{(\vec{s}_{k,z_2-k,g})_{V \cup \{w\}}}$
then once we reach $k = z_2$ we can take $a_V = (\vec{s}_{z_2,0,g})_{V}$ for all free coordinates $(\vec{s}_{z_2,0,g})_{V}$ and 
arbitrarily choose $a_V$ whenever $(\vec{s}_{z_2,0,g})_{V}$ is a fixed coordinate. The base case $z_2 = 0$ is trivial, 
so by induction we can choose all values $\{a_V: V \subseteq \Vmst, |V| < z$ in this way. 
Theorem \ref{constructionidea} now gives us an actual function $g$ which must have the correct sum vectors beacuse of Proposition \ref{errorvectorprop} and 
the equations for relevance.
\end{proof}
\subsection{Building up a base function}\label{buildinguptools}
Corollary \ref{constructionideacorollary} allows us to construct a base function $g$ while having considerable control over
all of the sum vectors $\{\vec{s}_{k,u,g}\}$. This is extremely useful because we will try to make
$||\vec{s}_{k,u,g}||$ small for every $k$ and $u$. However, in order to use Corollary \ref{constructionidea} 
we need a way to choose each vector $\{\vec{s}_{k+1,u-1,g}\}$ where $u > 0$ so that whenever 
$(\vec{s}_{k,u,g})_V$ is a free coordinate the free coordinates \\
$\{(\vec{s}_{k+1,u-1,g})_{W}: V \subseteq W, (\vec{s}_{k+1,u-1,g})_{W}$ is a free coordinate$\}$ are chosen so that \\
$(\vec{s}_{k,u,g})_V = \frac{1}{u}\sum_{w \in \Vmst \setminus V}{(\vec{s}_{k+1,u-1,g})_{V \cup \{w\}}}$\\
In this subsection, we show how to do this. For this subsection, we are always working with collection of sum vectors 
$\{\vec{s}_{k,u,g}\}$ where we have already decided which coordinates of these vectors are fixed and which coordinates of these 
vectors are free.
\begin{definition}\label{freeandfixeddefinition}
For each $k,u$, 
\begin{enumerate}
\item Define $\pi_{k,u,fixed}$ to be the projection which projects any vector in the same vector space as 
$\vec{s}_{k,u,g}$ onto the fixed coordinates of $\vec{s}_{k,u,g}$. 
\item Define $\pi_{k,u,free}$ to be the projection which projects any vector in the same vector space as 
$\vec{s}_{k,u,g}$ onto the free coordinates of $\vec{s}_{k,u,g}$. 
\item Define $\vec{s}_{k,u,g,fixed} = \pi_{k,u,fixed}(\vec{s}_{k,u,g})$
\item Define $\vec{s}_{k,u,g,free} = \pi_{k,u,free}(\vec{s}_{k,u,g})$
\end{enumerate}
\end{definition}
\begin{proposition}
$\vec{s}_{k,u,g} = \vec{s}_{k,u,g,fixed} + \vec{s}_{k,u,g,free}$
\end{proposition}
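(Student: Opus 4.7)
The statement is essentially a bookkeeping identity: every coordinate of $\vec{s}_{k,u,g}$ is indexed by some $A \subseteq \Vmst$ with $|A| = k$, and by the earlier definition of fixed versus free coordinates (Definition immediately preceding Theorem \ref{constructionidea}), each such coordinate falls into exactly one of the two categories. The plan is therefore to observe that the two projections $\pi_{k,u,fixed}$ and $\pi_{k,u,free}$ are projections onto complementary coordinate subspaces of the ambient vector space indexed by $\{A \subseteq \Vmst : |A| = k\}$.

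Concretely, I would argue as follows. For each $A$ with $|A| = k$, the coordinate $(\vec{s}_{k,u,g})_A$ is fixed if and only if there exists a non-degenerate edge $v \to w$ with $v,w \in A$ relevant to it, and free otherwise; these two conditions partition the index set. Hence $\pi_{k,u,fixed} + \pi_{k,u,free}$ acts as the identity on the ambient space, since it leaves every basis vector unchanged (one of the two summands fixes it, the other kills it). Applying this identity to $\vec{s}_{k,u,g}$ yields
\[
\vec{s}_{k,u,g} = \pi_{k,u,fixed}(\vec{s}_{k,u,g}) + \pi_{k,u,free}(\vec{s}_{k,u,g}) = \vec{s}_{k,u,g,fixed} + \vec{s}_{k,u,g,free},
\]
which is the desired equality.

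There is no real obstacle: the only thing one has to verify is the disjointness and exhaustiveness of the fixed/free classification, and this is immediate from the definition, where ``free'' was defined precisely as the negation of ``fixed.'' The proposition is a formal consequence of how the two projections were set up in Definition \ref{freeandfixeddefinition}, and it is stated here mainly to fix notation for the subsequent construction, where $\vec{s}_{k,u,g,fixed}$ will be determined by the relevance equations while $\vec{s}_{k,u,g,free}$ will be the part we are free to choose in order to minimize $\|\vec{s}_{k,u,g}\|$.
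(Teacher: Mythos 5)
Your proof is correct and matches what the paper intends: the paper states this proposition without proof precisely because it is an immediate consequence of the fact that every coordinate is, by definition, either fixed or free (but not both), so the two coordinate projections sum to the identity. Nothing further is needed.
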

\begin{proposition}\label{stepupprop}
Condition 2 of Crolloary \ref{constructionideacorollary} holds if and only if \\
$\vec{s}_{k,u,g,free} = \pi_{k,u,free}(\frac{1}{u}P_{k}\vec{s}_{k+1,u-1,g})$ whenever $u > 0$
\end{proposition}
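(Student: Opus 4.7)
The plan is to show that Proposition \ref{stepupprop} is just Condition 2 of Corollary \ref{constructionideacorollary} rewritten in vector form, so the proof reduces to unpacking notation and matching indices.

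First I would compute the right-hand side explicitly. By the definition of $P_k$, the entry $(P_k)_{VB}$ equals $1$ when $V \subseteq B$ (with $|V|=k$, $|B|=k+1$) and $0$ otherwise. Thus
$$(P_k\vec{s}_{k+1,u-1,g})_V \;=\; \sum_{B:\,V \subseteq B,\,|B| = k+1}(\vec{s}_{k+1,u-1,g})_B \;=\; \sum_{w \in \Vmst \setminus V}(\vec{s}_{k+1,u-1,g})_{V \cup \{w\}}.$$
Applying $\pi_{k,u,free}$ to both sides of the identity asserted in the proposition therefore gives, coordinate by coordinate, the statement that for every $V$ with $|V|=k$ such that $(\vec{s}_{k,u,g})_V$ is free,
$$(\vec{s}_{k,u,g})_V \;=\; \frac{1}{u}\sum_{w \in \Vmst \setminus V}(\vec{s}_{k+1,u-1,g})_{V \cup \{w\}}.$$

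Next I would match this coordinate-wise equation against Condition 2 of Corollary \ref{constructionideacorollary} under the re-indexing $k \mapsto k+1$, $u \mapsto u-1$. The corollary phrases things from the perspective of a free coordinate at level $(k-1,u+1)$ and declares an equation involving the vector at level $(k,u)$; from the perspective of the vector at level $(k',u')$ (with $k' = k-1$, $u' = u+1$), this is precisely the equation above, and the hypothesis $k > 0$ in the corollary translates to $u' > 0$ in the proposition. Finally, I would note that on the free coordinates both directions are vacuously symmetric: the corollary \emph{imposes} the equation exactly on those $V$ for which $(\vec{s}_{k',u',g})_V$ is free, and the projection $\pi_{k,u,free}$ in the proposition extracts exactly those same coordinates, so the two formulations are logically equivalent.

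There is essentially no obstacle here beyond careful bookkeeping of the two different indexing conventions (the corollary describes the constraint ``looking downward'' from $(k-1,u+1)$, while the proposition describes it ``at'' $(k,u)$). Once the substitution is made and $\pi_{k,u,free}$ is recognized as the coordinate-selection operator picking out exactly the free coordinates, the equivalence is immediate, so no induction or auxiliary lemma is needed.
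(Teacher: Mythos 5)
Your unpacking is correct, and the paper itself gives no proof of this proposition (it is stated as immediate from the definitions), so there is nothing to deviate from: computing $(P_k\vec{s}_{k+1,u-1,g})_V = \sum_{w\in\Vmst\setminus V}(\vec{s}_{k+1,u-1,g})_{V\cup\{w\}}$, restricting to free coordinates, and shifting indices $(k,u)\mapsto(k-1,u+1)$ reproduces Condition~2 of Corollary~\ref{constructionideacorollary} verbatim. One small clarification to your last remark: the corollary's ``$k>0$'' and the proposition's ``$u>0$'' are not literal translates of one another under the re-indexing --- ``$k>0$'' guarantees $k-1\ge 0$ so that $\vec{s}_{k-1,u+1,g}$ is defined, while ``$u>0$'' guarantees $u-1\ge 0$ so that $\vec{s}_{k+1,u-1,g}$ is defined; after the shift each of the two constraints becomes trivially true in the other's formulation, and what both do is cut out exactly the same set of index pairs. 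This does not affect the correctness of the argument.
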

\begin{definition}
Let $P_{k,u,free}$ be the matrix $P_k$ except that we 
make all of the rows corresponding to a set of vertices $A$ such that $(\vec{s}_{k,u,g})_A$ is fixed $0$ and 
we make all of the columns corresponding to a set of vertices $B$ such that $(\vec{s}_{k+1,u-1,g})_B$ is fixed $0$.
\end{definition}
\begin{definition}
When $u > 0$ define the adjustment vector $\vec{a}_{k,u,g} = -\pi_{k,u,free}(P_k\vec{s}_{k+1,u-1,g,fixed})$
\end{definition}
\begin{proposition}\label{adjustmentvectorprop}
$\pi_{k,u,free}(P_k\vec{s}_{k+1,u-1,g}) = -\vec{a}_{k,u,g} + P_{k,u,free}\vec{s}_{k+1,u-1,g,free}$
\end{proposition}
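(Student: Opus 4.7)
The plan is to exploit the linearity of $P_k$ and $\pi_{k,u,free}$ together with the decomposition $\vec{s}_{k+1,u-1,g} = \vec{s}_{k+1,u-1,g,fixed} + \vec{s}_{k+1,u-1,g,free}$ from the preceding proposition. First I would write
$$\pi_{k,u,free}(P_k\vec{s}_{k+1,u-1,g}) = \pi_{k,u,free}(P_k\vec{s}_{k+1,u-1,g,fixed}) + \pi_{k,u,free}(P_k\vec{s}_{k+1,u-1,g,free}).$$
By the definition of the adjustment vector, the first term on the right-hand side is exactly $-\vec{a}_{k,u,g}$. So the proposition reduces to verifying the identity
$$\pi_{k,u,free}(P_k\vec{s}_{k+1,u-1,g,free}) = P_{k,u,free}\vec{s}_{k+1,u-1,g,free}.$$

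To check this second identity, I would argue directly from the definitions. The vector $\vec{s}_{k+1,u-1,g,free}$ is supported on coordinates indexed by those $B$ for which $(\vec{s}_{k+1,u-1,g})_B$ is free, and vanishes on the remaining (fixed) coordinates. Consequently, zeroing out the columns of $P_k$ that correspond to fixed coordinates of $\vec{s}_{k+1,u-1,g}$ does not change the product $P_k\vec{s}_{k+1,u-1,g,free}$, since those columns are multiplied by zero entries anyway. Symmetrically, applying $\pi_{k,u,free}$ after the multiplication just discards the output entries indexed by sets $A$ for which $(\vec{s}_{k,u,g})_A$ is fixed; this is precisely the same as zeroing the corresponding rows of $P_k$ before multiplication. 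Combining the two observations shows that the restricted matrix $P_{k,u,free}$ produces the same vector as first multiplying by $P_k$ and then projecting to the free coordinates on the $\vec{s}_{k,u,g}$ side, which gives the desired equality.

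There is no real obstacle to this proof; it is purely a bookkeeping identity that unwinds the definitions of $\pi_{k,u,free}$, $P_{k,u,free}$, and $\vec{a}_{k,u,g}$. The only point deserving a bit of care is to keep straight which projection corresponds to which index set: row zeroing in $P_{k,u,free}$ encodes the projection on the $\vec{s}_{k,u,g}$ side (the output of $P_k$), while column zeroing encodes the projection on the $\vec{s}_{k+1,u-1,g}$ side (the input of $P_k$). Once that correspondence is stated explicitly, the proposition follows by linearity.
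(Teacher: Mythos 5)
The paper states this proposition without proof, treating it as an immediate consequence of the definitions, and your argument is exactly the bookkeeping unwind that was intended. Your proof is correct: splitting $\vec{s}_{k+1,u-1,g}$ into fixed and free parts, applying linearity, recognizing the fixed part as $-\vec{a}_{k,u,g}$ by definition, and then observing that zeroing the columns of $P_k$ indexed by fixed $(\vec{s}_{k+1,u-1,g})_B$ is harmless because $\vec{s}_{k+1,u-1,g,free}$ vanishes there, while zeroing the rows indexed by fixed $(\vec{s}_{k,u,g})_A$ implements the projection $\pi_{k,u,free}$, is precisely the right accounting.
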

\begin{corollary}\label{stepupcorollary}
Condition 2 of Corollary \ref{constructionideacorollary} holds if and only if \\
$P_{k,u,free}\vec{s}_{k+1,u-1,g,free} = \vec{a}_{k,u,g} + u\vec{s}_{k,u,g,free}$ whenever $u > 0$
\end{corollary}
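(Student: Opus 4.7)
The plan is to derive this corollary as a direct algebraic consequence of the two preceding propositions, namely Proposition \ref{stepupprop} (which characterizes Condition 2 in terms of $\pi_{k,u,free}$ applied to $\frac{1}{u}P_k\vec{s}_{k+1,u-1,g}$) and Proposition \ref{adjustmentvectorprop} (which splits that projection into a piece coming from the fixed coordinates of $\vec{s}_{k+1,u-1,g}$ and a piece coming from its free coordinates).

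First I would invoke Proposition \ref{stepupprop} to restate Condition 2 as the equality
\[
\vec{s}_{k,u,g,free} \;=\; \pi_{k,u,free}\!\left(\tfrac{1}{u}P_k\vec{s}_{k+1,u-1,g}\right)
\]
valid for every $u > 0$. Multiplying both sides by $u$ converts this into
\[
u\,\vec{s}_{k,u,g,free} \;=\; \pi_{k,u,free}\!\left(P_k\vec{s}_{k+1,u-1,g}\right),
\]
which is useful because the right-hand side is exactly the quantity decomposed by Proposition \ref{adjustmentvectorprop}.

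Next I would substitute the decomposition from Proposition \ref{adjustmentvectorprop}, namely
\[
\pi_{k,u,free}\!\left(P_k\vec{s}_{k+1,u-1,g}\right) \;=\; -\vec{a}_{k,u,g} + P_{k,u,free}\,\vec{s}_{k+1,u-1,g,free},
\]
and simply rearrange to obtain the asserted identity
\[
P_{k,u,free}\,\vec{s}_{k+1,u-1,g,free} \;=\; \vec{a}_{k,u,g} + u\,\vec{s}_{k,u,g,free}.
\]
Since each manipulation is reversible (multiplication by $u\neq 0$ and additive rearrangement), the equivalence with Condition 2 of Corollary \ref{constructionideacorollary} is preserved throughout.

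There is essentially no obstacle here: the work has already been front-loaded into the definitions of $P_{k,u,free}$, $\vec{s}_{k,u,g,free}$, and $\vec{a}_{k,u,g}$, and into the two propositions immediately preceding the corollary. The only thing worth double-checking is that the projection $\pi_{k,u,free}$ is consistent with the way rows of $P_{k,u,free}$ were zeroed out, so that applying $\pi_{k,u,free}$ to $P_k\vec{s}_{k+1,u-1,g,free}$ really does coincide with $P_{k,u,free}\vec{s}_{k+1,u-1,g,free}$; this is immediate from Definition \ref{freeandfixeddefinition} and the definition of $P_{k,u,free}$, which zeros out precisely the fixed rows and fixed columns.
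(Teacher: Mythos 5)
Your proof is correct and follows the paper's own argument exactly: the paper's proof of Corollary \ref{stepupcorollary} simply cites Proposition \ref{stepupprop} and Proposition \ref{adjustmentvectorprop}, and you have just unpacked the one-line combination of those two propositions. Nothing to add.
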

\begin{proof}
This follows immediately from Proposition \ref{adjustmentvectorprop} and Proposition \ref{stepupprop}.
\end{proof}
If we let $P$ be the matrix $P_{k,u,free}$ after we delete all the rows and columns corresponding to fixed coordinates 
(which were all $0$ rows and columns by definition), let $\vec{x}$ be the vector $\vec{a}_{k,u,g} + u\vec{s}_{k,u,g,free}$ 
after we delete all fixed coordinates of $\vec{s}_{k,u,g}$ (which were also all $0$) and let $\vec{y}$ be the vector 
$\vec{s}_{k+1,u-1,g,free}$ after deleting all fixed coordinates of $\vec{s}_{k+1,u-1,g}$ (which were also all $0$), then 
we now have the matrix equation $P\vec{y} = \vec{x}$. If $P$ does not have full row rank then it may be impossible 
to find a $\vec{y}$ such that $P\vec{y} = \vec{x}$. We will take care to avoid this case by showing that ${P}P^T$ has no 
zero eigenvalues. If $P$ does have full row rank then we want to find the $\vec{y}$ with smallest norm such that $P\vec{y} = \vec{x}$ 
and then find a bound on 
$||\vec{s}_{k+1,u-1,g,free}||^2 = ||\vec{y}||^2$ in terms of $||\vec{a}_{k,u,g} + u\vec{s}_{k,u,g,free}||^2 = ||\vec{x}||^2$.
The best bound we can get is $||\vec{y}||^2 \leq 
\max_{\vec{x} \neq 0}\{\min_{\vec{y}: P\vec{y} = \vec{x}}{\{\frac{||\vec{y}||^2}{||\vec{x}||^2}\}}\}||\vec{x}||^2$

We have now reduced our problem to a problem of the following form. 
For a given real matrix $P$ with full row rank, obtain bounds on 
$\max_{\vec{x} \neq 0}\{\min_{\vec{y}: P\vec{y} = \vec{x}}{\{\frac{||\vec{y}||^2}{||\vec{x}||^2}\}}\}$
\begin{lemma}
If $P$ is a real matrix with full row rank then letting $\{\lambda_i\}$ be the eigenvalues of $P^{T}P$,
$$\max_{\vec{x} \neq 0}\{\min_{\vec{y}: P\vec{y} = \vec{x}}{\frac{||\vec{y}||^2}{||\vec{x}||^2}}\}\} = 
\max_{i: \lambda_i \neq 0}{\{\frac{1}{\lambda_i}\}}$$
\end{lemma}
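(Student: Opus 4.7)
The plan is to identify the minimum-norm solution explicitly and then reduce the extremal problem to an eigenvalue computation. Because $P$ has full row rank, the system $P\vec{y} = \vec{x}$ is consistent for every $\vec{x}$, and I can split any solution uniquely as $\vec{y} = \vec{y}_r + \vec{y}_n$ where $\vec{y}_r \in \mathrm{row}(P)$ and $\vec{y}_n \in \mathrm{null}(P)$. Since $P\vec{y}_n = 0$, the component $\vec{y}_r$ already satisfies $P\vec{y}_r = \vec{x}$, and because $\mathrm{row}(P) \perp \mathrm{null}(P)$ we have $\|\vec{y}\|^2 = \|\vec{y}_r\|^2 + \|\vec{y}_n\|^2$. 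Hence the unique minimum-norm solution lies in $\mathrm{row}(P)$.

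Next I would parameterize $\vec{y}_r = P^T \vec{z}$ and solve $PP^T \vec{z} = \vec{x}$. Full row rank of $P$ means $PP^T$ is invertible, so $\vec{z} = (PP^T)^{-1}\vec{x}$ and the min-norm solution is $\vec{y} = P^T (PP^T)^{-1} \vec{x}$. A direct computation gives
$$\|\vec{y}\|^2 = \vec{x}^T (PP^T)^{-1} PP^T (PP^T)^{-1} \vec{x} = \vec{x}^T (PP^T)^{-1} \vec{x}.$$
Therefore
$$\max_{\vec{x} \neq 0} \min_{\vec{y}:\, P\vec{y} = \vec{x}} \frac{\|\vec{y}\|^2}{\|\vec{x}\|^2} = \max_{\vec{x} \neq 0} \frac{\vec{x}^T(PP^T)^{-1}\vec{x}}{\vec{x}^T \vec{x}},$$
and the right-hand side is the Rayleigh quotient of the symmetric positive-definite matrix $(PP^T)^{-1}$, whose maximum is its largest eigenvalue, namely $1/\lambda_{\min}(PP^T)$.

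Finally, I would connect the eigenvalues of $PP^T$ to those of $P^T P$ via the standard observation (e.g.\ through the singular value decomposition of $P$) that $PP^T$ and $P^T P$ share the same nonzero eigenvalues, which are exactly the squares of the nonzero singular values of $P$. Since $PP^T$ is invertible, its spectrum is precisely the set of nonzero eigenvalues of $P^T P$, so $\lambda_{\min}(PP^T) = \min_{i:\lambda_i \neq 0} \lambda_i$, and taking reciprocals gives the claimed formula $\max_{i:\lambda_i \neq 0} \frac{1}{\lambda_i}$.

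There is essentially no hard step here: the main subtlety is simply being careful about which space the min is taken over (only the nonzero eigenvalues of $P^T P$, matching the eigenvalues of the invertible matrix $PP^T$), and verifying that the orthogonal decomposition argument really does identify the min-norm solution. Everything else is a standard pseudoinverse / Rayleigh quotient calculation.
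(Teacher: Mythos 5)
Your proof is correct, and it takes a genuinely cleaner route than the paper's. The paper works in the domain of $P$, expanding $P^T\vec{x}$ in the orthonormal eigenbasis of $P^TP$; it must then separately show (its Proposition~\ref{zeroeigenvalues}) that the coefficients on the zero-eigenvalue eigenvectors vanish, solve for the minimum-norm $\vec{y}$ coordinatewise, and finish with an explicit term-by-term comparison of $\sum (c_i/\lambda_i)^2$ against $\sum c_i^2/\lambda_i$ to identify the extremal ratio and the $\vec{x}$ achieving it. You instead identify the minimum-norm solution abstractly via the $\mathrm{row}(P)\oplus\mathrm{null}(P)$ decomposition, write it in closed form as $\vec{y} = P^T(PP^T)^{-1}\vec{x}$ (the right pseudoinverse, which exists because full row rank makes $PP^T$ invertible), and compute $\|\vec{y}\|^2 = \vec{x}^T(PP^T)^{-1}\vec{x}$, which collapses the whole extremal problem into the Rayleigh quotient of the positive-definite matrix $(PP^T)^{-1}$. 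This sidesteps both the coefficient-vanishing lemma and the term-by-term inequality: the Rayleigh quotient characterization of the largest eigenvalue hands you both the bound and the fact that it is attained. The final transfer via the standard fact that $PP^T$ and $P^TP$ share nonzero spectrum is also exactly what the paper uses implicitly right after this lemma. In short, same linear algebra underneath, but your framing through the pseudoinverse and Rayleigh quotient is shorter, less computational, and makes the role of $PP^T$ (which is what the rest of the section actually analyzes) visible sooner.
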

\begin{proof}
First note that the null space of $P^T$ is trivial, so 
$P\vec{y} = \vec{x}$ if and only if ${P^T}P\vec{y} = {P^T}\vec{x}$. Further note that $P^{T}P$ is a 
positive semidefinite real symmetric matrix, so if $j$ is the number of columns of $P$ then 
there is an orthonormal basis of $R^{j}$ consisting of eigenvectors 
$\vec{e}_1, \cdots, \vec{e}_j$ of $P^{T}P$ with eigenvalues $\lambda_1, \cdots, \lambda_j$.
Now for any $\vec{x}$, we can write $P^{T}\vec{x} = \sum_{i = 1}^{j}{{c_i}\vec{e}_i}$.
\begin{proposition}\label{zeroeigenvalues}
If $P^{T}\vec{x} = \sum_{i = 1}^{j}{{c_i}\vec{e}_i}$ then $c_i = 0$ for every zero eigenvector $\vec{e}_i$ of ${P^T}P$
\end{proposition}
\begin{proof}
Note that if ${P^T}P\vec{e}_i = 0$ then 
$\vec{e}^T_i{P^T}P\vec{e}_i = P\vec{e}_i \cdot P\vec{e}_i = 0$ so $P\vec{e}_i = 0$. But then 
${\vec{e}^T_i}P^T\vec{x} = c_i = {\vec{x}^T}P\vec{e}_i = 0$
\end{proof}  
Proposition \ref{zeroeigenvalues} implies that for any $\vec{x}$ there is a $\vec{y}$ such that 
${P^T}P\vec{y} = {P^T}\vec{x}$. In particular, the solution $\vec{y}$ with the smallest norm is 
$\vec{y} = \sum_{i: \lambda_i \neq 0}{\frac{c_i}{\lambda_i}\vec{e}_i}$.
Now $|\vec{y}|^2 = \sum_{i: \lambda_i \neq 0}{(\frac{c_i}{\lambda_i})^2}$ and 
$|\vec{x}|^2 = \vec{y}^T{P^T}P\vec{y} = \sum_{i: \lambda_i \neq 0}{\frac{(c_i)^2}{\lambda_i}}$

Comparing term by term, we have that $\sum_{i: \lambda_i \neq 0}{(\frac{c_i}{\lambda_i})^2} \leq 
(\max_{i: \lambda_i \neq 0}{\{\frac{1}{\lambda_i}\}})\sum_{i = 1}^{j}{\frac{(c_i)^2}{\lambda_i}}$ and this 
inequality is an equality when $c_i$ is nonzero for this particular $i$ and $0$ for all other $i$, so \\
$\max_{\vec{x} \neq 0}\{\min_{\vec{y}: P\vec{y} = \vec{x}}{\frac{||\vec{y}||^2}{||\vec{x}||^2}}\}\} = 
\max_{i: \lambda_i \neq 0}{\{\frac{1}{\lambda_i}\}}$
\end{proof}
Thus, we are interested in the smallest nonzero eigenvalue of $P^{T}P$. Since $P$ is real with 
full row rank, the eigenvalues of $P{P^T}$ are precisely the nonzero eigenvalues of $P^{T}P$, 
so we can find the smallest nonzero eigenvalue of $P^{T}P$ by finding the smallest 
eigenvalue of $P{P^T}$. 

We can give an elegant exact answer when $P = P_k$ and can give bounds for many other $P$.
\begin{theorem}
If $k < \frac{n}{2}$ then ${P_k}{P^T_k}$ has eigenvalues $(n-k-i)(k+1-i)$ with multiplicity 
${n \choose {i}} - {n \choose {i-1}}$ for $i \in [0,k]$
\end{theorem}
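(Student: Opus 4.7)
The plan is to compute $P_k P_k^T$ explicitly and then diagonalize it using the classical up--down operator framework on subsets. First I would write down the entries: for two $k$-subsets $A, A'$ of $\Vmst$,
$$(P_k P_k^T)_{A,A'} = \#\{B : |B| = k+1,\ A \cup A' \subseteq B\},$$
which equals $n-k$ when $A = A'$, equals $1$ when $|A \cap A'| = k-1$, and equals $0$ otherwise. Hence $P_k P_k^T = (n-k) I + M$, where $M$ is the adjacency matrix of the Johnson graph $J(n,k)$ on $\binom{\Vmst}{k}$. Equivalently, if we write $U_k = P_k^T$ for the ``up'' operator $\mathbb{R}^{\binom{\Vmst}{k}} \to \mathbb{R}^{\binom{\Vmst}{k+1}}$ and $D_{k+1} = P_k$ for the ``down'' operator in the other direction, then $P_k P_k^T = D_{k+1} U_k$.

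Next I would establish the commutation relation
$$D_{k+1} U_k - U_{k-1} D_k = (n-2k)\, I$$
on $\mathbb{R}^{\binom{\Vmst}{k}}$, by a direct entrywise computation on indicator functions: for a $k$-set $A$, both $(D_{k+1} U_k \mathbf{1}_A)(A)$ and $(U_{k-1} D_k \mathbf{1}_A)(A)$ can be counted explicitly (they contribute $n-k$ and $k$ respectively), while the off-diagonal contributions cancel. Using this, I decompose the space as
$$\mathbb{R}^{\binom{\Vmst}{k}} = \bigoplus_{i=0}^{k} U_{k-1} U_{k-2} \cdots U_i(W_i), \qquad W_i := \ker D_i,$$
invoking the classical fact (Lov\'asz/Stanley) that each $U_j$ is injective for $j < n/2$; this is exactly where the hypothesis $k < n/2$ is used, and it forces $\dim W_i = \binom{n}{i} - \binom{n}{i-1}$ and guarantees that each summand in the decomposition is isomorphic to $W_i$.

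Finally I would read off the eigenvalues. For $f \in W_i$ (so $D_i f = 0$), the commutation relation gives $D_{i+1} U_i f = (n - 2i) f$, and then inductively, if $g = U_{k-2} \cdots U_i f$ satisfies $D_k U_{k-1} g = \lambda_i^{(k-1)} g$, the relation yields
$$D_{k+1} U_k (U_{k-1} g) = U_{k-1} D_k (U_{k-1} g) + (n-2k) U_{k-1} g = \bigl(\lambda_i^{(k-1)} + (n-2k)\bigr) U_{k-1} g.$$
The telescoping recursion $\lambda_i^{(k)} = (n-2i) + \sum_{j=i+1}^{k}(n-2j)$ simplifies to $\lambda_i^{(k)} = (n-k-i)(k+1-i)$, which is the stated eigenvalue, with multiplicity $\binom{n}{i} - \binom{n}{i-1}$ inherited from $\dim W_i$. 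The main obstacle is the injectivity of the up operator underlying the direct-sum decomposition; everything else is essentially bookkeeping around the commutator identity.
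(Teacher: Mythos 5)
Your proof is correct, and it takes a genuinely different route from the paper's. The paper argues directly: it constructs explicit eigenvectors $\vec{x}_V$ indexed by subsets $V$ with $|V|\leq k$ (where the weight $(-1)^{|V\cap B|}$ is spread evenly over the $k$-sets $B$ in each intersection class), verifies the eigenvalue equation by a weight-transport calculation, and then shows these vectors span $\mathbb{R}^{\binom{\Vmst}{k}}$ via a nested argument using the indicator vectors $\vec{y}_V$ of $k$-sets containing $V$. You instead identify $P_k P_k^T$ as the down-up operator $D_{k+1}U_k$ on the Boolean lattice, prove the commutator identity $D_{k+1}U_k - U_{k-1}D_k = (n-2k)I$, invoke the Lefschetz-type decomposition $\mathbb{R}^{\binom{\Vmst}{k}} = \bigoplus_{i=0}^k U_{k-1}\cdots U_i(\ker D_i)$ valid below the middle rank, and read off the eigenvalues by telescoping the commutator along each $U$-chain. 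The two approaches trade off differently: the paper's is entirely self-contained and elementary, at the cost of a somewhat laborious eigenvector verification and a delicate spanning argument; yours is shorter and more conceptual, leveraging the standard $\mathfrak{sl}_2$-representation structure of the subset lattice, but it outsources the injectivity of the raising operators for $j < n/2$ (the Lov\'asz/Stanley normalized matching property) rather than proving it from scratch. You correctly locate the role of the hypothesis $k < n/2$ in that injectivity, and your telescoping sum $\sum_{j=i}^k(n-2j) = (n-k-i)(k+1-i)$ matches the stated eigenvalue and multiplicity $\binom{n}{i}-\binom{n}{i-1}$.
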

\begin{proof}
Given $V \subseteq \Vmst$ where $|V| \leq k$, let $\vec{x}_V$ be the vector such that 
for any $A \subseteq V$, if $B \subseteq \Vmst$, $|B| = k$, and $B \cap V = A$ then 
$(\vec{x}_V)_B = \frac{(-1)^{|A|}}{|C \subseteq \Vmst: |C| = k, C \cap V = A|}$.

In other words, we make the coordinates $(\vec{x}_V)_B$ where $B \cap V = A$ have total weight $(-1)^{|A|}$ where the weight 
is spread evenly among these coordinates. Now consider where the weight is in $P_k{P^T_k}\vec{x}_V$. 
Letting $m = |V|$, the coordinates $(\vec{x}_V)_B$ where $|B \cap V| = j$ have total weight $(-1)^j{m \choose j}$. 
Multiplying by $P_kP^T_k$ multiplies and shifts this weight as follows. For a given $B$ with $|B \cap V| = j$ there 
are $(n - k - m + j)j$ ways to add a vertex to $B$ then remove a vertex and obtain a set of vertices $C$ such that 
$|C \cap V| = j-1$. There are $(m-j)(k-j)$ ways to add a vertex to $B$ then remove a vertex and obtain a set of vertices 
$C$ such that $|C \cap V| = j+1$. Finally, since there are $(n-k)(k+1)$ total ways to add a vertex to $B$ and then 
remove a vertex from $B$ there must be $(n-k)(k+1) - (n - k - m + j)j - (m-j)(k-j)$ ways to add a vertex to $B$ then remove 
a vertex and obtain a set of vertices $C$ such that $|C \cap V| = j$. Thus, from the original total weight of 
$(-1)^j{m \choose j}$ on the coordinates $(\vec{x}_V)_B$ where $|B \cap V| = j$, we get 
\begin{enumerate}
\item A total weight of $(-1)^j{m \choose j}(n - k - m + j)j$ on the coordinates $(P_k{P^T_k}\vec{x}_V)_C$ where $|C \cap V| = j-1$
\item A total weight of $(-1)^j{m \choose j}(m-j)(k-j)$ on the coordinates $(P_k{P^T_k}\vec{x}_V)_C$ where $|C \cap V| = j+1$
\item A total weight of 
$(-1)^j{m \choose j}((n-k)(k+1) - (n - k - m + j)j - (m-j)(k-j))$ on the coordinates $(P_k{P^T_k}\vec{x}_V)_C$ where $|C \cap V| = j$
\end{enumerate}
Turing this around, in $P_k{P^T_k}\vec{x}_V$ we have the following contributions to the total weight of the coordinates 
$(P_k{P^T_k}\vec{x}_V)_C$ where $|C \cap V| = j$
\begin{enumerate}
\item A contribution of $(-1)^{j+1}{m \choose {j+1}}(n - k - m + j+1)(j+1)$ from the coordinates $(\vec{x}_V)_B$ where $|B \cap V| = j+1$
\item A contribution of $(-1)^{j-1}{m \choose {j-1}}(m-j+1)(k-j+1)$ from the coordinates $(\vec{x}_V)_B$ where $|B \cap V| = j-1$
\item A contribution of 
$(-1)^j{m \choose j}((n-k)(k+1) - (n - k - m + j)j - (m-j)(k-j))$ from the coordinates $(\vec{x}_V)_B$ where $|B \cap V| = j$
\end{enumerate}
Summing these contributions together, we get the following total weight for the coordinates 
$(P_k{P^T_k}\vec{x}_V)_C$ where $|C \cap V| = j$
\begin{align*}
(-1)^{j+1}&{m \choose {j+1}}(n - k - m + j+1)(j+1) + (-1)^{j-1}{m \choose {j-1}}(m-j+1)(k-j+1) \\
&+ (-1)^j{m \choose j}((n-k)(k+1) - (n - k - m + j)j - (m-j)(k-j)) \\
&= (-1)^j{m \choose j}\Big(-\frac{m-j}{j+1}(n - k - m + j+1)(j+1) - \frac{j}{m-j+1}(m-j+1)(k-j+1) \\
&+ (n-k)(k+1) - (n - k - m + j)j - (m-j)(k-j)\Big) \\
&= (-1)^j{m \choose j}\Big(-(m-j)(n - k - m + j + 1) - (n - k - m + j)j - j(k-j+1) \\
&+ (n-k)(k+1) - (m-j)(k-j)\Big)\\
&= (-1)^j{m \choose j}\Big(-m(n - k - m + j + 1) + j - jk + j^2 - j + nk + n - k^2 - k - mk \\
&+ jk + jm - j^2\Big)\\
&= (-1)^j{m \choose j}\Big(-mn + mk + m^2 - mj - m - jk + j^2 + nk \\
&+ n - k^2 - k - mk + jk + jm - j^2\Big) \\
&= (-1)^j{m \choose j}(-mn + m^2 - m + nk + n - k^2 - k) = (-1)^j{m \choose j}(n-k-m)(k+1-m)
\end{align*}
By symmetry the weight on the coordinates $(P_k{P^T_k}\vec{x}_V)_C$ where $|C \cap V| = j$ will be spread evenly. Thus, 
$\vec{x}_V$ is an eigenvector of $P_k{P^T_k}$ with eigenvalue $(n-k-m)(k+1-m)$ where 
$m = |V|$. Now we need to check that the vectors $\{\vec{x}_V: |V| = m\}$ span the eigenspace of 
$P_k{P^T_k}$ with eigenvalue $(n-k-m)(k+1-m)$ and that these are the only eigenvalues. We also need to 
find the dimension of these eigenspaces.

To do this, for each $V \subseteq \Vmst$ where $|V| \leq k$, consider the vectors 
$\{\sum_{A: A \subseteq V, |A| = j}{\vec{x}_A}: j \in [0,|V|]\}$. These vectors are all nonzero and 
live in different eigenspaces of $P_k{P^T_k}$, so they are all linearly indpendent. Moreover, by symmetry, 
for each of these vectors the value of a coordinate $B$ depends only on $|B \cap V|$. The space of vectors 
with this property has dimension $|V| + 1$ so the vectors $\{\sum_{A: A \subseteq V, |A| = j}{\vec{x}_A}: j \in [0,|V|]\}$ 
must be a basis for this vector space. This implies that if we take the vector $\vec{y}_V$ where 
$(\vec{y}_V)_B = 1$ if $V \subseteq B$ and $0$ otherwise then $\vec{y}_V \in span{\{\vec{x}_A: A \subseteq V\}}$.

Clearly, the vectors $\{\vec{y}_V: V \subseteq \Vmst, |V| = k\}$ are a basis for the vector space 
which $P_k{P^T_k}$ acts on. This implies that the vectors $\{\vec{x}_V: V \subseteq \Vmst, |V| \leq k\}$ span 
this vector space which in turn implies that $\{(n-k-m)(k+1-m): m \in [0,k]\}$ are the only eigenvalues of 
$P_k{P^T_k}$ and the eigenspace with eigenvalue $(n-k-m)(k+1-m)$ is spanned by the vectors 
$\{\vec{x}_V: V \subseteq \Vmst, |V| = m\}$.

Now note that since $k < \frac{n}{2}$, $P_k{P^T_k}$ has no zero eigenvalues. Further note that for all $j$, 
the vectors $\{\vec{y}_V: V \subseteq \Vmst, |V| = j\}$ are linearly independent and 
their span contains all of the vectors $\{\vec{x}_V: V \subseteq \Vmst, |V| \leq j\}$. To see the first part, 
assume that there are coefficients $\{c_V: V \subseteq \Vmst, |V| = j\}$ such that 
$\sum_{V: V \subseteq \Vmst, |V| = j}{c_V\vec{y}_V} = 0$. Then if we let $\vec{c}$ be the vector with 
value $c_V$ in every coordinate $V$, $P^T_{k-1} \cdots P^T_{j}\vec{c} = 0$. However, from the above, this is impoosible. 
To see the second part, for all $V \subseteq \Vmst$ where $|V| \leq j$ consider the sums 
$\{\sum_{A:A \subseteq V, |A \cap V| = i}{\vec{y}_A}: i \in [0,|V|]\}$. By symmetry, the values of these 
sums on a coordinate $B$ will depend only on $|B \cap V|$. Moreover, this sum is 
$0$ on a coordinate $B$ if $|B \cap V| < i$ and nonzero if $|B \cap V| = i$. Together, this implies that 
$\vec{x}_V$ is a linear combination of these sums, as needed.

Putting everything together, for every $j$, the sum of the dimensions of the eigenspaces of $P_k{P^T_k}$ with 
eigenvalues $(n-k-m)(k+1-m)$ where $m \leq j$ is equal to the dimension of the span of the vectors 
$\{\vec{x}_V: V \subseteq \Vmst, |V| \leq j\}$ which is equal to the dimension of the span of the vectors 
$\{\vec{y}_V: V \subseteq \Vmst, |V| = j\}$ which is ${n \choose j}$. The result now follows immediately.
\end{proof}
\begin{theorem}
Let $P$ be a matrix obtained from $P_k$ as follows:\\
Take a collection of bad vertices $V$ and bad pairs of vertices $E$. Delete 
the rows and columns containing bad vertices and pairs of vertices from $P_k$. If
\begin{enumerate}
\item $|V| < \frac{n}{4}$ 
\item There is an $m \leq \frac{n}{2000{k^3}}$ such that for every vertex $v \notin V$, 
there are at most $m$ vertices $w \notin V$ with $\{v,w\} \in E$
\end{enumerate}
then the smallest eigenvalue of $P{P^{T}}$ is at least $\frac{n}{2}$
\end{theorem}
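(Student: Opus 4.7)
The plan is to obtain $P$ from $P_k$ in two stages, applying the previous theorem at the first stage and a direct perturbation bound at the second.

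Stage one: let $P'_k$ be the submatrix of $P_k$ whose rows are indexed by $k$-subsets of $\Vmst\setminus V$ and whose columns are indexed by $(k{+}1)$-subsets of $\Vmst\setminus V$. Up to relabeling of vertices, $P'_k$ is literally a $P_k$-type matrix on the ground set $\Vmst\setminus V$ of size $n' = n - |V|$, so the previous theorem applies with $n$ replaced by $n'$ (the hypothesis $m \leq n/(2000k^{3})$ forces $k$ to be small enough that $k < n'/2$). Its smallest eigenvalue is therefore $(n'-2k)\cdot 1 = n - |V| - 2k \geq 3n/4 - 2k$.

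Stage two: $P$ is obtained from $P'_k$ by further zeroing out the rows and columns corresponding to sets that contain a bad pair. For a vector $\vec{x}$ supported on good $k$-sets (those $k$-subsets of $\Vmst\setminus V$ containing no bad pair),
\[
\vec{x}^{T}PP^{T}\vec{x} \;=\; \vec{x}^{T}P'_{k}(P'_{k})^{T}\vec{x} \;-\; \sum_{B}\Bigl(\sum_{A\subseteq B,\;A\text{ good}}x_{A}\Bigr)^{2},
\]
where the outer sum runs over $(k{+}1)$-subsets $B\subseteq\Vmst\setminus V$ that contain a bad pair. I would apply Cauchy--Schwarz to bound each inner square by $(k+1)\sum_{A\subseteq B,\,A\text{ good}}x_{A}^{2}$, then swap the order of summation. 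Since $A$ is itself good, any bad pair in $B = A\cup\{v\}$ must involve the new vertex $v$, and $v$ must be one of the at most $m$ bad partners of one of the $k$ elements of $A$; hence each good $A$ has at most $mk$ bad-pair extensions $B$, and the perturbation is at most $(k+1)mk\,\|\vec{x}\|^{2}$.

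Combining the two stages,
\[
\vec{x}^{T}PP^{T}\vec{x} \;\geq\; \bigl(n - |V| - 2k - (k+1)mk\bigr)\|\vec{x}\|^{2} \;\geq\; \Bigl(\tfrac{3n}{4} - 2k - \tfrac{(k+1)n}{2000\,k^{2}}\Bigr)\|\vec{x}\|^{2}.
\]
The term $(k+1)n/(2000\,k^{2})$ is at most $n/1000$, and the hypothesis $m\leq n/(2000 k^{3})$ implicitly restricts $k$ to $(n/2000)^{1/3}$, so $2k$ is negligible; the right-hand side exceeds $n/2$ comfortably. The one step that requires real care is stage one: I need to check that $P'_k$ is genuinely a $P_k$-type matrix on $\Vmst\setminus V$ so that the previous theorem's eigenvalue formula transfers verbatim, and that the $k^{3}$ (rather than $k$) in the denominator of $m$ is exactly what is needed to offset the $(k+1)k$ row-sum loss incurred by the Cauchy--Schwarz step.
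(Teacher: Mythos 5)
Your proof is correct, and it takes a genuinely different route from the paper's. The paper subtracts a carefully designed positive semidefinite matrix $M$ (an appropriately normalized sum of indicator matrices $M_{U,W}$) from $PP^T$, shows via a probabilistic estimate that $PP^T - M$ has small off-diagonal entries and large diagonal entries, and finishes with a Gershgorin-type diagonal-dominance argument; notably, it never invokes the exact spectral formula for $P_k P_k^T$ proved just before. You instead use that formula directly: stage one lower-bounds the quadratic form on vectors supported on good $k$-sets by $\lambda_{\min}(P'_k(P'_k)^T) = (n-|V|) - 2k$, and stage two controls the loss from deleting bad $(k{+}1)$-columns by a clean Cauchy--Schwarz-and-counting argument, with the perturbation bounded by $(k+1)mk\,\|\vec{x}\|^2$. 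Your route is shorter, and as your own accounting shows, it only needs $m \leq n/(2000k^2)$; the extra power of $k$ in the paper's hypothesis $m \leq n/(2000k^3)$ is there to absorb the $j(j+k)k$ factor arising from its $j = 8k$ diagonal-dominance step, not for any reason intrinsic to the statement. Two small points worth making explicit if you write this up: (i) stage one needs $k < (n - |V|)/2$ so that the previous theorem's smallest eigenvalue is indeed $n' - 2k$ rather than $0$, and (ii) both your argument and the paper's silently require $k$ to be small relative to $n$ (for instance, when $m = 0$ the hypotheses place no constraint on $k$ yet $n - |V| - 2k$ can drop below $n/2$); this is harmless in the intended application, where $k$ is at most $z$ and hence tiny, but it is an unstated assumption shared by both proofs.
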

\begin{proof}
The proof idea is as follows. The smallest eigenvalue of $P{P^{T}}$ is 
$\min_{\vec{x}}{\{\frac{{\vec{x}^T}P{P^{T}}\vec{x}}{|\vec{x}|^2}\}}$ This quantity can only 
be reduced when we subtract a symmetric positive semidefinite matrix $M$ from $P{P^T}$. Thus, we will 
choose $M$ so that $M$ is a symmetric positive semidefinite matrix and $P{P^T} - M$ has a large minimum 
eigenvalue, and this will give the claimed bound. We construct $M$ as follows:
\begin{definition}
Let $U, W$ be subsets of $\Vmst$ such that $|U| = k-1$ and $U \cap W = \emptyset$. 
Define $M_{U,W}$ so that $M_{U,W}$ has the same rows and columns as $P{P^T}$ and 
\begin{enumerate}
\item If no vertex or pair of vertices in $U \cup W$ is bad then 
$(M_{U,W})_{{A_1}{A_2}} = 1$ if $U \subseteq A_1 \subseteq U \cup W$ and $U \subseteq A_2 \subseteq U \cup W$ and $0$ otherwise.
\item If a vertex or pair of vertices in $U \cup W$ is bad then $M_{U,W} = 0$
\end{enumerate}
\end{definition}
\begin{lemma}\label{summingcompletegraphs}
If we take $M = \sum_{U,W: U,W \subseteq \Vmst, \atop |U| = k-1, |W| = j, U \cap W = \emptyset}{M_{U,W}}$ and have 
$|A_1| = |A_2| = k$ then 
\begin{enumerate}
\item If $A_1 \cup A_2$ has no bad vertices or pairs of vertices and $|A_1 \cap A_2| = k-1$ then 
$$(1 - \frac{2mj(j+k)}{n}){{n - |V| - k - 1} \choose {j-2}} \leq M_{{A_1}{A_2}} \leq {{n - |V| - k - 1} \choose {j-2}}$$
\item If $A$ has no bad vertices or pairs of vertices then 
$$(1 - \frac{2mj(j+k)}{n})k{{n - |V| - k} \choose {j-1}} \leq M_{AA} \leq k{{n - |V| - k} \choose {j-1}}$$
\item If $A_1$ or $A_2$ has any bad vertices or pairs of vertices, $M_{{A_1}{A_2}}$ is undefined. Otherwise, 
if $A_1 \cup A_2$ has any bad pairs of vertices or $|A_1 \cap A_2| < k-1$ then $M_{{A_1}{A_2}} = 0$
\end{enumerate}
\end{lemma}
\begin{proof}
Statement 3 follows immediately from the definitions and the upper bounds in statements 1 and 2 follow by
noting the these are the maximal possible number of $M_{U,W}$ which have $M_{{A_1}{A_2}}$ or $M_{AA}$ equal to $1$. For the lower bounds, 
for any $A_1,A_2$ such that $A_1 \cup A_2$ has no ``bad'' vertices or pairs of vertices and 
$|A_1 \cap A_2| \geq k-1$, take random $U$, $W$ with $U,W \subseteq \Vmst \setminus V$, $|U| = k-1$, $|W| = j$, 
$U \subseteq A_1 \subseteq U \cup W$, $U \cap W = \emptyset$ and 
$U \subseteq A_2 \subseteq U \cup W$. To do this, first randomly choose $U$. After choosing $U$, start with 
$W_0 = A_1 \cup A_2 \setminus U$ and add vertices to $W$ one at a time. We add at most $j$ vertices and the 
probability that each new vertex adds a bad pair of vertices is at most $\frac{m(j+k)}{n - |V| - (k+j)}$. 
Thus by the union bound the probability that $M_{U,W}$ is $0$ rather 
than $1$ is at most $\frac{mj(j+k)}{n - |V| - (k+j)} \leq \frac{2mj(j+k)}{n}$ as $|V| < \frac{n}{4}$ and $k,j << n$.
\end{proof}
\begin{corollary}\label{summingcompletegraphscorollary}
If we instead take \\ 
$M = \frac{1}{{{n - |V| - k - 1} \choose {j-2}}}
\sum_{U,W: U,W \subseteq \Vmst, |U| = k-1, |W| = j, U \cap W = \emptyset}{M_{U,W}}$ then \\
1. If $A_1 \cup A_2$ has no ``bad'' vertices or pairs of vertices and $|A_1 \cap A_2| = k-1$ then \\
$0 \leq (PP^T - M)_{{A_1}{A_2}} \leq \frac{2mj(j+k)}{n}$\\
2. If $A$ has no ``bad'' vertices or pairs of vertices then 
$(PP^T - M)_{AA} \geq n - k - |V| - km - \frac{k(n-k-|V|)}{j-1}$.\\
3. If $A_1 \cup A_2$ has any bad vertices, $(PP^T - M)_{{A_1}{A_2}}$ is undefined. Otherwise, 
if $A_1 \cup A_2$ has any bad pairs of vertices or $|A_1 \cap A_2| < k-1$ then $(PP^T - M)_{{A_1}{A_2}} = 0$
\end{corollary}
\begin{proof}
This follows immediately from Lemma \ref{summingcompletegraphs}, the fact that 
$(PP^T)_{AA} \geq n - k - |V| - km$ whenever $A$ has no ``bad'' vertices and the fact that 
$$\frac{{{n - |V| - k} \choose {j-1}}}{{{n - |V| - k - 1} \choose {j-2}}} = \frac{n - |V| - k}{j-1}$$
\end{proof}
\noindent The final step is as follows. For each column, we subtract all of the non-diagonal 
elements from the diagonal element and then set all non-diagonal elements to $0$. 
This will not increase the minimum eigenvalue as \\
\[ \left( \begin{array}{cc}
1 & 1 \\
1 & 1 \end{array} \right)\] \\
is a positive semidefinite matrix. Note that each column has at most $k(n-k-|V|)$ nonzero 
non-diagonal elements, so by Corollary \ref{summingcompletegraphscorollary}, all of the 
diagonal elements are still at least \\
$n - k - |V| - km - \frac{k(n-k-|V|)}{j-1} - \frac{2mj(j+k)k(n-k-|V|)}{n}$\\
Taking $j = 8k$, if $m \leq \frac{n}{2000k^{3}}$ then 
$n - k - |V| - km - \frac{k(n-k-|V|)}{j-1} - \frac{2mj(j+k)k(n-k-|V|)}{n} \geq \frac{n}{2}$. Thus, 
the minimal eigenvalue of $P{P^T}$ is at least $\frac{n}{2}$, as claimed.
\end{proof}
\begin{corollary}\label{keybuildingupcorollary}
If for a given $k,u$ we have a set of bad vertices $V$ and bad pairs of vertices $E$ such that \\
1. $|V| < \frac{n}{4}$ and \\
2. There is an $m \leq \frac{n}{2000{k^3}}$ such that for every vertex $v \notin V$, 
there are at most $m$ vertices $w \notin V$ with $\{v,w\} \in E$\\
then if $(\vec{s}_{k,u,g})_A$ is fixed if $A$ has a bad vertex or pair of vertices 
and free otherwise, and $(\vec{s}_{k+1,u-1,g})_B$ is fixed if $B$ has a bad 
vertex or pair of vertices and free otherwise, then for any $\vec{s}_{k,u,g,free}$ and 
$\vec{a}_{k,u,g}$ we can choose $\vec{s}_{k+1,u-1,g,free}$ so that 
\begin{enumerate}
\item $P_k\vec{s}_{k+1,u-1,g,free} = u\vec{s}_{k,u,g,free} + \vec{a}_{k,u,g}$ 
\item $||\vec{s}_{k+1,u-1,g,free}||^2 \leq \frac{2}{n}||u\vec{s}_{k,u,g,free} + \vec{a}_{k,u,g}||^2 
\leq \frac{4u^2}{n}||\vec{s}_{k,u,g,free}||^2 + \frac{4}{n}||\vec{a}_{k,u,g}||^2$
\end{enumerate}
\end{corollary}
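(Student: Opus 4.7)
The plan is to assemble the three pieces developed earlier in the subsection: the matrix reformulation from Corollary \ref{stepupcorollary}, the duality lemma that expresses $\max_{\vec{x}\neq 0}\min_{\vec{y}:P\vec{y}=\vec{x}}\|\vec{y}\|^2/\|\vec{x}\|^2$ as $1/\lambda_{\min}$ of $PP^T$, and the eigenvalue bound from the preceding theorem.

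First, by Corollary \ref{stepupcorollary}, condition 1 of our corollary is precisely the statement that $P_{k,u,free}\vec{s}_{k+1,u-1,g,free}=u\vec{s}_{k,u,g,free}+\vec{a}_{k,u,g}$. The rows/columns of $P_{k,u,free}$ corresponding to fixed coordinates are identically zero by construction, so solving this system is the same as solving $P\vec{y}=\vec{x}$, where $P$ is obtained from $P_k$ by deleting exactly the rows indexed by sets containing a bad vertex or bad pair, the columns indexed by sets containing a bad vertex or bad pair, and $\vec{x},\vec{y}$ are the restrictions of $u\vec{s}_{k,u,g,free}+\vec{a}_{k,u,g}$ and $\vec{s}_{k+1,u-1,g,free}$ to the free coordinates. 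This $P$ is exactly the matrix analyzed by the previous theorem, and our hypotheses on $|V|$ and $m$ are exactly the hypotheses of that theorem.

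Next, I invoke the previous theorem to conclude that every eigenvalue of $PP^T$ is at least $n/2$. In particular $PP^T$ is nonsingular, so $P$ has full row rank, and the duality lemma applies: for every $\vec{x}$ there exists a $\vec{y}$ with $P\vec{y}=\vec{x}$ and $\|\vec{y}\|^2\le(1/\lambda_{\min}(PP^T))\|\vec{x}\|^2\le(2/n)\|\vec{x}\|^2$. Choosing $\vec{s}_{k+1,u-1,g,free}$ to be this minimum-norm solution establishes condition 1 together with the first inequality of condition 2.

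The second inequality of condition 2 is the elementary bound $\|u\vec{s}_{k,u,g,free}+\vec{a}_{k,u,g}\|^2\le 2u^2\|\vec{s}_{k,u,g,free}\|^2+2\|\vec{a}_{k,u,g}\|^2$, which is just $\|a+b\|^2\le 2\|a\|^2+2\|b\|^2$. There is no real obstacle here; the only thing requiring care is the bookkeeping that verifies the matrix $P_{k,u,free}$, after collapsing its zero rows and columns, coincides with the matrix to which the eigenvalue theorem applies. Once one writes out the definitions of ``fixed'' and ``bad'', this identification is immediate.
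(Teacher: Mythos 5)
Your proposal correctly assembles the three ingredients the paper sets up (Corollary~\ref{stepupcorollary}, the duality lemma on $\max_{\vec{x}\neq 0}\min_{P\vec{y}=\vec{x}}\|\vec{y}\|^2/\|\vec{x}\|^2$, and the eigenvalue theorem for the deleted matrix $P$), and this is exactly the route the paper intends, since the corollary is stated without proof as an immediate consequence of the preceding results. One small observation worth making explicit, and which you glossed over: the corollary writes $P_k$ rather than $P_{k,u,free}$, and this is harmless because any superset of a set with a bad vertex or pair also contains that bad vertex or pair, so $P_k\vec{s}_{k+1,u-1,g,free}$ automatically vanishes on the fixed rows; with that noted, the rest of your argument is complete.
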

\subsection{Constructing a base function}\label{constructingbase}
\begin{theorem}\label{basefunctionexistence}
If $z$ and $m$ are constants such that $m \leq \frac{n}{2000z^{4}}$ and $G$ is a directed acyclic input graph such that 
\begin{enumerate}
\item There is no path of length at most $2^(z-1)$ from $s$ to $t$.
\item For any vertex $v \in V(G)$, there are at most $m$ vertices $w \in V(G)$ such that either there is a path of length at 
most $2^{z-2}$ from $v$ to $w$ in $G$ or there is a path of length at 
most $2^{z-2}$ from $w$ to $v$ in $G$
\end{enumerate}
then there is a function $g$ such that 
\begin{enumerate}
\item $\hat{g}_{\{\}} = 1$
\item $\hat{g}_V = 0$ whenever $|V| \geq z$
\item For every $e \in E(G)$ there is a function $g_e$ such that 
\begin{enumerate}
\item $g_e$ is $e$-invariant.
\item $\hat{g_e}_V = \hat{g}_V$ whenever $|V| < z$
\end{enumerate}
\item $|s_{k,u,g}|^2 \leq (9mn)^{\frac{k+u}{2}}$
\end{enumerate}
\end{theorem}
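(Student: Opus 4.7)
The plan is to invoke Corollary \ref{constructionideacorollary} to build $g$ in increasing lexicographic order of $(k+u,k)$ and then bound $||\vec{s}_{k,u,g}||^2$ by induction on $z_2 := k+u$. First I would set up the relevance structure dictated by Theorem \ref{constructionidea}: declare every non-degenerate $v \to w$ with $v,w \in A \cup \{s,t\}$ and with a directed path from $v$ to $w$ in $G$ of length at most $2^{z-k-u-1}$ to be relevant for $(\vec{s}_{k,u,g})_A$. Condition 2 of the theorem then guarantees that at every level $z_2 \geq 1$, where the relevant radius is at most $2^{z-2}$, every vertex has at most $m$ short-distance partners, and taking $v=s$ or $w=t$ gives at most $m$ bad-from-$s$ and at most $m$ bad-to-$t$ vertices. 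Since $m \leq n/(2000z^4)$ forces the total number of bad vertices to be less than $n/4$ and $m \leq n/(2000k^3)$ for $k \leq z$, the hypotheses of Corollary \ref{keybuildingupcorollary} are met at every step.

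The construction proceeds as follows. Start with the scalar $\vec{s}_{0,0,g} = 1$, which delivers property 1. At every subsequent level $z_2$ set $\vec{s}_{0,z_2,g} = 0$ and then, for $k = 1,\dots,z_2$, determine $\vec{s}_{k,z_2-k,g}$ in two steps: its fixed coordinates (those whose index $A$ contains some bad structure, so the relevance equations force them) are computed from strictly lower levels via Lemma \ref{einvariancelemma}, and its free coordinates are then chosen via Corollary \ref{keybuildingupcorollary} to satisfy condition 2 of Corollary \ref{constructionideacorollary} with minimum $\ell^2$-norm. For $k+u \geq z$ take $\vec{s}_{k,u,g} = 0$, giving property 2; property 3 follows because every $e \in E(G)$ is itself a length-one path and therefore relevant at every $(k,u,A)$ with $k+u < z$ and both endpoints of $e$ in $A \cup \{s,t\}$, so Theorem \ref{constructionidea} produces $g_e$.

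The technical heart is the inductive bound $||\vec{s}_{k,u,g}||^2 \leq (9mn)^{z_2/2}$, by induction on $z_2$ with a sub-induction on $k$. Writing $F_k := ||\vec{s}_{k,z_2-k,g,fixed}||^2$, I would expand each fixed coordinate via Lemma \ref{einvariancelemma} into at most six lower-level entries, apply $(\sum_{i=1}^6 x_i)^2 \leq 6\sum_i x_i^2$, and switch order of summation using two counting facts: at most $(k-1)m$ fixed $A$'s share a given $(k-1)$-subset $C = A \setminus \{v\}$ (because $v$ must be bad-from-$s$, bad-to-$t$, or a bad partner of some vertex of $C$), and at most $nm/2$ fixed $A$'s share a given $(k-2)$-subset $A \setminus \{v,w\}$. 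With the outer inductive hypothesis and $m \leq n/(2000z^4)$, every term except the one proportional to $\frac{nm}{2}\cdot||\vec{s}_{k-2,u,g}||^2$ is $O(1/z)\cdot(9mn)^{z_2/2}$, and that term contributes at most $\frac{1}{3}(9mn)^{z_2/2}$, yielding $F_k \leq \bigl(\tfrac{1}{3}+o(1)\bigr)(9mn)^{z_2/2}$.

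For the free part I would bound $||\vec{a}_{k-1,u+1,g}||^2$ analogously: for free $B$ of size $k-1$, the entry $(\vec{a}_{k-1,u+1,g})_B$ sums $(\vec{s}_{k,u,g})_A$ over fixed $A \supseteq B$, and since $B$ has no bad structure the extra vertex $w$ of $A = B \cup \{w\}$ can be chosen in at most $(k+1)m$ ways. Cauchy--Schwarz then gives $\frac{4}{n}||\vec{a}_{k-1,u+1,g}||^2 \leq \frac{8k^2m}{n}F_k \leq \frac{1}{250z^2}F_k$, so Corollary \ref{keybuildingupcorollary} yields $||\vec{s}_{k,u,g,free}||^2 \leq \frac{4(u+1)^2}{n}||\vec{s}_{k-1,u+1,g,free}||^2 + \frac{1}{250z^2}F_k$. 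Combined with the bound on $F_k$ and the sub-inductive hypothesis $||\vec{s}_{k-1,u+1,g}||^2 \leq (9mn)^{z_2/2}$, this closes the induction at $||\vec{s}_{k,u,g}||^2 \leq (9mn)^{z_2/2}$. The main obstacle is keeping every constant in check simultaneously: the $\frac{nm}{2}\cdot||\vec{s}_{k-2,u,g}||^2$ contribution to $F_k$ is the only one that is not $o(1)$-small, so one must verify that it stays below $\frac{1}{3}(9mn)^{z_2/2}$ while the $\sqrt{m/n}$-small terms and the adjustment-vector contribution together leave enough headroom to close the induction.
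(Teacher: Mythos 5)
Your proposal follows essentially the same route as the paper's proof: you set up the identical relevance structure, invoke the same machinery (Corollary \ref{keybuildingupcorollary} to control the free coordinates, Lemma \ref{einvariancelemma} to expand the fixed coordinates, Lemma \ref{boundinglemmatwo}-style counting for the adjustment vector), and run the same double induction on $(k+u, k)$ with the same dominant term coming from the $\frac{nm}{2}\,\|\vec{s}_{k-2,u,g}\|^2$ contribution.

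One small imprecision worth flagging: you write that ``Theorem \ref{constructionidea} produces $g_e$.'' That theorem only produces the base function $g$ with prescribed free coordinates; it does not itself extend $g$ to an $e$-invariant function. The extension requires the explicit construction of Proposition \ref{constructingge}, which takes the degree-$(z-1)$ compatibility (guaranteed because every edge $e \in E(G)$, being a length-one path, is relevant for all $(\vec{s}_{k,u,g})_A$ with $k+u < z$ and $e$'s endpoints in $A \cup \{s,t\}$) and uses Corollary \ref{einvariancecorollary} to choose the degree-$z$ Fourier coefficients so that $g_e$ becomes $e$-invariant. Your underlying reasoning is correct, but the citation is to the wrong statement; the paper's proof explicitly invokes Proposition \ref{constructingge} and then checks its hypotheses via Lemma \ref{einvariancelemma} applied to the $(\vec{s}_{k,0,g})_A$ terms.
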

\begin{proof}
We construct this function $g$ by choosing the vectors $s_{k,u,g}$ in increasing lexicographic order in 
$(k+u,k)$. We choose the vectors $s_{k,u,g}$ with equal values of $k+u$ in order of increasing $k$.
For fixed values of $k+u$, for each $A$, whenever $v,w \in A \cup \{s,t\}$and there is a path of length at most 
$2^{(z-k-u-1)}$ from $v$ to $w$, make $v \to w$ relevant for $(s_{k,u,g})_A$. We now have our fixed and free terms. 

For each $k$ and $u$, we choose the set of bad vertices $V$ to be the set of all 
vertices which are $(z-k-u-1)$-linked to $s$ or $t$. These vertices give a relevant path from $s$ or to $t$ which means 
that terms with these vertices are not free. Similarly, we choose the set of bad pairs of vertices $E$ to be the set of 
vertices $\{v,w\} \subseteq \Vmst$ such that either there is a path of length at 
most $2^{z-2}$ from $v$ to $w$ in $G$ or there is a path of length at 
most $2^{z-2}$ from $w$ to $v$ in $G$. These pairs of vertices give a relevant 
path between them which means that terms containing such a pair are not free. With this setup, we use 
Corollary \ref{keybuildingupcorollary} and the equations for relevance to build up $g$ step by step. Our base cases are
\begin{enumerate}
\item $\vec{s}_{0,0,g} = \hat{g}_{\{\}} = 1$
\item $\vec{s}_{0,u,g} = 0$ if $u > 0$
\end{enumerate}
\begin{lemma}\label{lowconnectivitytargetbounds}
If $k+u \geq 1$ then 
\begin{enumerate}
\item $|\vec{s}_{k,u,g,fixed}|^2 \leq \frac{1}{2}(9mn)^{\frac{k+u}{2}}$
\item $|\vec{s}_{k,u,g,free}|^2 \leq \frac{1}{2}(9mn)^{\frac{k+u}{2}}$
\end{enumerate}
\end{lemma}
\begin{proof}
We prove this by induction. We prove bounds on each term in terms of previous terms and then verify that 
if the bound holds for the previous terms it holds for the current term as well.
\begin{lemma}\label{boundinglemmaone}
For any $k \geq 1$ and any $u$,
$$||\vec{s}_{k,u,g,fixed}||^2 \leq 2||\vec{s}_{k,u-1,g}||^2 + 6||\vec{s}_{k,u-2,g}||^2 + 
(4m+6km)||\vec{s}_{k-1,u,g}||^2 + 6km||\vec{s}_{k-1,u-1,g}||^2 + 3nm||\vec{s}_{k-2,u,g}||^2$$
\end{lemma}
\begin{proof}
By our definitions for every term $(\vec{s}_{k,u,g})_A$ which is fixed one of the following is true
\begin{enumerate}
\item $w \in A$ and there is a relevant path from $s$ to $w$. In this case,
$(\vec{s}_{k,u,g})_A = (\vec{s}_{k,u-1,g})_A - (\vec{s}_{k-1,u,g})_{A \setminus \{w\}}$
This implies that $((\vec{s}_{k,u,g})_A)^2 \leq 2((\vec{s}_{k,u-1,g})_A)^2 + 2((\vec{s}_{k-1,u,g})_{A \setminus \{w\}})^2$
\item $v \in A$ and there is a relevant path from $v$ to $t$. In this case,
$(\vec{s}_{k,u,g})_A = (\vec{s}_{k-1,u,g})_{A \setminus \{v\}} - (\vec{s}_{k,u-1,g})_A$
This implies that $((\vec{s}_{k,u,g})_A)^2 \leq 2((\vec{s}_{k-1,u,g})_{A \setminus \{v\}})^2 + 2((\vec{s}_{k,u-1,g})_A)^2$
\item $v,w \in A$ and there is a relevant path from $v$ to $w$. In this case,
\begin{align*}
(\vec{s}_{k,u,g})_A &=  (\vec{s}_{k-1,u,g})_{A \setminus \{v\}} - (\vec{s}_{k-1,u,g})_{A \setminus \{w\}} 
+ (\vec{s}_{k-2,u,g})_{A \setminus \{v,w\}} \\
&- (\vec{s}_{k-1,u-1,g})_{A \setminus \{v\}} - (\vec{s}_{k-1,u-1,g})_{A \setminus \{w\}} + 
(\vec{s}_{k,u-2,g})_{A}
\end{align*}
This implies that
\begin{align*}
(\vec{s}_{k,u,g})_A &\leq  6((\vec{s}_{k-1,u,g})_{A \setminus \{v\}})^2 + 6((\vec{s}_{k-1,u,g})_{A \setminus \{w\}})^2 
+ 6((\vec{s}_{k-2,u,g})_{A \setminus \{v,w\}})^2 \\
&+ 6((\vec{s}_{k-1,u-1,g})_{A \setminus \{v\}})^2 + 6((\vec{s}_{k-1,u-1,g})_{A \setminus \{w\}})^2 + 
6((\vec{s}_{k,u-2,g})_{A})^2
\end{align*}
\end{enumerate}
To bound $||\vec{s}_{k,u,g,fixed}||^2$ we sum the above inequalities over every $A$ such that 
$(\vec{s}_{k,u,g})_A$ is fixed. Consider how many times each term can appear in this sum.
Terms of the form $((\vec{s}_{k,u-1,g})_A)^2$ and $((\vec{s}_{k,u-2,g})_A)^2$ will only appear once 
(with the appropriate constant in front). Terms of the form $((\vec{s}_{k-1,u,g})_{A})^2$ 
will appear up to $2m$ times from cases 1 and 2 (the vertex added to $A$ must be $(z-k-u-1)$-linked to $s$ or $t$) and up to 
$km$ times from case 3 (the vertex added to $A$ must be $(z-k-u-1)$-linked to a vertex in $A$). 
Terms of the form $((\vec{s}_{k-1,u-1,g})_{A})^2$ will appear up to $km$ times
(the vertex added to $A$ must be $(z-k-u-1)$-linked to a vertex in $V$). Finally, terms of the form 
$((\vec{s}_{k-2,u,g})_{A})^2$ will appear up to $\frac{nm}{2}$ times (the two vertices added to $A$ must be 
$(z-k-u-1)$-linked to each other). Putting everything together, 
$$||\vec{s}_{k,u,g,fixed}||^2 \leq 2||\vec{s}_{k,u-1,g}||^2 + 6||\vec{s}_{k,u-2,g}||^2 + 
(4m+6km)||\vec{s}_{k-1,u,g}||^2 + 6km||\vec{s}_{k-1,u-1,g}||^2 + 3nm||\vec{s}_{k-2,u,g}||^2$$
\end{proof}
\begin{corollary}\label{checkingboundone}
For a given $k \geq 1$ and any $u$, if the bounds hold for earlier terms then 
${||\vec{s}_{k,u,g,fixed}||^2 \leq \frac{1}{2}(9mn)^{\frac{k+u}{2}}}$
\end{corollary}
\begin{proof}
If $k = 0$ then $\vec{s}_{k,u,g,fixed} = 0$. By Lemma \ref{boundinglemmaone} and the bounds of 
Lemma \ref{lowconnectivitytargetbounds}, 
\begin{align*}
|\vec{s}_{k,u,g,fixed}|^2 & \leq 2(9mn)^{\frac{k+u-1}{2}} + 6(9mn)^{\frac{k+u-2}{2}} + 
(4m+6km)(9mn)^{\frac{k+u-1}{2}} + 6km(9mn)^{\frac{k+u-2}{2}} + 3nm(9mn)^{\frac{k+u-2}{2}} \\
& \leq \left(\frac{2}{\sqrt{9mn}} + \frac{6}{9mn} + \sqrt{\frac{100{k^2}{m^2}}{9mn}} + \frac{6km}{9mn} + \frac{1}{3}\right)
(9mn)^{\frac{k+u}{2}}
\end{align*}
Since $n \geq 2000mz^4$ this is less than $\frac{1}{2}(9mn)^{\frac{k+u}{2}}$
\end{proof}
\begin{lemma}\label{boundinglemmatwo}
For any $k \geq 1$ and any $u$, 
$$||\vec{a}_{k-1,u+1,g}||^2 \leq m(k+1)k||\vec{s}_{k,u,g,fixed}||^2$$
\end{lemma}
\begin{proof}
Recall that $\vec{a}_{k-1,u+1,g}$ is the projection of $-P_{k-1}\vec{s}_{k,u,g,fixed}$ onto the free terms of 
$\vec{s}_{k-1,u+1,g}$. Now $(P_{k-1}\vec{s}_{k,u,g,fixed})_A = \sum_{A \subseteq B, |B| = k}{(\vec{s}_{k,u,g,fixed})_B}$. 
For the $A$ such that $(\vec{s}_{k-1,u+1,g})_A$ is free, there are at most $(k - 1 + 2)m$ $B$ such that 
$A \subseteq B$, $|B| = k$, and $(\vec{s}_{k,u,g})_B$ is fixed. For all such $A$, the sum 
$\sum_{A \subseteq B, |B| = k}{(\vec{s}_{k,u,g,fixed})_B}$ has at most $(k - 1 + 2)m$ terms so 
$$((P_{k-1}\vec{s}_{k,u,g,fixed})_A)^2 \leq (k + 1)m\sum_{A \subseteq B, |B| = k}{((\vec{s}_{k,u,g,fixed})_B)^2}$$
Summing over all such $A$, each term $((\vec{s}_{k,u,g,fixed})_B)^2$ appears at most $k$ times so 
$$||\vec{a}_{k-1,u+1,g}||^2 \leq m(k+1)k||\vec{s}_{k,u,g,fixed}||^2$$
\end{proof}
\begin{corollary}\label{boundinglemmatwocorollary}
For any $k \geq 1$ and any $u$, 
$$||\vec{s}_{k,u,g,free}||^2 \leq \frac{4u^2}{n}||\vec{s}_{k-1,u+1,g,free}||^2 + \frac{4m(k+1)k}{n}||\vec{s}_{k,u,g,fixed}||^2$$
\end{corollary}
\begin{proof}
By Corollary \ref{keybuildingupcorollary}, 
$$||\vec{s}_{k,u,g,free}||^2 \leq \frac{4u^2}{n}||\vec{s}_{k-1,u+1,g,free}||^2 + \frac{4}{n}||\vec{a}_{k-1,u+1,g}||^2$$
Now by Lemma \ref{boundinglemmatwo}, 
$$||\vec{s}_{k,u,g,free}||^2 \leq \frac{4u^2}{n}||\vec{s}_{k-1,u+1,g,free}||^2 + \frac{4m(k+1)k}{n}||\vec{s}_{k,u,g,fixed}||^2$$
\end{proof}
\begin{corollary}\label{checkingboundstwo} 
For any $k \geq 1$ and any $u$ if previous bounds hold then 
$|\vec{s}_{k,u,g,free}|^2 \leq \frac{1}{2}(9mn)^{\frac{k+u}{2}}$
\end{corollary}
\begin{proof}
This follows immediately from Corollary \ref{boundinglemmatwocorollary} and the fact that $n \geq 2000mz^4$.
\end{proof}
This completes the proof of Lemma \ref{lowconnectivitytargetbounds}. 
\end{proof}
The final thing to check for Theorem \ref{basefunctionexistence} is that there is indeed a function $g$ 
corresponding to the vectors $\vec{s}_{k,u,g}$ and that $g$ can be extended to an $e$-invariant function 
$g_e$ for every $e \in E(G)$. The fact that the sum vectors $\{\vec{s}_{k,u,g}\}$ do indeed correspond to a function $g$ 
follows from Corollary \ref{constructionideacorollary}. To show that $g$ can be extended to an $e$-invariant 
function $g_e$ for every $e \in E(G)$, we 
contruct such an extension explicitly with the following proposition.
\begin{proposition}\label{constructingge}
Let $z$ be a constant and let $b_e$ be a function such that $\hat{b_e}_V = 0$ whenever $|V| \geq z$.
\begin{enumerate}
\item If $e$ is of the form $s \to w$ and $\hat{b_e}_{V \cup \{w\}} = -\hat{b_e}_{V}$ 
whenever $w \notin V, |V| < z - 1$, if we take $g_e$ so that 
\begin{enumerate}
\item $\hat{g_e}_V = \hat{b_e}_V$ whenever $|V| \neq z$ and 
\item $\hat{g_e}_V = -\hat{b_e}_{V \setminus \{w\}}$ if $w \in V, |V| = z$ and $0$ if $w \notin V, |V| = z$
\end{enumerate}
then $g_e$ is $e$-invariant.
\item If $e$ is of the form $v \to t$ and $\hat{b_e}_{V \cup \{v\}} = \hat{b_e}_{V}$ 
whenever $v \notin V, |V| < z - 1$, if we take $g_e$ so that 
\begin{enumerate}
\item $\hat{g_e}_V = \hat{b_e}_V$ whenever $|V| \neq z$ and 
\item $\hat{g_e}_V = \hat{b_e}_{V \setminus \{v\}}$ if $v \in V, |V| = z$ and $0$ if $v \notin V, |V| = z$
\end{enumerate}
then $g_e$ is $e$-invariant.
\item If $e$ is of the form $v \to w$ and 
$\hat{b_e}_{V \cup \{v,w\}} = -\hat{b_e}_{V \cup \{v\}} + \hat{b_e}_{V \cup \{w\}} + \hat{b_e}_{V}$ whenever 
$v,w \notin V, |V| < z - 2$, if we take $g_e$ so that 
\begin{enumerate}
\item $\hat{g_e}_V = \hat{b_e}_V$ whenever $|V| \neq z$
\item $\hat{g_e}_V = -\hat{b_e}_{V \setminus \{w\}} + \hat{b_e}_{V \setminus \{v\}} + \hat{b_e}_{V \setminus \{v,w\}}$ 
if $v,w \in V, |V| = z$
\item $\hat{g_e}_V = \hat{b_e}_{V \setminus \{v\}}$ if $w \notin V, v \in V, |V| = z$\\
\item $\hat{g_e}_V = 0$ if $v \notin V, |V| = z$
\end{enumerate}
then $g_e$ is $e$-invariant.
\end{enumerate}
\end{proposition}
\begin{proof}
This follows immediately from Corollary \ref{einvariancecorollary}.
\end{proof}
Here we take $b_e = g$ for every $e$. To check that the conditions of this proposition hold for $g$ we note that 
any edge $e \in E(G)$ is always relevant for any $(\vec{s}_{k,u,g})_A$ such that $k+u < z$ and the vertices 
of $e$ are contained in $A \cup \{s,t\}$. We then use Lemma \ref{einvariancelemma} on the terms $(\vec{s}_{k,0,g})_A$.
\end{proof}
\subsection{The cost of extending a base function}\label{cuttingoff}
We now have our base function $g$ and an explicit construction of the $e$-invariant functions $\{g_e\}$. However, 
in constructing the functions $\{g_e\}$ from $g$ we were concerned with $e$-invariance and ensuring that 
$\hat{g_e}_V = 0$ whenever $|V| > z$, we have not yet considered how large the norms $|\vec{s}_{k,u,g_e}|^2$ would be.
We need to check that the norms $|\vec{s}_{k,u,g_e}|^2$ are not too large.
\begin{lemma}\label{cuttingoffcost}
If $b_e$ and $g_e$ satisfy the conditions described in Proposition \ref{constructingge}, then 
\begin{enumerate}
\item If $e$ is of the form $s \to w$ then 
\begin{enumerate}
\item If $w \notin A$ and $|A| + u = z$, $s_{A,u}(g_e) = -s_{A,u-1}(b_e) + s_{A \cup \{w\},u-2}(b_e)$
\item If $w \in A$ and $|A| + u = z$, $s_{A,u}(g_e) = -s_{A \setminus \{w\},u}(b_e) + s_{A,u-1}(b_e)$
\end{enumerate}
\item If $e$ is of the form $v \to t$ then 
\begin{enumerate}
\item If $v \notin A$ and $|A| + u = z$, $s_{A,u}(g_e) = s_{A,u-1}(b_e) - s_{A \cup \{v\},u-2}(b_e)$
\item If $v \in A$ and $|A| + u = z$, $s_{A,u}(g_e) = s_{A \setminus \{v\},u}(b_e) - s_{A,u-1}(b_e)$
\end{enumerate}
\item If $e$ is of the form $v \to w$ then 
\begin{enumerate}
\item If $v,w \notin A$ and $|A| + u = z$,
\begin{align*}
s_{A,u}(g_e) &= s_{A,u - 1}(b_e) - 2s_{A \cup \{v\},u - 2}(b_e) + s_{A \cup \{v,w\},u - 3}(b_e) + s_{A,u - 2}(b_e) \\
&- s_{A \cup \{v\},u - 3}(b_e) - s_{A \cup \{w\},u - 3}(b_e) + s_{A \cup \{v,w\},u - 4}(b_e)
\end{align*}
\item If $v \in A, w \notin A$ and $|A| + u = z$ then 
\begin{align*}
s_{A,u}(g_e) &= s_{A \setminus \{v\},u}(b_e) - 2s_{A,u - 1}(b_e) + s_{A \cup \{w\},u - 2}(b_e) + 
s_{A \setminus \{v\},u - 1}(b_e) \\ 
&- s_{A,u - 2}(b_e) - s_{A \setminus \{v\} \cup \{w\},u - 2}(b_e) + s_{A \cup \{w\},u - 3}(b_e)
\end{align*}
\item If $v \notin A, w \in A$ and $|A| + u = z$,
\begin{align*}
s_{A, u}(g_e) &= 
s_{A, u - 1}(b_e) - s_{A \cup \{v\} \setminus \{w\}, u - 1}(b_e) + s_{A \setminus \{w\},u - 1}(b_e) \\
&- s_{A,u - 2}(b_e) - s_{A \cup \{v\} \setminus \{w\},u - 2}(b_e) + s_{A \cup \{v\},u - 3}(b_e)
\end{align*}
\item If $v,w \in A$ and $|A| + u = z$, 
\begin{align*}
s_{A,u}(g_e) &= s_{A \setminus \{v\}, u}(b_e) - s_{A \setminus \{w\}, u}(b_e) + s_{A \setminus \{v,w\},u}(b_e) \\
&- s_{A \setminus \{v\},u - 1}(b_e) - s_{A \setminus \{w\},u - 1}(b_e) + s_{A,u-2}(b_e)
\end{align*}
\end{enumerate}
\end{enumerate}
\end{lemma}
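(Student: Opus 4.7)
The plan is to directly unroll the definition
\begin{equation*}
s_{A, u}(g_e) = \sum_{B \subseteq \Vmst : |B| = u,\, A \cap B = \emptyset} \hat{g_e}_{A \cup B}
\end{equation*}
using the hypothesis $|A| + u = z$. Since every set $A \cup B$ appearing here has cardinality exactly $z$, only the top-level Fourier coefficients $\hat{g_e}_V$ with $|V| = z$ are relevant, and these are given explicitly by the case-by-case formulas of Proposition \ref{constructingge} as signed sums of $\hat{b_e}_W$ for $|W| = z-1$ or $|W| = z-2$. The first step is to substitute these expressions into the sum, then interpret each resulting piece as some $s_{A', u'}(b_e)$.

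After substitution, I would split the sum by which endpoints of $e$ lie in $B$: in every subcase where the ``tail'' vertex of $e$ is neither in $A$ nor in $B$, Proposition \ref{constructingge} forces $\hat{g_e}_{A \cup B} = 0$, so only a few subsums survive. For each surviving piece, make the change of variables $B' = B \setminus S$ where $S \subseteq \{v, w\}$ records the endpoints that the subcase forces into $B$; this turns the piece into a sum $\sum \hat{b_e}_{A' \cup B'}$ over some constrained family of $B'$, with $A'$ an appropriate modification of $A$ by adding or removing $v$ or $w$.

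The final step is to match each such piece against the definition of $s_{A', u'}(b_e)$. The awkwardness is that the substituted sums typically carry extra constraints of the form ``$x \notin B'$'' for an endpoint $x$ of $e$ not in $A'$, while $s_{A', u'}(b_e)$ imposes only $A' \cap B' = \emptyset$. I handle this by inclusion-exclusion, rewriting
\begin{equation*}
\sum_{B' : A' \cap B' = \emptyset,\, x \notin B'} \hat{b_e}_{A' \cup B'} = s_{A', u'}(b_e) - s_{A' \cup \{x\}, u'-1}(b_e)
\end{equation*}
(after the substitution $B'' = B' \setminus \{x\}$ in the excluded piece), and iterating this rewrite for both $v$ and $w$ when both must be excluded.

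Cases 1 and 2 are short: Proposition \ref{constructingge} gives a two-term expression for $\hat{g_e}_V$, there are only two subcases (the tail endpoint in $A$ or not), and each collapses to at most two $s$-terms; the two cases are symmetric under exchanging the roles of $s$ and $t$. The main obstacle is case 3, where there are four subcases and the Fourier expansion of $\hat{g_e}_V$ has three terms, so up to two layers of inclusion-exclusion are required; subcase 3a ($v, w \notin A$) needs the most bookkeeping. I expect several intermediate $s$-terms to appear with opposite signs from different pieces and cancel (for instance, in subcase 3c a term $s_{A \cup \{v\}, u-2}(b_e)$ arises both from the three-term Fourier expansion and from enforcing the ``$v \notin B'$'' constraint, and these contributions cancel). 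The proof is therefore mechanical but demands careful tracking of signs and constraints in each of the ten subcases.
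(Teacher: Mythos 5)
Your proposal is correct and follows the same basic approach as the paper: expand the defining sum, note that only level-$z$ Fourier coefficients of $g_e$ appear since $|A|+u=z$, substitute the explicit case-by-case formulas from Proposition \ref{constructingge}, and collect the pieces into $s_{A',u'}(b_e)$ terms by inclusion-exclusion over ``$x\notin B'$'' constraints. The paper arrives at the same formulas with less bookkeeping by treating subcases 1b, 2b, 3d directly via Lemma \ref{einvariancelemma} and by reducing 3b to 3a (and 3c to 3d) through the identity $s_{A,u}(g_e)=s_{A\setminus\{v\},u+1}(g_e)$ when $v\in A$ (respectively $s_{A,u}(g_e)=s_{A\cup\{v\},u-1}(g_e)$ when $v\notin A$), which holds because $\hat{g_e}_V$ vanishes at level $z$ whenever the distinguished endpoint is missing from $V$; your plan of doing every subcase by direct expansion is more mechanical per subcase but sound, and the cancellations you anticipate (such as $s_{A\cup\{v\},u-2}(b_e)$ in subcase 3c) do in fact occur.
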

\noindent Before proving this lemma, we give a corollary which is less exact but much simpler and 
easier to use directly.
\begin{corollary}\label{cuttingoffcostcorollary}
If $b_e$ and $g_e$ satisfy the conditions described in Proposition \ref{constructingge} then 
$$\forall k, \sum_{A:|A| = k}{(s_{A,z-k}(g_e))^2} \leq 
200\max_{k_2}{\left\{\max{\left
\{\sum_{B: |B| = k_2}{(s_{B,z-k_2-1}(b_e))^2}, \sum_{B: |B| = k_2}{(s_{B,z-k_2-2}(b_e))^2}\right\}}\right\}}$$
\end{corollary}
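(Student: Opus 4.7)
The plan is to apply the Cauchy--Schwarz inequality $(\sum_i c_i x_i)^2 \le (\sum_i c_i^2)(\sum_i x_i^2)$ to each of the explicit identities in Lemma \ref{cuttingoffcost}, and then sum the resulting bound over all $A$ with $|A|=k$. First I would observe the structural fact that in every subcase of Lemma \ref{cuttingoffcost}, the quantity $s_{A,u}(g_e)$ with $u=z-|A|$ is written as a linear combination of at most seven terms of the form $s_{B,u'}(b_e)$, each of which satisfies either $|B|+u'=z-1$ or $|B|+u'=z-2$, and whose coefficients have absolute value at most $2$. So the two ``shapes'' that appear on the right-hand side of the corollary, namely $s_{B,z-k_2-1}(b_e)$ and $s_{B,z-k_2-2}(b_e)$, are precisely the shapes already produced by Lemma \ref{cuttingoffcost}, with $k_2=|B|$ ranging in a small window around $k$ (specifically $k_2\in\{k-2,k-1,k,k+1,k+2\}$).

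Second, applying Cauchy--Schwarz term by term bounds $(s_{A,u}(g_e))^2$ by $(\sum c_i^2)$ times the sum of the squares of the individual $s_{B_i,u'_i}(b_e)$; in the worst subcase (3(a)) the coefficients are $1,-2,1,1,-1,-1,1$, giving $\sum c_i^2\le 10$. Then I sum this inequality over all $A$ with $|A|=k$ meeting the relevant membership condition for $v,w$. The key combinatorial observation is that because $v,w$ are the fixed endpoints of the edge $e$, the set $B$ in any one of the seven terms determines $A$ uniquely (via $A=B$, $A=B\cup\{v\}$, $A=B\cup\{w\}$, $A=B\cup\{v,w\}$, $A=B\setminus\{v\}$, $A=B\setminus\{w\}$, or $A=B\setminus\{v,w\}$ as the case may be). Hence when we sum over $A$, each $(s_{B,u'}(b_e))^2$ contributes at most once, and the summation produces at most seven terms of the form $\sum_{B:|B|=k_2}(s_{B,z-k_2-1}(b_e))^2$ or $\sum_{B:|B|=k_2}(s_{B,z-k_2-2}(b_e))^2$.

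Finally I bound each of these at most seven summed quantities by the maximum over $k_2$ of the two shapes appearing in the corollary. Multiplying the constants together gives a bound of at most $\sum c_i^2 \cdot 7 \le 70$ in the worst subcase; allowing a little slack for the three main cases of Proposition \ref{constructingge} and rounding up yields the claimed constant $200$. The main ``obstacle'' is simply the bookkeeping of the four subcases of case 3 (edges of the form $v\to w$ with each of the four possible membership patterns of $\{v,w\}$ in $A$), since each produces up to seven terms with different coefficients. I would verify the inequality by doing case 3(a) in full detail (where $\sum c_i^2$ is largest) and then remark that the remaining subcases are analogous and give strictly smaller constants, which fit comfortably under $200$.
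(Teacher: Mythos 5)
Your proposal follows the paper's proof in outline (Cauchy--Schwarz on the identities of Lemma \ref{cuttingoffcost}, then sum over $A$ and bound by the maximum), but the multiplicity bookkeeping has a genuine gap. You claim that when summing over $A$, ``each $(s_{B,u'}(b_e))^2$ contributes at most once.'' This is correct only within a single fixed membership pattern of $\{v,w\}$ in $A$: for a fixed pattern and a fixed position in the decomposition, the map $A \mapsto B$ is indeed injective. But the sum $\sum_{A:|A|=k}$ ranges over all four membership patterns, and the same pair $(B,u')$ can be produced by two different $A$'s lying in conjugate patterns --- for instance, when $|B|=k$ and exactly one of $v,w$ belongs to $B$, both $A = B$ (say pattern (3b)) and $A = B\bigtriangleup\{v,w\}$ (pattern (3c)) contribute a term $s_{B,u'}(b_e)$. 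Your per-pattern bound of $(\sum c_i^2)\cdot 7 \leq 70$ must therefore be added over the four patterns; with the actual coefficients (both (3a) and (3b) have $\sum c_i^2 = 10$ with seven terms, while (3c) and (3d) have $\sum c_i^2 = 6$ with six terms) this gives $70+70+36+36 = 212 > 200$, so the corollary is not established by that route. Note also that your remark that the other subcases give ``strictly smaller constants'' fails for (3b), which has exactly the same $\sum c_i^2$ as (3a).

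The ingredient you are missing is the paper's cross-pattern multiplicity bound: for each fixed $(B,u')$, at most \emph{two} pairs $(A,\mathrm{term})$ with $|A|=k$ can produce $s_{B,u'}(b_e)$. This follows because any such $A$ must satisfy $A\bigtriangleup B \subseteq \{v,w\}$, and the constraint $|A|=k$ restricts $A$ to at most two of the four candidates $B$, $B\bigtriangleup\{v\}$, $B\bigtriangleup\{w\}$, $B\bigtriangleup\{v,w\}$ (the two survivors necessarily sit in different membership patterns). Combining the Cauchy--Schwarz factor $\sum c_i^2\leq 10$ with multiplicity at most $2$ gives a coefficient at most $20$ on each $(s_{B,u'}(b_e))^2$, and there are at most $5 \times 2 = 10$ valid shapes $(k_2,u')$ (namely $|k_2-k|\leq 2$ and $k_2+u' \in \{z-1,z-2\}$), yielding the paper's constant $20 \times 10 = 200$.
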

\begin{proof}
Consider the equation in statement 3a of Lemma \ref{cuttingoffcost}:
\begin{align*}
s_{A,u}(g_e) &= s_{A,u - 1}(b_e) - 2s_{A \cup \{v\},u - 2}(b_e) + s_{A \cup \{v,w\},u - 3}(b_e) + s_{A,u - 2}(b_e) \\
&- s_{A \cup \{v\},u - 3}(b_e) - s_{A \cup \{w\},u - 3}(b_e) + s_{A \cup \{v,w\},u - 4}(b_e)
\end{align*}
Applying a Cauchy-Scwarz inequality to this, 
\begin{align*}
(s_{A,u}(g_e))^2 &\leq 10 ((s_{A,u - 1}(b_e))^2 + (s_{A \cup \{v\},u - 2}(b_e))^2 + 
(s_{A \cup \{v,w\},u - 3}(b_e))^2 + (s_{A,u - 2}(b_e))^2 \\
&- (s_{A \cup \{v\},u - 3}(b_e))^2 - (s_{A \cup \{w\},u - 3}(b_e))^2 + (s_{A \cup \{v,w\},u - 4}(b_e))^2)
\end{align*}
Similarly, we can apply a Cauchy-Scwarz inequality to all of the other equations.\\
Now note that the sum $\sum_{A:|A| = k}{(s_{A,z-k}(g_e))^2}$ can be bounded by a sum of terms of the 
form $(s_{B,u}(g_e))^2$ where $|B| + u = z-1$ or $z-2$ and $|(|B| - k)| \leq 2$. Moreover, each 
term $(s_{B,u}(g_e))^2$ must come from an $A$ with $|A| = k, A \bigtriangleup B \subseteq \{v,w\}$ 
where $A \bigtriangleup B$ is the symmetric difference of $A$ and 
$B$ i.e. $A \bigtriangleup B = \{u: u \in A, u\notin B$ or $u \in B, u\notin A\}$. For 
each $B$ there are at most two $A$ which will give a term with that $B$. This implies that 
the coefficient for each term $(s_{B,u}(g_e))^2$ has magnitude at most $2$, so we have the inequality 
$$\forall k, \sum_{A:|A| = k}{(s_{A,z-k}(g_e))^2} \leq 
20\sum_{k_2: |k-k_2| \leq 2}{\left(\sum_{B: |B| = k_2}{(s_{B,z-k_2-1}(b_e))^2} + \sum_{B: |B| = k_2}{(s_{B,z-k_2-2}(b_e))^2}\right)}$$
The result now follows immediately.
\end{proof}
\begin{proof}[Proof of Lemma \ref{cuttingoffcost}]
For statement 1, note that if $w \notin A$ and $|A| + u = z$,
\begin{align*}
s_{A,u}(g_e) &= \sum_{B: A \subseteq B, |B| = z-1, w \notin B}{\hat{g_e}_{B \cup \{w\}}} \\
&= -\sum_{B: A \subseteq B, |B| = z-1, w \notin B}{\hat{b_e}_{B}} \\
&= -\sum_{B: A \subseteq B, |B| = z-1}{\hat{b_e}_{B}} + \sum_{B: A \subseteq B, |B| = z-1, w \in B}{\hat{b_e}_{B}} \\
&= -s_{A,u-1}(b_e) + \sum_{B: A \cup \{w\} \subseteq B, |B| = z-1}{\hat{b_e}_{B}} \\
&= -s_{A,u-1}(b_e) + s_{A \cup \{w\},u-2}(b_e)
\end{align*}
If $w \in A$ and $|A| + u = z$ then by Proposition \ref{constructingge} and Lemma \ref{einvariancelemma}, 
$$s_{A,u}(g_e) = -s_{A \setminus \{w\},u}(g_e) + s_{A,u - 1}(g_e) = -s_{A \setminus \{w\},u}(b_e) + s_{A,u - 1}(b_e)$$
For statement 2, note that if $v \notin A$ and $|A| + u = z$,
\begin{align*}
s_{A,u}(g_e) &= \sum_{B: A \subseteq B, |B| = z-1, v \notin B}{\hat{g_e}_{B \cup \{v\}}} \\
&= \sum_{B: A \subseteq B, |B| = z-1, v \notin B}{\hat{b_e}_{B}} \\ 
&= \sum_{B: A \subseteq B, |B| = z-1}{\hat{b_e}_{B}} - \sum_{B: A \subseteq B, |B| = z-1, v \in B}{\hat{b_e}_{B}} \\
&= s_{A,u-1}(b_e) - \sum_{B: A \cup \{v\} \subseteq B, |B| = z-1}{\hat{b_e}_{B}} \\
&= s_{A,u-1}(b_e) - s_{A \cup \{v\},u-2}(b_e)
\end{align*}
If $v \in A$ and $|A| + u = z$ then by Proposition \ref{constructingge} and Lemma \ref{einvariancelemma}, \\
$$s_{A,u}(g_e) = s_{A \setminus \{v\},u}(g_e) - s_{A,u - 1}(g_e) = s_{A \setminus \{v\},u}(b_e) - s_{A,u - 1}(b_e)$$
Again, the proof for statement 3 is similar but more complicated. Note that if $v,w \notin A$ and $|A| + u = z$, 
$$s_{A,u}(g_e) = \sum_{B: A \subseteq B, |B| = z-1, v,w \notin B}{\hat{g_e}_{B \cup \{v\}}} + 
\sum_{B: A \subseteq B, |B| = z-2, v,w \notin B}{\hat{g_e}_{B \cup \{v,w\}}}$$
Let's consider each term separately. For the first term,
\begin{align*}
\sum_{B: A \subseteq B, |B| = z-1, v,w \notin B}{\hat{g_e}_{B \cup \{v\}}} &= 
\sum_{B: A \subseteq B, |B| = z-1, v,w \notin B}{\hat{b_e}_{B}} \\
&= \sum_{B: A \subseteq B, \atop |B| = z-1}{\hat{b_e}_{B}} - 
\sum_{B: A \subseteq B, \atop |B| = z-1, v \in B}{\hat{b_e}_{B}} - 
\sum_{B: A \subseteq B, \atop |B| = z-1, w \in B}{\hat{b_e}_{B}} + 
\sum_{B: A \subseteq B, \atop |B| = z-1, v,w \in B}{\hat{b_e}_{B}} \\
&= s_{A,u - 1}(b_e) - s_{A \cup \{v\},u - 2}(b_e) - s_{A \cup \{w\},u - 2}(b_e) + s_{A \cup \{v,w\},u - 3}(b_e)
\end{align*}
For the second term,
$$\sum_{B: A \subseteq B, |B| = z-2, v,w \notin B}{\hat{g_e}_{B \cup \{v,w\}}} = 
\sum_{B: A \subseteq B, |B| = z-2, v,w \notin B}{(\hat{b_e}_{B \cup \{w\}} - \hat{b_e}_{B \cup \{v\}} + \hat{b_e}_{B})}$$
Consider each of these terms separately. By the same logic as above, 
$$\sum_{B: A \subseteq B, |B| = z-2, v,w \notin B}{\hat{b_e}_{B}} = 
s_{A,u - 2}(b_e) - s_{A \cup \{v\},u - 3}(b_e) - s_{A \cup \{w\},u - 3}(b_e) + s_{A \cup \{v,w\},u - 4}(b_e)$$
For the first term,
\begin{align*}
\sum_{B: A \subseteq B, |B| = z-2, v,w \notin B}{\hat{b_e}_{B \cup \{w\}}} &= 
\sum_{B: A \cup \{w\} \subseteq B, |B| = z-1, v \notin B}{\hat{b_e}_{B}} \\
&= \sum_{B: A \cup \{w\} \subseteq B, |B| = z-1}{\hat{b_e}_{B}} - 
\sum_{B: A \cup \{w\} \subseteq B, v \in B, |B| = z-1}{\hat{b_e}_{B}} \\
&= s_{A \cup \{w\}, u - 2}(b_e) - s_{A \cup \{v,w\}, u - 3}(b_e)
\end{align*}
Following the same logic, 
$$\sum_{B: A \subseteq B, |B| = z-2, v,w \notin B}{\hat{b_e}_{B \cup \{v\}}} = 
s_{A \cup \{v\}, u - 2}(b_e) - s_{A \cup \{v,w\}, u - 3}(b_e)$$
\noindent Putting everything together, 
\begin{align*}
s_{A,u}(g_e) &= (s_{A,u - 1}(b_e) - s_{A \cup \{v\},u - 2}(b_e) - s_{A \cup \{w\},u - 2}(b_e) + s_{A \cup \{v,w\},u - 3}(b_e)) \\
&+ (s_{A,u - 2}(b_e) - s_{A \cup \{v\},u - 3}(b_e) - s_{A \cup \{w\},u - 3}(b_e) + s_{A \cup \{v,w\},u - 4}(b_e)) \\
&+ (s_{A \cup \{w\}, u - 2}(b_e) - s_{A \cup \{v,w\}, u - 3}(b_e)) - (s_{A \cup \{v\}, u - 2}(b_e) - s_{A \cup \{v,w\}, u - 3}(b_e)) \\
&= s_{A,u - 1}(b_e) - 2s_{A \cup \{v\},u - 2}(b_e) + s_{A \cup \{v,w\},u - 3}(b_e)
+ s_{A,u - 2}(b_e) \\
&- s_{A \cup \{v\},u - 3}(b_e) - s_{A \cup \{w\},u - 3}(b_e) + s_{A \cup \{v,w\},u - 4}(b_e)
\end{align*}
Now that we have done this calculation, note that if $v \in A, w \notin A$ and $|A| + u = z$ then 
\begin{align*}
s_{A,u}(g_e) = s_{A \setminus \{v\},u+1}(g_e) &= 
s_{A \setminus \{v\},u}(b_e) - 2s_{A,u - 1}(b_e) + s_{A \cup \{w\},u - 2}(b_e) + s_{A \setminus \{v\},u - 1}(b_e) \\
&- s_{A,u - 2}(b_e) - s_{A \setminus \{v\} \cup \{w\},u - 2}(b_e) + s_{A \cup \{w\},u - 3}(b_e)
\end{align*}
By Proposition \ref{constructingge} and Lemma \ref{einvariancelemma}, if $v,w \in A$ and $|A| + u = z$ then 
$$s_{A,u}(g_e) = s_{A \setminus \{v\}, u}(b_e) - s_{A \setminus \{w\}, u}(b_e) + s_{A \setminus \{v,w\},u}(b_e) - 
s_{A \setminus \{v\},u - 1}(b_e) - s_{A \setminus \{w\},u - 1}(b_e) + s_{A,u-2}(b_e)$$
Finally, note that if $v \notin A, w \in A$ and $|A| + u = z$ then 
\begin{align*}
s_{A, u}(g_e) = s_{A \cup \{v\},u - 1}(g_e) &= 
s_{A, u - 1}(b_e) - s_{A \cup \{v\} \setminus \{w\}, u - 1}(b_e) + s_{A \setminus \{w\},u - 1}(b_e) \\
&- s_{A,u - 2}(b_e) - s_{A \cup \{v\} \setminus \{w\},u - 2}(b_e) + s_{A \cup \{v\},u - 3}(b_e)
\end{align*}
\end{proof}
\subsection{Putting everything together: A lower bound}\label{puttingeverythingtogether}
We now put everything together and prove a lower bound on $m(G)$
\vskip.1in
\noindent
{\bf Theorem \ref{lowerboundforlowconnectivity}.}
{\it
If $z$ and $m$ are constants such that $m \leq \frac{n}{2000z^{4}}$ and $G$ is a directed acyclic input graph such that 
\begin{enumerate}
\item There is no path of length at most $2^{z-1}$ from $s$ to $t$.
\item For any vertex $v \in V(G)$, there are at most $m$ vertices $w \in V(G)$ such that either there is a path of length at 
most $2^{z-2}$ from $v$ to $w$ in $G$ or there is a path of length at 
most $2^{z-2}$ from $w$ to $v$ in $G$
\end{enumerate}
then $$m(G) \geq \frac{(9mn)^{\frac{1}{4}}}{20|E(G)|(z+1)\sqrt{2^{z}z!}}(\frac{n}{9m})^\frac{z}{4}$$
}
\begin{proof}
We obtain the $e$-invariant functions $\{g_e\}$ from Theorem \ref{basefunctionexistence} and Proposition \ref{constructingge}. 
Now by Corollary \ref{cuttingoffcostcorollary} and Theorem \ref{basefunctionexistence}, for all $e \in E(G)$ if $k + u = z$ 
then  $||s_{k,u,g_{e}}||^2 \leq 200(9mn)^{\frac{z-1}{2}}$. This implies that for all $e_1,e_2 \in E(G)$, 
$||s_{k,u,g_{e_2} - g_{e_1}}||^2 \leq 800(9mn)^{\frac{z-1}{2}}$ if $k+u = z$ and is $0$ otherwise.

By Theorem \ref{lowerboundtheorem},
$$m(G) \geq \frac{2}{|E(G)| - 1}{\left(\max_{e \in E(G) \setminus \{e_0\}}
\left\{2(z+1)\sum_{k,u}{\frac{2^{k}(k+u)!}{n^{k+u}}||s_{k,u,g_e - g_{e_0}}||^2}\right\}\right)}^{-\frac{1}{2}}$$
which implies that 
$$m(G) \geq \frac{(9mn)^{\frac{1}{4}}}{20|E(G)|(z+1)\sqrt{2^{z}z!}}(\frac{n}{9m})^\frac{z}{4}$$
as needed.
\end{proof}
\section{Upper bounds: Solving directed connectivity with parity}\label{upperbounds}
\subsection{The reversible pebble game for directed connectivity}
One tool for upper bounds is the reversible pebble game for directed connectivity. This pebble game was 
introduced by Bennet \cite{cbennet} to study time/space tradeoffs in computation. In this subsection we will explore 
what upper bounds can be proven using just this reversible pebble game. In the next subsection, we will use it as 
a component in a more general lower bound.
\begin{definition}
In the reversible pebble game for directed connectivity on an input graph $G$, we start with a pebble on $s$ and 
use only the following type of move
\begin{enumerate}
\item If there is a pebble on a vertex $v$ and an edge from $v$ to $w$ in $G$ then we may add or remove a pebble from $w$.
\end{enumerate}
We win the reversible pebble game for directed connectivity if we put a pebble on $t$.
\end{definition}
\begin{proposition}
We can win the reversible pebble game for $G$ if and only if there is a path from $s$ to $t$ in $G$.
\end{proposition}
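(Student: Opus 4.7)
The plan is to prove both directions by straightforward induction, using the definition of a legal move in the pebble game.

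For the forward direction, I would prove by induction on the number of moves played so far that every vertex which carries a pebble at any point during the game is reachable from $s$ in $G$. The base case is immediate, since initially only $s$ carries a pebble and $s$ is trivially reachable from itself. For the inductive step, the only way a new pebble can appear on a vertex $w$ is via the single allowed move, which requires a pebble already sitting on some $v$ with $v \to w \in E(G)$. By the inductive hypothesis, $s$ has a directed path to $v$, and appending the edge $v \to w$ gives a directed path from $s$ to $w$. Since winning means placing a pebble on $t$, this forces the existence of an $s$-$t$ path.

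For the backward direction, suppose there is a directed path $s = v_0 \to v_1 \to \cdots \to v_k = t$ in $G$. I would give an explicit winning strategy: place pebbles on $v_1, v_2, \ldots, v_k$ in order. The move putting a pebble on $v_{i+1}$ is legal because at that point there is a pebble on $v_i$ and $v_i \to v_{i+1}$ is an edge of $G$. After $k$ moves, a pebble sits on $v_k = t$, so we win. (One need not worry about the ``removal'' half of the move rule for this direction, since we only add pebbles.)

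The main potential subtlety, if any, is just making sure the induction in the forward direction correctly handles the ``or remove'' clause of the move rule: removing pebbles never creates a new pebbled vertex, so it cannot break the invariant that every pebbled vertex is reachable from $s$. That observation makes the induction go through cleanly, and no further calculation is required.
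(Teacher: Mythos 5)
Your proof is correct. The paper states this proposition without proof (treating it as immediate from the definitions), and your argument — the invariant that every pebbled vertex is reachable from $s$, maintained under both the add and remove halves of the move rule, combined with the explicit walking strategy for the converse — is exactly the natural justification one would supply.
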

\begin{definition}
Given a set $I$ of input graphs each of which contains a path from $s$ to $t$, define $r(I,k)$ to be the 
smallest size of a set $S$ of states of the reversible pebble game each of which has at most $k$ pebbles on vertices 
in $\Vmst$ such that for any input graph $G$ in $I$, it is possible to win the reversible pebble game on $G$ while only passing 
through game states in $S$ when going from the starting state to a winning state (the starting state and winning state do not 
need to be included in $S$). If there is no such set $S$ then define $r(I,k) = \infty$.
\end{definition}
\begin{proposition}\label{pebblegameupperbounds}
For all $k$, $m(G) \leq r(\{\sigma(G): \sigma \in S_{\Vmst}\},k)$
\end{proposition}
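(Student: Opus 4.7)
\smallskip

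The plan is to turn a reversible-pebble-game strategy into a sound monotone switching network of the same size. Let $S$ be a set of pebble configurations, each using at most $k$ pebbles on $\Vmst$, realizing $r(\{\sigma(G): \sigma \in S_{\Vmst}\},k)$. I will define the switching network $G'$ to have vertex set $S \cup \{s',t'\}$, where $s'$ is identified with the starting configuration (pebble on $s$ only) and $t'$ with the winning configuration (pebble on $t$). For every pair of configurations $X, Y \in S \cup \{s',t'\}$ whose symmetric difference is a single vertex $w$, and for every vertex $v$ such that $v$ is pebbled in $X \cap Y$ and $v \to w$ is a valid directed edge over $V(G)$, I will put an edge labeled $v \to w$ between the corresponding vertices of $G'$. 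This is exactly the set of moves (and their reversals) of the reversible pebble game, encoded as labeled undirected edges. The size bound $|V(G')\setminus\{s',t'\}| \leq |S| = r(\{\sigma(G): \sigma \in S_{\Vmst}\},k)$ is immediate.

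For completeness, I will use the given strategy. For each $\sigma \in S_{\Vmst}$, by hypothesis there is a sequence of pebble game moves on $\sigma(G)$, passing only through states in $S$, that starts at the initial state and ends with a pebble on $t$. Each move in this sequence uses some edge $v \to w \in E(\sigma(G))$ and toggles a pebble on $w$, so it corresponds exactly to traversing an edge of $G'$ whose label $v \to w$ lies in $E(\sigma(G))$. Concatenating these traversals yields a walk in $G'$ from $s'$ to $t'$ with all edge labels consistent with $\sigma(G)$, so $G'$ accepts $\sigma(G)$.

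For soundness, I will argue by a reachability invariant. Suppose $G'$ accepts some input graph $H$ via a path $P'$ from $s'$ to $t'$ with every label in $E(H)$. I claim that for every vertex $v'$ of $P'$, every pebble in the configuration corresponding to $v'$ sits on a vertex reachable from $s$ in $H$. The base case holds because $s'$ has only a pebble on $s$. For the inductive step, if the traversed edge has label $v \to w \in E(H)$, then the configuration changes by toggling a pebble on $w$, and by construction $v$ is pebbled on both sides; by induction $v$ is reachable from $s$ in $H$, so since $v \to w \in E(H)$, also $w$ is reachable from $s$ in $H$; removing a pebble trivially preserves the invariant. Applying the invariant at $t'$ shows that $t$ is reachable from $s$ in $H$, as required.

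The only subtle point is checking that the correspondence between pebble moves and labeled edges is compatible with the reversible nature of the game: a move that adds a pebble on $w$ using a neighboring pebbled vertex $v$ and the reverse move that removes that pebble share the same justifying edge $v \to w$, so they collapse to a single labeled undirected edge of $G'$, and the soundness invariant above handles both directions uniformly. No new combinatorial obstacle arises; the proposition is essentially a direct translation between the two models once this dictionary is set up.
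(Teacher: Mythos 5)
Your proposal is correct and follows exactly the construction the paper sketches (each vertex of $G'$ is a pebble configuration in $S$, and labeled edges encode pebble moves); the paper itself only gives a proof sketch and defers details to Section 3 of \cite{potechin}, whereas you spell out the completeness argument (simulate the winning pebbling sequence step by step) and the soundness argument (maintain the invariant that every pebbled vertex is reachable from $s$ in the accepted input graph). The one point to tidy is the identification of $t'$: there is not a single ``winning configuration,'' so rather than asking that the symmetric difference with $t'$ be a single vertex, it is cleaner to say that for each $X \in S$ and each $v \in X \cup \{s\}$ you add an edge labeled $v \to t$ from $X$ to $t'$; your reachability invariant then closes soundness exactly as you argue.
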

\begin{proof}[Proof sketch]
The idea is to create a switching network where each vertex $v'$ corresponds to a state in $S$ and each edge $e'$ corresponds 
to a move between states. For a more thorough explanation, see Section 3 of \cite{potechin}.
\end{proof}
As noted below, Proposition \ref{pebblegameupperbounds} gives an upper bound corresponding to Savitch's algorithm 
for all input graphs $G$.
\begin{definition}
Given an input graph $G$ containing a path from $s$ to $t$, let the reversible pebbling number $p(G)$ be the minimum number 
of vertices in $\Vmst$ which must be pebbled at one time in order to win the pebbling game on $G$.
\end{definition}
\begin{theorem}\label{pebblingnumber}
If $G$ is an input graph containing a path from $s$ to $t$ of length $l$ then $p(G) \leq \lceil{\lg{l}}\rceil$
\end{theorem}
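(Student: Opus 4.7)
The plan is to prove the theorem by exhibiting an explicit recursive (divide-and-conquer) pebbling strategy along the length-$l$ path, following the classical Bennett construction. Write the path as $s = u_0 \to u_1 \to \cdots \to u_l = t$ and, for $l \geq 1$, let $R(l)$ denote the minimum $k$ such that, starting from the configuration in which only $u_0$ is pebbled, some sequence of legal moves ends in the configuration with exactly $\{u_0, u_l\}$ pebbled while using at most $k$ pebbles on $\{u_1, \ldots, u_l\}$ at every intermediate step. I will prove by strong induction on $l$ that $R(l) \leq \lceil \lg l \rceil + 1$, strengthened by the auxiliary claim that this peak can be arranged to occur only at moments when $u_l$ itself is pebbled.

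For the base case $l = 1$, the single move places a pebble on $u_1$, so $R(1) = 1$ and the peak is attained with $u_1$ pebbled. For $l \geq 2$, set $m = \lceil l/2 \rceil$ and split the pebbling into three phases. Phase (A) clean-pebbles $u_m$ from $u_0$ on the length-$m$ sub-path via the inductive strategy; phase (B) treats the now-pebbled $u_m$ as the new start and clean-pebbles $u_l$ from $u_m$ on the length-$(l-m)$ sub-path; phase (C) runs phase (A) in reverse, which is legal because every pebbling move is symmetric in ``add'' vs.\ ``remove'', and ends with only $\{u_0, u_l\}$ pebbled. Counting pebbles on $\{u_1, \ldots, u_l\}$: phase (A) never exceeds $R(m)$; phase (B) never exceeds $R(l - m) + 1$ because $u_m$ stays pebbled throughout; phase (C) never exceeds $R(m) + 1$ because $u_l$ stays pebbled throughout. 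Using $R(m) \geq R(l - m)$ since $m \geq l - m$, the overall maximum is $R(m) + 1 \leq (\lceil \lg m \rceil + 1) + 1 = \lceil \lg l \rceil + 1$.

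To close the strengthened claim, note that the overall peak $R(m) + 1$ is realized only inside phases (B) or (C), never in phase (A) (whose peak is merely $R(m)$). By the inductive strengthening applied to the inner sub-pebbling, the inner peak inside phase (B) coincides with its target $u_l$ being pebbled, while the inner peak of the reversed phase (A) inside phase (C) coincides with its target $u_m$ being pebbled, and the outer $u_l$ is additionally pebbled throughout phase (C); so the outer peak always coincides with $u_l$ being pebbled. Applied to the full path with $u_l = t \notin \Vmst$, the pebble on $t$ does not contribute to $p(G)$, and at any moment the count on $\Vmst$ is at most $R(l) - 1 \leq \lceil \lg l \rceil$: either $t$ is pebbled and one pebble is subtracted, or $t$ is unpebbled and the total count is already strictly below $R(l)$ by the strengthened invariant.

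The main conceptual obstacle is precisely this last strengthening. The divide-and-conquer naturally yields the weaker bound $R(l) \leq \lceil \lg l \rceil + 1$, and some care is needed to verify that the peak can be synchronized with the moment $t$ is pebbled in order to shave off the final $+1$ and match the claimed bound; everything else is straightforward accounting over the three phases.
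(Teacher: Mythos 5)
Your proof is correct and is the standard Bennett/Savitch divide-and-conquer argument. Note that the paper itself does not give a self-contained proof here: it simply observes the idea is Savitch's and defers to Bennett's paper and to \cite{potechin}, so your write-up is supplying an argument the paper treats as known. The one step that deserves a word of justification is the line ``Using $R(m) \geq R(l-m)$ since $m \geq l-m$''; monotonicity of $R$ in the path length, while plausible, is not something you have established, and it is cleaner to bypass it: by the inductive hypothesis applied to both halves, the phase-(B) peak is at most $R(l-m)+1 \leq (\lceil\lg(l-m)\rceil+1)+1 \leq (\lceil\lg m\rceil+1)+1 = \lceil\lg l\rceil+1$, using only monotonicity of $\lceil\lg\cdot\rceil$ together with $l-m = \lfloor l/2\rfloor \leq \lceil l/2\rceil = m$. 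With that substitution the three-phase accounting is airtight. Your strengthened invariant --- that the peak configuration can always be arranged to coincide with the target vertex of the subproblem being pebbled --- is exactly the device needed to convert the bound of $\lceil\lg l\rceil+1$ pebbles on $\{u_1,\ldots,u_l\}$ into $\lceil\lg l\rceil$ pebbles on $\Vmst$ (since $u_l=t\notin\Vmst$), and you verify it correctly across all three phases, so the final bound $p(G)\leq\lceil\lg l\rceil$ follows.
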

\begin{proof}
The idea is based on Savitch's algorithm and this result was implicitly noted in Bennet's paper \cite{cbennet} introducing the 
reversible pebble game. We also gave a proof in \cite{potechin}.
\end{proof}
\begin{corollary}
For all input graphs $G$ containing a path from $s$ to $t$, \\
$m(G) \leq r(\{\sigma(G): \sigma \in S_{\Vmst}\},p(G)) \leq n^{\lceil{\lg{l}}\rceil}$
\end{corollary}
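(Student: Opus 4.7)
The plan is to prove both inequalities by direct assembly of the machinery already built: the first is an immediate instance of Proposition \ref{pebblegameupperbounds}, and the second combines Theorem \ref{pebblingnumber} with a crude counting of reversible pebble configurations. Since each ingredient is in place, the argument should be quite short.

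For the first inequality, I would simply specialize Proposition \ref{pebblegameupperbounds} to $k = p(G)$. The only thing that needs a brief justification is that the choice $k = p(G)$ actually produces a finite $r$-value, i.e.\ that a set $S$ of pebble-states with at most $p(G)$ pebbles on $\Vmst$ suffices to win on every $\sigma(G)$ simultaneously. This is because the reversible pebbling number is a graph isomorphism invariant, so $p(\sigma(G)) = p(G)$ for every $\sigma \in S_{\Vmst}$; hence each $\sigma(G)$ can be pebbled using at most $p(G)$ pebbles at any time.

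For the second inequality, I would take $S$ to be the set of \emph{all} reversible pebble configurations that place at most $p(G)$ pebbles on vertices of $\Vmst$ (the pebble on $s$ is always present and not counted). Since $p(\sigma(G)) = p(G)$, for each $\sigma$ there is a winning play for $\sigma(G)$ all of whose intermediate states lie in $S$, so $r(\{\sigma(G) : \sigma \in S_{\Vmst}\}, p(G)) \leq |S|$. Finally, by Theorem \ref{pebblingnumber} we have $p(G) \leq \lceil \lg l \rceil$, and the number of subsets of $\Vmst$ of size at most $p(G)$ is bounded by $n^{p(G)} \leq n^{\lceil \lg l \rceil}$.

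The main obstacle, such as it is, is the tiny arithmetic step $\sum_{k=0}^{p(G)} \binom{n}{k} \leq n^{\lceil \lg l \rceil}$, which is harmless for the parameter regime of interest but should be stated carefully (for instance by observing that $\binom{n}{k} \leq n^k / k!$ and that the geometric-type sum is dominated by its top term once $n$ is large compared to $p(G)$, or by simply bounding each summand by $n^{p(G)}$ and absorbing the factor of $p(G)+1$ into $n$). Beyond this, the proof is a clean composition of Proposition \ref{pebblegameupperbounds} and Theorem \ref{pebblingnumber}, with no new ideas needed.
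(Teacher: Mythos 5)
Your proposal is correct and follows exactly the route the paper intends (the corollary is stated without explicit proof, but the intended argument is precisely the composition of Proposition \ref{pebblegameupperbounds} at $k=p(G)$ with Theorem \ref{pebblingnumber}). The one small cleanup for the counting step: since the starting state (no pebbles on $\Vmst$) need not be in $S$, the relevant sum is $\sum_{j=1}^{p(G)}\binom{n}{j}\leq n^{p(G)}$, which avoids the off-by-one at $p(G)=1$ that would arise from including the $j=0$ term.
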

We end this subsection by noting that there is a simple graph $G$ where Proposition \ref{pebblegameupperbounds} is 
sufficient to give an even better upper bound than $n^{O(\lg{l})}$ where $l$ is the shortest path from $s$ to $t$ in $G$. 
We analyze this graph $G$ as a warm up for our more general lower bounds.
\begin{theorem}\label{simpleprobabilisticupperbound}
Let $G$ be an input graph with vertex set $V(G) = \{s,t,v_1, \cdots, v_n\}$ and edge set 
$$E(G) = \{s \to v_1,v_k \to t\} \cup \{v_i \to v_{i+1}: i \in [1,k-1]\} \cup \{s \to v_i: i \in [k+1,n]\}$$
Then $m(G) \leq r(\{\sigma(G): \sigma \in S_{\Vmst}\},n) \leq k!kn\lg{n}$
\end{theorem}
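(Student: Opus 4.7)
The plan is to invoke Proposition \ref{pebblegameupperbounds} and exhibit a set $S$ of reversible pebble-game states with $|S| \leq k!kn\lg n$ such that for every $\sigma \in S_{\Vmst}$ the pebble game on $\sigma(G)$ has a winning play whose intermediate states all lie in $S$. The structural observation driving everything is that in $\sigma(G)$ the out-neighbors of $s$ inside $\Vmst$ are exactly $\{\sigma(v_1)\} \cup \{\sigma(v_i) : i > k\}$, so every vertex of $\Vmst$ except the $k-1$ path-interior vertices $\sigma(v_2), \ldots, \sigma(v_k)$ can be freely added to or removed from a pebble configuration using the edge from $s$; each interior vertex $\sigma(v_i)$ only needs $\sigma(v_{i-1})$ to be pebbled.

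Given any permutation $\pi$ of $\Vmst$, I would consider the greedy ``initial segment'' play that starts from the empty state, successively pebbles $\pi(1), \pi(2), \ldots, \pi(n)$, and finally pebbles $t$ from $\sigma(v_k)$. Call $\pi$ \emph{compatible} with $\sigma$ if $\sigma(v_1), \sigma(v_2), \ldots, \sigma(v_k)$ occur in this relative order inside $\pi$. Under compatibility the play is legal: when adding $\pi(j)$, either $\pi(j) \notin \{\sigma(v_2), \ldots, \sigma(v_k)\}$ and the edge $s \to \pi(j)$ is available, or $\pi(j) = \sigma(v_i)$ for some $i \in [2,k]$, in which case compatibility guarantees that $\sigma(v_{i-1})$ appears earlier in $\pi$ and is already pebbled, making the edge $\sigma(v_{i-1}) \to \sigma(v_i)$ available.

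It therefore suffices to construct a family $\Pi$ of permutations of $\Vmst$ with $|\Pi| \leq k!k\lg n$ such that every $\sigma$ admits a compatible $\pi \in \Pi$. Compatibility depends only on the ordered $k$-tuple $(\sigma(v_1), \ldots, \sigma(v_k))$, of which there are at most $n^k$, and a uniformly random $\pi$ places any fixed ordered $k$-tuple in the prescribed order with probability exactly $1/k!$. Taking $|\Pi| = k!k\lg n$ independent uniform random permutations and applying the estimate $(1-1/k!)^{|\Pi|} \leq e^{-|\Pi|/k!} = e^{-k\lg n} < n^{-k}$ together with a union bound shows that the expected number of uncovered tuples is less than $1$, so such a $\Pi$ exists. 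Setting $S = \bigcup_{\pi \in \Pi} \{\{\pi(1), \ldots, \pi(j)\} : 1 \leq j \leq n\}$ gives $|S| \leq n|\Pi| \leq k!kn\lg n$, and for every $\sigma$ the greedy play along a compatible $\pi \in \Pi$ wins $\sigma(G)$ while visiting only intermediate states in $S$. The main potential pitfall is that naively one would expect to need $\Omega(n^k)$ states to accommodate all path placements; the saving comes from noticing that the free edges $s \to \sigma(v_i)$ for $i > k$ let a single compatible permutation handle every $\sigma$ that respects its path order, so only the relative ordering of the $k$ path vertices inside $\pi$ matters.
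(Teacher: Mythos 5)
Your proof is correct and follows essentially the same construction as the paper: both take the pebble states to be initial segments of a permutation of $\Vmst$, observe that a permutation works for $\sigma(G)$ exactly when $\sigma(v_1),\ldots,\sigma(v_k)$ appear in order (probability $1/k!$ for a random permutation), and conclude that roughly $k!k\lg n$ permutations suffice. The only difference is in the last counting step: where the paper derandomizes greedily (each permutation reduces the number of uncovered inputs by a factor of at least $\frac{k!}{k!-1}$, giving $\log_{k!/(k!-1)}n^k \leq k!k\lg n$ rounds), you instead take $k!k\lg n$ independent random permutations and apply a union bound, using $(1-1/k!)^{k!k\lg n} \leq e^{-k\lg n} < n^{-k}$ to kill all at most $n^k$ ordered tuples $(\sigma(v_1),\ldots,\sigma(v_k))$ simultaneously. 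Both are standard probabilistic-method closings and yield the identical bound $|S| \leq k!kn\lg n$, so this is a stylistic rather than substantive departure.
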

\begin{proof}
Randomly choose a permutation $w_1,\cdots,w_n$ of the vertices $v_1,\cdots,v_n$ and then take the set of states 
$\{\{w_1, \cdots, w_j\}: j \in [1,n]\}$ where the state $V$ corresponds to having pebbles on the vertices in $V \cup \{s\}$.
Using this set of states (plus the starting state and winning states), we can win the pebble game on $\sigma(G)$ if and only 
if $\sigma(v_1)$, $\sigma(v_2), \cdots, \sigma(v_k)$ are in order in $w_1,\cdots,w_n$. The probability of this happening 
for a random permutation is $\frac{1}{k!}$. Thus, with this set of states we can win for $\frac{1}{k!}$ of the input graphs we 
are looking at.

If we start with an empty set of states $S$ and then do this repeatedly, adding the new states to $S$ each time, then 
on average each iteration will reduce the number of input graphs for which we cannot win using only the states in $S$ 
(plus the starting state and winning states) by a factor of $\frac{k!}{k!-1}$. This implies that if we instead greedily 
choose the permutations to reduce the number of input graphs for which we cannot win using only the states in $S$ 
(plus the starting state and winning states) by as much as possible, for each iteration we will get a reduction by a 
factor of at least $\frac{k!}{k!-1}$.

There are at most $n^k$ possible input graphs so to eliminate all of them we need at most \\
$\log_{(\frac{k!}{k!-1})}{n^k} \leq k!k\lg{n}$ iterations. Each iteration adds $n$ states to $S$ so the total 
size of $S$ is at most $k!kn\lg{n}$, as needed.
\end{proof}
\begin{remark}
If $k << \lg{n}$ then Theorem \ref{simpleprobabilisticupperbound} gives a better upper bound than $n^{\lceil{\lg{k+1}}\rceil}$
\end{remark}
\subsection{A general upper bound}
In this subsection, we state and prove our general upper bound on $m(G)$.
\begin{definition}
For a given input graph $G$, call a vertex $v \in \Vmst$ a lollipop if $s \to v \in E(G)$ or $v \to t \in E(G)$
\end{definition}
\begin{theorem}\label{generalupperbound}
Let $G$ be an input graph which contains a subgraph $G_0$ such that 
\begin{enumerate}
\item $G_0$ contains a path from $s$ to $t$.
\item All vertices in $V(G) \setminus V(G_0)$ are lollipops.
\item Letting $k = |V(G_0) \setminus \{s,t\}|$, $k \geq 2$ and $n = |\Vmst|$ is divisible by $k$
\end{enumerate}
then for any $z \in [1,k]$, there is a set of sets of sets of vertices 
$\{\mathcal{V}_x\}$ such that 
\begin{enumerate}
\item Each $\mathcal{V}_x = \{V_{x1}, \cdots, V_{xk}\}$ partitions the vertices of $\Vmst$ into $k$ sets 
of $\frac{n}{k}$ vertices.
\item $|\{\mathcal{V}_x\}| \leq 2{(4k)^z}k\lg{n}$
\item $m(G) \leq 2z{2^z}r(G_0,z)|\{\mathcal{V}_x\}|n\lg{n}(2|\{\mathcal{V}_x\}| + 4 + zn)$
\end{enumerate}
\end{theorem}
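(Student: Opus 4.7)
The plan is to use the reversible pebble game on $G_0$ as the core simulation and to handle the uncertainty about which vertices of $\Vmst$ correspond to $V(G_0)\setminus\{s,t\}$ by precomputing a small collection of partitions $\{\mathcal{V}_x\}$. The lollipop vertices of $V(G)\setminus V(G_0)$ will play no essential role in the core argument, since for any lollipop $v$ the edge $s\to\sigma(v)$ (or $\sigma(v)\to t$) is present in $\sigma(G)$ and can be traversed directly by a switching-network edge with that label, without any guessing.

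First I would build the partitions $\{\mathcal{V}_x\}$ by a greedy argument modeled on Theorem \ref{simpleprobabilisticupperbound}. For a uniformly random equal partition of $\Vmst$ into $k$ parts and any fixed tuple of $z$ distinct vertices of $\Vmst$ together with an intended assignment to $z$ of the parts, a direct hypergeometric calculation shows the probability of a correct placement is at least $1/(4k)^z$. Iterating greedily, each step reduces the fraction of uncovered configurations by a factor of $1-1/(4k)^z$, so after $2(4k)^z k\lg n$ iterations every configuration of interest is covered. I would arrange that the covering guarantees the following property: for every $\sigma\in S_{\Vmst}$ and every $z$-subset $P$ of $V(G_0)\setminus\{s,t\}$, some partition $\mathcal{V}_x$ admits an injection $f:P\to[k]$ with $\sigma(v)\in V_{xf(v)}$ for every $v\in P$.

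Next I would construct the switching network. Its non-trivial vertices are organized into clusters indexed by a reversible-pebble-game state on $G_0$ with at most $z$ pebbles, a partition $\mathcal{V}_x$, and an instantiation assigning each currently pebbled vertex of $G_0$ to a specific vertex of $\Vmst$ in its intended part of $\mathcal{V}_x$. A pebble-game move that uses an edge $u\to w$ of $G_0$ corresponds to a switching-network edge labeled $\sigma'(u)\to\sigma'(w)$, where $\sigma'$ is read off from the instantiation. Transitions that hold the pebble configuration fixed while changing the active partition from $\mathcal{V}_x$ to $\mathcal{V}_{x'}$ give rise to the $2|\{\mathcal{V}_x\}|$ term, the factor $2z\cdot 2^z r(G_0,z)$ accounts for the pebble-game states together with subsets of currently pebbled vertices that are being re-identified, and the $n\lg n$ and $zn$ factors come from the choice of specific vertex representing each pebble within its part and from the edges needed to walk through these choices.

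The hardest step will be the delicate balance between the probabilistic covering and the precise switching-network cost. The covering step wants to cover only the $z$ vertices currently on the board (so that we can afford $(4k)^z$ rather than $k^k$ iterations), but the simulation must produce a valid path for every $\sigma$ and every step of the pebble game. Threading this needle requires allowing the switching network to change which partition it uses at each pebble-game state, pebble by pebble, without accidentally accepting a non-reachable input; the bookkeeping needed to implement this cleanly while landing exactly on the factor $(2|\{\mathcal{V}_x\}|+4+zn)$ is the main technical obstacle.
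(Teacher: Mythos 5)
Your covering argument for the partitions $\{\mathcal{V}_x\}$ is essentially the paper's: a hypergeometric calculation giving match probability at least $(4k)^{-z}$, followed by the greedy boosting from Theorem \ref{simpleprobabilisticupperbound}. That part is fine. The gap is in the switching-network construction, and it is a substantive one, not ``bookkeeping.''

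You propose clusters indexed by a pebble state, a partition $\mathcal{V}_x$, and ``an instantiation assigning each currently pebbled vertex of $G_0$ to a specific vertex of $\Vmst$ in its intended part.'' With up to $z$ pebbles and $n/k$ choices of representative per part, such an index ranges over $(n/k)^z$ values, so this construction gives $m(G)$ on the order of $r(G_0,z)|\{\mathcal{V}_x\}|(n/k)^z$, i.e.\ $n^{\Omega(z)}$ — not the claimed $n^{O(1)}k^{O(z)}$. Nothing in your description avoids tracking all $z$ identities simultaneously, and the pebble game requires exactly that information to know which edge of $\sigma(G)$ to label the next switching-network edge with. The $zn$, $n\lg n$, and $2^z$ factors in the target bound cannot arise from a construction that literally remembers $z$ representatives.

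The missing idea is the parity encoding, which is the entire point of Section~3 of the paper. Instead of remembering the exact representative $\sigma(v_{i_m})$ for a pebble, the switching network holds a function $K_F$ where each element of $F$ is $\pm e_{V_{xi_m}}$: a single \emph{parity bit} for the whole part $V_{xi_m}$, recording the parity of $|V_{xi_m}\cap L(C)|$. This is meaningful precisely because every non-root vertex of $V_{xi_m}$ is a lollipop, so for each such vertex $u$ the edge $s\to u$ or $u\to t$ lets the network flip $u$'s contribution at will. The reduction procedure (Lemma~\ref{reducinglemma}) then binary-searches within the part, using only $O(n\lg n)$ auxiliary functions, to temporarily ``decode'' the parity bit into the exact $e_{\{\sigma(v_{i_m})\}}$ when a pebble-game move must be simulated, and to re-encode afterwards. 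This is what caps the per-pebble cost at $O(n\lg n)$ additively, replaces the $(n/k)^z$ blowup with the factor $2^z$ (one bit per pebble), and makes the partition-switching transitions you allude to actually sound — one decodes, re-encodes in the new partition, and never stores more than two exact vertex identities at once. Without this mechanism your construction does not meet the stated bound, and soundness of the partition switch is also unjustified, since you have no way to verify that $\mathcal{V}_y$ and $\mathcal{V}_x$ agree on the current pebbles without decoding them.
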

Before proving this result, we need some preliminary results.
\begin{proposition}\label{functionstoswitchingprop}
Assume that we have a monotone switching network $G'$ and an assignment of a function $h_{v'}: \mathcal{C} \to \mathbb{R}$ to 
each vertex $v' \in V(G')$ such that 
\begin{enumerate}
\item $h_{s'}(C) = -1$ for all $C \in \mathcal{C}$
\item $h_{t'}(C) = 1$ for all $C \in \mathcal{C}$
\item If there is an edge with label $e$ between vertices $v'$ and $w'$ in $G'$ then 
$h_{w'}(C) = h_{v'}(C)$ for any cut $C$ which is not crossed by $e$.
\end{enumerate}
then $G'$ is sound.
\end{proposition}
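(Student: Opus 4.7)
The plan is to argue by contrapositive: if $G'$ accepts some input graph $G$ that has no directed path from $s$ to $t$, then we can exhibit a specific cut $C$ at which the assumed properties of the $h_{v'}$ force the impossible equality $h_{s'}(C)=h_{t'}(C)$, i.e.\ $-1 = 1$.

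So suppose $G$ is accepted by $G'$ and has no $s$-to-$t$ path. First I would pick the canonical reachability cut: let $L(C)$ be the set of vertices of $G$ reachable from $s$ (in particular $s \in L(C)$), and let $R(C)$ be its complement (which contains $t$ by assumption). By construction, no edge of $G$ crosses $C$, so $E(G) \subseteq E(G(C))$. Next, since $G$ is accepted, there exists a path $P'$ in $G'$ from $s'$ to $t'$ such that each label $\mu'(e')$ along $P'$ is consistent with $G$. Because $G'$ is monotone, every such label has the form $e$ for some $e \in E(G)$, and therefore does not cross $C$.

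Now I apply property~3 of the assignment along the path $P'$. For each consecutive pair of vertices $v',w'$ on $P'$ joined by an edge labeled $e \in E(G)$, since $e$ does not cross $C$, we obtain $h_{v'}(C) = h_{w'}(C)$. Chaining this equality along the entire path $P'$ from $s'$ to $t'$ yields $h_{s'}(C) = h_{t'}(C)$. But by properties~1 and~2 of the assignment, $h_{s'}(C) = -1$ and $h_{t'}(C) = 1$, a contradiction. Hence $G'$ accepts no input graph without an $s$-$t$ path, i.e.\ $G'$ is sound.

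The proof is essentially routine once one picks the right cut, so there is no serious obstacle; the only substantive point is recognizing that the reachability cut is the natural witness and that monotonicity of $G'$ is exactly what guarantees the labels along any accepting path are non-crossing for that cut. This mirrors the reachability-function viewpoint already used earlier in the paper (the vertex $v'$ as a function of $\mathcal{C}$), and in fact the assignment $v'(C) = -1$ or $1$ from the earlier subsection is a specific instance of the $h_{v'}$ in this proposition.
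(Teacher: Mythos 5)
Your proof is correct and is essentially the paper's argument written in contrapositive form: both chain property~3 along an accepting path $P'$ to conclude that, for a cut $C$ not crossed by any label on $P'$, one would get $h_{s'}(C)=h_{t'}(C)$, which is impossible. The only cosmetic difference is that the paper argues directly (``for every cut $C$ some label of $P'$ must cross it, hence every cut is crossed by an edge of $G$, hence $G$ has an $s$-$t$ path''), whereas you make the witness explicit by choosing the reachability cut $L(C)=\{\text{vertices reachable from }s\}$; this spells out the ``every cut crossed $\Rightarrow$ path exists'' step that the paper leaves implicit.
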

\begin{proof}
Consider a path $P'$ from $s'$ to $t'$ in $G'$ whose labels are all consistent with an input graph $G$. For 
any $C \in \mathcal{C}$, the value of $h_{v'}(C)$ can only change when we go along an edge whose label crosses $C$. 
$h_{t'}(C) \neq h_{s'}(C)$ for every $C$, so for every $C$ at least one of the edge labels of $P'$ must cross $C$. 
This implies that for every $C \in \mathcal{C}$ there is an edge in $G$ which crosses $C$, so $G$ must contain a path 
from $s$ to $t$, as needed.
\end{proof}
To use Proposition \ref{functionstoswitchingprop} we will carefully choose a set of functions $H$, create a vertex for 
each function $f$ in $H$ and then assign that function to the vertex. We will then add all allowed edges allowed by 
condition 3 of Proposition \ref{functionstoswitchingprop} to obtain our switching network $G'$.
\begin{definition}
If $f$ and $g$ are functions $f,g: \mathcal{C} \to \mathbb{R}$, we say that we can go from 
$f$ to $g$ using an edge $e$ if $(f-g)(C) = 0$ for any cut $C$ which is not crossed by $e$.
\end{definition}
\begin{definition}
If we have a set $H$ of functions from $\mathcal{C}$ to $\mathbb{R}$ then if $f,g \in H$ and $G$ is an input graph 
we say that we can go from $f$ to $g$ in $H$ on input $G$ if there is a sequence of functions $f_0, \cdots f_j$ in $H$ and edges 
$e_1, \cdots, e_{j-1} \in E(G)$ such that 
\begin{enumerate}
\item $f_0 = f$ and $f_j = g$
\item For every $i \in [1,j]$ we can go from $f_{i-1}$ to $f_i$ using the edge $e_i$. 
\end{enumerate}
\end{definition}
\begin{proposition}
If $I$ is a set of input graphs each of which contains a path from $s$ to $t$ and $H$ is a set of functions 
from $\mathcal{C}$ to $\mathbb{R}$ such that 
\begin{enumerate}
\item $s' = -1 \in H$ and $t' = 1 \in H$
\item For any $G \in I$ we can go from $s'$ to $t'$ in $H$ on input $G$.
\end{enumerate}
then $m(I) \leq |H \setminus \{s',t'\}|$
\end{proposition}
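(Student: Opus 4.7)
The plan is to construct the required monotone switching network $G'$ directly from $H$, with one vertex per function, so that the size bound becomes immediate and the assumed reachability in $H$ translates into actual paths in $G'$. Concretely, I would let $V(G') = H$, identify the distinguished vertex $s'$ with the constant function $-1 \in H$ and $t'$ with the constant function $1 \in H$, and assign each vertex $f \in V(G')$ the function $h_{f} = f$ itself. The size of the network, measured as $|V(G') \setminus \{s',t'\}|$, is then exactly $|H \setminus \{s',t'\}|$, matching the claimed bound.

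Next I would put in the edges. For every ordered pair $f, g \in H$ and every edge $e$ such that we can go from $f$ to $g$ using $e$ in the sense of the preceding definition (that is, $(f-g)(C) = 0$ for every cut $C$ not crossed by $e$), I add an edge in $G'$ between the vertices $f$ and $g$ with label $e$. By construction, whenever there is such an edge with label $e$ between $f$ and $g$, we have $h_f(C) = h_g(C)$ for all $C$ not crossed by $e$, so condition 3 of Proposition \ref{functionstoswitchingprop} is satisfied. Conditions 1 and 2 of that proposition hold trivially from the choice $h_{s'} \equiv -1$ and $h_{t'} \equiv 1$. Thus Proposition \ref{functionstoswitchingprop} applies and $G'$ is sound.

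It remains to check that $G'$ accepts every input graph $G \in I$. By hypothesis 2 of the proposition being proved, for each such $G$ there is a sequence $f_0 = s', f_1, \ldots, f_j = t'$ in $H$ and edges $e_1, \ldots, e_j \in E(G)$ so that we can go from $f_{i-1}$ to $f_i$ using $e_i$. By the construction of $G'$, each such step corresponds to an edge in $G'$ from $f_{i-1}$ to $f_i$ with label $e_i \in E(G)$, and concatenating these gives a path $P'$ from $s'$ to $t'$ in $G'$ whose labels all lie in $E(G)$, which is precisely the definition of $G'$ accepting $G$.

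There is no real obstacle here: the proposition is essentially a repackaging of Proposition \ref{functionstoswitchingprop} together with the definition of ``going from $f$ to $g$ in $H$ on input $G$''. The only mild care needed is to verify that the two constant functions $-1$ and $1$ really can be the distinguished vertices (which is forced by assumption 1 and the choice of assignment $h_{v'} = v'$), and to make sure that edges with every relevant label $e$ are included in $G'$ so that any transition licensed in $H$ is realized by a genuine labeled edge of $G'$; both are immediate from the construction.
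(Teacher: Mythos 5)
Your proof is correct and follows essentially the same construction as the paper: build a vertex for each function in $H$, assign that function, add all edges allowed by condition 3 of Proposition \ref{functionstoswitchingprop}, then conclude soundness from that proposition and completeness from hypothesis 2. You simply spell out the details the paper leaves implicit.
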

\begin{proof}
As described above, we create a vertex for each function $f$ in $H$ and then assign that function to the vertex. We then add all 
edges allowed by condition 3 of Proposition \ref{functionstoswitchingprop} to obtain our switching network 
$G'$. Now by Proposition \ref{functionstoswitchingprop} $G'$ is sound. Moreover, from the construction of $G'$ and the definitions, 
for any $G \in I$ $G'$ must accept $G$. The result now follows immediately.
\end{proof}
Now our challenge is to choose the set of functions $H$ and show what steps we can take on functions in $H$ on input $G$. 
We use the following type of function.
\begin{definition}
Given a mulit-set of functions $F$ from $\mathcal{C}$ to $\{-1,+1\}$, define 
$K_F = 1 - 2^{1-|F|}\prod_{f \in F}{(1-f)}$
\end{definition}
\begin{proposition}
For all $C \in \mathcal{C}$, $K_F(C) = 1$ if $f(C) = 1$ for any $f \in F$ and $-1$ otherwise.
\end{proposition}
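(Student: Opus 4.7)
The plan is a direct case analysis evaluating the product $\prod_{f \in F}(1-f)$ pointwise at an arbitrary cut $C \in \mathcal{C}$. Since each $f \in F$ is a function with values in $\{-1,+1\}$, the factor $(1-f)(C)$ takes only two possible values: it equals $0$ when $f(C) = 1$ and equals $2$ when $f(C) = -1$. Thus the product is either $0$ or a power of $2$, and the two cases will correspond exactly to the two values claimed for $K_F(C)$.

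First I would fix $C \in \mathcal{C}$ and split into two cases. In the first case, suppose there exists some $f \in F$ with $f(C) = 1$; then $(1-f)(C) = 0$, so $\prod_{f \in F}(1-f)(C) = 0$, and substituting into the definition gives $K_F(C) = 1 - 2^{1-|F|} \cdot 0 = 1$. In the second case, suppose $f(C) = -1$ for every $f \in F$; then every factor $(1-f)(C)$ equals $2$, so the product is $2^{|F|}$. Plugging into the definition gives $K_F(C) = 1 - 2^{1-|F|} \cdot 2^{|F|} = 1 - 2 = -1$. These two cases are exhaustive and together yield the stated formula.

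There is no real obstacle here: the proof is a one-line computation once the values of $1-f$ are tabulated. The only thing to be careful about is the bookkeeping of the exponent $1-|F|$ in the normalization constant, which is chosen precisely so that the product $2^{1-|F|} \cdot 2^{|F|}$ collapses to $2$, giving $-1$ in the second case. The proposition is essentially a restatement of the fact that the indicator of the event ``some $f \in F$ equals $1$ at $C$'' can be written as a product of the indicators of the complementary events, rescaled and signed so as to produce the $\{-1,+1\}$-valued output $K_F$.
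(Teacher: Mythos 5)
Your proof is correct and is exactly the intended (and only reasonable) argument; the paper leaves this proposition without a written proof precisely because it reduces to the two-line pointwise case analysis you carried out. Nothing to add.
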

\begin{remark}
We have the following correspondence between our functions and knowledge about a possible cut $C$ which might not 
be crossed by an edge in $E(G)$ (based on current knowledge about $G$).
\begin{enumerate}
\item $f = -1$ corresponds to knowing nothing about $C$.
\item $f = 1$ corresponds to knowing that $C$ cannot exist i.e. there is a path from $s$ to $t$ in $G$.
\item $f = e_{V}$ corresponds to knowing that an odd number of the vertices of $V$ are in $L(C)$ and the rest are in 
$R(C)$.
\item $f = -e_{V}$ corresponds to knowing that an even number of the vertices of $V$ are in $L(C)$ and the rest are in 
$R(C)$.
\item $K_F$ corresponds to all of the knowledge from all $f \in F$.
\end{enumerate}
\end{remark}
We can take the following steps on these functions.
\begin{proposition}
For any multi-set of functions $F$, $K_{F \cup \{-1\}} = K_F$
\end{proposition}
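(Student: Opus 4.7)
The plan is to verify this identity by direct substitution into the definition of $K_F$. Specifically, I will expand $K_{F \cup \{-1\}}$ using $K_{F'} = 1 - 2^{1-|F'|}\prod_{f \in F'}(1-f)$ with $F' = F \cup \{-1\}$, so $|F'| = |F|+1$, and then separate the factor corresponding to the constant function $-1$ from the remaining product over $F$.

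The key observation is that the factor contributed by the new function $-1 \in F'$ is $(1-(-1)) = 2$, which exactly cancels the extra factor of $\tfrac{1}{2}$ introduced by replacing $2^{1-|F|}$ with $2^{1-(|F|+1)} = 2^{-|F|}$. Concretely,
\[
K_{F \cup \{-1\}} = 1 - 2^{-|F|}(1-(-1))\prod_{f \in F}(1-f) = 1 - 2^{-|F|}\cdot 2\prod_{f \in F}(1-f) = 1 - 2^{1-|F|}\prod_{f \in F}(1-f) = K_F.
\]

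This is a one-line calculation with no real obstacle; the only thing to keep in mind is that $F$ is a multi-set, so adjoining another copy of $-1$ is a legitimate operation even if $-1$ already appears in $F$, and the computation above is independent of the multiplicities of the other functions in $F$. Intuitively, this proposition is the formal statement that the ``trivial'' piece of knowledge encoded by $f=-1$ (knowing nothing about the cut) does not change the aggregate knowledge $K_F$, which is the semantic reason the identity must hold.
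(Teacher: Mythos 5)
Your proof is correct: expanding the definition, separating the factor $(1-(-1)) = 2$ contributed by the adjoined copy of $-1$, and observing that it exactly cancels the drop in the normalization from $2^{1-|F|}$ to $2^{-|F|}$ is precisely the right calculation. The paper states this proposition without proof, treating it as immediate from the definition, and your one-line argument is the natural verification.
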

\begin{proposition}
For any multi-set of functions $F = \{f_1, \cdots f_j\}$, for any $i \in [1,j],$ 
\begin{enumerate}
\item Using the edge $s \to v$ we can multiply $f_i$ by $-e_{\{v\}}$. In other words, we can 
go from $K_{F_1}$ to $K_{F_2}$ using the edge $e = s \to v$ where 
$F_1 = F = \{f_1, \cdots f_j\}$ and $F_2 = \{f_1, \cdots, f_{i-1}, -{f_i}e_{\{v\}}, f_{i+1}, \cdots, f_j\}$ 
\item Using the edge $v \to t$ we can multiply $f_i$ by $e_{\{v\}}$. In other words, we can 
go from $K_{F_1}$ to $K_{F_2}$ using the edge $e = v \to t$ where 
$F_1 = F = \{f_1, \cdots f_j\}$ and $F_2 = \{f_1, \cdots, f_{i-1}, {f_i}e_{\{v\}}, f_{i+1}, \cdots, f_j\}$ 
\end{enumerate}
\end{proposition}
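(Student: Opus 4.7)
The plan is to unpack the definition of ``going from $f$ to $g$ using an edge $e$'' and check the required equality pointwise on cuts. By definition we need to show, in each of the two cases, that $(K_{F_1} - K_{F_2})(C) = 0$ for every cut $C \in \mathcal{C}$ which is not crossed by the relevant edge. Since the product formula $K_F(C) = 1 - 2^{1-|F|}\prod_{f \in F}(1-f(C))$ depends only on the values $\{f(C) : f \in F\}$, and $F_1$ and $F_2$ differ only in the $i$th slot, it suffices to show that the $i$th function in $F_1$ and the $i$th function in $F_2$ agree at every cut $C$ that is not crossed by the edge.

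For statement 1, consider the edge $e = s \to v$. Since $s \in L(C)$ always, $e$ crosses $C$ iff $v \in R(C)$, so the cuts we must analyze are exactly those with $v \in L(C)$. For such $C$ we have $e_{\{v\}}(C) = (-1)^{|\{v\} \cap L(C)|} = -1$, hence
\begin{equation*}
(-f_i e_{\{v\}})(C) = -f_i(C) \cdot (-1) = f_i(C),
\end{equation*}
so the replaced function agrees with $f_i$ at $C$. Plugging into the formula for $K_F$ gives $K_{F_1}(C) = K_{F_2}(C)$, as required.

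Statement 2 is symmetric. The edge $e = v \to t$ crosses $C$ iff $v \in L(C)$, so we consider cuts with $v \in R(C)$. For these, $e_{\{v\}}(C) = (-1)^0 = 1$, so
\begin{equation*}
(f_i e_{\{v\}})(C) = f_i(C) \cdot 1 = f_i(C),
\end{equation*}
and again $K_{F_1}(C) = K_{F_2}(C)$.

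There is no real obstacle here; the proof is purely a matter of tracing through the definitions of $e_{\{v\}}$, of which cuts an edge crosses, and of $K_F$. The only subtle point to keep in mind is that $F$ is a multiset so even if $-f_i e_{\{v\}}$ happens to equal some other $f_j$ already in $F$ the formula for $K_{F_2}$ still makes sense, and the pointwise check above is unaffected. Thus both claims follow with a short direct calculation.
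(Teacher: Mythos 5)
Your proof is correct and takes essentially the same approach as the paper: the paper factors $K_{F_2}-K_{F_1}$ and observes that the factor $\pm(1+e_{\{v\}})$ or $\pm(e_{\{v\}}-1)$ vanishes on the relevant cuts, while you argue pointwise that the $i$th functions in $F_1$ and $F_2$ agree at those cuts so the products are identical—these are the same computation organized slightly differently.
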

\begin{proof}
In the first case, $K_{F_2} - K_{F_1}$ is a multiple of $-{f_i}e_{\{v\}} - f_i$ which is a multiple of 
$-1-e_{\{v\}}$. Whenever $s \to v$ does not cross $C$, $v \in L(C)$ so $(-1-e_{\{v\}})(C) = 0$.
In the second case, $K_{F_2} - K_{F_1}$ is a multiple of ${f_i}e_{\{v\}} - f_i$ which is a multiple of 
$e_{\{v\}}-1$. Whenever $v \to t$ does not cross $C$, $v \in R(C)$ so $(e_{\{v\}}-1)(C) = 0$.
\end{proof}
\begin{proposition}
If $F$ is a multi-set of functions containing the function $e_{\{v\}}$ then we go from $K_F$ to $K_{F \cup \{e_{\{w\}}\}}$ using 
the edge $v \to w$.
\end{proposition}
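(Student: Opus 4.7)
The plan is to directly compute the difference $K_{F \cup \{e_{\{w\}}\}} - K_F$ using the defining formula $K_F = 1 - 2^{1-|F|}\prod_{f \in F}(1-f)$ and then show that the resulting expression vanishes pointwise on every cut $C$ that is not crossed by $v \to w$.

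First I would expand the difference. Writing $\Pi = \prod_{f \in F}(1-f)$, one gets
\begin{align*}
K_{F \cup \{e_{\{w\}}\}} - K_F
&= \left(1 - 2^{-|F|}\Pi(1-e_{\{w\}})\right) - \left(1 - 2^{1-|F|}\Pi\right) \\
&= 2^{-|F|}\Pi\left(2 - (1-e_{\{w\}})\right) \\
&= 2^{-|F|}\Pi(1+e_{\{w\}}).
\end{align*}
Since by hypothesis $e_{\{v\}} \in F$, the factor $(1-e_{\{v\}})$ appears in $\Pi$, so $K_{F \cup \{e_{\{w\}}\}} - K_F$ is a scalar multiple of $(1-e_{\{v\}})(1+e_{\{w\}})$ times a bounded function.

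Next I would verify that $(1-e_{\{v\}})(1+e_{\{w\}})(C) = 0$ whenever the edge $v \to w$ does not cross $C$. By the proposition preceding Corollary \ref{einvariancecorollary}, the factor $(e_{\{\}} - e_{\{v\}})(C)$ equals $0$ when $v \in R(C)$, and $(e_{\{\}} + e_{\{w\}})(C)$ equals $0$ when $w \in L(C)$. The edge $v \to w$ crosses $C$ precisely when $v \in L(C)$ and $w \in R(C)$, so its failure to cross $C$ means $v \in R(C)$ or $w \in L(C)$, and in either case one of the two factors is $0$.

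Combining these two observations, $(K_{F \cup \{e_{\{w\}}\}} - K_F)(C) = 0$ for every cut $C$ not crossed by $v \to w$, which is exactly the definition of being able to go from $K_F$ to $K_{F \cup \{e_{\{w\}}\}}$ using the edge $v \to w$. There is no real obstacle here; the argument is a short algebraic identity followed by the pointwise vanishing check already used for the $s \to v$ and $v \to t$ cases in the preceding proposition.
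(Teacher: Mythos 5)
Your proof is correct and follows essentially the same route as the paper: expand $K_{F\cup\{e_{\{w\}}\}}-K_F$ to expose the factor $(1-e_{\{v\}})(1+e_{\{w\}})$ (the paper writes this, up to sign, as $((1-e_{\{w\}})-2)(1-e_{\{v\}})$) and then note it vanishes on any cut $C$ with $v\in R(C)$ or $w\in L(C)$. The only cosmetic difference is that you carry out the cancellation with the explicit $2^{-|F|}\Pi$ prefactor, whereas the paper just asserts the difference "is a multiple of" that product.
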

\begin{proof}
$K_{F \cup \{e_{\{w\}}\}} - K_F$ is a multiple of $((1-e_{\{w\}})-2)(1 - e_{\{v\}}) = -(1 + e_{\{w\}})(1 - e_{\{v\}})$. 
Whenever $v \to w$ does not cross a cut $C$, $v \in R(C)$ or $w \in L(C)$. In either case, 
$-(1 + e_{\{w\}})(1 - e_{\{v\}})(C) = 0$.
\end{proof}
\begin{remark}
When all of our $f$ are of the form ${\pm}e_V$ for some $V \subseteq \Vmst$ (which will always be the case for our analysis), 
these steps correspond to the following logical deductions about a possible cut $C$ which is not crossed by any edge in $E(G)$
\begin{enumerate}
\item If we knew that an odd(even) number of the vertices of $V$ are in $L(C)$ and we see the edge $s \to v$, 
then an even(odd) number of the vertices in $V \Delta \{v\}$ are in $L(C)$ where $V \Delta \{v\} = V \cup \{v\}$ if $v \notin V$ and 
$V \Delta \{v\} = V \setminus \{v\}$ if $v \in V$.
\item If we knew that an odd(even) number of the vertices of $V$ are in $L(C)$ and we see the edge $v \to t$, 
then an odd(even) number of the vertices in $V \Delta \{v\}$ are in $L(C)$.
\item If we knew that $v$ is in $L(C)$ and we see the edge $v \to w$ then we know that $w \in L(C)$.
\end{enumerate}
\end{remark}
We now have the following intuition. For the reversible pebble game, we kept track of exactly which vertices 
we knew were in $L(C)$. Here we instead take sets of vertices $V \subseteq \Vmst$ where we really care about one vertex $v \in V$ and all 
other vertices of $V$ are lollipops. When we deduce that $v$ is in $L(C)$, we immediately encode this information 
in the parity of the number of vertices of $V$ which are in $L(C)$ (we can do this easily because all other vertices of V 
except $v$ are lollipops). This allows us to temporarily forget which vertex $v$ we cared about and just remember one bit 
of information. When we need the knowledge that $v$ is in $L(C)$ to make a further deduction, we can decode this information and 
then use it. In this way, we will only ever remember two actual vertices at a time. All of the other information will be parity bits 
and remembering which sets of vertices we know the parity for. With this intuition in mind we are now ready to prove 
Theorem \ref{generalupperbound}. We restate the theorem here for convenience.
\vskip.1in
\noindent
{\bf Theorem \ref{generalupperbound}.}
{\it
Let $G$ be an input graph which contains a subgraph $G_0$ such that 
\begin{enumerate}
\item $G_0$ contains a path from $s$ to $t$.
\item All vertices in $V(G) \setminus V(G_0)$ are lollipops.
\item Letting $k = |V(G_0) \setminus \{s,t\}|$, $k \geq 2$ and $n = |\Vmst|$ is divisible by $k$
\end{enumerate}
then for any $z \in [1,k]$, there is a set of sets of sets of vertices 
$\{\mathcal{V}_x\}$ such that 
\begin{enumerate}
\item Each $\mathcal{V}_x = \{V_{x1}, \cdots, V_{xk}\}$ partitions the vertices of $\Vmst$ into $k$ sets 
of $\frac{n}{k}$ vertices.
\item $|\{\mathcal{V}_x\}| \leq 2{(4k)^z}k\lg{n}$
\item $m(G) \leq 2z{2^z}r(G_0,z)|\{\mathcal{V}_x\}|n\lg{n}(2|\{\mathcal{V}_x\}| + 4 + zn)$
\end{enumerate}
}
\begin{proof}
Let $v_1, \cdots, v_k$ be the vertices of $V(G_0) \setminus \{s,t\}$ 
\begin{definition}
Given a permutation $\sigma \in S_{\Vmst}$, we say that a set $\mathcal{V} = \{V_1, \cdots V_k\}$ of disjoint sets of vertices 
matches the state in the reversible pebbling game on $G_0$ with pebbles on vertices $s,v_{i_1}, \cdots, v_{i_j}$ if 
for all $m \in [1,j]$, $V_{i_m} \cap \sigma(V(G_0) \setminus \{s,t\}) = \{\sigma(v_{i_m})\}$.
\end{definition}
\begin{definition}
If we have an input graph $\sigma(G)$, letting $W = \sigma(V(G_0) \setminus \{s,t\})$, if $V$ is a set of vertices 
which has exactly one vertex in common with $W$ then 
\begin{enumerate}
\item We define the root of $V$ to be the vertex $v$ such that $V \cap W = \{v\}$.
\item We define the parity $p(V)$ to be $p(V) = (-1)^{|v_2 \in V \setminus \{v\}: s \to v_2 \in E(G)|}$.
\end{enumerate}
\end{definition}
\begin{lemma}
Assume that we have a set $\{\mathcal{V}_x\} = \{\{V_{x1}, \cdots V_{xk}\}\}$ of sets of disjoint 
sets of vertices, a set of functions $H$, and a set 
of states $S$ of the reversible pebbling game such that it is possible to win the reversible pebbling game on $G_0$ using 
only states in $S$ (plus the starting state and a winning state). 

If we have that whenever $S$ contains the state with pebbles on vertices $s,v_{i_1}, \cdots, v_{i_j}$, 
\begin{enumerate}
\item For all $x$ and $\pm$ signs, $K_{\{{\pm}e_{V_{xi_1}}, \cdots, {\pm}e_{V_{xi_j}}\}} \in H$
\item For all $\sigma \in S_{\Vmst}$ there is an $x$ such that given $\sigma$, $\mathcal{V}_x$ matches this state. Furthermore, for 
all $x$ such that $\mathcal{V}_x$ matches this state,
\begin{enumerate}
\item If for some $m$ there is an edge from a vertex in $\{v_{i_{m_2}}: m_2 \neq m\} \cup \{s\}$ to $v_{i_{m}}$ in $G_0$ and $S$ contains 
the state with pebbles on vertices $\{s,v_{i_1}, \cdots, v_{i_j}\} \setminus \{v_{i_m}\}$ then it is possible 
to go from $K_{\{p(V_{xi_1})e_{V_{xi_1}}, \cdots, p(V_{xi_j})e_{V_{xi_j}}\}}$ to 
$K_{\{p(V_{xi_1})e_{V_{xi_1}}, \cdots, p(V_{xi_j})e_{V_{xi_j}}\} \setminus \{p(V_{xi_{m}})e_{V_{xi_{m}}}\}}$ in $H$ on input 
$\sigma(G)$.
\item If there is an edge from $v_{i_{m}}$ to $t$ in $G_0$ then it is possible to go from 
$K_{\{p(V_{xi_1})e_{V_{xi_1}}, \cdots, p(V_{xi_j})e_{V_{xi_j}}\}}$ to $t' = 1$ in $H$ on input $\sigma(G)$
\end{enumerate}
\item For all $\sigma \in S_{\Vmst}$, if given $\sigma$ both $\mathcal{V}_x$ and $\mathcal{V}_y$ match this state then 
it is possible to go from 
$K_{\{p(V_{xi_1})e_{V_{xi_1}}, \cdots, p(V_{xi_j})e_{V_{xi_j}}\}}$ to 
$K_{\{p(V_{yi_1})e_{V_{yi_1}}, \cdots, p(V_{yi_j})e_{V_{yi_j}}\}}$ in $H$ on input $\sigma(G)$.
\end{enumerate}
then for all $\sigma \in S_{\Vmst}$, it is possible to go from $s' = -1$ to $t' = 1$ in $H$ on input $\sigma(G)$
\end{lemma}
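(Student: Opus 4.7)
The plan is to simulate a winning strategy for the reversible pebbling game on $G_0$ through a sequence of moves among functions in $H$ on the input graph $\sigma(G)$. Fix a permutation $\sigma \in S_{\Vmst}$ and let $P_0, P_1, \ldots, P_T$ be a winning sequence of pebble-game states, where $P_0$ has a pebble only on $s$, $P_T$ is reached by placing a pebble on $t$, and every intermediate $P_i$ lies in $S$. The starting function is $K_{\emptyset} = -1 = s'$, matching the empty pebble state $P_0$. For each intermediate $P_i$ (with pebbles on $s, v_{i_1}, \ldots, v_{i_j}$), by condition 2 choose some $\mathcal{V}_{x_i}$ matching $P_i$, and associate to $P_i$ the function $K_{F_i} \in H$ (in $H$ by condition 1), where $F_i = \{p(V_{x_i i_1})e_{V_{x_i i_1}}, \ldots, p(V_{x_i i_j})e_{V_{x_i i_j}}\}$.

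Next, I would translate each pebble-game move $P_i \to P_{i+1}$ (which differs by a single pebble $v_{i_m}$, either added or removed, using some edge of $G_0$) into a sequence of moves in $H$. Let $P_{\max}$ denote whichever of $P_i, P_{i+1}$ has more pebbles, and pick a $\mathcal{V}_y$ matching $P_{\max}$ using condition 2. The key observation is that if $\mathcal{V}_y$ matches a pebble state, it automatically matches any state obtained by removing pebbles, since the matching condition $V_{yi_m} \cap \sigma(V(G_0)\setminus\{s,t\}) = \{\sigma(v_{i_m})\}$ is preserved when we drop indices. Hence $\mathcal{V}_y$ matches both $P_i$ and $P_{i+1}$. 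Then I would execute three sub-moves: first apply condition 3 to go from $K_{F_i}$ to the analogous function $K_{F'_i}$ built from $\mathcal{V}_y$ on $P_i$ (both $\mathcal{V}_{x_i}$ and $\mathcal{V}_y$ match $P_i$); next apply condition 2(a), which applies since $\mathcal{V}_y$ matches the larger state and the smaller state is in $S$, to go from $K_{F'_i}$ to the corresponding function $K_{F'_{i+1}}$ for $P_{i+1}$ under $\mathcal{V}_y$; finally apply condition 3 again to re-align from $\mathcal{V}_y$ back to the previously chosen $\mathcal{V}_{x_{i+1}}$ on $P_{i+1}$. For the terminal move $P_{T-1} \to P_T$ placing a pebble on $t$ via an edge $v_{i_m} \to t$ in $G_0$, I would instead apply condition 2(b) (possibly after a preparatory condition 3 re-alignment) to jump directly from the current $K_{F}$ function to $t' = 1$.

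Concatenating all of these sub-sequences produces a walk in $H$ from $s' = -1$ to $t' = 1$ whose every step is a legal move on input $\sigma(G)$, as required. The main source of subtlety, and where I would be most careful, is the bookkeeping around which $\mathcal{V}$ matches which state: conditions 2(a) and 2(b) are hypotheses about a specific matching $\mathcal{V}$, so before invoking them one must ensure the active partition matches the relevant state. This is exactly what the ``pick a $\mathcal{V}_y$ matching the larger state'' idea resolves, together with the downward-closure of the matching relation under pebble removal. Once that is set up, conditions 1, 2, and 3 slot in mechanically and the walk is constructed edge by edge.
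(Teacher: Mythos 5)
Your proof is correct and uses essentially the same idea as the paper: simulate the winning reversible-pebbling sequence on $G_0$, keeping a partition $\mathcal{V}$ matching the current state, and using condition 3 to re-align to a partition that matches the larger of the two states straddling any move before invoking condition 2 (the key observation, which you state explicitly and the paper uses implicitly, being that matching a state is inherited by all states obtained by removing pebbles). The only cosmetic difference is that you pre-commit to a partition $\mathcal{V}_{x_i}$ for every state and do two re-alignments per move, while the paper's sketch maintains a single running partition and re-aligns only when a pebble addition would break the match; both are valid and equivalent in substance.
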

\begin{proof}
The idea is to follow the steps of the winning sequence of moves in the reversible pebble game for $G_0$ while 
always keeping a $\mathcal{V}_x$ which matches the current state. From the conditions in the lemma, removing a pebble or 
winning the game is no problem. If we ever need to add a pebble and $\mathcal{V}_x$ would no longer match the new state then 
note that there must be a $y$ such that $\mathcal{V}_y$ does match the new state. Now by definition $\mathcal{V}_y$ also 
matches the old state. Thus we can first shift from $\mathcal{V}_x$ to $\mathcal{V}_y$ and then make the needed move 
in the pebbling game.
\end{proof}
Now we just need to count the number of functions we need in $H$ to do all of this.
\begin{lemma}\label{reducinglemma}
By adding $2|V|\lceil{\lg{|V|}}\rceil$ functions to $H$, we can guarantee that whenever $V$ has parity $p(V)$ and root $v$ 
we will be able to go from $p(V)e_V$ to $e_{\{v\}}$ in $H$ regardless of the input graph $\sigma(G)$.
\end{lemma}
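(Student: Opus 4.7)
The plan is to fix a balanced binary tree $T$ whose leaves are the elements of $V$; for each node $u$ of $T$, write $V(u) \subseteq V$ for the set of leaves in the subtree at $u$, so the root has $V(u) = V$ and each leaf has $|V(u)| = 1$. I will add to $H$ the functions $\pm e_W$ for every $W$ that appears either as some $V(u)$ or as an intermediate set during a one-vertex-at-a-time contraction from a parent set $V(u)$ down to one of its child sets $V(u_i)$.

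Given the root $v$ of $V$ in a particular input $\sigma(G)$, the reduction of $p(V) e_V$ to $e_{\{v\}}$ will follow the unique root-to-leaf path in $T$ that ends at $v$. At each internal node $u$ on this path, letting $u_1$ be the child whose subtree contains $v$, I peel the vertices of $V(u) \setminus V(u_1)$ off the current set one at a time. For each such $v_2$, which is by assumption a lollipop in $G$, I use the edge $s \to \sigma(v_2)$ if $s \to v_2 \in E(G)$, moving from $\pm e_W$ to $\mp e_{W \setminus \{v_2\}}$; otherwise $v_2 \to t \in E(G)$ and I use the edge $\sigma(v_2) \to t$, moving from $\pm e_W$ to $\pm e_{W \setminus \{v_2\}}$. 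Because both signs of every intermediate set lie in $H$, each single step is a legal move in $H$ no matter which signs arise along the way.

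When the path reaches $\{v\}$, the total number of sign flips applied equals $|\{v_2 \in V \setminus \{v\} : s \to v_2 \in E(G)\}|$, so the accumulated sign on $e_{\{v\}}$ is $p(V) \cdot p(V) = 1$ and we arrive at $+e_{\{v\}}$ as required. For the count of new functions, if $f(m)$ denotes the number added when $|V| = m$, then contracting $V$ to either of its two children (of sizes $\lceil m/2 \rceil$ and $\lfloor m/2 \rfloor$) introduces $m$ fresh sets in total, namely the intermediates together with the two child sets; with two signs apiece this is $2m$ functions, yielding the recursion $f(m) \le 2m + f(\lceil m/2 \rceil) + f(\lfloor m/2 \rfloor)$ with $f(1) = 0$, which unwinds to $f(m) \le 2m \lceil \lg m \rceil$. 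The part I expect will take the most care is nailing down that the sign bookkeeping is consistent across a single step (and hence, by concatenation, across the whole reduction); once that is in place the counting is a routine recursion and the rest follows directly from the edge-step propositions already established.
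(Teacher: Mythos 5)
Your proposal is correct and takes essentially the same approach as the paper: a balanced recursive halving of $V$, adding both signs of $e_W$ for every intermediate subset $W$ on each parent-to-child contraction, and peeling off one lollipop at a time along the path leading to $v$. The paper phrases this as an explicit two-way recursion on halves $L$ and $R$ rather than in binary-tree language, and it tracks the invariant as $p(W)\,e_W$ at each step (which automatically terminates at $p(\{v\})\,e_{\{v\}} = e_{\{v\}}$) rather than accumulating a global sign flip count and checking it cancels, but these are cosmetic differences in presentation of the same argument; the per-level count of $\sim 2|V|$ functions and the recursion $f(m)\le 2m + f(\lceil m/2\rceil)+f(\lfloor m/2\rfloor)$ yielding $2|V|\lceil\lg|V|\rceil$ match as well.
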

\begin{proof}
The proof works by induction. The base case $|V| = 1$ is trivial. If $|V| > 1$ then split $V$ into two halves 
$L = \{v_1,\cdots,v_{j_1}\}$ and $R = \{v_{{j_1}+1},\cdots,v_{j_2}\}$ as evenly as possible. Now add the functions \\
$\{{\pm}e_W: W = L \cup \{v_i: j_1 < i \leq m\}$ for some $m \in [j_1+1,j_2]\}$ and the functions \\
$\{{\pm}e_W: W = R \cup \{v_i: 1 \leq i \leq m\}$ for some $m \in [1,j_1]\}$ to $H$.

If $v \in L$ then note that for all $m \in [j_1+1,j_2]$, $v_m$ is a lollipop so if $W = L \cup \{v_i: j_1 < i \leq m\}$ we can use an allowed function step to go from $p(W)e_W$ to $p(W \setminus \{v_m\})e_{W \setminus \{v_m\}}$. This implies that we can go from 
$p(V)e_V$ to $p(L)e_L$ in $H$ regardless of the input graph $\sigma(G)$. Using similar logic, if $v \in R$ then we can go from 
$p(V)e_V$ to $p(R)e_R$ in $H$ regardless of the input graph $\sigma(G)$. By the inductive hypothesis, 
we can add at most $2|L|\lceil{\lg{|L|}}\rceil + 2|R|\lceil{\lg{|R|}}\rceil$ functions to ensure that we can go from either $p(L)e_L$ 
or $p(R)e_R$ (depending on whether $v$ is in $L$ or $R$) to $e_{\{v\}}$ in $H$ regardless of the input graph $\sigma(G)$. 
The total number of functions is $2|V| + 2|L|\lceil{\lg{|L|}}\rceil + 2|R|\lceil{\lg{|R|}}\rceil \leq 2|V|\lceil{\lg{|V|}}\rceil$
\end{proof}
\begin{corollary}\label{reducingcorollary}
If $|V|$ is a set of vertices of size at most $\lceil{\frac{n}{2}}\rceil$ with parity $p(V)$ and root $v$ 
and $F$ is a multi-set of functions containing the function $f = p(V)e_V$ then by adding at most $2n\lg{n}$ functions 
to $H$ we can ensure that we can go from $K_F$ to $K_{F \setminus \{p(V)e_V\} \cup \{e_{\{v\}}\}}$ in $H$ regardless 
of the input graph $\sigma(G)$.
\end{corollary}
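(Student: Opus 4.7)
The plan is to directly lift Lemma \ref{reducinglemma} to the level of the $K_F$ functions by replacing the single active function $p(V)e_V$ in $F$ with each intermediate step, one at a time.

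First I would apply Lemma \ref{reducinglemma}: this gives us a sequence of functions $f_0 = p(V)e_V, f_1, \ldots, f_j = e_{\{v\}}$ in $H$ (each of the form $\pm e_W$ for some $W \subseteq \Vmst$), together with edges $e_1, \ldots, e_j$ of $\sigma(G)$ such that for every $i$ we can go from $f_{i-1}$ to $f_i$ using $e_i$, i.e.\ $(f_{i-1}-f_i)(C)=0$ whenever $e_i$ does not cross $C$. The length of this sequence is at most $2|V|\lceil\lg|V|\rceil$, and these functions are added irrespective of the surrounding context $F$.

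Next I would define, for each $i \in [0,j]$, the multi-set $F_i = (F \setminus \{p(V)e_V\}) \cup \{f_i\}$ and add the function $K_{F_i}$ to $H$. Then $F_0 = F$ and $F_j = F \setminus \{p(V)e_V\} \cup \{e_{\{v\}}\}$, so the endpoints are the functions we want. The key verification is that we can go from $K_{F_{i-1}}$ to $K_{F_i}$ using edge $e_i$. Using the product formula
\[
K_{F_i} = 1 - 2^{1-|F|}\prod_{f \in F_i}(1-f),
\]
the only factor that differs between $K_{F_{i-1}}$ and $K_{F_i}$ is $(1-f_{i-1})$ versus $(1-f_i)$, so
\[
K_{F_{i-1}} - K_{F_i} = 2^{1-|F|}\Bigl(\prod_{f \in F \setminus \{p(V)e_V\}}(1-f)\Bigr)(f_i - f_{i-1}).
\]
On any cut $C$ not crossed by $e_i$ we have $(f_i - f_{i-1})(C)=0$, so $(K_{F_{i-1}}-K_{F_i})(C)=0$, which is exactly what is required to take a step in $H$ along edge $e_i$.

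Finally I would count: we add $j+1 \leq 2|V|\lceil\lg|V|\rceil + 1$ new $K_{F_i}$ functions (the endpoint $K_{F_0}=K_F$ may or may not already lie in $H$, and the same for $K_{F_j}$, but in either case the total is bounded by $2|V|\lceil\lg|V|\rceil \leq 2n\lg n$ using $|V| \leq \lceil n/2 \rceil$). The main thing to check is really just the one algebraic identity above; everything else is bookkeeping, so I do not expect any substantive obstacle.
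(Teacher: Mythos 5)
Your proposal is correct and follows essentially the same route as the paper, whose proof is a one-line remark: ``use the same logic as Lemma~\ref{reducinglemma} and note that the function steps apply to individual functions $f$ in the multi-set $F$.'' Your identity
$K_{F_{i-1}} - K_{F_i} = 2^{1-|F|}\bigl(\prod_{f \in F \setminus \{p(V)e_V\}}(1-f)\bigr)(f_{i-1}-f_i)$
is just the precise form of that remark (and is already implicit in the earlier propositions stating that a function step using $s\to v$ or $v\to t$ multiplies a single chosen $f_i \in F$ by $\mp e_{\{v\}}$, which are exactly the steps Lemma~\ref{reducinglemma} uses). One small bookkeeping caveat: the path $f_0,\ldots,f_j$ and its length $j$ depend on $\sigma(G)$, so what you should add to $H$ is $K_{(F\setminus\{p(V)e_V\})\cup\{f\}}$ for \emph{every} function $f$ that Lemma~\ref{reducinglemma} places into $H$ (at most $2|V|\lceil\lg|V|\rceil$ of them), not just the $f_i$ along one particular $\sigma$'s path; your final count is still correct since that set size is exactly the quantity you are bounding by $2n\lg n$.
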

\begin{proof}
This can be proved by using the same logic that was used to prove Lemma \ref{reducinglemma} and 
noting that the function steps apply to individual functions $f$ in the multi-set of functions $F$.  
\end{proof}
We now count how many functions we need overall. To shift from $\mathcal{V}_x$ to $\mathcal{V}_y$ on a given state with 
pebbles on vertices $v_{i_1}, \cdots, v_{i_j}$, for 
each $m$ we change each $p(V_{xi_m})e_{V_{xi_m}}$ to $e_{\{\sigma(v_{i_m})\}}$ and then run this process in reverse to reach 
$p(V_{yi_m})e_{V_{yi_m}}$. Doing this for all $m$ takes a total of at most $4zn\lg{n}$ functions. We need to do this 
for all pairs $x,y$, all parities, and all states of the reversible pebble game in $S$, so 
this gives a total of at most $4z{2^z}|S|{|\{\mathcal{V}_x\}|^2}n\lg{n}$ functions.

To go from the state with pebbles on vertices $v_{i_1}, \cdots, v_{i_j}$ to a winning state using an edge 
$v_{i_m} \to t$ we reduce $p(V_{xi_{m}})e_{V_{xi_{m}}}$ to $e_{\{\sigma(v_{i_{m}})\}}$ and then go directly to $t'$ with the 
function step multiplying $e_{\{\sigma(v_{i_{m}})\}}$ by $e_{\{\sigma(v_{i_{m}})\}}$ using the edge $v_{i_m} \to t$. 
This takes at most $2n\lg{n}$ functions. We may need to do this for all $x$, all parities, all states of the reversible 
pebble game in $S$, and all possible $m$, so this gives a total of at most $2z{2^z}|S|{|\{\mathcal{V}_x\}|}n\lg{n}$ 
functions.

To go from the state with pebbles on vertices $v_{i_1}, \cdots, v_{i_j}$ to the state with pebbles on vertices 
$\{v_{i_1}, \cdots, v_{i_j}\} \setminus \{v_{i_m}\}$ using an edge 
$s \to v_{i_m}$ we reduce $p(V_{xi_{m}})e_{V_{xi_{m}}}$ to $e_{\{\sigma(v_{i_{m}})\}}$ and then do the function step 
multiplying $e_{\{\sigma(v_{i_{m}})\}}$ by $-e_{\{\sigma(v_{i_{m}})\}}$ using the edge $s \to v_{i_m}$. 
This takes at most $2n\lg{n}$ functions. We may need to do this for all $x$, all parities, all states of the reversible 
pebble game in $S$, and all possible $m$, so this gives a total of at most $2z{2^z}|S|{|\{\mathcal{V}_x\}|}n\lg{n}$ 
functions.

Finally, to go from the state with pebbles on vertices $v_{i_1}, \cdots, v_{i_j}$ to the state with pebbles on vertices 
$\{v_{i_1}, \cdots, v_{i_j}\} \setminus \{v_{i_{m_2}}\}$ using an edge 
$v_{i_{m_1}} \to v_{i_{m_2}}$, we first reduce $p(V_{xi_{m_1}})e_{V_{xi_{m_1}}}$ to $e_{\{\sigma(v_{i_{m_1}})\}}$ and 
reduce $p(V_{xi_{m_1}})e_{V_{xi_{m_2}}}$ to $e_{\{\sigma(v_{i_{m_2}})\}}$. We then use the function step deleting 
$e_{\{\sigma(v_{i_{m_2}})\}}$ with the edge $v_{i_{m_1}} \to v_{i_{m_2}}$. Finally, we
restore $e_{\{\sigma(v_{i_{m_1}})\}}$ to $p(V_{xi_{m_1}})e_{V_{xi_{m_1}}}$. The first reduction takes at most $2n\lg{n}$ functions, 
but we may need to do it for all $x$, all parities, all states of the reversible pebble game in $S$, and all possible $m_1$. This gives 
a total of at most $2z{2^z}|S|{|X|}n\lg{n}$ functions. Similarly, the final restoration (which is a reduction in reverse) takes a total of 
at most $2z{2^z}|S|{|X|}n\lg{n}$ functions. The second reduction takes at most $2n\lg{n}$ functions, but we may need to do it for 
all $x$, all parities, all states of the reversible pebble game in $S$, all possible roots of $V_{xi_{m_1}}$, and all possible $m_1,m_2$.
This gives a total of at most $2z^2{2^z}|S|{|\{\mathcal{V}_x\}|}n^2\lg{n}$ functions.

Thus, the total number of functions needed for $H$ is at most 
$2z{2^z}r(G_0,z)|\{\mathcal{V}_x\}|n\lg{n}(2|\{\mathcal{V}_x\}| + 4 + zn)$. Now we just 
need to calculate how large $|\{\mathcal{V}_x\}|$ needs to be. We only need to show that there is a match for any possible state 
with $z$ pebbles, as this implies that there is a match for any state with at most $z$ pebbles. 
If each $\mathcal{V}_x$ partitions the vertices perfectly evenly, 
for any given state with $z$ pebbles and permutation $\sigma \in S_{\Vmst}$ the probability of a match is at least 
$(\frac{1}{k})^z(\frac{k-z}{k})^{k-z}$. To see this, randomly place each vertex $v_{i_m}$ one at a time into a set in 
$\mathcal{V}_x$. Each time, the probability of a correct placement is $\frac{(\frac{n}{k})}{n+1-m} \geq \frac{1}{k}$. Now place all 
of the $k-z$ vertices which are not pebbled. If we have already placed $j$ of these vertices, the probability 
that the next vertex will not be placed in the same set as any $v_{i_m}$ is 
\begin{align*}
\frac{\frac{n}{k}(k-z)-j}{n-j-z} &= \frac{k-z}{k} + \frac{j+z}{n-j-z} \cdot \frac{k-z}{k} - \frac{j}{n-j-z}
= \frac{k-z}{k} + \frac{(j+z)(k-z)-kj}{k(n-j-z)} \\
&= \frac{k-z}{k} + \frac{z(k-j-z)}{k(n-j-z)} \geq \frac{k-z}{k}
\end{align*}
\begin{proposition}
For any $z \in (0,k)$, $((\frac{k-z}{k})^{k-z}) > 4^{-z}$
\end{proposition}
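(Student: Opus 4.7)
The plan is to reduce the inequality to a one-variable calculus problem by a change of variable, then verify it via an elementary monotonicity argument on the logarithm.

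First I would set $t = z/k \in (0,1)$ and rewrite the desired inequality as $(1-t)^{k(1-t)} > 4^{-kt}$, which after taking $k$-th roots becomes the equivalent (and $k$-free) statement
\[
(1-t)^{1-t} > 4^{-t} \qquad \text{for all } t \in (0,1).
\]
This is the clean form to attack. Taking natural logarithms, the claim is equivalent to showing that $f(t) := (1-t)\ln(1-t) + t\ln 4 > 0$ for $t \in (0,1)$.

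Next I would verify $f(0) = 0$, compute
\[
f'(t) = -\ln(1-t) - 1 + \ln 4 = \ln 4 - 1 - \ln(1-t),
\]
and observe that $\ln 4 > 1$ (numerically $\ln 4 \approx 1.386$) while $-\ln(1-t) \geq 0$ on $[0,1)$. Hence $f'(t) > 0$ on $[0,1)$, so $f$ is strictly increasing there, and therefore $f(t) > f(0) = 0$ for every $t \in (0,1)$. Translating back through the substitution gives the proposition.

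There is no real obstacle here: the substitution $t = z/k$ eliminates $k$ entirely, and the key numerical fact $\ln 4 > 1$ (equivalently $e < 4$) makes the derivative manifestly positive. The only thing worth double-checking is the endpoint behavior, which is harmless since as $t \to 1^-$ we have $(1-t)^{1-t} \to 1$ while $4^{-t} \to 1/4$, confirming the inequality extends continuously.
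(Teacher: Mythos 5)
Your proof is correct and takes a genuinely different (and cleaner) route than the paper's. After the substitution $t = z/k$ and taking $k$-th roots, you reduce to the single-variable inequality $(1-t)^{1-t} > 4^{-t}$ on $(0,1)$ and verify it by noting that $f(t) = (1-t)\ln(1-t) + t\ln 4$ vanishes at $t=0$ and has derivative $f'(t) = \ln 4 - 1 - \ln(1-t) \geq \ln 4 - 1 > 0$, so $f$ is strictly increasing on $[0,1)$. The paper instead splits into the cases $z \leq k/2$ and $z > k/2$: for the first it proves the auxiliary bound $(1-x)^{1/(2x)} \geq 1/2$ for $0 < x \leq 1/2$ via the Taylor expansion of $\ln(1-x)$, and for the second it analyzes the sign of the $z$-derivative of $\ln\bigl((\tfrac{k-z}{k})^{k-z}\bigr)$ to argue the quantity is minimized at $z = k/2$. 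Your uniform monotonicity argument avoids the case split entirely and only needs the elementary fact $\ln 4 > 1$; it also happens to sidestep a glitch in the paper's second case, where the derivative $\ln k - \ln(k-z) - 1$ is asserted to vanish at $z = k/2$, whereas it actually vanishes at $k - z = k/e$, i.e.\ $z = k(1-1/e) \approx 0.632k$, so the minimizer is not $k/2$ as stated (the final inequality still survives, but the paper's intermediate chain does not). Your version is shorter, requires no case analysis, and applies for any real $k > 0$.
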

\begin{proof}
We split the proof into two cases depending on whether $z \leq \frac{k}{2}$ or $z \geq \frac{k}{2}$. If 
$z \leq \frac{k}{2}$ then we use the following proposition
\begin{proposition}
For any $x$ such that $0 < x \leq \frac{1}{2}$, $(1-x)^{\frac{1}{2x}} \geq \frac{1}{2}$
\end{proposition}
\begin{proof}
Note that $(1-x)^{\frac{1}{2x}} = (e^{\ln{(1-x)}})^{\frac{1}{2x}} = e^{\frac{\ln{(1-x)}}{2x}}$

The Taylor series for $\ln{(1-x)}$ is $\ln{(1-x)} = -\sum_{j \geq 1}{\frac{x^{j}}{j}}$ so 
$\frac{\ln{(1-x)}}{2x} = -\sum_{j \geq 1}{\frac{x^{j-1}}{2j}}$. Comparing term by term, since 
$x \leq \frac{1}{2}$ we have that $\frac{\ln{(1-x)}}{2x} \geq -\sum_{j \geq 1}{\frac{2^{-j}}{j}} = \ln{\frac{1}{2}} = -\ln{2}$.
Thus, $(1-x)^{\frac{1}{2x}} = e^{\frac{\ln{(1-x)}}{2x}} \geq \frac{1}{2}$, as needed
\end{proof}
Plugging in $x = \frac{z}{k}$ we obtain that 
$$(\frac{k-z}{k})^{k-z} > (\frac{k-z}{k})^{k} = (1-x)^{k} = (1-x)^{\frac{z}{x}} = ((1-x)^{\frac{1}{2x}})^{2z} \geq 2^{-2z} = 4^{-z}$$
For the case when $z > \frac{k}{2}$, note that the derivative of 
$\ln{((\frac{k-z}{k})^{k-z})} = (k-z)(\ln{(k-z)} - \ln{k})$ with respect to $z$ is 
$\ln{k} - \ln{(k-z)} - 1$ which is $0$ if $z = \frac{k}{2}$, less than $0$ if $0 < z < \frac{k}{2}$ and bigger than 
$0$ if $\frac{k}{2} < z < k$. This implies that $(\frac{k-z}{k})^{k-z}$ is minimized at $z = \frac{k}{2}$ so since $z > \frac{k}{2}$,
$(\frac{k-z}{k})^{k-z} > (\frac{k-\frac{k}{2}}{k})^{k-\frac{k}{2}} = 2^{-\frac{k}{2}} > 2^{-z} > 4^{-z}$
\end{proof}

Putting everything together, if $0 < z < k$ the probability of a match is at least $(\frac{1}{k})^z(\frac{k-z}{k})^{k-z}$ 
which is more than $(\frac{1}{4k})^z$. Now note that we do not need to worry about the case $z = 0$ beacuse this corresponds to 
not needing to place any pebbles on any vertices except $s$ and $t$ to win the pebble game on $G_0$ which means that $G_0$ 
has an edge from $s$ to $t$. For the case $z = k$, if $z$ vertices of $G_0$ are pebbled then there 
are no vertices of $G_0$ which are not pebbled so the probability of a match is just the probability that 
the pebbled vertices are placed into the correct setes of vertices by $\sigma$ which from before is at least 
$(\frac{1}{k})^z > (\frac{1}{4k})^z$. Thus in all cases the probability of a match is at least $(\frac{1}{4k})^z$.

There are at most $n^k$ possibilities for $\sigma(V(G_0) \setminus \{s,t\})$ and 
at most $k^z$ different states of the pebble game on $G_0$ with $z$ pebbles, so following the same logic we used in the proof 
of Theorem \ref{simpleprobabilisticupperbound}, we can obtain a set of sets of sets of vertices $\mathcal{V}_x$ where
$|\mathcal{V}_x| \leq {(4k)^z}\lg{({n^k}{k^z})} \leq 2k(4k)^z\lg{n}$
\end{proof}
\subsection{Simplified upper bounds}
In this subsection we simplify the upper bound of Theorem \ref{generalupperbound}.
\begin{corollary}\label{generalupperboundcorollaryone}
Let $G$ be an input graph which contains a subgraph $G_0$ such that 
\begin{enumerate}
\item $G_0$ contains a path from $s$ to $t$.
\item All vertices in $V(G) \setminus V(G_0)$ are lollipops.
\end{enumerate}
then taking $k = |V(G_0) \setminus \{s,t\}|$, and $z = p(G_0)$, $m(G)$ is $n^{O(1)}k^{O(z)}$. In particular, \\
$m(G) \leq 8z{2^z}{k^z}{(4k)^{z+1}}n{(\lg{n})^2}({(4k)^{z+1}}\lg{n} + 4 + zn)$
\end{corollary}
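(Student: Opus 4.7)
The plan is to obtain this as a routine application of Theorem \ref{generalupperbound} with the choice $z = p(G_0)$; the bulk of the work is bounding the two free quantities $r(G_0,z)$ and $|\{\mathcal{V}_x\}|$ that appear in the main inequality
\[
m(G) \leq 2z\,2^z\,r(G_0,z)\,|\{\mathcal{V}_x\}|\,n\lg n\,\bigl(2|\{\mathcal{V}_x\}| + 4 + zn\bigr),
\]
and then multiplying out. The minor technicality is that the theorem requires $k \geq 2$ and $k \mid n$, which we dispose of separately.

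For $r(G_0,z)$: by definition of $p(G_0)$ there is a winning strategy for the reversible pebble game on $G_0$ whose every intermediate state has at most $z$ pebbles on $V(G_0)\setminus\{s,t\}$. Taking $S$ to be the set of all such states, we obtain $r(G_0,z) \leq \sum_{j=0}^{z}\binom{k}{j} \leq k^z$, which is crude but sufficient. For $|\{\mathcal{V}_x\}|$: Theorem \ref{generalupperbound} directly supplies a family with $|\{\mathcal{V}_x\}| \leq 2(4k)^z k\lg n = \tfrac{1}{2}(4k)^{z+1}\lg n$, so in particular $2|\{\mathcal{V}_x\}| \leq (4k)^{z+1}\lg n$. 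Substituting both bounds into the displayed inequality and absorbing a factor into the leading constant gives
\[
m(G) \leq 8z\,2^z\,k^z\,(4k)^{z+1}\,n(\lg n)^2\,\bigl((4k)^{z+1}\lg n + 4 + zn\bigr),
\]
which is visibly $n^{O(1)}k^{O(z)}$.

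It remains to address the hypotheses $k\geq 2$ and $k \mid n$. The case $k = 0$ is vacuous, since then $G_0$ reduces to the single edge $s\to t$ and $m(G) = O(1)$; the case $k=1$ is handled similarly, or by passing to a slightly enlarged $G_0$ by adjoining one more vertex of $\Vmst$ as a lollipop, changing $k$ to $2$ at no cost. For the divisibility condition, if $k\nmid n$ we pad $V(G)$ by adding at most $k-1$ additional lollipop vertices (say each receiving an edge from $s$), producing a graph $G^+$ on $n^+$ vertices with $k\mid n^+$ and $n^+ \leq n+k$; since $m(G) \leq m(G^+)$ by monotonicity (the padded graph contains $G$ as a subgraph after a permutation) and $n^+ = \Theta(n)$, the bound transfers with only a constant-factor loss. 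The main obstacle is really just this bookkeeping: the result is essentially a substitution into Theorem \ref{generalupperbound}, with no new ideas required.
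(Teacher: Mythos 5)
Your plan is the same as the paper's: instantiate Theorem~\ref{generalupperbound} at $z = p(G_0)$, bound $r(G_0,z) \leq k^z$, bound $|\{\mathcal{V}_x\}|$ by the quantity the theorem guarantees, and pad to fix $k\geq 2$ and $k\mid n$. The route is right, but there are three concrete gaps in the execution.

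First, the inequality you assert, $\sum_{j=0}^{z}\binom{k}{j} \leq k^z$, is false: for $k=2$, $z=1$ the left side is $3$ and the right side is $2$. The index should start at $j=1$ (the definition of $r(I,k)$ explicitly excludes the starting state and the winning state from $S$), and even $\sum_{j=1}^{z}\binom{k}{j} \leq k^z$ is not automatic — the paper proves it by splitting into the cases $z\leq\lceil k/2\rceil$ (use $\sum_{j=1}^{z}\binom{k}{j}\leq z\binom{k}{z}\leq k^z$) and $z>\lceil k/2\rceil$ (use $\sum_{j=1}^{z}\binom{k}{j}\leq z\binom{k}{\lceil k/2\rceil}\leq k\cdot k^{\lceil k/2\rceil}\leq k^z$). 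You need this case analysis; calling the bound ``crude but sufficient'' doesn't fill it in.

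Second, the claim $m(G)\leq m(G^+)$ ``by monotonicity (the padded graph contains $G$ as a subgraph after a permutation)'' is not a proof: the switching network for $G^+$ is only required to accept $\{\sigma(G^+)\}$, which lives on a strictly larger vertex set, so there is no immediate inclusion of instance families. The paper proves this with an explicit reduction: given a sound monotone switching network $G_2'$ for $G_2$ (with the extra vertex $w$ and edge $s\to w$), contract every edge of $G_2'$ labeled $s\to w$ and delete every edge labeled $w\to v$; the result is a sound monotone switching network for $G$ of no larger size. You should include some such construction rather than appeal to monotonicity. Third, ``only a constant-factor loss'' from padding is too vague to justify the stated constant $8$; the paper first disposes of small $n$ (so $n\geq 8$, hence $\lg n\geq 3$) and uses $n_2\leq 2n$ to control the blowup in $n$, $(\lg n)^2$, and the parenthesized factor — that bookkeeping is exactly where the $8$ comes from. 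Finally, for $k=1$ the paper simply observes $m(G)\leq n$; your alternative of enlarging $G_0$ works but requires a sentence noting that adding a lollipop to $G_0$ does not increase $p(G_0)$.
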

\begin{proof}
If $k = 1$ then $m(G) \leq n$ and if $n < 8$ then $m(G) \leq n^3$ so the bound is trivial unless $n \geq 8$ and 
$k \geq 2$. Now if $n$ is divisible by $k$, note that there are $\sum_{j=1}^{z}{{k \choose j}}$ 
possible states for the reversible pebbling game on $G_0$ 
which are not the starting state, not a winning state, and have at most $z$ pebbles. If $z \leq \lceil\frac{k}{2}\rceil$ then 
$\sum_{j=1}^{z}{{k \choose j}} \leq z{{k \choose z}} \leq k^z$. If $z > \lceil\frac{k}{2}\rceil$ then 
$\sum_{j=1}^{z}{{k \choose j}} \leq z{{k \choose \lceil\frac{k}{2}\rceil}} \leq k(k^{\lceil\frac{k}{2}\rceil}) \leq k^z$. Thus, 
either way we have that $r(G_0,z) \leq k^z$. Now by Theorem \ref{generalupperbound}, 
$m(G) \leq 2z{2^z}{k^z}2{(4k)^z}k{(\lg{n})}n{\lg{n}}(2 \cdot 2{(4k)^z}k\lg{n} + 4 + zn)$. Simplifying this expression 
gives that $m(G) \leq z{2^z}{k^z}{(4k)^{z+1}}n{(\lg{n})^2}({(4k)^{z+1}}\lg{n} + 4 + zn)$

If $n$ is not divisible by $k$ then we use the following proposition
\begin{proposition}
If $G$ is an input graph containing a path from $s$ to $t$ and $G_2$ is the input graph with 
$V(G_2) = V(G) \cup \{w\}$ and $E(G_2) = E(G) \cup \{s \to w\}$ then $m(G_2) \geq m(G)$
\end{proposition}
\begin{proof}
Consider a sound monotone switching network $G'_2$ accepting all of the input graphs \\
$\{\sigma(G_2): \sigma \in S_{\Vmst \cup \{w\}}\}$. If we contract all edges of $G'_2$ labeled $s \to w$ and 
delete all edges of $G'_2$ labeled $w \to v$ for any $v \in V(G)$, we will obtain a sound monotone switching network 
$G'$ accepting all of the input graphs $\{\sigma(G): \sigma \in S_{\Vmst}\}$. Thus, 
$m(G) \leq |V(G') \setminus \{s',t'\}| \leq |V(G'_2) \setminus \{s',t'\}| \leq m(G_2)$, as needed.
\end{proof}
Using this proposition, we can add vertices to $V(G)$ until $|\Vmst|$ is divisible by $k$. Let $n_2 = |\Vmst|$ after 
we have added these vertices. $n \geq 8$ and $n_2 \leq 2n$ so 
$$m(G) \leq z{2^z}{k^z}{(4k)^{z+1}}n_2{(\lg{n_2})^2}({(4k)^{z+1}}\lg{n_2} + 4 + zn_2) \leq 
8z{2^z}{k^z}{(4k)^{z+1}}n{(\lg{n})^2}({(4k)^{z+1}}\lg{n} + 4 + zn)$$
\end{proof}
\begin{corollary}
Let $G$ be an input graph which contains a subgraph $G_0$ such that 
\begin{enumerate}
\item $G_0$ contains a path from $s$ to $t$.
\item All vertices in $V(G) \setminus V(G_0)$ are lollipops.
\end{enumerate}
then letting $k = |V(G_0) \setminus \{s,t\}|$, taking $z = \lceil{\lg(k+1)}\rceil$, 
$m(G) \leq {z^2}{2^{5z+8}}{k^{3z+3}}{n^2}{(\lg{n})^2}$.
More precisely,
\begin{enumerate}
\item If $k \leq 2^{\sqrt{\lg{n} - \lg{\lg{n}}}-2}$ then $m(G) \leq {z^2}{2^{3z+6}}{k^{2z+1}}{n^2}{(\lg{n})^2}$
\item If $k \geq 2^{\sqrt{\lg{n} - \lg{\lg{n}}}-2}$ then $m(G) \leq {z^2}{2^{5z+8}}{k^{3z+3}}{n}{(\lg{n})^3}$
\end{enumerate}
\end{corollary}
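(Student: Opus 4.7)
The plan is to simply invoke the preceding Corollary \ref{generalupperboundcorollaryone} with a good choice of $z$, together with Theorem \ref{pebblingnumber}, and then do a careful case analysis to turn the resulting product into the two stated forms. Since $G_0$ contains at least $k$ internal vertices on its $s$-$t$ path, the shortest $s$-$t$ path in $G_0$ has length at most $k+1$, so by Theorem \ref{pebblingnumber} we have $p(G_0) \leq \lceil \lg(k+1) \rceil = z$. Thus the hypotheses of Corollary \ref{generalupperboundcorollaryone} hold with this $z$, giving
$$m(G) \leq 8z\,2^{z}\,k^{z}\,(4k)^{z+1}\,n(\lg n)^{2}\bigl((4k)^{z+1}\lg n + 4 + zn\bigr).$$

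Next I would rewrite $(4k)^{z+1} = 2^{2z+2}k^{z+1}$ and factor the outer product, obtaining a prefactor of roughly $z\,2^{3z+5}\,k^{2z+1}\,n(\lg n)^{2}$ multiplying the three-term sum. The two stated bounds then correspond to the two regimes of which term in $(4k)^{z+1}\lg n + 4 + zn$ dominates. In the regime where $zn$ dominates, the sum is at most a constant times $zn$, which yields a bound proportional to $z^{2}\,2^{3z+6}\,k^{2z+1}\,n^{2}(\lg n)^{2}$; in the regime where $(4k)^{z+1}\lg n$ dominates, it is at most a constant times $2^{2z+3}k^{z+1}\lg n$, producing $z\,2^{5z+8}\,k^{3z+2}\,n(\lg n)^{3}$, which we can further loosen to $z^{2}\,2^{5z+8}\,k^{3z+3}\,n(\lg n)^{3}$ for a cleaner statement.

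All that remains is to convert the condition ``which term dominates'' into the explicit threshold on $k$ stated in the corollary. Using $z = \lceil \lg(k+1) \rceil$ we have $2^{z} \leq 2(k+1)$, so $(4k)^{z+1} \leq 2^{O(z)}k^{z+1}$, and the crucial quantity $k^{z+1}$ behaves like $k \cdot 2^{(\lg k)^{2}}$. Comparing this with $n/\lg n$, the inequality $(4k)^{z+1}\lg n \leq zn$ is (up to lower-order terms) equivalent to $(\lg k)^{2} \leq \lg n - \lg \lg n$, i.e.\ to $\lg k \leq \sqrt{\lg n - \lg \lg n}$; the small additive constant $-2$ in the exponent threshold $2^{\sqrt{\lg n - \lg \lg n}-2}$ is exactly what is needed to absorb the $O(z)$ factors coming from $4^{z+1}$ and $z$ itself.

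The main obstacle, and the only place where any real care is required, is the arithmetic of this threshold: one has to keep track of the $\lg n$, $z$, and $4^{z+1}$ factors simultaneously and check that the chosen cutoff $k = 2^{\sqrt{\lg n - \lg \lg n}-2}$ actually makes both stated bounds valid on their respective sides, with enough slack to swallow the additive $4$ and the substitutions $z \mapsto z^{2}$, $k^{3z+2}\mapsto k^{3z+3}$ used to simplify the final expressions. The rest is entirely mechanical: substitute, regroup, and apply the appropriate inequality in each case.
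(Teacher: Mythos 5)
Your proposal is correct and takes essentially the same route as the paper: invoke Corollary \ref{generalupperboundcorollaryone} with $z = \lceil \lg(k+1) \rceil$ (valid since Theorem \ref{pebblingnumber} gives $p(G_0) \leq z$), then bound $(4k)^{z+1}\lg n + 4 + zn$ by $2zn$ or $2kz(4k)^{z+1}\lg n$ on the two sides of the threshold $k = 2^{\sqrt{\lg n - \lg\lg n}-2}$. The one piece you defer — that the threshold yields $(4k)^{z+1}\lg n \leq n$ in case 1 — is exactly where the paper uses $2^{z+1} \leq 4k \leq 2^{\sqrt{\lg n - \lg\lg n}}$, so $(4k)^{z+1} \leq 2^{(z+1)\sqrt{\lg n - \lg\lg n}} \leq 2^{\lg n - \lg\lg n} = n/\lg n$, together with the harmless reduction to $n\geq 4$, $k\geq 2$.
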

\begin{proof}
Note that the bound in Corollary \ref{generalupperboundcorollaryone} holds for this $z$ because the bound holds for any $z \geq p(G_0)$ 
as it is an increasing function of $z$ and by Theorem \ref{pebblingnumber} we have that $p(G_0) \leq z$. Also note that since 
$k$ is an integer, $z \leq \lg{k}+1$ so $2^{z+1} \leq 2^{\lg{k}+2} \leq 4k$. Finally, note that the result is trivial unless 
$n \geq 4$ and $k \geq 2$ so we may assume that $n \geq 4$ and $k \geq 2$.

If $k \leq 2^{\sqrt{\lg{n} - \lg{\lg{n}}}-2}$ then $(4k)^{z+1}\lg{n} \leq 2^{(z+1)\sqrt{\lg{n} - \lg{\lg{n}}}}\lg{n} \leq 2^{\lg{n}-\lg{\lg{n}}}\lg{n} \leq n$. Now $z \geq 2$ and $n \geq 4$ so ${(4k)^{z+1}}\lg{n} + 4 + zn \leq 2nz$. Plugging this 
into the bound given by Corollary \ref{generalupperboundcorollaryone} and simplifying gives the needed bound.

If $k \geq 2^{\sqrt{\lg{n} - \lg{\lg{n}}}-2}$ then 
$$kz(4k)^{z+1}\lg{n} \geq z\lg{n}(2^{\sqrt{\lg{n} - \lg{\lg{n}}}})^{z+2} \geq 
z\lg{n}2^{(\sqrt{\lg{n} - \lg{\lg{n}}})\lg{k}} \geq z\lg{n}2^{\lg{n} - \lg{\lg{n}}} = zn$$
Now since $n \geq 4$ and $k \geq 2$ we have that ${(4k)^{z+1}}\lg{n} + 4 \leq kz(4k)^{z+1}\lg{n}$. Thus, 
${(4k)^{z+1}}\lg{n} + 4 + zn \leq 2kz(4k)^{z+1}\lg{n}$. Plugging this into the bound given by
Corollary \ref{generalupperboundcorollaryone} and simplifying gives the needed bound.

Both of these bounds are less than ${z^2}{2^{5z+8}}{k^{3z+3}}{n^2}{(\lg{n})^2}$ so we always have that \\
$m(G) \leq {z^2}{2^{5z+8}}{k^{3z+3}}{n^2}{(\lg{n})^2}$.
\end{proof}
\section{Conclusion}\label{conclusion}
In this paper, we considered the function $m(G)$, which is a complexity measure on graphs $G$ containing a path from $s$ to $t$. 
Roughly speaking, $\lg{(m(G))}$ is the amount of space needed for a monotone algorithm to find a path from $s$ to $t$ 
on input graphs isomorphic to $G$. As shown by our lower bounds, for many directed acyclic graphs $G$, letting $l$ 
be the length of the shortest path from $s$ to $t$, $m(G)$ is determined primarily by $l$. In particular, if no vertex of $G$ 
has short paths to a lot of other vertices or has short paths from a lot of other vertices to it then $m(G)$ is 
$n^{\Theta(\lg{l})}$. However, as shown by our lower bounds, this is not true for all directed acyclic graphs $G$. 
In particular, if all but $k$ vertices of $G$ are lollipops (vertices $v$ for which there is an edge from $s$ to $v$ or an 
edge from $v$ to $t$) then $m(G)$ is $n^{O(1)}k^{O(\lg{l})}$. 

Both the lower and upper bounds are a significant improvement over our previous bounds in \cite{potechin}. 
However, the question of bounding $m(G)$ for general directed acyclic input graphs $G$ and figuring out under which 
conditions we have that $m(G)$ is $n^{\Theta(\lg{l})}$ is wide open. Bounding $m(G)$ for general input graphs $G$ is 
even more wide open.

In proving our lower and upper bounds, we found new techniques which are more robust then previous techniques 
and thus more likely to be extendable to non-monotone analysis. We see no a priori reason why such an extension would 
be impossible. That said, there is a major obstacle which would almost certainly require many more novel ideas 
and techniques to overcome. For non-monotone analysis, the main difference 
is that we would have to have a function for each non-edge of $G$ as well as for each edge of $G$. Unfortunately, 
this means that we can no longer restrict our attention to maximal NO instances and must instead consider a much 
wider class of NO instances. This in turn means that we would need an alternative way to do Fourier anaylsis or a similar 
analysis.

All in all, we have made significant progress. That said, many questions remain wide open and only time will tell 
how far the switching network approach will take us in analyzing space complexity.

\noindent {\bf Acknowledgement.}
The author would like to thank Professor Jonathan Kelner for his advice in writing up this article.

\newpage
\begin{appendix}
\section{Checking well-definedness}\label{welldefinedproof}
In this section we give a proof of Lemma \ref{checkingwelldefinednesslemma}. We recall 
Lemma \ref{checkingwelldefinednesslemma} below.
\vskip.1in
\noindent
{\bf Lemma \ref{checkingwelldefinednesslemma}.}
{\it
Let $G$ be an acyclic input graph containing a path from $s$ to $t$. We may freely choose which non-degenerate edges 
$v \to w$ are relevant for the terms $(\vec{s}_{k,u,g})_A$ so long as the following conditions hold.
\begin{enumerate}
\item If $v \to w$ is relevant for $(\vec{s}_{k,u,g})_A$ then $v \to w$ is relevant for $(\vec{s}_{k_2,u_2,g})_{A_2}$ 
whenever $\{v,w\} \setminus \{s,t\} \subseteq A_2 \subseteq A$, $k_2 = |A_2|$, and $u_2 \leq u$.
\item If $u \to v$ and $v \to w$ are relevant for $(\vec{s}_{k,u,g})_A$ then $u \to w$ is relevant for 
$(\vec{s}_{k-1,u,g})_{A \setminus v}$.
\end{enumerate}
}
\begin{proof}
For convenience we recall the equations for relevance (which come from Lemma \ref{einvariancelemma}) here.
\begin{enumerate}
\item If $A$ contains a vertex $w$ and $s \to w$ is relevant for $(\vec{s}_{k,u,g})_A$ then
\begin{equation}\label{pathfroms}
(\vec{s}_{k,u,g})_A = -(\vec{s}_{k-1,u,g})_{A \setminus \{w\}} + (\vec{s}_{k,u-1,g})_A
\end{equation}
\item If $A$ contains a vertex $v$ and $v \to t$ is relevant for $(\vec{s}_{k,u,g})_A$ then
\begin{equation}\label{pathtot}
(\vec{s}_{k,u,g})_A = (\vec{s}_{k-1,u,g})_{A \setminus \{v\}} - (\vec{s}_{k,u-1,g})_A
\end{equation}
\item If $A$ contains vertices $v,w$ and $v \to w$ is relevant for $(\vec{s}_{k,u,g})_A$ then
\begin{align}\label{pathvtow}
(\vec{s}_{k,u,g})_A &= (\vec{s}_{k-1,u,g})_{A \setminus \{v\}} - (\vec{s}_{k-1,u,g})_{A \setminus \{w\}} 
+ (\vec{s}_{k-2,u,g})_{A \setminus \{v,w\}} \\
&- (\vec{s}_{k-1,u-1,g})_{A \setminus \{v\}} - (\vec{s}_{k-1,u-1,g})_{A \setminus \{w\}} + 
(\vec{s}_{k,u-2,g})_{A} \nonumber
\end{align}
\end{enumerate}
We need to check that whenever there are multiple equations we can apply they all give the same result. We prove this 
by induction on $k+u$. 

Assume there are two paths $v_1 \to v_2$ and $v_3 \to v_4$ for which we have the corresponding equations. We will 
show that when we apply each equation once it does not matter which one we applied first. We have the following 
cases. Note that we cannot have $v_1 = v_4$ and $v_2 = v_3$ because $G$ is acyclic.
\begin{enumerate}
\item If the multi-set $\{v_1,v_2,v_3,v_4\}$ has no repeated vertices besides $s$ and $t$ then 
the order in which the equations are applied does not matter.
\item If $v_1 = s$, $v_2 = v_4 = w$, and $v_3 = v$ then 
\begin{enumerate}
\item If we apply the equation \eqref{pathfroms} corresponding to $s \to w$ first and then the equation 
\eqref{pathvtow} corresponding to $v \to w$ we obtain
\begin{align*}
(\vec{s}_{k,u,g})_A &= -(\vec{s}_{k-1,u,g})_{A \setminus \{w\}} + (\vec{s}_{k,u-1,g})_A \\
&= -(\vec{s}_{k-1,u,g})_{A \setminus \{w\}} + (\vec{s}_{k-1,u-1,g})_{A \setminus \{v\}} - 
(\vec{s}_{k-1,u-1,g})_{A \setminus \{w\}} + (\vec{s}_{k-2,u-1,g})_{A \setminus \{v,w\}} \\
&- (\vec{s}_{k-1,u-2,g})_{A \setminus \{v\}} - (\vec{s}_{k-1,u-2,g})_{A \setminus \{w\}} + 
(\vec{s}_{k,u-3,g})_{A}
\end{align*}
\item If we apply the equation the equation \eqref{pathvtow} corresponding to $v \to w$ 
first and then the equation \eqref{pathfroms} corresponding to $s \to w$ we obtain
\begin{align*}
(\vec{s}_{k,u,g})_A &= (\vec{s}_{k-1,u,g})_{A \setminus \{v\}} - (\vec{s}_{k-1,u,g})_{A \setminus \{w\}} 
+ (\vec{s}_{k-2,u,g})_{A \setminus \{v,w\}} \\
&- (\vec{s}_{k-1,u-1,g})_{A \setminus \{v\}} - (\vec{s}_{k-1,u-1,g})_{A \setminus \{w\}} + 
(\vec{s}_{k,u-2,g})_{A} \\
&= \left((\vec{s}_{k-1,u-1,g})_{A \setminus \{v\}} - (\vec{s}_{k-2,u,g})_{A \setminus \{v,w\}}\right) 
-(\vec{s}_{k-1,u,g})_{A \setminus \{w\}} \\
&+ (\vec{s}_{k-2,u,g})_{A \setminus \{v,w\}} 
- \left((\vec{s}_{k-1,u-2,g})_{A \setminus \{v\}} - (\vec{s}_{k-2,u-1,g})_{A \setminus \{v,w\}}\right) \\
&- (\vec{s}_{k-1,u-1,g})_{A \setminus \{w\}} + 
\left((\vec{s}_{k,u-3,g})_{A} - (\vec{s}_{k-1,u-2,g})_{A \setminus \{w\}}\right) \\
&= (\vec{s}_{k-1,u-1,g})_{A \setminus \{v\}} - (\vec{s}_{k-1,u,g})_{A \setminus \{w\}} - 
(\vec{s}_{k-1,u-2,g})_{A \setminus \{v\}} + (\vec{s}_{k-2,u-1,g})_{A \setminus \{v,w\}} \\
&- (\vec{s}_{k-1,u-1,g})_{A \setminus \{w\}} + 
(\vec{s}_{k,u-3,g})_{A} - (\vec{s}_{k-1,u-2,g})_{A \setminus \{w\}}
\end{align*}
which after rearranging is the same as above.
\end{enumerate}
\item If $v_1 = v_3 = u$, $v_2 = v$, and $v_4 = w$ then if 
we apply the equation \eqref{pathvtow} corresponding to $u \to v$ first and then the equation 
\eqref{pathvtow} corresponding to $u \to w$ we obtain
\begin{align*}
(\vec{s}_{k,u,g})_A &= (\vec{s}_{k-1,u,g})_{A \setminus \{u\}} - (\vec{s}_{k-1,u,g})_{A \setminus \{v\}} 
+ (\vec{s}_{k-2,u,g})_{A \setminus \{u,v\}} \\
&- (\vec{s}_{k-1,u-1,g})_{A \setminus \{u\}} - (\vec{s}_{k-1,u-1,g})_{A \setminus \{v\}} + 
(\vec{s}_{k,u-2,g})_{A} \\
&= (\vec{s}_{k-1,u,g})_{A \setminus \{u\}} - 
(\vec{s}_{k-2,u,g})_{A \setminus \{u,v\}} + (\vec{s}_{k-2,u,g})_{A \setminus \{v,w\}} 
- (\vec{s}_{k-3,u,g})_{A \setminus \{u,v,w\}} \\
&+ (\vec{s}_{k-2,u-1,g})_{A \setminus \{u,v\}} + (\vec{s}_{k-2,u-1,g})_{A \setminus \{v,w\}} - 
(\vec{s}_{k-1,u-2,g})_{A \setminus \{v\}} \\
&+ (\vec{s}_{k-2,u,g})_{A \setminus \{u,v\}} - (\vec{s}_{k-1,u-1,g})_{A \setminus \{u\}}\\
&- (\vec{s}_{k-2,u-1,g})_{A \setminus \{u,v\}} + (\vec{s}_{k-2,u-1,g})_{A \setminus \{v,w\}} 
- (\vec{s}_{k-3,u-1,g})_{A \setminus \{u,v,w\}} \\
&+ (\vec{s}_{k-2,u-2,g})_{A \setminus \{u,v\}} + (\vec{s}_{k-2,u-2,g})_{A \setminus \{v,w\}} - 
(\vec{s}_{k-1,u-3,g})_{A \setminus \{v\}} \\
&+ (\vec{s}_{k-1,u-2,g})_{A \setminus \{u\}} - (\vec{s}_{k-1,u-2,g})_{A \setminus \{w\}} 
+ (\vec{s}_{k-2,u-2,g})_{A \setminus \{u,w\}} \\
&- (\vec{s}_{k-1,u-3,g})_{A \setminus \{u\}} - (\vec{s}_{k-1,u-3,g})_{A \setminus \{w\}} + 
(\vec{s}_{k,u-4,g})_{A} \\
&= (\vec{s}_{k-1,u,g})_{A \setminus \{u\}} + (\vec{s}_{k-2,u,g})_{A \setminus \{v,w\}} 
- (\vec{s}_{k-3,u,g})_{A \setminus \{u,v,w\}} \\
&+ 2(\vec{s}_{k-2,u-1,g})_{A \setminus \{v,w\}} - 
(\vec{s}_{k-1,u-2,g})_{A \setminus \{v\}} - (\vec{s}_{k-1,u-1,g})_{A \setminus \{u\}}\\
&- (\vec{s}_{k-3,u-1,g})_{A \setminus \{u,v,w\}} \\
&+ (\vec{s}_{k-2,u-2,g})_{A \setminus \{u,v\}} + (\vec{s}_{k-2,u-2,g})_{A \setminus \{v,w\}} - 
(\vec{s}_{k-1,u-3,g})_{A \setminus \{v\}} \\
&+ (\vec{s}_{k-1,u-2,g})_{A \setminus \{u\}} - (\vec{s}_{k-1,u-2,g})_{A \setminus \{w\}} 
+ (\vec{s}_{k-2,u-2,g})_{A \setminus \{u,w\}} \\
&- (\vec{s}_{k-1,u-3,g})_{A \setminus \{u\}} - (\vec{s}_{k-1,u-3,g})_{A \setminus \{w\}} + 
(\vec{s}_{k,u-4,g})_{A}
\end{align*}
This expression is symmetric in $v$ and $w$ so it would have been the same if we had applied the 
equations in the opposite order.

The next two cases are symmetric to the previous two.
\item If $v_1 = v_3 = v$, $v_2 = t$, and $v_4 = w$ then 
\begin{enumerate}
\item If we apply the equation \eqref{pathtot} corresponding to $v \to t$ first and then the equation 
\eqref{pathvtow} corresponding to $v \to w$ we obtain
\begin{align*}
(\vec{s}_{k,u,g})_A &= (\vec{s}_{k-1,u,g})_{A \setminus \{v\}} - (\vec{s}_{k,u-1,g})_A \\
&= (\vec{s}_{k-1,u,g})_{A \setminus \{v\}} - (\vec{s}_{k-1,u-1,g})_{A \setminus \{v\}} + 
(\vec{s}_{k-1,u-1,g})_{A \setminus \{w\}} - (\vec{s}_{k-2,u-1,g})_{A \setminus \{v,w\}} \\
&+ (\vec{s}_{k-1,u-2,g})_{A \setminus \{v\}} + (\vec{s}_{k-1,u-2,g})_{A \setminus \{w\}} - 
(\vec{s}_{k,u-3,g})_{A}
\end{align*}
\item If we apply the equation the equation \eqref{pathvtow} corresponding to $v \to w$ 
first and then the equation \eqref{pathtot} corresponding to $v \to t$ we obtain
\begin{align*}
(\vec{s}_{k,u,g})_A &= (\vec{s}_{k-1,u,g})_{A \setminus \{v\}} - (\vec{s}_{k-1,u,g})_{A \setminus \{w\}} 
+ (\vec{s}_{k-2,u,g})_{A \setminus \{v,w\}} \\
&- (\vec{s}_{k-1,u-1,g})_{A \setminus \{v\}} - (\vec{s}_{k-1,u-1,g})_{A \setminus \{w\}} + 
(\vec{s}_{k,u-2,g})_{A} \\
&= (\vec{s}_{k-1,u,g})_{A \setminus \{v\}} - 
\left(-(\vec{s}_{k-1,u-1,g})_{A \setminus \{w\}} + (\vec{s}_{k-2,u,g})_{A \setminus \{v,w\}}\right) \\
&+ (\vec{s}_{k-2,u,g})_{A \setminus \{v,w\}} 
- \left(-(\vec{s}_{k-1,u-2,g})_{A \setminus \{w\}} + (\vec{s}_{k-2,u-1,g})_{A \setminus \{v,w\}}\right) \\
&- (\vec{s}_{k-1,u-1,g})_{A \setminus \{v\}} + 
\left(-(\vec{s}_{k,u-3,g})_{A} + (\vec{s}_{k-1,u-2,g})_{A \setminus \{v\}}\right) \\
&= (\vec{s}_{k-1,u,g})_{A \setminus \{v\}} + (\vec{s}_{k-1,u-1,g})_{A \setminus \{w\}} + 
(\vec{s}_{k-1,u-2,g})_{A \setminus \{w\}} - (\vec{s}_{k-2,u-1,g})_{A \setminus \{v,w\}} \\
&- (\vec{s}_{k-1,u-1,g})_{A \setminus \{v\}} - 
(\vec{s}_{k,u-3,g})_{A} + (\vec{s}_{k-1,u-2,g})_{A \setminus \{v\}}
\end{align*}
which after rearranging is the same as above.
\end{enumerate}
\item If $v_1 = u$, $v_2 = v_4 = w$, and $v_3 = v$ then if 
we apply the equation \eqref{pathvtow} corresponding to $u \to w$ first and then the equation 
\eqref{pathvtow} corresponding to $v \to w$ we obtain
\begin{align*}
(\vec{s}_{k,u,g})_A &= (\vec{s}_{k-1,u,g})_{A \setminus \{u\}} - (\vec{s}_{k-1,u,g})_{A \setminus \{w\}} 
+ (\vec{s}_{k-2,u,g})_{A \setminus \{u,w\}} \\
&- (\vec{s}_{k-1,u-1,g})_{A \setminus \{u\}} - (\vec{s}_{k-1,u-1,g})_{A \setminus \{w\}} + 
(\vec{s}_{k,u-2,g})_{A} \\
&= (\vec{s}_{k-1,u,g})_{A \setminus \{w\}} + 
(\vec{s}_{k-2,u,g})_{A \setminus \{u,v\}} - (\vec{s}_{k-2,u,g})_{A \setminus \{u,w\}} 
+ (\vec{s}_{k-3,u,g})_{A \setminus \{u,v,w\}} \\
&- (\vec{s}_{k-2,u-1,g})_{A \setminus \{u,v\}} - (\vec{s}_{k-2,u-1,g})_{A \setminus \{u,w\}} + 
(\vec{s}_{k-1,u-2,g})_{A \setminus \{u\}} \\
&+ (\vec{s}_{k-2,u,g})_{A \setminus \{u,w\}} - (\vec{s}_{k-1,u-1,g})_{A \setminus \{w\}}\\
&- (\vec{s}_{k-2,u-1,g})_{A \setminus \{u,v\}} + (\vec{s}_{k-2,u-1,g})_{A \setminus \{u,w\}} 
- (\vec{s}_{k-3,u-1,g})_{A \setminus \{u,v,w\}} \\
&+ (\vec{s}_{k-2,u-2,g})_{A \setminus \{u,v\}} + (\vec{s}_{k-2,u-2,g})_{A \setminus \{u,w\}} - 
(\vec{s}_{k-1,u-3,g})_{A \setminus \{u\}} \\
&+ (\vec{s}_{k-1,u-2,g})_{A \setminus \{v\}} - (\vec{s}_{k-1,u-2,g})_{A \setminus \{w\}} 
+ (\vec{s}_{k-2,u-2,g})_{A \setminus \{v,w\}} \\
&- (\vec{s}_{k-1,u-3,g})_{A \setminus \{v\}} - (\vec{s}_{k-1,u-3,g})_{A \setminus \{w\}} + 
(\vec{s}_{k,u-4,g})_{A} \\
&= (\vec{s}_{k-1,u,g})_{A \setminus \{w\}} + 
(\vec{s}_{k-2,u,g})_{A \setminus \{u,v\}} + (\vec{s}_{k-3,u,g})_{A \setminus \{u,v,w\}} \\
&-2(\vec{s}_{k-2,u-1,g})_{A \setminus \{u,v\}} + (\vec{s}_{k-1,u-2,g})_{A \setminus \{u\}} 
- (\vec{s}_{k-1,u-1,g})_{A \setminus \{w\}}\\
&- (\vec{s}_{k-3,u-1,g})_{A \setminus \{u,v,w\}} \\
&+ (\vec{s}_{k-2,u-2,g})_{A \setminus \{u,v\}} + (\vec{s}_{k-2,u-2,g})_{A \setminus \{u,w\}} - 
(\vec{s}_{k-1,u-3,g})_{A \setminus \{u\}} \\
&+ (\vec{s}_{k-1,u-2,g})_{A \setminus \{v\}} - (\vec{s}_{k-1,u-2,g})_{A \setminus \{w\}} 
+ (\vec{s}_{k-2,u-2,g})_{A \setminus \{v,w\}} \\
&- (\vec{s}_{k-1,u-3,g})_{A \setminus \{v\}} - (\vec{s}_{k-1,u-3,g})_{A \setminus \{w\}} + 
(\vec{s}_{k,u-4,g})_{A}
\end{align*}
This expression is symmetric in $u$ and $v$ so it would have been the same if we had applied the 
equations in the opposite order.
\item If $v_1 = s$, $v_2 = v_3 = v$, and $v_4 = w$ then 
\begin{enumerate}
\item If we apply the equation \eqref{pathfroms} corresponding to $s \to v$ first and then the equation 
\eqref{pathvtow} corresponding to $v \to w$ we obtain
\begin{align*}
(\vec{s}_{k,u,g})_A &= -(\vec{s}_{k-1,u,g})_{A \setminus \{v\}} + (\vec{s}_{k,u-1,g})_A \\
&= -(\vec{s}_{k-1,u,g})_{A \setminus \{v\}} + (\vec{s}_{k-1,u-1,g})_{A \setminus \{v\}} - 
(\vec{s}_{k-1,u-1,g})_{A \setminus \{w\}} + (\vec{s}_{k-2,u-1,g})_{A \setminus \{v,w\}} \\
&- (\vec{s}_{k-1,u-2,g})_{A \setminus \{v\}} - (\vec{s}_{k-1,u-2,g})_{A \setminus \{w\}} + 
(\vec{s}_{k,u-3,g})_{A}
\end{align*}
\item If we apply the equation the equation \eqref{pathvtow} corresponding to $v \to w$ 
first and then the equation \eqref{pathfroms} corresponding to $s \to v$ we obtain
\begin{align*}
(\vec{s}_{k,u,g})_A &= (\vec{s}_{k-1,u,g})_{A \setminus \{v\}} - (\vec{s}_{k-1,u,g})_{A \setminus \{w\}} 
+ (\vec{s}_{k-2,u,g})_{A \setminus \{v,w\}} \\
&- (\vec{s}_{k-1,u-1,g})_{A \setminus \{v\}} - (\vec{s}_{k-1,u-1,g})_{A \setminus \{w\}} + 
(\vec{s}_{k,u-2,g})_{A} \\
&= (\vec{s}_{k-1,u,g})_{A \setminus \{v\}} - 
\left((\vec{s}_{k-1,u-1,g})_{A \setminus \{w\}} - (\vec{s}_{k-2,u,g})_{A \setminus \{v,w\}}\right) \\
&+ (\vec{s}_{k-2,u,g})_{A \setminus \{v,w\}} 
- \left((\vec{s}_{k-1,u-2,g})_{A \setminus \{w\}} - (\vec{s}_{k-2,u-1,g})_{A \setminus \{v,w\}}\right) \\
&- (\vec{s}_{k-1,u-1,g})_{A \setminus \{v\}} + 
\left((\vec{s}_{k,u-3,g})_{A} - (\vec{s}_{k-1,u-2,g})_{A \setminus \{v\}}\right) \\
&= (\vec{s}_{k-1,u,g})_{A \setminus \{v\}} - 
(\vec{s}_{k-1,u-1,g})_{A \setminus \{w\}} + 2(\vec{s}_{k-2,u,g})_{A \setminus \{v,w\}} \\ 
&- (\vec{s}_{k-1,u-2,g})_{A \setminus \{w\}} + (\vec{s}_{k-2,u-1,g})_{A \setminus \{v,w\}} 
- (\vec{s}_{k-1,u-1,g})_{A \setminus \{v\}} \\
&+ (\vec{s}_{k,u-3,g})_{A} - (\vec{s}_{k-1,u-2,g})_{A \setminus \{v\}}
\end{align*}
\end{enumerate}
The difference between these two expressions is 
$$2(\vec{s}_{k-1,u,g})_{A \setminus \{v\}} + 2(\vec{s}_{k-2,u,g})_{A \setminus \{v,w\}} 
-2(\vec{s}_{k-1,u-1,g})_{A \setminus \{v\}}$$
which is $0$ if we have the corresponding equation for the path from $s$ to $w$.

The next case is symmetric to this one.
\item If $v_1 = v$, $v_2 = v_3 = w$, and $v_4 = t$ then 
\begin{enumerate}
\item If we apply the equation \eqref{pathtot} corresponding to $w \to t$ first and then the equation 
\eqref{pathvtow} corresponding to $v \to w$ we obtain
\begin{align*}
(\vec{s}_{k,u,g})_A &= (\vec{s}_{k-1,u,g})_{A \setminus \{w\}} - (\vec{s}_{k,u-1,g})_A \\
&= (\vec{s}_{k-1,u,g})_{A \setminus \{w\}} - (\vec{s}_{k-1,u-1,g})_{A \setminus \{v\}} + 
(\vec{s}_{k-1,u-1,g})_{A \setminus \{w\}} - (\vec{s}_{k-2,u-1,g})_{A \setminus \{v,w\}} \\
&+ (\vec{s}_{k-1,u-2,g})_{A \setminus \{v\}} + (\vec{s}_{k-1,u-2,g})_{A \setminus \{w\}} - 
(\vec{s}_{k,u-3,g})_{A}
\end{align*}
\item If we apply the equation the equation \eqref{pathvtow} corresponding to $v \to w$ 
first and then the equation \eqref{pathtot} corresponding to $w \to t$ we obtain
\begin{align*}
(\vec{s}_{k,u,g})_A &= (\vec{s}_{k-1,u,g})_{A \setminus \{v\}} - (\vec{s}_{k-1,u,g})_{A \setminus \{w\}} 
+ (\vec{s}_{k-2,u,g})_{A \setminus \{v,w\}} \\
&- (\vec{s}_{k-1,u-1,g})_{A \setminus \{v\}} - (\vec{s}_{k-1,u-1,g})_{A \setminus \{w\}} + 
(\vec{s}_{k,u-2,g})_{A} \\
&= -(\vec{s}_{k-1,u,g})_{A \setminus \{w\}} + 
\left(-(\vec{s}_{k-1,u-1,g})_{A \setminus \{v\}} + (\vec{s}_{k-2,u,g})_{A \setminus \{v,w\}}\right) \\
&+ (\vec{s}_{k-2,u,g})_{A \setminus \{v,w\}} 
- \left(-(\vec{s}_{k-1,u-2,g})_{A \setminus \{v\}} + (\vec{s}_{k-2,u-1,g})_{A \setminus \{v,w\}}\right) \\
&- (\vec{s}_{k-1,u-1,g})_{A \setminus \{w\}} + 
\left(-(\vec{s}_{k,u-3,g})_{A} + (\vec{s}_{k-1,u-2,g})_{A \setminus \{w\}}\right) \\
&= -(\vec{s}_{k-1,u,g})_{A \setminus \{w\}} - 
(\vec{s}_{k-1,u-1,g})_{A \setminus \{v\}} + 2(\vec{s}_{k-2,u,g})_{A \setminus \{v,w\}} \\ 
&+ (\vec{s}_{k-1,u-2,g})_{A \setminus \{v\}} - (\vec{s}_{k-2,u-1,g})_{A \setminus \{v,w\}} 
- (\vec{s}_{k-1,u-1,g})_{A \setminus \{w\}} \\
&- (\vec{s}_{k,u-3,g})_{A} + (\vec{s}_{k-1,u-2,g})_{A \setminus \{w\}}
\end{align*}
\end{enumerate}
The difference between these two expressions is 
$$-2(\vec{s}_{k-1,u,g})_{A \setminus \{w\}} + 2(\vec{s}_{k-2,u,g})_{A \setminus \{v,w\}} 
-2(\vec{s}_{k-1,u-1,g})_{A \setminus \{w\}}$$
which is $0$ if we have the corresponding equation for the path from $v$ to $t$.
\item If $v_1 = u$, $v_2 = v_3 = v$, and $v_4 = w$ then 
\begin{enumerate}
\item If we apply the equation \eqref{pathvtow} corresponding to $u \to v$ first and then the equation 
\eqref{pathvtow} corresponding to $v \to w$ we obtain
\begin{align*}
(\vec{s}_{k,u,g})_A &= (\vec{s}_{k-1,u,g})_{A \setminus \{u\}} - (\vec{s}_{k-1,u,g})_{A \setminus \{v\}} 
+ (\vec{s}_{k-2,u,g})_{A \setminus \{u,v\}} \\
&- (\vec{s}_{k-1,u-1,g})_{A \setminus \{u\}} - (\vec{s}_{k-1,u-1,g})_{A \setminus \{v\}} + 
(\vec{s}_{k,u-2,g})_{A} \\
&= (\vec{s}_{k-2,u,g})_{A \setminus \{u,v\}} - 
(\vec{s}_{k-2,u,g})_{A \setminus \{u,w\}} + (\vec{s}_{k-3,u,g})_{A \setminus \{u,v,w\}} \\
&- (\vec{s}_{k-2,u-1,g})_{A \setminus \{u,v\}} - (\vec{s}_{k-2,u-1,g})_{A \setminus \{u,w\}} + 
(\vec{s}_{k-1,u-2,g})_{A \setminus \{u\}} \\
&- (\vec{s}_{k-1,u,g})_{A \setminus \{v\}} + (\vec{s}_{k-2,u,g})_{A \setminus \{u,v\}} 
- (\vec{s}_{k-1,u-1,g})_{A \setminus \{v\}} \\
&- (\vec{s}_{k-2,u-1,g})_{A \setminus \{u,v\}} + 
(\vec{s}_{k-2,u-1,g})_{A \setminus \{u,w\}} - (\vec{s}_{k-3,u-1,g})_{A \setminus \{u,v,w\}} \\
&+ (\vec{s}_{k-2,u-2,g})_{A \setminus \{u,v\}} + (\vec{s}_{k-2,u-2,g})_{A \setminus \{u,w\}} - 
(\vec{s}_{k-1,u-3,g})_{A \setminus \{u\}} \\
&+ (\vec{s}_{k-1,u-2,g})_{A \setminus \{v\}} - (\vec{s}_{k-1,u-2,g})_{A \setminus \{w\}} 
+ (\vec{s}_{k-2,u-2,g})_{A \setminus \{v,w\}} \\
&- (\vec{s}_{k-1,u-3,g})_{A \setminus \{v\}} - (\vec{s}_{k-1,u-3,g})_{A \setminus \{w\}} + 
(\vec{s}_{k,u-4,g})_{A}
\end{align*}
Ordering these terms from largest to smallest $k+u$ we obtain
\begin{align*}
(\vec{s}_{k,u,g})_A &= - (\vec{s}_{k-1,u,g})_{A \setminus \{v\}} + 2(\vec{s}_{k-2,u,g})_{A \setminus \{u,v\}} - 
(\vec{s}_{k-2,u,g})_{A \setminus \{u,w\}} - (\vec{s}_{k-1,u-1,g})_{A \setminus \{v\}} \\
&+ (\vec{s}_{k-3,u,g})_{A \setminus \{u,v,w\}} -2(\vec{s}_{k-2,u-1,g})_{A \setminus \{u,v\}} \\
&+ (\vec{s}_{k-1,u-2,g})_{A \setminus \{u\}} 
+ (\vec{s}_{k-1,u-2,g})_{A \setminus \{v\}} - (\vec{s}_{k-1,u-2,g})_{A \setminus \{w\}} \\
&- (\vec{s}_{k-3,u-1,g})_{A \setminus \{u,v,w\}} 
+ (\vec{s}_{k-2,u-2,g})_{A \setminus \{u,v\}} + (\vec{s}_{k-2,u-2,g})_{A \setminus \{u,w\}} 
+ (\vec{s}_{k-2,u-2,g})_{A \setminus \{v,w\}} \\
&- (\vec{s}_{k-1,u-3,g})_{A \setminus \{u\}} 
- (\vec{s}_{k-1,u-3,g})_{A \setminus \{v\}} - (\vec{s}_{k-1,u-3,g})_{A \setminus \{w\}} + 
(\vec{s}_{k,u-4,g})_{A}
\end{align*}
\item If we apply the equation \eqref{pathvtow} corresponding to $v \to w$ first and then the equation 
\eqref{pathvtow} corresponding to $u \to v$ we obtain
\begin{align*}
(\vec{s}_{k,u,g})_A &= (\vec{s}_{k-1,u,g})_{A \setminus \{v\}} - (\vec{s}_{k-1,u,g})_{A \setminus \{w\}} 
+ (\vec{s}_{k-2,u,g})_{A \setminus \{v,w\}} \\
&- (\vec{s}_{k-1,u-1,g})_{A \setminus \{v\}} - (\vec{s}_{k-1,u-1,g})_{A \setminus \{w\}} + 
(\vec{s}_{k,u-2,g})_{A} \\
&= -(\vec{s}_{k-2,u,g})_{A \setminus \{u,w\}} + 
(\vec{s}_{k-2,u,g})_{A \setminus \{v,w\}} - (\vec{s}_{k-3,u,g})_{A \setminus \{u,v,w\}} \\
&+ (\vec{s}_{k-2,u-1,g})_{A \setminus \{u,w\}} + (\vec{s}_{k-2,u-1,g})_{A \setminus \{v,w\}} - 
(\vec{s}_{k-1,u-2,g})_{A \setminus \{w\}} \\
&+ (\vec{s}_{k-1,u,g})_{A \setminus \{v\}} + (\vec{s}_{k-2,u,g})_{A \setminus \{v,w\}} 
- (\vec{s}_{k-1,u-1,g})_{A \setminus \{v\}} \\
&- (\vec{s}_{k-2,u-1,g})_{A \setminus \{u,w\}} + 
(\vec{s}_{k-2,u-1,g})_{A \setminus \{v,w\}} - (\vec{s}_{k-3,u-1,g})_{A \setminus \{u,v,w\}} \\
&+ (\vec{s}_{k-2,u-2,g})_{A \setminus \{u,w\}} + (\vec{s}_{k-2,u-2,g})_{A \setminus \{v,w\}} - 
(\vec{s}_{k-1,u-3,g})_{A \setminus \{w\}} \\
&+ (\vec{s}_{k-1,u-2,g})_{A \setminus \{u\}} - (\vec{s}_{k-1,u-2,g})_{A \setminus \{v\}} 
+ (\vec{s}_{k-2,u-2,g})_{A \setminus \{u,v\}} \\
&- (\vec{s}_{k-1,u-3,g})_{A \setminus \{u\}} - (\vec{s}_{k-1,u-3,g})_{A \setminus \{v\}} + 
(\vec{s}_{k,u-4,g})_{A}
\end{align*}
Ordering these terms from largest to smallest $k+u$ we obtain
\begin{align*}
(\vec{s}_{k,u,g})_A &= (\vec{s}_{k-1,u,g})_{A \setminus \{v\}} + 2(\vec{s}_{k-2,u,g})_{A \setminus \{v,w\}} - 
(\vec{s}_{k-2,u,g})_{A \setminus \{u,w\}} - (\vec{s}_{k-1,u-1,g})_{A \setminus \{v\}} \\
&- (\vec{s}_{k-3,u,g})_{A \setminus \{u,v,w\}} +2(\vec{s}_{k-2,u-1,g})_{A \setminus \{v,w\}} \\
&+ (\vec{s}_{k-1,u-2,g})_{A \setminus \{u\}} 
- (\vec{s}_{k-1,u-2,g})_{A \setminus \{v\}} - (\vec{s}_{k-1,u-2,g})_{A \setminus \{w\}} \\
&- (\vec{s}_{k-3,u-1,g})_{A \setminus \{u,v,w\}} 
+ (\vec{s}_{k-2,u-2,g})_{A \setminus \{u,v\}} + (\vec{s}_{k-2,u-2,g})_{A \setminus \{u,w\}} 
+ (\vec{s}_{k-2,u-2,g})_{A \setminus \{v,w\}} \\
&- (\vec{s}_{k-1,u-3,g})_{A \setminus \{u\}} 
- (\vec{s}_{k-1,u-3,g})_{A \setminus \{v\}} - (\vec{s}_{k-1,u-3,g})_{A \setminus \{w\}} + 
(\vec{s}_{k,u-4,g})_{A}
\end{align*}
\end{enumerate}
The difference between these expressions is 
\begin{align*}
&2(\vec{s}_{k-1,u,g})_{A \setminus \{v\}} + 2(\vec{s}_{k-2,u,g})_{A \setminus \{v,w\}} - 
2(\vec{s}_{k-2,u,g})_{A \setminus \{u,v\}} - 2(\vec{s}_{k-3,u,g})_{A \setminus \{u,v,w\}} \\
&+ 2(\vec{s}_{k-2,u-1,g})_{A \setminus \{u,v\}} + 
2(\vec{s}_{k-2,u-1,g})_{A \setminus \{v,w\}} - 2(\vec{s}_{k-1,u-2,g})_{A \setminus \{v\}}
\end{align*}
which is $0$ if we have the corresponding equation for the path from $u$ to $w$.
\end{enumerate}
\end{proof}
\end{appendix}
\end{document}